\documentclass{amsart}
\usepackage{fullpage,graphicx,subfigure,mathpazo,color}
\usepackage{amsmath,amscd,tikz,mathrsfs}
\usepackage[normalem]{ulem}
\usepackage{setspace}
\usepackage{lipsum}
\usepackage{amsfonts}
\usepackage{graphicx}
\usepackage{epstopdf}
\usepackage{algorithmic}

\usepackage{cleveref}

\usepackage{amssymb}
\usepackage{amsfonts}
\usepackage{subfigure}

\usepackage{extpfeil,extarrows}
\usepackage{setspace}
\usepackage[titletoc]{appendix}
\usetikzlibrary{arrows,calc}
\usetikzlibrary{positioning}
\usetikzlibrary{decorations.markings}

\newcommand{\ii}{\mathrm{i}}
\newcommand{\ee}{\mathrm{e}}
\newcommand{\dd}{\mathrm{d}}

\newcommand{\sn}{\mathrm{sn}}
\newcommand{\dn}{\mathrm{dn}}
\newcommand{\cn}{\mathrm{cn}}

\newcommand{\tn}{\mathrm{tn}}

\newcommand{\cd}{\mathrm{cd}}
\newcommand{\nd}{\mathrm{nd}}

\newcommand{\scd}{\mathrm{scd}}

\newtheorem{theorem}{Theorem}
\newtheorem{lemma}{Lemma}
\newtheorem{prop}{Proposition}

\newtheorem{definition}{Definition}

\newtheorem{remark}{Remark}

\usepackage{graphicx}
\usepackage{titlesec}

\titleformat{\section}{\centering\LARGE\bfseries}{\thesection}{1em}{}
\titleformat{\subsection}{\Large\bfseries}{\thesubsection}{1em}{}
\begin{document}

\title{Stability of elliptic function solutions for the focusing modified KdV equation \thanks{This project is supported by the National Natural Science Foundation of China (Grant No. 11771151), the Guangzhou Science and Technology Program of China (Grant No. 201904010362)} }

\author{Liming Ling}

\address{School of Mathematics, South China University of Technology, Guangzhou, China, 510641}
\email{linglm@scut.edu.cn}
\author{Xuan Sun}
\address{School of Mathematics, South China University of Technology, Guangzhou, China, 510641}
\email{masunx@mail.scut.edu.cn}

\begin{abstract}
We study the spectral and orbital stability of elliptic function solutions for the focusing modified Korteweg-de Vries (mKdV) equation and construct the corresponding breather solutions to exhibit the stable or unstable dynamic behavior. The elliptic function solutions of the mKdV equation and related fundamental solutions of the Lax pair are exactly represented by theta functions. Based on the `modified squared wavefunction' (MSW) method, we construct all linear independent solutions of the linearized mKdV equation and then provide a necessary and sufficient condition of the spectral stability for elliptic function solutions with respect to subharmonic perturbations. In the case of spectrum stability, the orbital stability of elliptic function solutions is established in a suitable Hilbert space. Using Darboux-B\"acklund transformation, we construct breather solutions to exhibit unstable or stable dynamic behavior. Through analyzing the asymptotic behavior, we find that the breather solution under the $\cn$-type solution background is equivalent to the elliptic function solution adding a small perturbation as $t\to\pm\infty$.

{\bf Keywords:} mKdV equation,
subharmonic perturbations, elliptic function, spectral stability, orbital stability, breather solution

\end{abstract}

\date{\today}

\maketitle

\section{Introduction}

In this work, we mainly study the stability of the elliptic function solutions of the focusing modified Korteweg-de Vries (mKdV) equation
\begin{equation}\tag{mKdV}\label{eq:mKdV}
	u_t+6u^2u_x+u_{xxx}=0,
\end{equation}
where $u=u(x,t)$ is a real-valued function with $(x,t)\in \mathbb{R}^2$. The mKdV equation has applications in diverse physical contexts, such as water waves and plasma physics \cite{AblowitzS-81,Cheemaa-2020-study,Schamel-1973-modified}. As we know that the \eqref{eq:mKdV} equation is related to the Korteweg-de Vries (KdV) equation by the Miura transform \cite{MiuraGK-68}, and it can be regarded as the generalization of the KdV equation. It is a well-known completely integrable model admitting the Lax-pair formulation \cite{Lax-68}, the bi-Hamiltonian structure \cite{MaF-96}, and infinite conserved quantities \cite{MiuraGK-68}. In finite-dimensional mechanics, if the system has sufficiently many (half the dimension of the phase space) Poisson commuting and functionally independent conserved quantities, then it is completely integrable. Actually, the \eqref{eq:mKdV} equation admits infinite many independent conserved quantities $\mathcal{H}_i$, $i=0,1,2,\cdots$ \cite{Dickey-2003-soliton}, in which the first three conservation laws are given in the main text (Eq. \eqref{eq:H0}). For the infinite-dimensional integrable system, the Lax representation is a crucial and useful feature. The Lax pair for the \eqref{eq:mKdV} equation admits the following linear system:
\begin{equation}\label{eq:Lax pair} 
		\Phi_x(x,t;\lambda)=\mathbf{U}(\lambda;u)\Phi(x,t;\lambda), \qquad \Phi_t(x,t;\lambda)=\mathbf{V}(\lambda;u)\Phi(x,t;\lambda),
	\end{equation}
	where the spectral parameter $\lambda\in\mathbb{C}\cup\{\infty\}$,
	\begin{equation}\label{eq:U,V}
		\mathbf{U}(\lambda; u)
		=-\ii \lambda \sigma_3+\mathbf{Q}, \quad
		\mathbf{V}(\lambda; u)=4\lambda^2\mathbf{U}(\lambda; u)
		+2\lambda\ii\sigma_3(\mathbf{Q}_x-\mathbf{Q}^2)-(\mathbf{Q}_{xx}-2\mathbf{Q}^3), \quad
		\mathbf{Q}
		=\begin{bmatrix}
			0 & u \\ -u & 0
		\end{bmatrix},
	\end{equation}
	and the matrix $\sigma_3:=\mathrm{diag}(1,-1)$ is the third Pauli matrix. The Lax pair can be derived from the $2\times 2$ AKNS system by the reductions \cite{AblowitzKNS-74}. The compatibility condition of the linear system \eqref{eq:Lax pair}: $\Phi_{tx}(x,t;\lambda)=\Phi_{xt}(x,t;\lambda)$ is equivalent to the zero-curvature equation $\mathbf{U}_t(\lambda;u)-\mathbf{V}_x(\lambda;u)+\left[\textbf{U}(\lambda;u),\textbf{V}(\lambda;u)\right]=0$ with the commutator defined by $\left[\textbf{A},\textbf{B}\right]=\textbf{A}\textbf{B}-\textbf{B}\textbf{A}$, which yields the \eqref{eq:mKdV} equation. Due to the Lax integrability, the \eqref{eq:mKdV} equation can be solved by the inverse scattering transform, which is widely used to solve a large number of equations \cite{AblowitzKNS-74,BilmanM-19,GardnerGKM-67,ZakharovS-72}. The infinite many conservation laws can also be derived by the Lax representation \cite{AblowitzKNS-74}. In addition, the well-posedness of the mKdV equation has been studied by many scholars \cite{CollianderKSTT-03,KenigPV-93}.

\subsection{Review on the stability analysis of the mKdV equation}\label{subsection:review-the-stability-analysis}

	The stability analysis for the solitary or periodic waves is a classic and crucial problem in the study of nonlinear partial differential equations. As early as the 20th century, many scholars were engaged in studying spectral stability \cite{CourantH-53,LevitanS-75,Rowlands-74}. This research has continued to the present. Deconinck and Kutz computed the spectrum of the maximal extension of linear operators using the Floquet-Fourier-Hill method (FFHM) \cite{DeconinckK-06}. The spectral stability analysis for the nonlinear wave equations was given by Yang in the monograph \cite{Yang-10}. The number of negative directions of the second variation of the energy is one of the methods to help us study the spectral stability of nonlinear waves, which had been proved by Kapitula, Kevrekidis, and Sandstede via the Krein signature \cite{KapitulaKS-04}. 
	Some propositions among the operator $\mathcal{L}$, $\mathcal{JL}$ and the eigenvalue $\Omega$ had been proposed by H\v{a}r\v{a}gus and Kapitula \cite{HaragusK-08}, using the Floquet-Bloch decomposition. The aforementioned spectrum analysis method had been utilized to study the nonlinear Schr\"{o}dinger (NLS) equation \cite{CuccagnaPV-05,DeconinckS-17,KapitulaKS-04}. Furthermore, there are also a large number of spectral stability studies on other equations, such as the coupled NLS equation \cite{Pelinovsky-05,PelinovskyY-05}, the KdV equation \cite{BottmanD-09,PSKP-21}, and so on. 

	An extensive development of the orbital stability theory for the solitary wave solutions has been obtained in the past years by Benjamin, Bona, Grillakis, Shatah, Strauss, and Weinstein \cite{Benjamin-72,Bona-75,BonaSS-87,GrillakisSS-87,GrillakisSS-90,Weinstein-85,Weinstein-86}. 
	Alejo and Mu\~{n}oz \cite{Alejo-2013-nonlinear} analyzed the stability of breather solutions by utilizing a new Lyapunov functional to describe the dynamics of small perturbations. 
	Semenov \cite{Semenov-2022-orbital} studied the orbital stability of the multi-soliton/breather solutions of the mKdV equation by modifying the Lyapunov functional. In the aforementioned literatures, the scholars mainly considered the nonlinear waves with the condition $u(x)\rightarrow 0$ as $x\rightarrow\pm \infty$. Recently, a successful application of this theory has been obtained on the periodic boundary condition in the KdV equation \cite{PavaBS-06}, the critical KdV equation \cite{PavaN-09}, the NLS equation \cite{Pava-07}, the Hirota-Satsuma system \cite{Pava-05}, and so on. Based on the integrable structures of equations, a great deal of work has been performed on the study of the spectral or orbital stability of periodic wave solutions for the NLS equation \cite{ChenPW-20,DeconinckS-17,KapitulaH-07-small,KapitulaH-07-per}, the KdV equation \cite{PavaBS-06,BronskiJK-11}, the mKdV equation \cite{DeconinckK-10,Semenov-2022-orbital}, and so on. 

Then we briefly review the stability analysis for periodic solutions of the mKdV equation, which are closely related to this work. The periodic traveling wave solutions of the defocusing mKdV equation are spectrally stable, which was studied by Deconinck and Nivala \cite{Deconinck-10}.
The NLS equation also has similar results that elliptic function solutions of the defocusing NLS equation are spectrally stable, which was studied by Bottman, Deconinck, and Nivala \cite{BottmanDN-11}. Moreover, we know that the $\cn$-type solutions of the KdV equation are spectrally stable in \cite{DeconinckK-10}. Using the Weierstrass $\wp$ function, the $\zeta$ function and the spectral parameter $\lambda$ of the Lax pair to obtain squared eigenfunctions, Deconinck and Segal \cite{DeconinckS-17} proved that $\dn$-type solutions of the focusing NLS equation are spectrally stable with respect to co-periodic perturbations. Furthermore, the spectral stability of $\cn$-type solutions has also been studied by dividing the modulus $k$ into two different conditions \cite{DeconinckS-17,DeconinckyU-20}. However, there is no systematic work on the spectral stability analysis of the focusing mKdV equation. Therefore, one of the aims of this work is to study the spectral stability of the focusing mKdV equation.

For the studies of the orbital stability, there are many relevant results about the NLS equation. In \cite{BottmanDN-11}, authors studied the orbital stability of elliptic function solutions of the defocusing NLS equation. Based on spectrally stable conditions, Deconincky and Upsal studied that the orbital stability of elliptic function solutions of the focusing NLS equation with respect to subharmonic perturbations obtained in \cite{DeconinckyU-20} by constructing a new Lyapunov function under higher-order conserved quantities. In \cite{Pava-07}, Pava obtained that $\dn$-type solutions were orbitally stable both for the focusing NLS equation and the focusing mKdV equation in the space $H^1([-T,T])$. For the mKdV equation, all periodic traveling wave solutions in the defocusing case were orbitally stable with respect to subharmonic perturbations in the space $H^2([-PT,PT])$, $P\in \mathbb{N}$, which was established by Deconinck and Nivala \cite{Deconinck-10}. Then, it is natural to consider whether there exists a suitable function space such that the elliptic function solutions of the focusing mKdV equation are orbitally stable.

\subsection{Main results} \label{sec:main} 
The \eqref{eq:mKdV} equation has the elliptic function solutions 
\begin{equation}\label{eq:elliptic-solution}
	u(x,t)=k\alpha \cn (\alpha(x-2s_2t),k)\qquad \text{and} \qquad 
	u(x,t)=\alpha \text{dn}(\alpha(x-2s_2t),k),
\end{equation}
where $\cn(\cdot,k)$ and $\dn(\cdot,k)$ denote the Jacobi elliptic functions with elliptic modulus $k=\sqrt{\frac{u_3-u_2}{u_3-u_1}}$; $-2s_2=-(u_1+u_2+u_3)$ is the velocity between time $t$ and space $x$; $\alpha=\sqrt{u_3-u_1}$; and $u_1,u_2,u_3$ are defined in \eqref{eq:u1,u2,u3}. The details of the above solutions can be found in Proposition \ref{prop:solution-u^2}. For convenience, we often omit the modulus $k$ in this work. To examine the traveling wave solutions, we introduce a moving coordinate form
\begin{equation}\label{eq:traveling-wave-transform}
\left(x,t\right)\xRightarrow[t=t]{\xi=x-2s_2 t}\left(\xi,t\right)
\end{equation}
to convert the non-zero velocity $-2s_2$ into a stationary one in \eqref{eq:Rutux}. Then, the \eqref{eq:mKdV} equation turns into
\begin{equation}\label{eq:mKdV1}
	u_{t}-2s_2u_{\xi}+u_{\xi\xi\xi}+6u^2u_{\xi}=0.
\end{equation}
Here, to avoid introducing too many notations, we still use $u(\xi,t)$ to denote the function $u(x,t)$ under the new coordinate $(\xi,t)$. For elliptic function solutions \eqref{eq:elliptic-solution}, we use the notation $u\equiv u(\xi)$ to denote them, i.e., $u(\xi)=k\alpha \cn (\alpha\xi)$ and $u(\xi)=\alpha \dn(\alpha\xi)$.

To study the stability of elliptic function solutions of the \eqref{eq:mKdV} equation, we need to solve the linearized mKdV equation. The squared eigenfunctions can be utilized to construct solutions of the linearized mKdV equation. Thus, combining the algebraic-geometry method with the effective integration method, we obtain elliptic function solutions \eqref{eq:elliptic-solution} of the \eqref{eq:mKdV} equation and the corresponding fundamental matrix solution of the Lax pair simultaneously. By Lax pair \eqref{eq:Lax-pair-1} and the eigenvalue $y$ of the matrix $\mathbf{L}(\xi,t;\lambda)$ in \eqref{eq:Lax pair-1}, the solution $\Phi(\xi,t;\lambda)$ of Lax pair \eqref{eq:Lax-pair-1} could be derived as \eqref{eq:Phi}. We introduce a uniform parameter $z$ in a rectangular region instead of spectral parameter $\lambda\in \mathbb{C}$, which was established in Appendix \ref{appendix:commonality map} regarding the conformal mapping between $\lambda$ and $z$. Therefore, we could avoid the multi-valued function $y$ (refer to equation \eqref{eq:y lambda }) so that the study of the $\dn$-type and $\cn$-type periodic problem becomes simultaneous. Then, we obtain the solution $\Phi(x,t;\lambda)$ in terms of theta functions with respect to the parameter $z$.

\begin{theorem}\label{theorem:solution-Theta form}
	The fundamental solution $\Phi(x,t;\lambda)$ of Lax pair \eqref{eq:Lax pair} can be represented as the theta functions form:
	\begin{equation}\label{eq:Phi-Theta}
		\Phi(x,t;\lambda)=\frac{\alpha\vartheta_2\vartheta_4}{\vartheta_3\vartheta_4(\frac{\alpha\xi}{2K})}
		\begin{bmatrix}
			\frac{\vartheta_1(\frac{\ii(z-l)-\alpha \xi  }{2K})} {\vartheta_4(\frac{\ii(z-l)}{2K})}E_1 
			& \frac{\vartheta_3(\frac{\ii(z+l)+\alpha \xi  }{2K})} {\vartheta_2(\frac{\ii(z+l)}{2K})}E_2 \\ -\frac{\vartheta_3(\frac{\ii(z+l)-\alpha \xi  }{2K})} {\vartheta_2(\frac{\ii(z+l)}{2K})}\ee^{\alpha\xi Z(2\ii l+K)}E_1
			& -\frac{\vartheta_1(\frac{\ii(z-l)+\alpha \xi  }{2K})} {\vartheta_4(\frac{\ii(z-l)}{2K})}\ee^{\alpha\xi Z(2\ii l+K)}E_2
		\end{bmatrix},
	\end{equation}
	where $l=0$ or $\frac{K'}{2}$, $\xi=x-2s_2 t$, theta functions $\vartheta_i(z)$s and functions $E_1=E_1(\xi,t;z)$, $E_2=E_2(\xi,t;z)$ are defined in \eqref{eq:theta1234} and \eqref{eq:E1-E2} respectively, and $K=K(k), K'=K(k')$ are the complete elliptic integrals in \eqref{eq:J-K-E-int}.
\end{theorem}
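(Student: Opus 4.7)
The plan is to construct $\Phi(x,t;\lambda)$ by separating variables through the stationary (``squared-eigenfunction'') Lax matrix and then integrating the residual scalar equations in closed theta-function form. First I would pass to the moving frame $\xi=x-2s_2 t$ of \eqref{eq:traveling-wave-transform} so that the backgrounds $u(\xi)=k\alpha\cn(\alpha\xi)$ and $u(\xi)=\alpha\dn(\alpha\xi)$ of \eqref{eq:elliptic-solution} are stationary. In this frame the Lax pair admits the stationary companion matrix $\mathbf{L}(\xi,t;\lambda)$ mentioned before the theorem, which commutes with both flows in \eqref{eq:Lax pair}; its eigenvalues $\pm y(\lambda)$ cut out the hyperelliptic spectral curve. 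Simultaneous eigenvectors of $\mathbf{L}$ give a factorized Baker--Akhiezer ansatz $\Phi_{ij}=v_j(\xi;\lambda)\,E_j(\xi,t;\lambda)$, reducing the Lax system to two decoupled first-order linear scalar ODEs for $E_1,E_2$.

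Second, I would compute the eigenvectors $v_j$ and the integrating factors $E_j$ explicitly. Since the entries of $\mathbf{U}$ and $\mathbf{V}$ in \eqref{eq:U,V} are rational in $\lambda$ with coefficients that are Jacobi-elliptic functions of $\alpha\xi$, the classical expressions of $\sn,\cn,\dn$ as ratios $\vartheta_j(\cdot)/\vartheta_k(\cdot)$ turn the components of $v_j$ into $\vartheta$-quotients whose arguments are shifted by $\pm\alpha\xi/(2K)$. This is precisely the origin of the $\vartheta_1/\vartheta_4$ and $\vartheta_3/\vartheta_2$ blocks in \eqref{eq:Phi-Theta}. Integrating the residual scalar equations for $E_j$ produces Abelian integrals of the second kind on the spectral curve; splitting off their linear-in-$\xi$ part into the Jacobi zeta exponential $\ee^{\alpha\xi\, Z(2\ii l+K)}$ leaves a bounded piece that recombines, via quasi-periodicity of the $\vartheta_i$, into further $\vartheta$-quotients. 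The overall prefactor $\alpha\vartheta_2\vartheta_4/[\vartheta_3\vartheta_4(\alpha\xi/(2K))]$ is then fixed by normalizing $\det\Phi$ and by canceling common zeros of the resulting denominators.

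Third, rather than carry the two-sheeted dependence of $y$ on $\lambda$, I would use the uniformizing parameter $z$ from Appendix \ref{appendix:commonality map}, which conformally maps a rectangle onto the cut $\lambda$-plane and lifts $y$ to a single-valued function on that rectangle. A direct check shows that the two backgrounds in \eqref{eq:elliptic-solution} correspond to the two distinguished values $l=0$ and $l=K'/2$ of the shift parameter appearing in the arguments $\ii(z\pm l)/(2K)$, so that a single formula \eqref{eq:Phi-Theta} simultaneously encodes the $\dn$-type and $\cn$-type fundamental solutions.

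The main obstacle is the algebraic bookkeeping, not any conceptual difficulty: one must verify that the $\vartheta$-quotients produced by diagonalizing $\mathbf{L}$ actually satisfy both equations in \eqref{eq:Lax pair}. This requires repeated use of the Jacobi addition formulas, the quasi-periodicity of $\vartheta_i$ under shifts by $2K$ and $2\ii K'$, and the derivative identity for the Jacobi zeta $Z$, in order to show that the exponential factor $\ee^{\alpha\xi Z(2\ii l+K)}$ exactly absorbs the linear-in-$\xi$ contribution of the Abelian integrals for $E_j$ while the remaining oscillatory parts recombine with the eigenvector $\vartheta$-quotients. Once these identities are in place, a direct substitution of \eqref{eq:Phi-Theta} into \eqref{eq:Lax pair} verifies the claim.
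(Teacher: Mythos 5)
Your proposal follows essentially the same route as the paper: diagonalize the stationary matrix $\mathbf{L}(\xi,t;\lambda)$ to reduce the Lax pair to first-order scalar equations (the paper's Lemma \ref{lemma:span} and the ratio $r_i=\psi_i/\phi_i$), solve them as $\sqrt{u^2-\beta_i}\,\ee^{\theta_i}$ with the Abelian integrals $\theta_i$ evaluated via Lemma \ref{lemma:int} so that the Jacobi zeta function carries the linear-in-$\xi$ part, and then pass to the uniformizing variable $z$ and match poles and zeros to express everything as theta-function quotients. The only cosmetic difference is that the paper uses a Liouville-theorem pole/zero argument rather than the classical $\sn,\cn,\dn$-to-$\vartheta$ conversion formulas you invoke, which it notes is a simplification but not a change of method.
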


The eigenvalue of the linearized mKdV equation \cite{KapitulaD-15} shows that $\cn$-type solutions of the focusing mKdV equation are not spectrally stable with respect to any perturbations. In such a case, we want to consider whether suitable perturbations exist such that the $\cn$-type solutions under these perturbations are spectrally stable. To study the spectral stability of elliptic function solutions, we introduce perturbations of the stationary solution
	\begin{equation}\label{eq:perturbation}
		v(\xi,t)=u(\xi)+\epsilon w(\xi,t)+\mathcal{O}(\epsilon^2),
	\end{equation}
	where $\epsilon$ is a small parameter and $w(\xi,t)$ is a real-valued function of $(\xi,t)\in \mathbb{R}^2$. Plugging \eqref{eq:perturbation} into \eqref{eq:mKdV1}
	and considering the first-order term of $\epsilon$, we obtain the linearized equation
	\begin{equation}\label{eq:linearized-mkdv}
		\partial_t w =-\partial_{\xi}^3w +2s_2\partial_{\xi}w -6u^2\partial_{\xi}w -12w u\partial_{\xi}u,
	\end{equation}
	where $u\equiv u(\xi)$ denotes the elliptic function solution \eqref{eq:elliptic-solution} and $w\equiv w(\xi,t)$.
	Since equation \eqref{eq:linearized-mkdv} is autonomous in time, we can decompose $w(\xi,t)$ into the following form
	\begin{equation}\label{eq:w}
		w(\xi,t)=W(\xi)\exp(\Omega t)+W^*(\xi)\exp(\Omega^* t),
	\end{equation}
	by separating variables.
	Then, we obtain the linearized spectral problem of equation \eqref{eq:linearized-mkdv}:
	\begin{equation}\label{eq:spectral}
		\partial_\xi  (-\partial_\xi  ^2+2s_2-6u^2)W=\mathcal{JL}W=\Omega W, \qquad W(\xi) \in C_b^0(\mathbb{R}),
	\end{equation} 
	where $\mathcal{J}=\partial_\xi,\mathcal{L}=-\partial_\xi  ^2+2s_2-6u^2$, $\Omega\in \mathbb{C}$, and $C_b^0(\mathbb{R})$ denotes the space of bounded continuous functions on the real line. The spectrum is defined as 
		\begin{equation}\label{eq:spectrum}
			\sigma(\mathcal{JL}):=\{\Omega\in \mathbb{C}| W(\xi)\in C^0_b(\mathbb{R})\}.
		\end{equation} 
	Due to the Hamiltonian structure of the spectrum \cite{HaragusK-08}, an elliptic function solution $u$ is spectrally stable with respect to perturbations in $C_b^0(\mathbb{R})$ if $\sigma(\mathcal{JL})\subset \ii \mathbb{R}$. Then, the definition of spectral stability is given as follows: 
\begin{definition}\label{define:spect-stable}
		An elliptic function solution $u(\xi)$ is spectrally stable to perturbations $w(\xi,t)$ in $C_b^0(\mathbb{R})$, where $w(\xi,t)=W(\xi)\exp(\Omega t)+W^*(\xi)\exp(\Omega^* t)$, if $\Omega\in \ii \mathbb{R}$. In brief, the stability spectrum is defined as $\sigma(\mathcal{JL})\subset \ii \mathbb{R}$, where $\sigma(\mathcal{JL})$ is defined in \eqref{eq:spectrum}.
	\end{definition}

Based on the MSW method, we get the squared eigenfunction $W(\xi)$, which could be used to gain all solutions of equation \eqref{eq:spectral} in Lemma \ref{lemma:W}. As Deconinck and Kapitula pointed out in \cite{DeconinckK-10}, the spectrum of the focusing mKdV equation is no longer confined to the real axis, which makes the detailed analysis of the bounded eigenfunctions more difficult. To overcome this difficulty, we use theta functions to express the squared eigenfunction $W(\xi)$, which converts the problem of analyzing bounded functions into studying the Zeta function. For the stability analysis, we just consider the bounded function $W(\xi)$ which implies that the real part of the exponent of the function $W(\xi)$ is zero, i.e., $z$ must satisfy \eqref{eq:Re}. Combining \eqref{eq:I-1-2} with \eqref{eq:Phi-Theta}, this relationship on $z$ is equivalent to
\begin{equation}\label{eq:set Q}
	Q:=\left\{  z\in \mathbb{C}\left| {\Re}\left(I (z)\right)=0 , z\in S \right. \right\},
\end{equation}
where $S$ is defined by \eqref{eq:set S} and $I(z)$ is given by \eqref{eq:I}. Then, we get the consequence for the spectral stability.

\begin{theorem}\label{theorem:spec-dn}
	The $\dn$-type solutions of the mKdV equation \eqref{eq:mKdV1} are spectrally stable.
\end{theorem}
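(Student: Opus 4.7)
My plan is to combine Theorem \ref{theorem:solution-Theta form} with the MSW construction of Lemma \ref{lemma:W} to express every bounded eigenfunction $W(\xi)$ of the spectral problem $\mathcal{JL}W=\Omega W$ as a quadratic combination of entries of the fundamental solution $\Phi(x,t;\lambda)$ parametrized by the uniformizing variable $z$, and then to show that on the admissible set $Q$ defined in \eqref{eq:set Q} the corresponding eigenvalue $\Omega=\Omega(z)$ is purely imaginary. For the $\dn$-type solution I would specialize to $l=\tfrac{K'}{2}$ (the branch singled out in \eqref{eq:Phi-Theta} for $\dn$), so that both the quasi-periodic theta factors and the exponential $\ee^{\alpha\xi Z(2\ii l+K)}$ acquire the specific shifts attached to this branch.

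First, I would read the dispersion relation $\Omega(z)$ directly from the $t$-component of Lax pair \eqref{eq:Lax pair} by separating the part of the exponent of $\Phi$ that is linear in $t$ from the part linear in $\xi$; this gives $\Omega(z)$ as a rational expression in Jacobi elliptic/Zeta values at $z$ together with $s_2$ and $\alpha$. The boundedness of the squared-eigenfunction combinations reduces, after the quasi-periodic theta prefactors cancel, to the vanishing of the real part of the total $\xi$-exponent; this is precisely the condition $\Re(I(z))=0$ cutting out $Q$ inside the fundamental rectangle $S$ of Appendix \ref{appendix:commonality map}. The task then is to check that this single real condition on $z$ forces $\Omega(z)\in\ii\mathbb{R}$.

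Next, I would analyse $Q$ geometrically in $S$. Using the quasi-periodicity of $\vartheta_i$, the reality/imaginarity properties of the Jacobi Zeta function on the axes of $S$, and the reflection symmetries $z\mapsto -z$ and $z\mapsto \bar z$, I expect $Q$ to consist of the horizontal and vertical symmetry axes of $S$ together with the boundary. On each such component I would exploit the involution that sends $z$ to its conjugate point on the spectral curve: this involution fixes $Q$ pointwise, exchanges the two sheets, and maps $\Omega\mapsto -\overline{\Omega}$, forcing $\Omega+\overline{\Omega}=0$ and hence $\Omega\in\ii\mathbb{R}$.

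The main obstacle is step (3): in the focusing case the spectrum of $\mathcal{JL}$ is known to leave the imaginary axis generically (as stressed by Deconinck and Kapitula), so stability relies on a delicate compatibility between the transcendental conditions ``$z\in Q$'' and ``$\Omega(z)\in\ii\mathbb{R}$''. To close this I would express $\Re(I(z))$ and $\Re(\Omega(z))$ as real-analytic functions on $S$ built from $Z$, $K$, $K'$ and the theta logarithmic derivatives, and verify that their zero sets coincide on the $\dn$-branch by a direct comparison along the candidate curves described above, exhausting $S$ by the quasi-periodicity of the construction. Once $\Omega(Q)\subset\ii\mathbb{R}$ is established, Definition \ref{define:spect-stable} yields the spectral stability of the $\dn$-type solutions.
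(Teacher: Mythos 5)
Your overall strategy is the same as the paper's: use the squared eigenfunctions of Lemma \ref{lemma:W} with $\Omega=8\ii\lambda y$, reduce boundedness of $W$ to the condition $\Re(I(z))=0$ defining $Q$, and then show $\Omega(Q)\subset\ii\mathbb{R}$ via the conjugation symmetries of the Zeta function (this is exactly Proposition \ref{prop:Q_r-z=l} and the last computation in Lemma \ref{lemma: dn z}). However, there is a genuine gap at the step where you determine $Q$, and it is precisely the step that distinguishes the stable $\dn$ case from the unstable $\cn$ case.

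First, your candidate set is wrong in detail: for $l=\tfrac{K'}{2}$ the set $Q$ is \emph{not} ``the horizontal and vertical symmetry axes of $S$ together with the boundary.'' The paper's Lemma \ref{lemma:dn-I(z)} shows that on the vertical lines $\Re(z)=(2m-1)\tfrac{K'}{2}$ one has $I(z)\notin\ii\mathbb{R}$ except at the isolated points $z_I=0,\pm\tfrac{K}{2}$; the correct answer (Lemma \ref{lemma: dn z}) is $Q=Q_0=\{z_R+\tfrac{\ii}{2}nK\}$, i.e.\ only the three horizontal lines. Second, and more seriously, you give no mechanism for excluding \emph{additional} components of $\{\Re(I(z))=0\}$ in the interior of $S$; ``exhausting $S$ by quasi-periodicity'' does not do this. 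This exclusion is the real content of the theorem: for the $\cn$ branch ($l=0$) such extra curves \emph{do} exist (they join the points $z_i$ through $\pm z_c$, see Proposition \ref{prop: zr0 zi0}), and on them $\Omega\notin\ii\mathbb{R}$, which is why the $\cn$ solutions are not spectrally stable in general. The paper closes this gap by a harmonicity argument: any component of $\{\Re(I)=0\}$ not contained in $Q_0$ cannot meet $Q_0$ (one curve per simple pole, $I'\neq 0$ on $Q_0$) nor the boundary of $S$ (Lemma \ref{lemma:dn-I(z)}), hence would be closed, and the maximum principle for the harmonic function $\Re(I)$ would force $\Re(I)\equiv 0$ inside, contradicting the finiteness of the zeros of $I'$. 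You need this (or an equivalent global argument) for the proof to be complete. Finally, your plan to ``verify that the zero sets of $\Re(I)$ and $\Re(\Omega)$ coincide'' asks for more than is true or needed; the required statement is only the inclusion $Q\subset\{\Re(\Omega)=0\}$.
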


Since the relationship between spectral parameter $\lambda$ of the Lax pair and eigenvalue $\Omega$ in the linearized spectral problem \eqref{eq:spectral} is different from the one in the focusing NLS equation, we get the following distinct stability criterion. For the $\dn$-type solutions of the mKdV equation, the square of the eigenvalue could be represented by a cubic polynomial of the variable $\lambda^2$:
\begin{equation}\Omega^2=-64\lambda^2(\lambda^2-\lambda_1^2)(\lambda^2-\lambda_2^2). \end{equation}
Thus, when $0<{\Im}(\lambda_1)<{\Im}(\lambda)<{\Im}(\lambda_2)$ with $\lambda\in \ii \mathbb{R}$, it follows $\Omega \in \ii \mathbb{R}$. For the NLS equation, in view of \cite{DeconinckyU-20}, the relationship between $\Omega$ and $\lambda$ is 
\begin{equation}\Omega^2=-(\lambda^2-\lambda_1^2)(\lambda^2-\lambda_2^2),
\end{equation}
with $\lambda_1,\lambda_2\in \ii \mathbb{R}$ and $0<{\Im}(\lambda_1)<{\Im}(\lambda_2)$, which implies $\Omega \in \mathbb{R}$, ${\Im}(\lambda_1)<{\Im}(\lambda)<{\Im}(\lambda_2),\lambda\in \ii \mathbb{R}$. Therefore, we can conclude that the $\dn$-type solutions of the mKdV equation are spectrally stable but unstable for the NLS equation.

For the $\cn$-type solutions, we mainly consider the spectral stability with respect to the subharmonic perturbations. The value of modulus $k$ divides the spectral problem into two different types, proved in Proposition \ref{prop: zr0 zi0}. One is that the spectral curve of $z$ intersects with the real axis, and the other is that the spectral curve of $z$ intersects with the imaginary axis. Especially, we use Figure \ref{fig: sub cn} and Figure \ref{fig:cn sub point} to illustrate the above two conditions. Based on the different requirements of the spectral curve, we get the following theorem of the spectral stability.
\begin{theorem}\label{theorem:subharmonic-1-}
	The spectral stability of the cnoidal wave solutions for the \eqref{eq:mKdV} equation could be divided into the following two categories:
	\begin{itemize}
		\item If $\frac{2E(k)}{K(k)}\geq 1$, i.e. $ k\le \hat{k} \approx 0.9089 $, the $\cn$-type solutions are spectrally stable with respect to perturbations of period $2PT$, where $P\le \frac{\pi}{\pi+ M(z_c)}$ and $z_c$ satisfies the condition in Proposition \ref{prop: zr0 zi0}. 
		\item If $\frac{2E(k)}{K(k)}<1$, the $\cn$-type solutions are co-periodic subharmonic stable and have no other subharmonic perturbations.
	\end{itemize}
\end{theorem}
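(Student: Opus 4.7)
The plan is to reduce the spectral stability question for the $\cn$-type solution to a geometric condition on the uniformizing parameter $z$. By Lemma~\ref{lemma:W}, every eigenfunction $W(\xi)$ of \eqref{eq:spectral} is a theta-function quotient multiplied by an exponential $\exp(\alpha\xi\, I(z))$, where $I(z)$ defined in \eqref{eq:I} plays the role of a Floquet exponent. Boundedness of $W$ on $\mathbb{R}$ therefore forces $\Re(I(z))=0$, so the admissible $z$ lie on the curve $Q$ of \eqref{eq:set Q}. Requiring in addition subharmonic stability with period $2PT$ imposes the quasi-periodic condition $W(\xi+2PT)=W(\xi)$, which along $Q$ becomes an arithmetic constraint on $T\cdot \Im(I(z))/\pi$ depending on $P$.

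Second, I would analyze the curve $Q$ for the $\cn$-type background. By Proposition~\ref{prop: zr0 zi0}, the topology of $Q\subset S$ bifurcates at the modulus determined by $2E(k)=K(k)$, i.e., $k=\hat{k}$. In the regime $k\le\hat{k}$ (equivalently $2E(k)/K(k)\ge 1$), the curve $Q$ crosses the real axis of $S$ at the nontrivial point $z_c$ identified there, and $\Im(I(z))$ sweeps out a continuous interval along $Q$. In the regime $k>\hat{k}$, the curve meets only the imaginary axis of $S$, so $\Im(I(z))$ collapses onto the discrete lattice of co-periodic Floquet exponents, leaving no room for strict subharmonics.

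Third, combining the two ingredients yields the theorem. In the first regime, the arithmetic constraint admits a solution $z\in Q$ precisely when $P(\pi+M(z_c))\le \pi$, where $M(z_c)$ is read off from the maximum of the Floquet exponent along $Q$ attained at $z_c$; this gives the stated bound $P\le\pi/(\pi+M(z_c))$. In the second regime, the only admissible value is the co-periodic $P=1$. To complete the stability claim one then verifies $\Omega(z)\in \ii\mathbb{R}$ for every admissible $z$, using the algebraic identity $\Omega^2=-64\lambda^2(\lambda^2-\lambda_1^2)(\lambda^2-\lambda_2^2)$ and tracking the image of $Q$ under the conformal map of Appendix~\ref{appendix:commonality map} to show that the corresponding $\lambda(z)$ lies in the stability interval on $\ii\mathbb{R}$.

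The main obstacle is the geometric analysis of the curve $Q=\{\Re(I(z))=0\}$ inside the rectangle $S$, and in particular the identification of the extremizer $z_c$ together with the explicit computation of $M(z_c)$. This requires careful control of the Zeta function and the theta logarithmic derivatives appearing in $I(z)$, and pivots on the sign of $2E(k)-K(k)$, which governs whether the real or the imaginary intersections of the spectral curve dominate. Ensuring that no additional bounded modes arise outside $Q$ and that the subharmonic quantization cuts out exactly the stated set of admissible $P$'s constitutes the technical heart of the proof.
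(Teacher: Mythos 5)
Your overall scaffolding matches the paper's: restrict to the curve $Q=\{\Re(I(z))=0\}$ for boundedness, impose the quantization $M(z)=\frac{m\pi}{P}+(2n+1)\pi$ for $2PT$-periodicity, use the bifurcation of $Q$ at $2E(k)=K(k)$ from Proposition \ref{prop: zr0 zi0}, and extract the bound $P\le\pi/(\pi+M(z_c))$ from the monotone range of $M$ along the non-real branch (Lemma \ref{lemma:M increase} together with $M(z_1)=-\pi$). But the decisive ingredient is missing. The theorem is an equivalence, and both directions hinge on knowing exactly where on $Q$ the eigenvalue $\Omega(z)$ is purely imaginary. The paper's Lemma \ref{lemma:z R O} shows that for every $z\in Q\setminus(\mathbb{R}\cup\{z_1,z_2,z_3,z_4\})$ one has $\Omega(z)\notin\ii\mathbb{R}$; this is proved by passing to the $\Lambda=4\lambda^2/\alpha^2$ plane and showing that the curve $\Re(\Omega)=0$ (which lies in the half-plane $\Lambda_R\le a$, $a=1-2k^2$) and the curve $\Re(I(z(\Lambda)))=0$ (which is forced into $\Lambda_R\ge a$) meet only at $\Lambda=a+\ii b$. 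Without this separation result you cannot conclude instability for $P>\pi/(\pi+M(z_c))$ (or for any $P\ge 2$ in the regime $2E/K<1$), nor can you conclude that the corner points $z_i$ --- which always satisfy the quantization since $M(z_i)=\pm\pi$ --- are harmless (they are, because $\Omega(z_i)=0$ by Proposition \ref{prop:Q-M-cn}). Your substitute, tracking $\lambda(z)$ into a "stability interval on $\ii\mathbb{R}$" via $\Omega^2=-64\lambda^2(\lambda^2-\lambda_1^2)(\lambda^2-\lambda_2^2)$, is the $\dn$-type mechanism and does not apply here: for the $\cn$ background the branch points are $\lambda_1=\frac{\ii\alpha}{2}(k-\ii k')$ and $\lambda_1^*$, which are not purely imaginary, and the Lax spectrum is not contained in $\ii\mathbb{R}$; on the real part of $Q$ one instead gets $\Omega\in\ii\mathbb{R}$ from the symmetry $\Omega^*(z)=-\Omega(z)$ of Proposition \ref{prop:Q_r-z=l}.

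Two smaller points. First, your quantifier in the third step is inverted: stability holds precisely when the quantization condition has \emph{no} solution on the non-real branch of $Q$ other than the corners; $P\le\pi/(\pi+M(z_c))$ is the condition that \emph{excludes} such solutions, not the condition under which "the arithmetic constraint admits a solution $z\in Q$." Second, in the regime $2E/K<1$ the function $M$ does not "collapse onto a discrete lattice": it still sweeps the full interval $[-\pi,\pi]$ continuously and monotonically along the curve joining $z_1$ to $z_3$, and the reason only $P=1$ survives is that for $P\ge 2$ some quantization value $\frac{m\pi}{P}+(2n+1)\pi$ with $m\ne 0,-P$ lies strictly inside $(-\pi,\pi)$ and is therefore attained at an interior point of that curve, where $\Omega\notin\ii\mathbb{R}$.
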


Based on the results of spectral stability, we further study the orbital stability of the above elliptic function solutions in a suitable function space.

\begin{definition}\label{define:orbital stable}
	The elliptic function solution $u(\xi)$ of the mKdV equation is orbitally stable with respect to perturbations in a Hilbert space $X$ if for any solution $v(\xi,t)$ of the mKdV equation and any given $\epsilon>0$, there exists $\delta>0$ satisfying
	\begin{equation} \| v(\xi,0)- T(\gamma(0))u(\xi) \|_{X}\le \delta,\end{equation}
	such that
	\begin{equation} \max_{t\in \mathbb{R}} \inf_{\gamma\in \mathbb{R}} \| v(\xi,t)-T(\gamma(t))u(\xi,t ) \|_{X} \le \epsilon, 
	\end{equation}
	where $\|\cdot \|$ denotes the norm obtained through $\left\langle\cdot, \cdot\right\rangle $ in the space $X$ and the operator $T(\gamma(t))$ is defined here as
	\begin{equation}\label{eq:defin-operator}
		T(\gamma(t))u(\xi)\equiv u(\xi+\gamma(t)).
	\end{equation}
\end{definition}
In this paper, we mainly consider two Hilbert spaces $H^1([-PT,PT])$ and $H^2([-PT,PT])$. For any conserved quantities $\mathcal{H}_i$ in the mKdV hierarchy, the corresponding operator $\mathcal{L}_i$ and Krein signature $\mathcal{K}_i(z)$ are defined in Definition \ref{defin:Krein}. Then, we obtain Lemma \ref{lemma:alpha-0} and Lemma \ref{lemma:H<u}, which will help us to establish the proof of the orbital stability. With the aid of methods in \cite{GrillakisSS-87,GrillakisSS-90,KapitulaP-13}, we provide an orbital stability analysis and come to the following theorems.

\begin{theorem}\label{theorem:orbital-cn-}
	If the $\cn$-type solutions $u(\xi)$ are spectrally stable with respect to perturbations of period $2PT,P\in \mathbb{Z}_{+}$ and $P< \frac{\pi}{\pi+ M(z_c)}$, then they are orbitally stable in the space $H^2_{per}([-PT,PT])$.
\end{theorem}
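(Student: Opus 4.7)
The plan is to apply the Grillakis--Shatah--Strauss (GSS) framework combined with a higher-order conserved quantity from the mKdV hierarchy, in the spirit of Deconinck--Upsal's treatment of the focusing NLS $\cn$-background \cite{DeconinckyU-20}. The reason for working in $H^2_{per}([-PT,PT])$ rather than $H^1_{per}$ is that the natural energy $\mathcal{H}_1 = \int (u_\xi^2 - u^4)\,\dd\xi$ is indefinite on the focusing $\cn$-background, so its Hessian alone cannot control an $H^1$ perturbation. One must instead build a Lyapunov functional $\Lambda(v) = \mathcal{H}_i(v) + \sum_{j<i} c_j\,\mathcal{H}_j(v)$ from a higher-order conserved density whose principal symbol is $u_{\xi\xi}^2$, so that its Hessian dominates $\|\cdot\|_{H^2}^2$; the Lagrange multipliers $c_j$ are pinned down by the requirement $\delta\Lambda(u)=0$, which is solvable precisely because $u$ is a stationary solution of the associated higher mKdV flow, as guaranteed by Lemma~\ref{lemma:alpha-0}.

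The next step is to analyze the self-adjoint operator $\mathcal{L}_i := \delta^2\Lambda(u)$ on $L^2_{per}([-PT,PT])$. Translation invariance gives $\partial_\xi u \in \ker \mathcal{L}_i$, and the squared-eigenfunction representation from Lemma~\ref{lemma:W}, rewritten via the theta-function data of Theorem~\ref{theorem:solution-Theta form}, furnishes a complete basis simultaneously diagonalizing $\mathcal{JL}$ and $\mathcal{L}_i$. The Krein signature $\mathcal{K}_i(z)$ of Definition~\ref{defin:Krein} then attaches a $\pm$ sign to each bounded eigenmode, and the negative-eigenvalue count $n(\mathcal{L}_i)$ equals the number of negative Krein signatures. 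The subharmonic hypothesis $P<\pi/(\pi+M(z_c))$ cuts the admissible set $Q$ down to a discrete collection whose negative-signature contribution matches exactly the dimension of the constraint-plus-symmetry subspace (translation orbit together with the fixed level sets of the lower $\mathcal{H}_j$); this is the analogue for our periodic setting of the Kapitula--Kevrekidis--Sandstede index formula \cite{KapitulaKS-04}.

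With the index balanced, Lemma~\ref{lemma:H<u} then upgrades to a coercivity estimate $\langle \mathcal{L}_i w,w\rangle \ge C\|w\|_{H^2}^2$ on the $L^2$-orthogonal complement of $\partial_\xi u$ and the constraint gradients $\nabla\mathcal{H}_j(u)$. A standard modulation argument selects $\gamma(t)\in\mathbb{R}$ at each time so that $w(\cdot,t):=v(\cdot,t) - T(\gamma(t))u$ lies in this complement; combined with conservation of $\Lambda$ along the $H^2_{per}$-flow and the Taylor expansion $\Lambda(v) - \Lambda(u) = \tfrac{1}{2}\langle \mathcal{L}_i w, w\rangle + \mathcal{O}(\|w\|_{H^2}^3)$, this closes the argument and yields the orbital stability conclusion of Definition~\ref{define:orbital stable}.

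The main obstacle is the index-matching step: one must verify that the \emph{strict} bound $P<\pi/(\pi+M(z_c))$ is exactly what forces every negative Krein direction of $\mathcal{L}_i$ to be absorbed by translation or by a fixed lower conserved quantity. For larger $P$ additional Floquet modes of \eqref{eq:spectral} become admissible, each potentially contributing an uncompensated negative direction of $\mathcal{L}_i$, in which case the Lyapunov bound cannot close. This delicate accounting is where the explicit theta-function representation of the squared eigenfunctions, and the uniformizing conformal parameter $z$, are indispensable rather than merely convenient, since they convert the count into an arithmetic condition on the discrete set $Q$.
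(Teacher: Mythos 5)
Your overall strategy coincides with the paper's: both build the Lyapunov functional from $\hat{\mathcal{H}}_2=\mathcal{H}_2+c_{2,1}\mathcal{H}_1+c_{2,0}\mathcal{H}_0$, both reduce coercivity of $\mathcal{L}_2=\hat{\mathcal{H}}_2''(u)$ to the signs of the Krein quantities $\mathcal{K}_2(z)$ on the squared eigenfunctions, and both finish with modulation, a Taylor expansion of $\hat{\mathcal{H}}_2(u+h)-\hat{\mathcal{H}}_2(u)$, and conservation. Where your route genuinely diverges is the ``index-matching'' step. The paper never invokes the Kapitula--Kevrekidis--Sandstede count: it tunes the multiplier $c_{2,1}=-2\alpha^2\bigl(\dn^2(K+2\ii l)-E/K\bigr)$ so that, via \eqref{eq:K_2-K_1}, $\mathcal{K}_2(z)$ collapses to the manifestly non-negative expression \eqref{eq:K_2(z)} (a real square times $-\Omega_1^2\ge 0$). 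Hence there are no negative Krein directions to be absorbed by constraints at all; the only danger is the \emph{degenerate} directions $z=0$ (translation) and $z=\pm z_c$, where $\mathcal{K}_2$ vanishes. The strict hypothesis $P<\pi/(\pi+M(z_c))$, i.e.\ $M(z_c)<-\pi(P-1)/P$, is used precisely to guarantee that $W(\xi;\Omega(\pm z_c))$ is not $2PT$-periodic, so the infimum $\alpha_0$ in Lemma \ref{lemma:alpha-0} stays strictly positive; it is not excluding ``uncompensated negative directions,'' and correspondingly only the single constraint $\mathcal{H}_0(u+h)=\mathcal{H}_0(u)$ (plus orthogonality to $\partial_\xi u$ from modulation) is imposed, not the level sets of several lower $\mathcal{H}_j$. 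Two smaller corrections: the coercivity estimate is Lemma \ref{lemma:alpha-0}, not Lemma \ref{lemma:H<u} (the latter is the continuity of $\hat{\mathcal{H}}_2$ on bounded sets, used only for the final $\epsilon$--$\delta$ bookkeeping); and to close the argument one still needs the continuity-in-time step built on the cubic inequality \eqref{eq:H_2-3}, showing $\|h(\cdot,t)\|_{H^2}$ cannot jump from the root branch $[0,\nu_2(\Delta)]$ to $[\nu_3(\Delta),\infty)$ --- conservation of $\hat{\mathcal{H}}_2$ alone does not give this. Your plan is workable if the index computation is actually carried out, but the paper's choice of $c_{2,1}$ renders that computation unnecessary.
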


\begin{theorem}\label{theorem:orbital-dn-}
	The $\dn$-type solutions $u(\xi)$ are  orbitally stable in the space $H^2_{per}([-PT,PT])$, $P\in \mathbb{Z}_{+}$.
\end{theorem}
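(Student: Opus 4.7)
The plan is to adapt the Grillakis--Shatah--Strauss framework in the same spirit as the proof of Theorem~\ref{theorem:orbital-cn-}, exploiting the fact that for $\dn$-type backgrounds the spectral stability granted by Theorem~\ref{theorem:spec-dn} is unconditional, so no restriction on the period $P$ will be needed. The $\dn$ solution $u(\xi)$ is a relative equilibrium of the mKdV flow with respect to the translation symmetry $T(\gamma)$, and the natural Lyapunov candidate is a finite linear combination
\begin{equation}
\mathcal{E}(v)=\sum_{i=0}^{N} c_i \,\mathcal{H}_i(v)
\end{equation}
of conserved quantities drawn from the mKdV hierarchy, with coefficients $c_i$ chosen so that $\mathcal{E}'(u)=0$ on $H^2_{per}([-PT,PT])$. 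Because mKdV possesses higher symmetries, $u$ is simultaneously a critical point of each $\mathcal{H}_i$ up to appropriate multipliers, and the standard hierarchy identities used to produce Eq.~\eqref{eq:H0} will fix the $c_i$ unambiguously.

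First, I would invoke Lemma~\ref{lemma:alpha-0} and Lemma~\ref{lemma:H<u} to translate the spectral picture into a Krein-signature statement for the self-adjoint operator $\mathcal{L}_{\mathcal{E}}:=\mathcal{E}''(u)$ acting on $H^2_{per}([-PT,PT])$. Concretely, the Krein signature $\mathcal{K}_i(z)$ of the point eigenvalues of $\mathcal{J}\mathcal{L}$ indexed along the curve $Q$ of Eq.~\eqref{eq:set Q} determines the number of negative directions $n(\mathcal{L}_{\mathcal{E}})$, and the weighted sum $\sum c_i \mathcal{K}_i$ can be arranged so that every negative direction of the individual $\mathcal{L}_i$ is cancelled by the contribution of a higher-order operator in the hierarchy. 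Since $\dn$-type solutions are spectrally stable for all $P\in\mathbb{Z}_+$, every nonzero $\Omega\in\sigma(\mathcal{JL})$ lies on $\ii\mathbb{R}$ with definite Krein sign, so a finite $N$ suffices and the resulting $\mathcal{L}_{\mathcal{E}}$ is nonnegative on $H^2_{per}([-PT,PT])$ with a finite-dimensional kernel.

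Next, I would identify the kernel $\ker \mathcal{L}_{\mathcal{E}}$ with the tangent space to the group orbit $\{T(\gamma)u:\gamma\in\mathbb{R}\}$ generated by the translation symmetry (the element $\partial_\xi u$), together with any residual generators coming from other commuting flows in the hierarchy that act trivially on the orbit modulo symmetries. Coercivity of $\mathcal{L}_{\mathcal{E}}$ on the $L^2$-orthogonal complement of this kernel then follows from a standard argument: the spectrum is discrete on $H^2_{per}([-PT,PT])$, nonnegative, with a spectral gap above zero, yielding
\begin{equation}
\langle \mathcal{L}_{\mathcal{E}} y, y \rangle \ge \delta \|y\|_{H^2}^2, \qquad y\perp \ker\mathcal{L}_{\mathcal{E}}.
\end{equation}

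Finally, standard modulation theory as in \cite{GrillakisSS-87,GrillakisSS-90,KapitulaP-13} produces a $C^1$ function $\gamma(t)$ so that $v(\cdot,t)-T(\gamma(t))u$ stays orthogonal to $\ker \mathcal{L}_{\mathcal{E}}$ for all $t$, and conservation of $\mathcal{E}$ together with the coercivity estimate yields the orbital stability bound in Definition~\ref{define:orbital stable}. The main obstacle I expect is not spectral (spectral stability is already in hand) but combinatorial: verifying that the coefficients $c_i$ in the Lyapunov functional can indeed be chosen to simultaneously annihilate all negative directions of $\mathcal{L}_{\mathcal{E}}$ while preserving the correct kernel, which requires a careful bookkeeping of the Krein signatures $\mathcal{K}_i(z)$ along the stability curve $Q$ and a proof that the resulting quadratic form is nondegenerate transverse to the symmetry orbit.
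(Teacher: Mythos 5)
Your overall strategy---a Grillakis--Shatah--Strauss Lyapunov argument built from a linear combination of higher conserved quantities of the mKdV hierarchy, with positivity controlled through Krein signatures and finished by the coercivity/modulation machinery of Lemmas \ref{lemma:alpha-0} and \ref{lemma:H<u}---is exactly the route the paper takes (with $N=2$, i.e.\ the functional $\hat{\mathcal{H}}_2=\mathcal{H}_2+c_{2,1}\mathcal{H}_1+c_{2,0}\mathcal{H}_0$). However, there is a genuine gap at the step you yourself flag as the ``main obstacle,'' and the inference you use to wave it away is false. You assert that because $\dn$-type solutions are spectrally stable for every $P$, ``every nonzero $\Omega\in\sigma(\mathcal{JL})$ lies on $\ii\mathbb{R}$ with definite Krein sign, so a finite $N$ suffices and the resulting $\mathcal{L}_{\mathcal{E}}$ is nonnegative.'' Spectral stability does not imply a definite Krein sign: Remark \ref{remark:K_1-dn} shows that for $l=\frac{K'}{2}$ the first signature $\mathcal{K}_1(z)$ is nonnegative on $Q\cap\mathbb{R}$ but nonpositive on the complex branch $z=z_R\pm\ii\frac{K}{2}$ of $Q$, so $\mathcal{L}_1$ genuinely has negative directions even though the spectrum is purely imaginary. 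This sign change is precisely why the $\dn$ case cannot be closed at the level of $\mathcal{H}_1$ and why the whole theorem hinges on the cancellation you defer.

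What actually makes the argument work is the identity $\mathcal{K}_2(z)=\frac{\Omega_2}{\Omega_1}\mathcal{K}_1(z)$ together with the explicit choice $c_{2,1}=-2\alpha^2\left(\dn^2(K+2\ii l)-\frac{E(k)}{K(k)}\right)$, under which the multiplier $\frac{\Omega_2}{\Omega_1}$ becomes proportional to the same factor $\dn^2(\ii(z-l))+\dn^2(\ii(z-l)+K)-\frac{2E}{K}$ that appears in $\mathcal{K}_1$, so that
\begin{equation}
\mathcal{K}_2(z)=-4\Omega_1^2\alpha^3 K(k)\left(\dn^2(\ii(z-l))+\dn^2(\ii(z-l)+K)-\tfrac{2E(k)}{K(k)}\right)^2
\end{equation}
is a perfect square times $-\Omega_1^2>0$. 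One must then check that the squared factor does not vanish anywhere on $Q$ with $\Omega_1\neq 0$, which requires the elliptic-integral inequalities \eqref{eq:value-E-K-lin-c} and \eqref{eq:value-E-K-lin-d} to bound $\dn(2\ii(z-l))$ separately on $Q\cap\mathbb{R}$ and on the lines $z=z_R\pm\ii\frac{K}{2}$ (the two estimates \eqref{eq:K_1(z)-dn1} and \eqref{eq:K_1(z)-dn2}). Without this computation your Lyapunov functional is not known to be nonnegative, and the coercivity bound $\left\langle\mathcal{L}_2 W,W\right\rangle\ge\alpha_1\|W\|_{H^2}^2$ cannot be invoked. Once that positivity is established, your kernel identification, coercivity on the orthogonal complement, and modulation step are standard and match the paper's conclusion of the proof of Theorem \ref{theorem:orbital-cn-}.
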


For the integrable equations, a particular feature is that there exist abundant exact solutions with diverse dynamics. We will provide some exact solutions to describe the stable and unstable dynamics. Based on the Darboux-B\"{a}cklund transformation, we construct breather solutions $u^{[1]}(\xi,t)$ and $u^{[2]}(\xi,t)$ corresponding to the parameter $\lambda$. For the $\dn$-type solutions, we construct new solutions $u^{[1]}(\xi,t)$ \eqref{eq:breather-dn-function} of equation \eqref{eq:mKdV1}. By choosing a special parameter $z$, we use the solution $u^{[1]}(\xi,t)$ to describe stable dynamics (See Figure \ref{fig:breaher-dn}).

For the $\cn$-type solutions, we construct a new solution $u^{[2]}(\xi,t)$ \eqref{eq:breather-cn-function} of equation \eqref{eq:mKdV1}, which could describe unstable dynamics (See Figure \ref{fig:breaher-cn}). As $t \rightarrow\pm\infty$, the function $u^{[2]}(\xi,t)$ could be regarded as a translation of $u(\xi)$ in \eqref{eq:elliptic-solution},
\begin{equation}\label{eq:t-infty}
	\begin{split}
		&u_{\pm \infty}^{[2]}(\xi)=\lim_{t\rightarrow \pm\infty}u^{[2]}(\xi,t)=\alpha k\cn(\alpha\xi\pm 2\ii (z-z^*)). 
	\end{split}
\end{equation}
More precisely, the asymptotic behavior of function $u^{[2]}(\xi,t)$ is given by
\begin{equation}\label{eq:u2-cn}
	\begin{split}
		u^{[2]}(\xi,t)=&u_{\pm \infty}^{[2]}(\xi)+w_{\pm}(\xi,t)+\mathcal{O}\left(\ee^{\mp4E_{R}t} \right),\qquad t\rightarrow \pm \infty,
	\end{split}
\end{equation}
where $w_{\pm}(\xi,t)$ is defined in equation \eqref{eq:defin-w}. Based on equation \eqref{eq:u2-cn}, the linearly unstable dynamics for the $\cn$-type solutions will be shown by the breather $u^{[2]}(\xi,t)$ in Subsection \ref{sub:cn-unstable-explicit}.	

The main contributions of this work are the following:
\begin{itemize}
	\item We study the linearized spectral problem of the focusing mKdV equation on the elliptic function background. For the unstable case, we consider subharmonic perturbations with the integer multiples of the period and then give the necessary and sufficient conditions for spectral stability with respect to the subharmonic perturbations. Furthermore, based on the above stable conditions of spectral stability, we study the orbital stability problem.
	
	\item Compared to previous studies on the stability problem by the MSW method, we use the theta function theory to develop this method. There are some advantages of utilizing theta functions. On the one hand, for the calculations of Jacobi elliptic functions, we can analyze the poles or zeroes by using the Liouville theorem to avoid complicated computations, as in \cite{FengLT-19}. On the other hand, since the spectrum can be represented by the Zeta function for the stability analysis, we can use the conformal transformation between $\lambda$ and $z$ to establish the spectral and orbital stability.
	
	\item With the aid of the Darboux-B\"{a}cklund transformation, we construct the breather solutions represented by theta functions to exhibit the stable or unstable dynamics. Through the representation of theta functions to breather solutions, their asymptotic analysis can be performed, which is consistent with the linear stability analysis for elliptic function solutions as $t\to\pm \infty$.
\end{itemize}

\subsection{Outline for this work}
The organization of this work is as follows. In Section \ref{sec:solutions}, using the effective integration method \cite{Kamchatnov-97,Kamchatnov-00,Shin-12}, we obtain the elliptic function solutions of the mKdV equation and the fundamental solutions for the corresponding Lax pair. With the aid of the theory of theta functions, the Jacobi elliptic solutions can be rewritten by theta functions. In Section \ref{section:spectral analysis}, we study the linearized spectral problem of the focusing mKdV equation by using the squared eigenfunctions and analyze the spectral stability of the periodic waves with respect to subharmonic perturbations. In Section \ref{section:orbial stability}, based on the spectrally stable condition, we further prove the orbital stability of periodic waves in a proper functional space. In Section \ref{section:breather solutions}, based on the Darboux-B\"acklund transformation, we construct breather solutions to exhibit the stable or unstable dynamics of the mKdV equation.

\section{Elliptic function solutions of the mKdV equation and its Lax pair}\label{sec:solutions}

In this section, we aim to get the elliptic function solutions of the mKdV equation and the fundamental solutions of the corresponding Lax pair by using the algebraic geometry method \cite{BelokolosBEI-94} and effective integration method \cite{Kamchatnov-97,Kamchatnov-00,Shin-12}. More basic theories and methods are mentioned in the references \cite{FengLT-19,TracyCL-84}. Under the condition of the genus-$1$ case, we obtain the elliptic function solutions of the focusing mKdV equation by the effective integration technique. And then, the solutions of the Lax pair are represented by theta functions for the uniform parameter $z$.

Matrices $\mathbf{U}(\lambda;u)$ and $\mathbf{V}(\lambda;u)$ defined in Lax pair \eqref{eq:Lax pair} satisfy the following symmetric properties:
\begin{equation}\label{eq:sym-1}
	\mathbf{U}^{\dagger}(\lambda^*;u)=-\mathbf{U}(\lambda;u),
	\,\,\,
	\mathbf{U}^{\top}(-\lambda;u)=-\mathbf{U}(\lambda;u);\quad \mathbf{V}^{\dagger}(\lambda^*;u)=-\mathbf{V}(\lambda;u),
	\,\,\,
	 \mathbf{V}^{\top}(-\lambda;u)=-\mathbf{V}(\lambda;u),
\end{equation}
by which we deduce that if $\Phi(x,t;\lambda)$ is a solution of \eqref{eq:Lax pair}, matrices $\Phi^{\dagger}(x,t;\lambda^*)$ and $\sigma_2 \Phi(x,t;\lambda)^{\top}\sigma_2^{\top}$ are both the solutions of the adjoint Lax pair:
\begin{equation}\label{eq:Phi^dagger}
		\Psi_x(x,t;\lambda)=-\Psi(x,t;\lambda) \mathbf{U}(\lambda;u),\qquad
		\Psi_t(x,t;\lambda)=-\Psi(x,t;\lambda)\mathbf{V}(\lambda;u).
\end{equation}
Combining Lax pair \eqref{eq:Lax pair} with its adjoint form \eqref{eq:Phi^dagger}, we can verify that the matrix function 
\begin{equation}\label{eq:L-Phi-matrix}
	\mathbf{L}(x,t;\lambda):=\frac{1}{2}\Phi(x,t;\lambda) \sigma_3 \sigma_2 \Phi(x,t;\lambda)^{\top}\sigma_2^{\top }, \qquad \sigma_2=
	\begin{bmatrix}
		0 & -\ii \\ 
		\ii & 0
	\end{bmatrix},
\end{equation}
satisfies the stationary zero curvature equations
\begin{equation}\label{eq:LL1}
	\mathbf{L}_x(x,t;\lambda)=[\mathbf{U}(\lambda;u),\mathbf{L}(x,t;\lambda)], \qquad \mathbf{L}_t(x,t;\lambda)=[\mathbf{V}(\lambda;u),\mathbf{L}(x,t;\lambda)].
\end{equation}
The compatibility condition of the above equations \eqref{eq:LL1}, $\mathbf{L}_{xt}(x,t;\lambda)=\mathbf{L}_{tx}(x,t;\lambda)$, also yields the \eqref{eq:mKdV} equation.

Suppose the function matrices 
\begin{equation}\nonumber
	\Phi(x,t;\lambda):=
	\begin{bmatrix}
		\phi_1(x,t;\lambda) & \phi_2 (x,t;\lambda)\\
		\psi_1(x,t;\lambda) & \psi_2(x,t;\lambda)
	\end{bmatrix}, \qquad \text{and} \qquad 
	\mathbf{L}(x,t;\lambda)=
	\begin{bmatrix}
		-f(x,t;\lambda) &g(x,t;\lambda)\\  
		h(x,t;\lambda) & f(x,t;\lambda)
	\end{bmatrix},
\end{equation}
satisfy the Lax pair \eqref{eq:Lax pair} and the stationary zero curvature equation \eqref{eq:LL1}, respectively. We aim to calculate the exact expression of matrix function $\mathbf{L}(x,t;\lambda)$. For the genes-1 case, we assume that $\mathbf{L}(x,t;\lambda)$ is a quadratic polynomial of $\lambda$: $\mathbf{L}(x,t;\lambda)= \mathbf{L}_0(x,t)\lambda^2+\mathbf{L}_1(x,t)\lambda+\mathbf{L}_2(x,t)$. Inserting this ansatz into equation \eqref{eq:LL1} and comparing the coefficients of $\lambda$, we obtain
\begin{equation}\label{eq:soL1}
	\mathbf{L}(x,t;\lambda)=-\ii(\alpha_0\lambda+\alpha_1)\mathbf{U}(\lambda;u) -\frac{\alpha_0}{2}\sigma_3\left( \mathbf{Q}^2- \mathbf{Q}_x\right)-\alpha_2\sigma_3,
\end{equation} 
where $\alpha_i \in \mathbb{R},i=0,1,2$, and matrices $\mathbf{Q}$ and $\mathbf{U}(\lambda;u)$ are defined in equation \eqref{eq:U,V}. Furthermore, we get $\alpha_1=0$ and
\begin{equation}\label{eq:Qtx1}
	\alpha_0u_t+4\alpha _2u_x=0.
\end{equation}
Without loss of generality, we can set $\alpha_0=1$. The determinant of $\mathbf{L}(x,t;\lambda)$ is given by
\begin{equation}\label{eq:det L}
		\det(\mathbf{L}(x,t;\lambda))=-f(x,t;\lambda)^2-g(x,t;\lambda)h(x,t;\lambda)
		=-\lambda^4-s_1\lambda^3-s_2\lambda^2-s_3\lambda-s_4
		\equiv P(\lambda).
\end{equation}
Comparing the coefficients of the $\det(\mathbf{L}(x,t;\lambda))$, we obtain
\begin{equation}\label{eq:s1s21}
	s_1=0,\qquad \alpha_2=\frac{1}{2}s_2,\qquad
	\frac{s_3}{u^2}=\mu-\mu=0, 
	\qquad \mu^2 =\frac{R(u)}{4u^2},
\end{equation}
where $\mu:=-\frac{\ii }{2}(\ln u)_x$ and $R(u)=(u^2-s_2)^2-4s_4$. From equation \eqref{eq:s1s21} and the definition of $\mu$, it follows
\begin{equation}\label{eq:Rutux}
	u_t=-4\alpha_2u_x =-2s_2u_x=-2s_2\sqrt{-R(u)}.
\end{equation}
Under the transformation \eqref{eq:traveling-wave-transform}, equation \eqref{eq:Rutux} can be reduced to
\begin{equation}\label{eq:u xi}
u_{\xi}=\sqrt{-R(u)}.
\end{equation}
Then we have the following proposition:
\begin{prop}\label{prop:solution-u^2}
	The modulus square of elliptic function solutions of equation \eqref{eq:mKdV1} could be represented as
	\begin{equation}\label{eq:solution u^2}
		u^2(\xi)=k^2\alpha^2(\sn^2(K+2\ii l)-\sn ^2(\alpha \xi)),\qquad l=0 \,\, \text{  or  }\,\, \frac{K'}{2},
	\end{equation}
	where the modulus $k=\sqrt{\frac{u_3-u_2}{u_3-u_1}},$ $ \alpha^2=u_3-u_1$, and $u_{1,2,3}$ can be parameterized by
	\begin{equation}\label{eq:u1,u2,u3}
		u_1=-\alpha^2\dn^2(K+2\ii l),\qquad u_2=-k^2\alpha^2\cn^2(K+2\ii l), \qquad
		u_3=k^2\alpha^2\sn^2(K+2\ii l).
	\end{equation}
\end{prop}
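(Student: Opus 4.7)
The plan is to integrate the first-order ODE \eqref{eq:u xi} by passing to the variable $v:=u^2$, which satisfies a cubic first-order ODE whose bounded real solutions are translated Jacobi $\sn^2$ functions; the $\cn$-type and $\dn$-type branches will correspond to the two half-period shifts $l\in\{0,K'/2\}$ in the Jacobi argument.

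First I would square \eqref{eq:u xi} and multiply by $4u^2$ to obtain
$$
v_\xi^2=4u^2 u_\xi^2=-4v\bigl[(v-s_2)^2-4s_4\bigr]=-4v^3+8s_2v^2-4(s_2^2-4s_4)v.
$$
The right-hand side factors as $-4v(v-u_+)(v-u_-)$ with $u_\pm=s_2\pm 2\sqrt{s_4}$, so the cubic has three roots, one of which is always $0$. After relabeling them as $u_1\le u_2\le u_3$ and restricting $v=u^2$ to the real interval $[u_2,u_3]$ on which the cubic is non-negative, the classical Jacobi elliptic-integral formula gives the closed-form solution
$$
v(\xi)=u_3-(u_3-u_2)\sn^2(\alpha\xi,k),\qquad \alpha^2=u_3-u_1,\quad k^2=\frac{u_3-u_2}{u_3-u_1}.
$$
Since $u_3-u_2=k^2\alpha^2$, this already coincides with the stated expression once one identifies $u_3=k^2\alpha^2\sn^2(K+2il)$.

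To close the proof I would verify the parameterization of $u_{1,2,3}$ in each case. When $l=0$, the special values $\sn(K)=1$, $\cn(K)=0$, $\dn(K)=k'$ give $u_1=-k'^2\alpha^2$, $u_2=0$, $u_3=k^2\alpha^2$, producing the $\cn$-type solution $u^2=k^2\alpha^2\cn^2(\alpha\xi)$ (hence $u=\pm k\alpha\cn(\alpha\xi)$). When $l=K'/2$, the shift identities $\sn(K+\ii K')=1/k$, $\cn(K+\ii K')=-\ii k'/k$, $\dn(K+\ii K')=0$ give $u_1=0$, $u_2=k'^2\alpha^2$, $u_3=\alpha^2$, producing the $\dn$-type solution $u^2=\alpha^2\dn^2(\alpha\xi)$. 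The consistency of $k=\sqrt{(u_3-u_2)/(u_3-u_1)}$ and $\alpha^2=u_3-u_1$ with these values follows immediately from the Pythagorean identities $\sn^2+\cn^2=1$ and $\dn^2+k^2\sn^2=1$.

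The main obstacle is justifying that precisely these two values of $l$ exhaust the real, physical cases. The reason is that the cubic in $v$ always carries $v=0$ as a root, so requiring $v=u^2\ge 0$ on the interval $[u_2,u_3]$ forces either $u_2=0$ (the $\cn$-branch) or $u_1=0$ (the $\dn$-branch); translating these two sign patterns into the real-value lattice of $\sn^2(K+2il)$ isolates exactly $2il\in\{0,\ii K'\}$, as all other purely imaginary shifts produce non-real $u_3$. Once this dichotomy is established, the remainder of the argument is a routine verification using the Jacobi identities listed above.
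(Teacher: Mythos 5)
Your proposal follows essentially the same route as the paper: square \eqref{eq:u xi} to get the cubic first-order ODE \eqref{eq:ux u} for $v=u^2$, observe that $v=0$ is always a root (no constant term), invoke the standard Jacobi $\sn^2$ integration formula on $[u_2,u_3]$, and match the two admissible sign patterns $u_2=0$ and $u_1=0$ to the $\cn$- and $\dn$-branches; your explicit endpoint checks at $l=0$ and $l=K'/2$ are correct and in fact more detailed than the paper's one-line appeal to ``Jacobi elliptic function theory.'' The only inaccuracy is your closing justification that intermediate shifts ``produce non-real $u_3$'': using $\sn(K+\ii y)=1/\dn(y,k')$, $\cn^2(K+\ii y)=-k'^2\sn^2(y,k')/\dn^2(y,k')$ and $\dn(K+\ii y)=k'\cn(y,k')/\dn(y,k')$, all three of $u_1\le 0\le u_2<u_3$ stay real for every $l\in[0,K'/2]$; the correct reason the dichotomy isolates $l\in\{0,K'/2\}$ is simply that $u_1u_2u_3=0$ forces $\dn(K+2\ii l)=0$ or $\cn(K+2\ii l)=0$, which within that interval happens only at $l=K'/2$ and $l=0$ respectively. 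This is a one-line fix and does not affect the validity of the argument.
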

\begin{proof}
	Squaring equation \eqref{eq:u xi} and multiplying both sides by $u^2$, we obtain 
	\begin{equation}\label{eq:ux u}
			\left((u^2)_{\xi}\right)^2
			=-4(u^2)^3+8s_2(u^2)^2+4(4s_4-s_2^2)u^2
			=-4(u^2-u_1)(u^2-u_2)(u^2-u_3), 
	\end{equation}
	where $u_1,u_2$, and $u_3$ are given by \eqref{eq:u1,u2,u3}. When $\alpha>0,l\in [0,\frac{K'}{2}],k\in(0,1)$, the range of parameters \eqref{eq:u1,u2,u3} is $u_1\le 0 \le u_2<u_3$. Furthermore, we show the equivalence between the triple tuples $(u_1,u_2,u_3)$ and the one $(\alpha,k,l)$ in Remark \ref{remark:implicit-u123}. Comparing the coefficients of equation \eqref{eq:ux u} with respect to $u^2$, we get $s_2=\frac{1}{2}(u_1+u_2+u_3)$, $u_1=0$ or $u_2=0$, i.e., $l=\frac{K'}{2}$ or $l=0$. Thus, by the Jacobi elliptic function theory, the solution for equation \eqref{eq:ux u} is given by function $k^2\alpha^2(\sn^2(K+2\ii l)-\sn ^2(\alpha \xi))$. Thus, by the elliptic function theory, the solution for equation \eqref{eq:ux u} is given by \eqref{eq:solution u^2}.
\end{proof}

\begin{remark}\label{remark:implicit-u123}
	There is a one-to-one correspondence between the triple tuples $(u_1,u_2,u_3)$ and $(\alpha,k,l)$, where $u_1\le 0 \le u_2<u_3$ and $(\alpha,k,l)$ is in the region $\{(\alpha,k,l)|\alpha>0,0<k<1,0\le l\le \frac{K'}{2}\}$. Based on the inverse function theorem, we only need to verify the non-degenerate for the Jacobian matrix of $\alpha,k,l$. Actually, by derivative formulas of the Jacobi elliptic functions with respect to variable $z$ \cite[p.25]{ByrdF-54}, the Jacobian matrix between the triple tuples $(u_1,u_2,u_3)$ and $(\alpha,k,l)$ is
	\begin{equation}\label{eq:define-scd}
		\frac{\partial\left( u_1,u_2,u_3 \right)}{\partial\left( \alpha,k,l \right)}=16 \ii \alpha^5 k^3 \scd(K+2\ii l), \qquad  \scd(\cdot):=\sn(\cdot)\cn(\cdot)\dn(\cdot),	
	\end{equation}
 which is non-degenerate for any $\alpha>0$, $k\in(0,1)$ and $l\in (0,\frac{K'}{2})$.
\end{remark}

Under the coordinate transformation \eqref{eq:traveling-wave-transform}, the solution matrix $\Phi(x,t; \lambda)$ and the matrix $\mathbf{L}(x,t;\lambda)$ turn to:
	\begin{equation}\label{eq:Lax pair-1}
	\Phi(\xi,t; \lambda)
	=\begin{bmatrix}
	\phi_1(\xi,t; \lambda) & \phi_2(\xi,t; \lambda)
	\\ 
	\psi_1(\xi,t; \lambda) & \psi_2(\xi,t; \lambda)
	\end{bmatrix}, \quad 
	\mathbf{L}(\xi,t ;\lambda)=
	\begin{bmatrix}
	-f(\xi,t ;\lambda)  &  g(\xi,t ;\lambda)\\
	h(\xi,t ;\lambda)  &  f(\xi,t ;\lambda)
	\end{bmatrix},
	\end{equation}
where
\begin{equation}\label{eq:f-g-h-xi}
	f(\xi,t ;\lambda)=\lambda^{2}+\frac{s_2}{2}-\frac{1}{2} u^2, \quad
	g(\xi,t ;\lambda)=-\ii  u(\lambda  - \mu), \quad h(\xi,t ;\lambda)=\ii u(\lambda+\mu), \quad \mu=-\frac{\ii }{2}(\ln u)_{\xi}.
\end{equation}
The Lax pair with respect to parameters $\xi$ and $t$ is
	\begin{equation}\label{eq:Lax-pair-1}
		\Phi_{\xi}(\xi,t ;\lambda)=\mathbf{U}(\lambda;u)\Phi(\xi,t ;\lambda), \qquad
		\Phi_{t}(\xi,t ;\lambda)=\hat{\mathbf{V}}(\lambda;u)\Phi(\xi,t ;\lambda),\,\,\, \hat{\mathbf{V}}(\lambda;u):=\mathbf{V}(\lambda;u)+2s_2\mathbf{U}(\lambda;u).
	\end{equation} 
	Now, we proceed to obtain the solutions of the Lax pair \eqref{eq:Lax-pair-1}. Firstly, we consider the eigenvalue of $\mathbf{L}(\xi,t ;\lambda)$. Set the determinant of $\mathbf{L}(\xi,t ;\lambda)$ to be $-y^2$, i.e., $y^2=f^2(\xi,t ;\lambda)+g(\xi,t;\lambda)h(\xi,t;\lambda)$, and then $\pm y$ are the eigenvalues of the matrix function $\mathbf{L}(\xi,t;\lambda)$. Considering the eigenvector of $\mathbf{L}(\xi,t;\lambda)$, we get the following lemma:
\begin{lemma} \label{lemma:span}
	The linear spaces $\mathrm{span}\left\{ \left( 1,r_1(\xi,t;\lambda) \right)^{\top} \right\} $ and $\mathrm{span}\left\{ \left( 1,r_2(\xi,t;\lambda) \right)^{\top} \right\} $ are the kernels of matrices $\mathbf{L}(\xi,t;\lambda)-y\mathbb{I}$ and $\mathbf{L}(\xi,t;\lambda)+y\mathbb{I}$, respectively,
	where
	\begin{equation}\label{eq:r1,r2}
		r_1(\xi ,t  ;\lambda)=\frac{f(\xi ,t  ;\lambda)+y}{g(\xi ,t  ;\lambda)}=\frac{h(\xi ,t  ;\lambda)}{y-f(\xi ,t  ;\lambda)}, \qquad r_2(\xi ,t  ;\lambda)=\frac{f(\xi ,t  ;\lambda)-y}{g(\xi ,t  ;\lambda)}=\frac{-h(\xi ,t  ;\lambda)}{y+f(\xi ,t  ;\lambda)}.
	\end{equation}
	In addition, the linear spaces $\mathrm{span}\left\{ \hat{\Phi }_1(\xi,t;\lambda) \right\} $ and $\mathrm{span}\left\{ \hat{\Phi }_2(\xi,t;\lambda) \right\} $ are also the kernels of matrices $\mathbf{L}(\xi,t;\lambda)-y\mathbb{I}$ and $\mathbf{L}(\xi,t;\lambda)+y\mathbb{I}$ respectively, where 
	\begin{equation}\label{eq:Phi-hat}
		\begin{bmatrix}
			\hat{\Phi }_1(\xi,t;\lambda) & \hat{\Phi }_2(\xi,t;\lambda)
		\end{bmatrix}:=\Phi(\xi,t;\lambda)\begin{bmatrix}
			1 & 1 \\r_1(0,0;\lambda) & r_2(0,0;\lambda) 
		\end{bmatrix},
	\end{equation}
	and the matrix function $\Phi(\xi,t;\lambda)$ is the fundamental solution of Lax pair \eqref{eq:Lax-pair-1} with $\Phi(0,0;\lambda)=\mathbb{I}$.
\end{lemma}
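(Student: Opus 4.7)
The plan is to verify both assertions by a short linear-algebra check followed by a conjugation argument using the Lax pair. First I would note that $y^2 = -\det\mathbf{L}(\xi,t;\lambda) = f^2(\xi,t;\lambda) + g(\xi,t;\lambda)h(\xi,t;\lambda)$ is actually independent of $(\xi,t)$, since \eqref{eq:det L} identifies $\det\mathbf{L}$ with the polynomial $P(\lambda)$, whose coefficients $s_i$ are conserved. Writing out $\mathbf{L} - y\mathbb{I}$ in the block form of \eqref{eq:Lax pair-1} and applying it to $(1, r_1)^{\top}$ yields the two scalar equations $-f - y + g r_1 = 0$ and $h + (f - y) r_1 = 0$. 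The first gives $r_1 = (f + y)/g$, the second gives $r_1 = h/(y - f)$, and the identity $(f + y)(y - f) = y^2 - f^2 = gh$ reconciles them. A symmetric computation handles $r_2$ and the kernel of $\mathbf{L} + y\mathbb{I}$. Since the rows of $\mathbf{L} \mp y\mathbb{I}$ are nonzero proportional vectors generically, its rank is one and the candidate vector indeed spans the kernel.

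Next I would establish the conjugation relation $\mathbf{L}(\xi,t;\lambda) = \Phi(\xi,t;\lambda)\,\mathbf{L}(0,0;\lambda)\,\Phi(\xi,t;\lambda)^{-1}$. Differentiating the right-hand side and using $\Phi_\xi = \mathbf{U}\Phi$ and $\Phi_t = \hat{\mathbf{V}}\Phi$ from \eqref{eq:Lax-pair-1}, one checks that it satisfies the same stationary zero-curvature system \eqref{eq:LL1} as $\mathbf{L}$; combined with the normalization $\Phi(0,0;\lambda) = \mathbb{I}$, uniqueness of the linear ODE initial-value problem gives the identity. Applying $(\mathbf{L}(\xi,t;\lambda) - y\mathbb{I})$ to $\hat{\Phi}_1 = \Phi(\xi,t;\lambda)(1, r_1(0,0;\lambda))^{\top}$ then reduces to $\Phi(\xi,t;\lambda)\bigl(\mathbf{L}(0,0;\lambda) - y\mathbb{I}\bigr)(1, r_1(0,0;\lambda))^{\top} = 0$ by the previous paragraph, and an identical computation handles $\hat{\Phi}_2$. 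Invertibility of $\Phi$ (its determinant satisfies a trivial ODE since $\mathrm{tr}\,\mathbf{U} = 0$) upgrades this to the spanning statement.

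The main obstacle, mild as it is, is bookkeeping: ensuring that the scalar $y$ read off from $\mathbf{L}(0,0;\lambda)$ is precisely the same $y$ that appears in $\mathbf{L}(\xi,t;\lambda) \mp y\mathbb{I}$ at arbitrary $(\xi,t)$. Once one recognizes that $\det\mathbf{L}$ is a first integral of the flows \eqref{eq:LL1}, this ambiguity disappears and the rest is a pair of one-line algebraic verifications.
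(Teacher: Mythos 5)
Your proposal is correct and follows essentially the same route as the paper: a direct computation (reconciled via $y^2=f^2+gh$) for the eigenvector claim, and the intertwining identity $\mathbf{L}(\xi,t;\lambda)\Phi(\xi,t;\lambda)=\Phi(\xi,t;\lambda)\mathbf{L}(0,0;\lambda)$ obtained from ODE uniqueness with the normalization $\Phi(0,0;\lambda)=\mathbb{I}$. Your extra remarks that $y$ is a first integral (so the same $y$ appears at every $(\xi,t)$) and that $\det\Phi$ is constant are sound and make explicit two points the paper leaves implicit.
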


\begin{proof}
	By the definition of function $r_1(\xi,t;\lambda)$ in \eqref{eq:r1,r2}, it is easy to verify that $\mathrm{span}\left\{ \left( 1,r_1(\xi,t;\lambda) \right)^{\top} \right\} $ is the kernel of matrix $\mathbf{L}(\xi,t;\lambda)-y\mathbb{I}$ and $\mathrm{span}\left\{ \left( 1,r_2(\xi,t;\lambda) \right)^{\top} \right\} $ is the kernel of matrix $\mathbf{L}(\xi,t;\lambda)+y\mathbb{I}$.
	If $\Phi(\xi,t;\lambda)$ is a solution of Lax pair \eqref{eq:Lax-pair-1}, then the matrix $\mathbf{L}(\xi,t;\lambda)\Phi(\xi,t;\lambda)$ is a solution of Lax pair \eqref{eq:Lax-pair-1}. On the other hand, the matrix function $\Phi(\xi,t;\lambda)\mathbf{L}(0,0;\lambda)$ is also the solution of Lax pair \eqref{eq:Lax-pair-1}. The solutions $\Phi(\xi,t;\lambda)\mathbf{L}(0,0;\lambda)$ and $\mathbf{L}(\xi,t;\lambda)\Phi(\xi,t;\lambda)$ share the same initial condition at $(\xi,t)=(0,0)$, since  $\Phi(0,0;\lambda)=\mathbb{I}$. By the uniqueness and existence theorem of the ordinary differential equation, we get $\Phi(\xi,t;\lambda)\mathbf{L}(0,0;\lambda)=\mathbf{L}(\xi,t;\lambda)\Phi(\xi,t;\lambda)$.
	Then, the vector $\Phi(\xi,t;\lambda)
	\begin{bmatrix}
		1 \\r_1(0,0;\lambda) 
	\end{bmatrix}$ is the kernel of $\mathbf{L}(\xi,t;\lambda)-y\mathbb{I}$, and the vector $\Phi(\xi,t;\lambda)
	\begin{bmatrix}
		1 \\r_2(0,0;\lambda) 
	\end{bmatrix}$ is the kernel of $\mathbf{L}(\xi,t;\lambda)+y\mathbb{I}$. We could refer to \cite{FengLT-19,LingS-2021} for the detailed calculation.
\end{proof}

Since both vectors $\left\{ \left( 1,r_1(\xi,t;\lambda) \right)^{\top} \right\} $ and $\left\{\hat{\Phi}_1(\xi,t;\lambda)\right\}$ are the kernels of matrices $\mathbf{L}(\xi,t;\lambda)-y\mathbb{I}$ in Lemma \ref{lemma:span}, we obtain
\begin{equation}\label{eq:r_1-phi-psi}
	r_i(\xi,t;\lambda)=\frac{\psi_i(\xi,t;\lambda)}{\phi_i(\xi,t;\lambda)}, \qquad i=1,2.
\end{equation} 
So $\phi_i=\phi_i(\xi,t;\lambda), i=1,2$, can be derived from the equations:
\begin{equation}\label{eq:phi_x,phi}
	\phi_{i,\xi }=-\ii \lambda\phi_i+u r_i\phi_i, \qquad
		\phi_{i,t}=(2\ii \lambda u^2-4\ii \lambda^3-2\ii \lambda s_2)\phi_i+(2s_2u-2u^3-u_{\xi\xi}+4\lambda^2u+2\ii \lambda u_{\xi })r_i\phi_i,
\end{equation}
Combining the first equation of functions $r_{1,2}(\xi,t;\lambda)$ in \eqref{eq:r1,r2} with the equation \eqref{eq:phi_x,phi}, we obtain functions $\phi_{1,2}$:
\begin{equation}\label{eq:phi1,2}
	\phi_1(\xi,t)=\sqrt{\frac{u^2(\xi )-\beta_1}{u^2(0)-\beta_1}}\exp\left(\theta_1\right), \qquad
	\phi_2(\xi,t)=\sqrt{\frac{u^2(\xi )-\beta_2}{u^2(0)-\beta_2}}\exp\left(\theta_2\right),
\end{equation}
where $\phi_{1,2}(0,0)=1$ and
\begin{equation}\label{eq:beta1-beta2-theta-1-theta-2}
	\beta_{1,2}=2\lambda^2+s_2\mp 2y, \qquad 
	\theta_{1,2}=\int_{0}^{\xi}\frac{2\ii \lambda\beta_i}{u^2(s)-\beta_i}\mathrm{d} s +\ii \lambda \xi \pm 4\ii \lambda y t.
\end{equation}
Considering the second equation of function $r_{1,2}(\xi,t;\lambda)$ in \eqref{eq:r1,r2} and equation \eqref{eq:phi_x,phi}, we obtain 
\begin{equation}
	\psi_1(\xi,t)=\psi_1(0,0)\sqrt{\frac{u^2(\xi )-\beta_2}{u^2(0)-\beta_2}}\exp\left(-\theta_2\right), \qquad
	\psi_2(\xi,t)=\psi_2(0,0)\sqrt{\frac{u^2(\xi)-\beta_1}{u^2(0)-\beta_1}}\exp\left(-\theta_1\right),
\end{equation}
where $\psi_{1}(0,0)=-\sqrt{\frac{u^2(0)-\beta_{2}}{u^2(0)-\beta_{1}}}$ and $\psi_{2}(0,0)=-\sqrt{\frac{u^2(0)-\beta_{1}}{u^2(0)-\beta_{2}}}$. We get the following theorem by ignoring the constant factors of vector solutions.
\begin{theorem}\label{theorem:solution}
	A fundamental solution of Lax pair \eqref{eq:Lax pair} is given by 
	\begin{equation}\label{eq:Phi}
		\Phi(x,t  ;\lambda)=
		\begin{bmatrix}
			\sqrt{u^2(\xi )-\beta_1}\exp(\theta_1) & 
			\sqrt{u^2(\xi )-\beta_2}\exp(\theta_2) \\
			-\sqrt{u^2(\xi )-\beta_2}\exp(-\theta_2) &
			-\sqrt{u^2(\xi )-\beta_1}\exp(-\theta_1)
		\end{bmatrix},
	\end{equation}
	where $\xi=x-2s_2t$, $\beta_{1,2}$ and $\theta_{1,2}$ are defined in equation \eqref{eq:beta1-beta2-theta-1-theta-2}.
\end{theorem}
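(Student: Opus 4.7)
The plan is to build the fundamental solution column by column, using the diagonalization of $\mathbf{L}(\xi,t;\lambda)$ provided by Lemma \ref{lemma:span}. Since $\mathrm{span}\{(1,r_i(\xi,t;\lambda))^\top\}$ is the kernel of $\mathbf{L}(\xi,t;\lambda)\mp y\mathbb{I}$, any eigenvector $\hat\Phi_i(\xi,t;\lambda)$ of $\mathbf{L}(\xi,t;\lambda)$ which simultaneously solves the Lax pair \eqref{eq:Lax-pair-1} must have its two components linked by the ratio $r_i(\xi,t;\lambda)$, giving \eqref{eq:r_1-phi-psi}. Thus I only need to determine the top components $\phi_i(\xi,t;\lambda)$; then $\psi_i=r_i\phi_i$ follows.

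First I would substitute $\psi_i=r_i\phi_i$ into the first row of each equation in the Lax pair \eqref{eq:Lax-pair-1}. Using the explicit forms of $\mathbf U$ and $\hat{\mathbf V}$ together with $\mathbf Q$ from \eqref{eq:U,V} and the expressions for $f,g,h,\mu$ in \eqref{eq:f-g-h-xi}, this collapses to the two scalar ODEs in \eqref{eq:phi_x,phi}. Next I would eliminate $r_i$ in the spatial equation by inserting the first expression $r_i=(f\pm y)/g$ and simplifying: using $u^2=-2(f-\lambda^2-s_2/2)\cdot 2$ and $\beta_{1,2}=2\lambda^2+s_2\mp 2y$, the drift term $-\ii\lambda+ur_i$ reorganises into a total logarithmic derivative of $\sqrt{u^2(\xi)-\beta_i}$ plus a term proportional to $\beta_i/(u^2-\beta_i)$. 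Integrating then produces the prefactor $\sqrt{(u^2(\xi)-\beta_i)/(u^2(0)-\beta_i)}$ together with the exponential $\exp(\theta_i)$ where the integrals in \eqref{eq:beta1-beta2-theta-1-theta-2} appear.

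For the time dependence I would perform the analogous manipulation on the temporal equation of \eqref{eq:phi_x,phi}, where the same substitution must collapse the coefficient into $\pm 4\ii\lambda y$ after using $y^2=f^2+gh$ and the polynomial identity $\det\mathbf L = P(\lambda)$ from \eqref{eq:det L}; this produces the $\pm 4\ii\lambda y\,t$ term inside $\theta_{1,2}$ and is compatible with the spatial integration constant because $\mathbf L$ is $(\xi,t)$-independent on the solution. Finally I would assemble the matrix: $\phi_1,\phi_2$ from \eqref{eq:phi1,2} form the first row, and $\psi_i=r_i\phi_i$ evaluated via the second equality in \eqref{eq:r1,r2} (to pair $\psi_1$ with $\beta_2$ and $\psi_2$ with $\beta_1$, so that the square roots match across the row) form the second. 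Absorbing the constant normalisation factors $\psi_i(0,0)$ into an overall right-multiplication by a constant diagonal matrix, which preserves the property of solving the Lax pair, yields exactly \eqref{eq:Phi}.

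The main obstacle I anticipate is bookkeeping the branches of $\sqrt{u^2(\xi)-\beta_i}$ and of $y=\sqrt{-P(\lambda)}$ consistently so that the resulting matrix is genuinely single-valued in $(\xi,t)$ for a fixed sheet of $\lambda$, and verifying that the implicit choice of sign in $r_2 = -h/(y+f)$ versus $r_1=h/(y-f)$ is the one that produces the negative signs in the bottom row of \eqref{eq:Phi}. This is a routine but delicate sign check; once settled, the fact that the matrix satisfies both parts of \eqref{eq:Lax pair} follows immediately from the construction, and its invertibility on an open set of $(\xi,t)$ (so that it is a \emph{fundamental} solution) is ensured because $\det\Phi$ is constant in $(\xi,t)$ by Liouville's formula applied to the trace-free matrices $\mathbf U$ and $\mathbf V$.
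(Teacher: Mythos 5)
Your proposal follows essentially the same route as the paper: reduce to the scalar ODEs \eqref{eq:phi_x,phi} via the eigenvector ratios $r_i=\psi_i/\phi_i$ from Lemma \ref{lemma:span}, integrate the $\xi$- and $t$-equations using the two expressions for $r_{1,2}$ in \eqref{eq:r1,r2} to obtain the prefactors $\sqrt{u^2(\xi)-\beta_i}$ and exponents $\theta_i$, and discard the constant column normalisations $\psi_i(0,0)$; your added remark that $\det\Phi$ is constant (tracelessness of $\mathbf U,\hat{\mathbf V}$) justifies the word ``fundamental.'' The only blemish is the stray factor in ``$u^2=-2(f-\lambda^2-s_2/2)\cdot 2$'' (the correct identity from \eqref{eq:f-g-h-xi} is $u^2=-2(f-\lambda^2-\tfrac{s_2}{2})$), which does not affect the argument.
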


In what follows, we aim to use theta functions to represent solutions $u,\phi_i,\psi_i, \ i=1,2$. Taking the derivative of the second equation of \eqref{eq:f-g-h-xi} with respect to $\xi$ yields $g_{\xi}=-\ii u_{\xi}(\lambda-\mu)+\ii u\mu_{\xi}$. Setting $\lambda=\mu$, we obtain $g_{\xi}=\ii u \mu_{\xi}$ and $g_{\xi}=-2\ii \lambda g+2uf$ by utilizing the stationary zero curvature equation of the matrix $\mathbf{L}(\xi,t;\lambda)$, which implies $\mu_{\xi}=-2\ii f$. It follows from \eqref{eq:det L} that
\begin{equation}\label{eq:y lambda }
	(\mu_{\xi})^2=-4 f^2 =4P(\mu), \qquad \text{and} \qquad -y^2=P(\lambda)\equiv-\prod_{i=1}^{4}(\lambda-\lambda_i),
\end{equation}
which means that the algebraic curve with genus one can be parameterized by the uniformization variable $z$: 
\begin{equation}\label{eq:y_mu_lambda}
	y(z)=\frac{\alpha}{2}\frac{\dd }{\dd z}\mu\left(\frac{\ii (z-l)}{\alpha}\right), \qquad \text{and} \qquad \lambda(z)=\mu\left(\frac{\ii (z-l)}{\alpha}\right),
\end{equation}
where $l =0$ or $l=\frac{K'}{2}$. Then, we establish the conformal map (in Appendix \ref{appendix:commonality map}) between $\lambda$-plane and $z$-plane by the following proposition.
\begin{prop} \label{prop:lambda-comforming}
	The function $\lambda(z)$: 
	\begin{subequations}\label{eq:lambda}
		\begin{align}
			\lambda(z)
			=&\frac{\ii \alpha }{2}\dn(\ii (z-l))\tn(\ii (z-l)),\quad l=0, \label{eq:lambda-a} \\
			\lambda(z)
			=&\frac{\ii \alpha k^2}{2}\sn(\ii(z-l))\cd(\ii(z-l)), \quad l=\frac{K'}{2}, \label{eq:lambda-b}
		\end{align}
	\end{subequations}
	constructs the conformal mapping, which maps the rectangle in $z$-plane onto a whole $\lambda$-plane with two cuts.
\end{prop}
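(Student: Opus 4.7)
The plan is to verify the four standard ingredients of a conformal equivalence (meromorphy, boundary correspondence, degree one, nonvanishing derivative) for the candidate maps \eqref{eq:lambda-a} and \eqref{eq:lambda-b}, treating the case $l=0$ in detail and noting that the second case $l=K'/2$ reduces to it by the quarter-period shift of the Jacobi elliptic functions (so the same argument applies once the basis of the lattice is relabelled).

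First I would use the periodicity table of $\sn,\cn,\dn,\tn$ to show that $\lambda(z)$ is meromorphic on $\mathbb{C}$ with explicit double periods in $z$; from the argument $i(z-l)$ and the periods $2K$, $2iK'$ of $\tn$ and $2K$, $4iK'$ of $\dn$, the joint periods in $z$ are real $2K'$ and imaginary $2iK$, giving a natural fundamental rectangle $\{z=z_R+\ii z_I : |z_R|\le K', \,|z_I|\le K\}$ (up to translation by $l$). In this rectangle I would locate the zeros and poles of $\lambda(z)$: using the explicit formula \eqref{eq:lambda-a}, the zeros of $\dn(iz)\tn(iz)=\dn(iz)\sn(iz)/\cn(iz)$ come from $\sn(iz)=0$ and $\dn(iz)=0$, and the poles from $\cn(iz)=0$. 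A direct count on the fundamental parallelogram then shows that $\lambda$ has exactly one simple zero and one simple pole modulo the symmetry $z\mapsto -z$, so on the half-rectangle the map has degree one.

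Next I would establish boundary correspondence. The key tool is that the Jacobi elliptic functions take real or purely imaginary values when their argument is real or purely imaginary, and that $\dn$ is even while $\tn$ is odd. Applying these reality properties on each of the four sides of the rectangle produces explicit real intervals on the imaginary $\lambda$-axis (and, via the factor $i\alpha/2$, intervals arranged symmetrically about the origin). Matching these intervals with the four branch points $\pm\lambda_1,\pm\lambda_2$ determined by $y^2=-P(\lambda)=-\prod_{i=1}^{4}(\lambda-\lambda_i)$, and using \eqref{eq:u1,u2,u3} to relate $\lambda_i$ to $u_1,u_2,u_3$, identifies the image of $\partial R$ as the two symmetric cuts connecting $\lambda_1$ to $-\lambda_1$ through infinity on one side and $\lambda_2$ to $-\lambda_2$ through $0$ on the other (with the analogous picture obtained for $l=K'/2$). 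This step also fixes which edges of $R$ map to which cut and with what orientation.

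Finally, conformality follows from $\lambda'(z)\neq 0$ on the interior of the rectangle: differentiating \eqref{eq:lambda-a} and simplifying using the Jacobi identities gives $\lambda'(z)$ proportional to $y(z)$ via the parameterization \eqref{eq:y_mu_lambda}, and the zeros of $y(z)$ occur precisely at the preimages of the four branch points, which lie on $\partial R$ by the previous step. Combining the nonvanishing derivative with the degree-one count and the monotone boundary correspondence, the argument principle (or equivalently, the open mapping theorem plus boundary injectivity) yields that $\lambda:R^\circ\to\mathbb{C}\setminus(\text{two cuts})$ is a biholomorphism. The main obstacle I expect is bookkeeping: keeping track of which branch point corresponds to which corner of $R$ requires matching the sign conventions in \eqref{eq:u1,u2,u3} against the reality patterns of $\dn(iz)\tn(iz)$, and verifying that the two cases $l=0$ and $l=K'/2$ give, respectively, the $\dn$-type and $\cn$-type branch configurations consistent with Proposition \ref{prop:solution-u^2}.
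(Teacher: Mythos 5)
Your route is genuinely different from the paper's: the paper never argues directly with $\lambda(z)$ but instead inverts it, writes $z(\lambda)-l$ as the Abelian integral \eqref{eq:z-l-lambda}, and invokes Lemma \ref{lemma:map} of Appendix \ref{appendix:commonality map}, where the Schwarz--Christoffel formula (plus a chain of Zhukovskii maps for the $\dn\cdot\tn$ case) shows the integral maps a half-plane conformally onto a rectangle; conformality of $\lambda(z)$ then follows by inversion and reflection. Your plan---treat $\lambda(z)$ as an order-two elliptic function, count zeros and poles, check boundary correspondence and $\lambda'\neq 0$, and close with the argument principle---is a legitimate alternative that avoids the appendix machinery. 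However, the crucial injectivity step is justified incorrectly. You reduce from degree two on the full period parallelogram to degree one on the half-rectangle ``modulo the symmetry $z\mapsto -z$,'' but both $\dn(\ii z)\tn(\ii z)$ and $\sn(\ii z)\cd(\ii z)$ are \emph{odd}, so $\lambda(-z)=-\lambda(z)$: the map $z\mapsto -z$ permutes the fibers of $\lambda$ over $c$ and $-c$, not the two points of a single fiber, and hence cannot halve the degree. (Indeed, in the $l=0$ case the two zeros $w=0$ and $w=K+\ii K'$ of $\dn(w)\tn(w)$ are each \emph{fixed} by $w\mapsto -w$ modulo the lattice, so nothing is identified.) The correct deck transformation of the degree-two cover is $z\mapsto z_0-z$ where $z_0$ is the sum of the two poles modulo the lattice (Abel's theorem); for $l=0$ this is $z\mapsto K'+\ii K-z$, which one checks does satisfy $\lambda(z_0-z)=\lambda(z)$ and does carry the interior of $S$ off itself. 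Without identifying this involution and verifying that $S$ is a fundamental domain for it, ``degree one on the half-rectangle'' does not follow, and the argument-principle conclusion is unsupported.

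Two smaller points. First, your boundary description (``explicit real intervals on the imaginary $\lambda$-axis,'' cuts joining $\pm\lambda_2$ through $0$ and $\pm\lambda_3$ through $\infty$) is correct only for $l=\frac{K'}{2}$, where the branch points $\lambda_2=\frac{\ii\alpha k^2}{2(1+k')}$ and $\lambda_3=\frac{\ii\alpha(1+k')}{2}$ are purely imaginary; for $l=0$ the branch points are $\frac{\alpha}{2}(\pm k'\pm\ii k)$, off the imaginary axis, and the cuts are arcs joining them (cf.\ the cuts $(f)$--$(c)$ and $(h)$--$(g)$ in Figure \ref{fig:map1}), so the reality bookkeeping on $\partial S$ must be redone for that case rather than inherited. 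Second, your identification $\lambda'(z)=\frac{2}{\alpha}y(z)$ via \eqref{eq:y_mu_lambda} and the location of the branch-point preimages on $\partial S$ (they sit at $z=\pm\frac{K'}{2}\pm\frac{\ii K}{2}$ on the horizontal edges, consistent with Proposition \ref{prop:Q-M-cn}) is sound and is the right way to get $\lambda'\neq 0$ in the interior; with the involution corrected as above, your argument would go through.
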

\begin{proof}
	By the definition of $\mu:=-\frac{\ii}{2}(\ln u)_{\xi}$ and solution \eqref{eq:elliptic-solution}, we get 
	\begin{equation}\label{eq:mu}
		\mu(\xi  )
		=\frac{\ii \alpha\dn(\alpha \xi )\tn(\alpha \xi)}{2} , \qquad \text{and}
		\qquad \mu(\xi  )
		=\frac{\ii\alpha k^2\sn(\alpha \xi )\cd(\alpha \xi )}{2},
	\end{equation}
	with solutions $u=k\alpha\cn(\alpha\xi)$ and $u=\alpha\dn(\alpha\xi)$, respectively.
	Moreover, by \eqref{eq:y lambda } and \eqref{eq:y_mu_lambda}, we get \eqref{eq:lambda} and the elliptic integral 
	\begin{equation}\label{eq:z-l-lambda}
		\begin{split}
			z(\lambda)-l=\frac{\alpha}{2}\int_{0}^{\lambda}\frac{\dd s}{\sqrt{(s^2-\lambda_1^2)(s^2-\lambda_1^{*2})}},\,\, l=0, \qquad
			z(\lambda)-l=\frac{\alpha}{2}\int_{0}^{\lambda}\frac{\dd s}{\sqrt{(s^2-\lambda_2^2)(s^2-\lambda_3^2)}},\,\, l=\frac{K'}{2},
		\end{split} 
	\end{equation}
	where $\lambda_1=\frac{\ii \alpha}{2}(k-\ii k'),\lambda_2=\frac{\ii \alpha k^2}{2(1+k')},\lambda_3=\frac{\ii \alpha(1+k')}{2}$.
	Lemma \ref{lemma:map} shows that $z(\lambda)$ is a conformal mapping that maps the upper half plane onto the rectangle $[- \frac{\ii K}{2},\frac{\ii K}{2}]\times [0, K']$ (refer to Figure \ref{fig:map1} in Appendix \ref{appendix:commonality map}). Since $\lambda(z)$ is the inverse function of $z(\lambda)$, we can prove that $\lambda(z)$ is a conformal mapping that maps the rectangle $[-\frac{\ii K}{2},\frac{\ii K}{2}]\times [- K', K']$ onto a whole plane with cuts connecting the points $\lambda_i$s.
\end{proof}

Therefore, by the above analysis, we need to consider the $z$-region:
\begin{equation}\label{eq:set S}
	S:=\left\{z\in \mathbb{C} \left|-K'+l\le {\Re}  (z)\le K'+l, -\frac{K}{2}\le {\Im}(z)\le \frac{K}{2} \right.\right\}.
\end{equation}

\begin{remark}
	For $l=0$ or $\frac{K'}{2}$, the function $\lambda^2(z)$ could be written in a uniform form
	\begin{equation}\label{eq:lambda^2}
		\lambda^2(z)=\frac{\alpha^2}{4}\left( \dn^2(K+2\ii l)+k^2-2+\dn^2(\ii(z-l))+\dn^2(\ii(z+l)+K+\ii K')\right).
	\end{equation}
\end{remark}
\begin{lemma}\label{lemma:lambda-2 y beta1-2 }
For $l=0$ or $\frac{K'}{2}$, we can get the following representation:
	\begin{equation}\nonumber
		u^2-\beta_1= \frac{\alpha^2\vartheta_2^2\vartheta_4^2\vartheta_1\left(\frac{\ii(z-l)-\alpha\xi}{2K} \right)\vartheta_1\left(\frac{\ii(z-l)+\alpha\xi}{2K} \right)}{\vartheta_3^2\vartheta_4^2(\frac{\alpha\xi}{2K})\vartheta_4^2(\frac{\ii (z-l)}{2K})},\quad
		u^2-\beta_2= \frac{\alpha^2\vartheta_2^2\vartheta_4^2\vartheta_3\left(\frac{\ii(z+l)-\alpha\xi}{2K} \right)\vartheta_3\left(\frac{\ii(z+l)+\alpha\xi}{2K} \right)}{\vartheta_3^2\vartheta_4^2(\frac{\alpha\xi}{2K})\vartheta_2^2(\frac{\ii (z+l)}{2K})}.
	\end{equation}
\end{lemma}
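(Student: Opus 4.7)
The plan is to fix $z\in S$ and regard both identities as equalities of meromorphic functions of $\xi$ on the torus with periods $2K/\alpha$ and $2\ii K'/\alpha$. Once I show that the two sides share divisor and quasi-periods, the ratio is a doubly-periodic entire function, hence constant, and I pin the constant by evaluation at $\xi=0$.

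First I assemble the divisor of the LHS algebraically. Using $\beta_{1,2}=2\lambda^2+s_2\mp 2y$ together with $f=\lambda^2+s_2/2-u^2/2$, one obtains $u^2-\beta_1=2(y-f)$ and $u^2-\beta_2=-2(y+f)$, and hence
\begin{equation*}
(u^2-\beta_1)(u^2-\beta_2)=-4(y^2-f^2)=-4gh=-4u^2\bigl(\lambda^2-\mu^2(\xi)\bigr).
\end{equation*}
Because $\mu$ is odd in $\xi$ and $\lambda(z)=\mu(\ii(z-l)/\alpha)$ by \eqref{eq:y_mu_lambda}, the factor $\lambda^2-\mu^2(\xi)$ vanishes simply at $\alpha\xi\equiv\pm\ii(z-l)$. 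Since $u^2-\beta_1$ has a single double pole at $\alpha\xi\equiv\ii K'$ (inherited from $u^2$), its degree is $2$; so, for generic $z$, its two zeros in the fundamental cell are precisely $\alpha\xi\equiv\pm\ii(z-l)$, and the remaining zeros of the product (a double one at the zero of $u^2$) are absorbed by $u^2-\beta_2$ via a quarter-period shift connecting $\vartheta_1$ to $\vartheta_3$.

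Next I match divisor and quasi-periodicity on the RHS. For $u^2-\beta_1$, the denominator factor $\vartheta_4^2(\alpha\xi/(2K))$ supplies the double pole at $\alpha\xi\equiv\ii K'$, while the numerator $\vartheta_1((\ii(z-l)\mp\alpha\xi)/(2K))$ supplies the simple zeros at $\alpha\xi\equiv\pm\ii(z-l)$. Invariance under $\xi\mapsto\xi+2K/\alpha$ is immediate from the period-$1$ behaviour of the theta functions in their arguments; under $\xi\mapsto\xi+2\ii K'/\alpha$ the quasi-periodicity phases picked up by the two $\vartheta_1$ factors in the numerator cancel those of $\vartheta_4^2$ in the denominator, so the expression is genuinely $(2K/\alpha,2\ii K'/\alpha)$-elliptic. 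Hence the ratio of LHS and RHS is constant. The identity for $u^2-\beta_2$ is handled analogously, with $\vartheta_3,\vartheta_2$ replacing $\vartheta_1,\vartheta_4$ and $z+l$ replacing $z-l$, where the quarter-period shift flagged above is made explicit.

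The constant is finally fixed to $1$ by setting $\xi=0$. On the LHS this becomes $u^2(0)-\beta_j(z)$, which is expressed in elementary Jacobi form through \eqref{eq:elliptic-solution} and \eqref{eq:lambda^2}; on the RHS the arguments collapse and the expression reduces to $\alpha^2\vartheta_2^2\vartheta_4^2\,\vartheta_1^2(\ii(z-l)/(2K))/\bigl(\vartheta_3^2\vartheta_4^2(0)\,\vartheta_4^2(\ii(z-l)/(2K))\bigr)$ for $\beta_1$, and the analogous $\vartheta_3/\vartheta_2$ form for $\beta_2$. The classical identities $\vartheta_2\vartheta_3\vartheta_4=\vartheta_1'(0)$ and $\sn(u)=(\vartheta_3\vartheta_1(u/(2K)))/(\vartheta_2\vartheta_4(u/(2K)))$ (together with their $\cn,\dn$ analogues), combined with Jacobi's imaginary transformation applied to the argument $\ii(z-l)/(2K)$, reduce the two sides to the same quantity, so the constant equals $1$. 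The main obstacle I anticipate is the quasi-periodicity bookkeeping under $\xi\mapsto\xi+2\ii K'/\alpha$ and the consistent split of the two zeros of $\lambda^2-\mu^2(\xi)$ between the $\beta_1$- and $\beta_2$-factors, which is controlled by the branch of $y$ fixed implicitly through \eqref{eq:y_mu_lambda}.
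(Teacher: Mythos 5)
Your route differs from the paper's: the paper first proves the intermediate Jacobi-form identity $u^2-\beta_1=\alpha^2k^2\left(\sn^2(\ii(z-l))-\sn^2(\alpha\xi)\right)$ by showing that $\beta_1(z)$ satisfies $(-\ii\alpha\beta_{1,z})^2=-4(\beta_1-u_1)(\beta_1-u_2)(\beta_1-u_3)$ — the same ODE as $(u^2)_\xi$ under $\xi\mapsto\ii(z-l)/\alpha$ — and pins the integration constant at $z=0$; only then does it pass to theta functions by a Liouville/divisor argument in $\xi$. You instead try to read the divisor of $u^2-\beta_1$ directly off the product identity $(u^2-\beta_1)(u^2-\beta_2)=-4gh=-4u^2(\lambda^2-\mu^2(\xi))$, skipping the explicit determination of $\beta_1(z)$.

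This is where the gap sits, and it is a real one. First, $\lambda^2-\mu^2(\xi)$ has \emph{four} simple zeros per fundamental cell, not two: $\mu$ has simple poles at both the zero and the pole of $u$, so $\mu^2$ is of order four, and the zeros of $\lambda\mp\mu$ are $\alpha\xi\equiv\pm\ii(z-l)$ \emph{and} $\alpha\xi\equiv\mp\ii(z+l)+K+\ii K'$ (mod the lattice) — exactly the two $\pm$-pairs that must be distributed, one pair to each degree-two factor $u^2-\beta_j$. (Your parenthetical that the product has ``a double zero at the zero of $u^2$'' is also false: there $u^2(\lambda^2-\mu^2)=\lambda^2u^2+\tfrac14u_\xi^2=\tfrac14u_\xi^2\neq0$.) Second, nothing in your argument decides \emph{which} pair attaches to $u^2-\beta_1$; swapping the branch $y\to-y$ swaps $\beta_1\leftrightarrow\beta_2$, and the evaluation at $\xi=0$ cannot detect a wrong divisor assignment, since both candidate theta quotients are finite and nonzero there. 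You flag this as ``controlled by the branch of $y$'' but never resolve it, so the proof as written does not close. The cheapest repair inside your framework is to verify directly that $\alpha\xi=\ii(z-l)$ is a zero of $u^2-\beta_1=2(y-f)$: from $\mu_\xi=-2\ii f$ and \eqref{eq:y_mu_lambda} one gets $y(z)=f\!\left(\ii(z-l)/\alpha,t;\lambda\right)$, so $u^2-\beta_1$ vanishes there; by evenness of $u^2$ in $\xi$ its second zero is $-\ii(z-l)$, and the remaining pair then necessarily belongs to $u^2-\beta_2$. With that inserted (and the explicit $y(z)$, $\lambda^2(z)$ needed anyway for the $\xi=0$ normalization), your divisor argument becomes a correct alternative to the paper's ODE step.
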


\begin{proof} 
	It is easy to verify $(4\ii \lambda\beta_1)^2=-4(\beta_1-u_1)(\beta_1-u_2)(\beta_1-u_3)$ by \eqref{eq:det L}, \eqref{eq:ux u}, and \eqref{eq:y lambda }. Combining \eqref{eq:det L}, \eqref{eq:s1s21}, \eqref{eq:y lambda } together with \eqref{eq:y_mu_lambda}, we obtain $2\lambda\beta_1=2\lambda(2\lambda^2+s_2-2y)=2y\left(\frac{\dd  y}{\dd  \lambda}-2\lambda\right)=\alpha \frac{\dd  }{\dd  z}\left(y(z)-\lambda^2(z)\right)$. Furthermore, we have $-4\lambda\beta_1=\alpha\beta_{1,z}$ and then $(-\ii \alpha \beta_{1,z})^2= -4(\beta_1-u_1)(\beta_1-u_2)(\beta_1-u_3)$
	holds.
	By \eqref{eq:ux u}, we know $\beta_1=\alpha^2 k^2 \left(\sn^2(K+2\ii l)- \sn^2(\ii (z-c)) \right)$, where $ c\in \mathbb{C}$ is an undetermined constant. Combining \eqref{eq:y_mu_lambda} with \eqref{eq:lambda^2}, we substitute $z=0$ into them and obtain $\beta_1=2\lambda^2+s_2-2y
	=\alpha^2k^2\left(\sn^2(K+2\ii l)-\sn^2(\ii l) \right)$, which implies $c=l$.	 
	From the existence and uniqueness theorem of the ordinary differential equation, we get $\beta_1=\alpha^2 k^2 \left(\sn^2(K+2\ii l)-\sn^2(\ii (z-l)) \right)$. Based on $-4\lambda\beta_1=\alpha\beta_{1,z}$, we have $2\lambda\beta_1=\frac{\alpha}{2}\beta_{1,z}=\ii \alpha^3k^2\scd(\ii(z-l))$, where $\scd(z)$ is defined in \eqref{eq:define-scd}. By solution \eqref{eq:solution u^2}, the following equation holds:
	\begin{equation}\label{eq:u^2-beta1}
			u^2-\beta_1
			=\alpha^2 k^2 \left(\sn^2(\ii (z-l))-\sn ^2(\alpha \xi ) \right).
	\end{equation}
	Similarly, we obtain $\beta_2=\alpha^2 k^2 \left(\sn^2(K+2\ii l)-\sn^2(\ii (z+l)+K+\ii K') \right)$, $2\lambda\beta_2
	=-\ii \alpha^3k^2\scd(\ii(z+l)+K+\ii K')$ and
	\begin{equation}\label{eq:u^2-beta2}
			u^2-\beta_2=\alpha^2 k^2\left(\sn^2(\ii (z+l)+K+\ii K')-\sn ^2(\alpha \xi )\right).
	\end{equation}

	Then, we use theta functions to represent functions \eqref{eq:u^2-beta1} and \eqref{eq:u^2-beta2}, which are double periodic meromorphic functions with respect to variable $\xi$ having the period $\frac{2K}{\alpha},\frac{\ii K'}{\alpha}$. So we merely analyze functions in the periodic area $\xi \in [-\frac{K}{\alpha},\frac{K}{\alpha}]\times[0,\frac{\ii K'}{\alpha}]$.  We first consider function \eqref{eq:u^2-beta1}. Rewriting it as $(\sn(\ii (z-l))-\sn (\alpha \xi ))(\sn(\ii (z-l))+\sn(\alpha \xi ))$, we get that $\xi=\frac{\pm\ii(z-l)}{\alpha}$ are the simple zeros. And the point  $\xi=-\frac{\ii K'}{\alpha}$ is the double pole. Then we have $\sn^2(\ii (z-l))-\sn ^2(\alpha \xi )=
	C_1\frac{\vartheta_1\left(\frac{\ii(z-l)-\alpha\xi}{2K} \right)\vartheta_1\left(\frac{\ii(z-l)+\alpha\xi}{2K} \right)}{\vartheta_4^2(\frac{\alpha \xi}{2K})}$, where $C_1$ is an undetermined constant. Plugging $\xi=0$ in the above equation, we get $C_1=\frac{\vartheta_3^2\vartheta_4^2}{\vartheta_2^2\vartheta_4^2(\frac{\ii (z-l)}{2K})}$. Similarly, we could express the function $u^2-\beta_2$ in terms of theta functions.
	Then, Lemma \ref{lemma:lambda-2 y beta1-2 } holds.
\end{proof}

By \eqref{eq:u^2-beta1} and \eqref{eq:u^2-beta2}, the shift formula of Jacobi theta functions \cite[p.86]{ArmitageE-06}, the translation formulas between Jacobi elliptic and theta functions \cite[p.83]{ArmitageE-06}, and Lemma \ref{lemma:int}, we obtain the exact expressions of solutions $\phi_{1,2},\psi_{1,2}$ in \eqref{eq:Phi}
\begin{equation}\label{eq:phi1,2-dn}
	\phi_1=\alpha\frac{\vartheta_2\vartheta_4\vartheta_1(\frac{\ii (z-l)-\alpha\xi  }{2K})}{\vartheta_3\vartheta_4(\frac{\ii (z-l)}{2K})\vartheta_4(\frac{\alpha\xi  }{2K})} E_1(\xi,t;z), \qquad \phi_2
	=\alpha\frac{\vartheta_2\vartheta_4\vartheta_3(\frac{\ii (z+l)+\alpha\xi  }{2K})}{\vartheta_3\vartheta_2(\frac{\ii (z+l)}{2K})\vartheta_4(\frac{\alpha \xi  }{2K})} E_2(\xi,t;z),
\end{equation}
where 	 
\begin{equation}\label{eq:E1-E2}
		E_1(\xi,t;z)=\ee^{(\alpha Z(\ii (z-l))+\ii \lambda)\xi   +4\ii  y \lambda t },\qquad
		E_2(\xi,t;z)=\ee^{(-\ii\frac{\alpha\pi}{2K}-\alpha Z(\ii (z+l)+K+\ii K')+\ii \lambda)\xi   -4\ii  y \lambda t }.
\end{equation}
Based on the definition of functions $\lambda(z)$ in \eqref{eq:lambda}, $\mu(\xi)=-\frac{\ii}{2}(\ln u)_{\xi}$ and equation \eqref{eq:y_mu_lambda}, considering the variable $\alpha\xi$ as a whole, we know that the poles of the function $\lambda(z)-\mu(\xi)$ are $\alpha\xi=-2\ii l+(2m+1)K$ and $\alpha\xi=(2m+1)K+\ii(2n+1)K'$, $m,n\in \mathbb{Z}$. The zeros of the function $\lambda(z)-\mu(\xi)$ are $\alpha\xi=\ii (z-l)+2mK+2\ii nK'$, $\alpha\xi=-\ii (z+l)+(2m+1)K+\ii (2n+1)K'$, $n,m\in \mathbb{Z}$. Based on Liouville Theorem, we get 
	\begin{equation}\label{eq:lambda-mu}
		\lambda(z)-\mu(\xi)=\frac{\ii\alpha\vartheta_2\vartheta_4\vartheta_3(\frac{2\ii l}{2K})\vartheta_1(\frac{\ii(z-l)-\alpha\xi}{2K})\vartheta_3(\frac{\ii(z+l)+\alpha\xi}{2K})}{2\vartheta_3\vartheta_2(\frac{\ii(z+l)}{2K})\vartheta_4(\frac{\ii(z-l)}{2K})\vartheta_2(\frac{\alpha \xi +2\ii l}{2K})\vartheta_4(\frac{\alpha \xi}{2K})},
	\end{equation}
	since when $\xi=0$, $\lambda(z)-\mu(0)=\lambda(z)$.
	Then, we consider the function $r_1(\xi,t;\lambda)$ defined in \eqref{eq:r1,r2}. By \cite{FengLT-19}, the solution $u(\xi)$ could be expressed in terms of theta functions as $	u(\xi)=\alpha\frac{\vartheta_2\vartheta_4\vartheta_2(\frac{\alpha\xi+2\ii l}{2K})}{\vartheta_3\vartheta_3(\frac{2\ii l}{2K})\vartheta_4(\frac{\alpha\xi}{2K})}\ee^{-\alpha Z(K+2\ii l)\xi}$. 
Combining Lemma \ref{lemma:lambda-2 y beta1-2 } with \eqref{eq:r1,r2} and \eqref{eq:lambda-mu}, we get
\begin{equation}
	r_1(\xi,t)=-\frac{\vartheta_4(\frac{\ii (z-l)}{2K})\vartheta_3(\frac{-\ii (z+l)+\alpha\xi}{2K})}{\vartheta_1(\frac{\ii (z-l)-\alpha\xi}{2K})\vartheta_2(\frac{-\ii (z+l)}{2K})}\ee^{\alpha Z(2\ii l+K)\xi},\qquad r_2(\xi,t)=-\frac{\vartheta_2(\frac{\ii (z+l)}{2K})\vartheta_1(\frac{\ii (z-l)+\alpha\xi}{2K})}{\vartheta_3(\frac{\ii (z+l)+\alpha\xi}{2K})\vartheta_4(\frac{\ii (z-l)}{2K})}\ee^{\alpha Z(2\ii l+K)\xi},
\end{equation}
which implies that the function $\psi_{1,2}=r_{1,2}(\xi,t;\lambda)\phi_{1,2}$ could be rewritten as 
\begin{equation}
	\psi_1=-\alpha\frac{\vartheta_2\vartheta_4\vartheta_3(\frac{\ii (z+l)-\alpha\xi  }{2K})}{\vartheta_3\vartheta_2(\frac{\ii (z+l)}{2K})\vartheta_4(\frac{\alpha \xi  }{2K})} E_1(\xi,t;z)\ee^{\alpha Z(2\ii l+K)\xi}, \quad 
	\psi_2=-\alpha\frac{\vartheta_2\vartheta_4\vartheta_1(\frac{\ii (z-l)+\alpha\xi  }{2K})}{\vartheta_3\vartheta_4(\frac{\ii (z-l)}{2K})\vartheta_4(\frac{\alpha\xi  }{2K})} E_2(\xi,t;z)\ee^{\alpha Z(2\ii l+K)\xi}.
\end{equation}
Therefore, we establish Theorem \ref{theorem:solution-Theta form}. The solution $\Phi(x,t;\lambda)$ in Theorem \ref{theorem:solution} can be expressed in terms of theta functions in Theorem \ref{theorem:solution-Theta form}. Compared with the method in \cite{FengLT-19}, we tremendously simplify the tedious calculations by replacing the conversion formulas and the additional formulas of theta functions with the Jacobi elliptic function theory by analyzing poles and zeros.

\section{Spectral stability analysis}\label{section:spectral analysis}
In this section, we mainly focus on the linear stability analysis of the mKdV equation. We rewrite Lax pair \eqref{eq:Lax-pair-1} as the spectral problem
	\begin{equation}
		\begin{bmatrix}
			\ii \partial_{\xi} & -\ii u \\
			-\ii u  &  -\ii \partial_{\xi}
		\end{bmatrix}\Phi =\lambda \Phi.
	\end{equation}
We define the set of all $\lambda$ such that Lax pair \eqref{eq:Lax-pair-1} has bounded solutions as the Lax spectrum $\sigma(L)$ in \cite{DeconinckS-17,DeconinckyU-20}. Since the problem is not self-adjoint, therefore $\lambda$ is not confined to the real axis (i.e., $\sigma(L)\nsubseteq \mathbb{R} $), which is a main stumbling block to examining the stability of the focusing mKdV equation \cite{Deconinck-10}. To overcome it, we turn to study the uniform variable $z$ such that the perturbation $w(\xi;t)$ is a bounded function. We define the set satisfying the above conditions as $Q$ in \eqref{eq:set Q}. Based on the conformal mapping between the spectral parameter $\lambda$ and the uniform variable $z$, we obtain the region of the spectral parameter $\lambda$.

By the infinite-dimensional Hamiltonian structure of the spectrum \cite{HaragusK-08}, we know that an elliptic function solution is spectrally stable with respect to perturbations $W(\xi)\in C_b^0(\mathbb{R})$ if $\Omega\subset \ii \mathbb{R}$, which is provided in Definition \ref{define:spect-stable}. We gain the exact expression of the function $W(\xi)$ based on the squared-eigenfunction method. After studying the properties of the function $W(\xi)$, we get some fundamental lemmas, which are helpful in studying the spectral stability.

\begin{lemma}\label{lemma:W}
	All solutions of equation \eqref{eq:spectral} could be constructed from the function
	\begin{equation}\label{eq:w-solution}
		 W(\xi)\equiv W(\xi;\Omega)=\left(\phi_1^2(\xi,t)-\psi_1^2(\xi,t)\right)\exp(-\Omega t),\qquad \Omega=8\ii \lambda y,
	\end{equation} where $\phi_1$ and $\psi_1$ are given in \eqref{eq:Lax pair-1} and \eqref{eq:Phi}.
\end{lemma}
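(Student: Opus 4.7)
My plan is to carry out the modified squared wavefunction construction. Form the quadratic combination $\tilde W(\xi,t) := \phi_1^2(\xi,t;\lambda) - \psi_1^2(\xi,t;\lambda)$ from a single column of the Lax fundamental solution \eqref{eq:Phi}, show by direct computation that it solves the linearized mKdV equation \eqref{eq:linearized-mkdv}, and then verify that its time dependence factorizes as a pure exponential $e^{\Omega t}$ with $\Omega = 8\ii\lambda y$. Dividing through by this exponential produces a function of $\xi$ alone, which by separation of variables satisfies $\mathcal{JL}W = \Omega W$.

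Concretely, I would extract from the $\xi$-part of \eqref{eq:Lax-pair-1} the componentwise identities
\[\phi_{1,\xi} = -\ii\lambda\phi_1 + u\psi_1, \qquad \psi_{1,\xi} = -u\phi_1 + \ii\lambda\psi_1,\]
and derive from them the closed first-order system satisfied by the three quadratic quantities $w_1 = \phi_1^2$, $w_2 = \psi_1^2$, $w_3 = \phi_1\psi_1$. Differentiating $\tilde W = w_1 - w_2$ iteratively within this three-dimensional span, then applying the stationary background identity $u_{\xi\xi} = 2s_2 u - 2u^3$ (obtained by integrating once \eqref{eq:u xi}) and invoking the spectral-curve relation $y^2 = f^2 + gh$ from \eqref{eq:det L}, the $w_3$-contributions collapse into the single identity $\mathcal{JL}\tilde W = 8\ii\lambda y\,\tilde W$, i.e. $\partial_\xi(-\partial_\xi^2 + 2s_2 - 6u^2)\tilde W = \Omega\tilde W$. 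The same eigenvalue $\Omega$ arises independently from the $t$-flow: either via a parallel computation using the $t$-part of \eqref{eq:Lax-pair-1}, or, more transparently, from the explicit formulas \eqref{eq:phi1,2} of Theorem \ref{theorem:solution}, since $\phi_1^2$ carries the time factor $e^{2\theta_1}$ with $t$-part $e^{8\ii\lambda y t}$ while $\psi_1^2$ carries $e^{-2\theta_2}$ with the same $t$-part. Hence $W(\xi) = \tilde W\,e^{-\Omega t}$ is genuinely $t$-independent, and the identity $\mathcal{JL}W = \Omega W$ passes to it.

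To upgrade this to the claim that $\{W(\xi;\Omega)\}$ generates every solution of \eqref{eq:spectral}, I invoke a dimension count: the operator $\mathcal{JL} - \Omega$ is third order in $\xi$, so its kernel is three-dimensional for each $\Omega\in\mathbb{C}$. The relation $\Omega^2 = -64\lambda^2(\lambda^4 + s_2\lambda^2 + s_4)$ is cubic in $\lambda^2$, yielding generically three distinct values $\lambda^2_{(j)}$ and correspondingly three MSW solutions $W_{(j)}(\xi)$. The involution $(\lambda,y)\mapsto (-\lambda,-y)$ preserves $\Omega$, and by the symmetries in \eqref{eq:sym-1} it interchanges $\phi_1\leftrightarrow -\psi_1$, so it merely flips the sign of $W$ and produces no new solution; linear independence of the three $W_{(j)}$ can then be read off from their distinct quasi-periodic Floquet multipliers, which are controlled by the Zeta-function exponents appearing in \eqref{eq:E1-E2} and take distinct values for distinct $\lambda^2_{(j)}$. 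The main obstacle I anticipate is the non-generic locus where two roots $\lambda^2_{(j)}$ coalesce (branch points of the spectral curve, or $\Omega = 0$); at such $\Omega$ the squared-eigenfunction family furnishes only two independent solutions directly, and the missing third must be recovered as a generalized secular eigenfunction by differentiating $W(\xi;\Omega)$ in $\lambda$, a standard degeneracy resolution that extends the MSW family by continuity.
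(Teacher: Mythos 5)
Your construction and counting scheme are essentially the paper's: form $\phi_1^2-\psi_1^2$, check that it solves the linearized equation (the paper does this via the stationary zero-curvature equations for $\mathbf{L}$, i.e.\ equations \eqref{eq:g,h}, rather than componentwise, but the computation is the same), read off $\Omega=8\ii\lambda y$ from the common time factor $\ee^{8\ii\lambda y t}$ of $\ee^{2\theta_1}$ and $\ee^{-2\theta_2}$, and then match the three-dimensional kernel of $\mathcal{JL}-\Omega$ against the three roots in $\lambda^2$ of $\Omega^2=-64\lambda^2(\lambda^4+s_2\lambda^2+s_4)$, with the involution $(\lambda,y)\mapsto(-\lambda,-y)$ merely flipping the sign of $W$. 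Your resolution of the degenerate values by differentiating in the parameter is what the paper does in matrix form, passing to the normalized fundamental matrix $\mathbf{W}(\xi;\Omega)\mathbf{W}^{-1}(0;\Omega)$ and taking the limit $\Omega\to\Omega_0$.

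The one step where your argument has a genuine gap is the linear independence of the three generic $W_{(j)}$. You assert that their Floquet multipliers are distinct for distinct $\lambda^2_{(j)}$, but the exponent $\alpha Z(\ii(z-l))+\ii\lambda$ (equivalently the function $M(z)$ of \eqref{eq:M}) is not injective on the spectral curve --- the entire subharmonic analysis of Section \ref{section:spectral analysis} rests on many distinct $z$ sharing the same value of $M$ modulo $2\pi$ --- so distinctness of the multipliers for the three roots attached to a single $\Omega$ would require a separate argument that you do not supply. The paper avoids this issue: writing $W(\xi;\Omega)=\phi_1^2(1-r_1^2)\ee^{-\Omega t}$ explicitly exhibits a pole at $\mu(\xi)=\hat\lambda_j$ in the complex $\xi$-plane, and since the three values $\hat\lambda_j$ are distinct, the three functions have different singularity sets and are therefore automatically independent. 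A second, smaller inaccuracy: at $\Omega=0$ the three roots $\lambda^2=0,\lambda_1^2,\lambda_1^{*2}$ remain distinct and the MSW family already supplies three independent solutions (the paper's $W_1$, $W_2$, $W_4$, distinguished by their periods and poles); a secular solution obtained by parameter differentiation is needed only at the true branch points $\Omega=\pm\Omega_0,\pm\Omega_0^*$, not at $\Omega=0$.
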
 

\begin{proof}
	 By the stationary zero curvature equation of the matrix $\mathbf{L}(\xi,t;\lambda)$, we get
	\begin{equation}\label{eq:g,h}
		\begin{split}
			 g_t(\xi,t)=&\left(-\partial_{\xi\xi\xi} +(2s_2-6u^2)\partial_{\xi}-6uu_{\xi} \right)g(\xi,t)-6uu_{\xi}h(\xi,t),  \\
			 h_t(\xi,t)=&\left(-\partial_{\xi\xi\xi} +(2s_2-6u^2)\partial_{\xi}-6uu_{\xi}\right)h(\xi,t)-6uu_{\xi}g(\xi,t),  \\
		\end{split}
	\end{equation} 
	where $g(\xi,t)=\phi_1^2(\xi,t)$ and $h(\xi,t)=-\psi_1^2(\xi,t)$, which implies that $g(\xi ,t)+h(\xi,t)$ is a solution of linearized mKdV equation \eqref{eq:linearized-mkdv}. Combining $g(\xi,t)$, $h(\xi,t)$ with $\phi_1(\xi,t)$, $\psi_1(\xi,t)$, we could get that the function $g(\xi,t)+h(\xi,t)$ can be decomposed by separation of variables, which implies \eqref{eq:w-solution}. Corresponding to the dn-type and cn-type solutions, expressions $y$ are given by
	\begin{equation}
		\sqrt{(\lambda^2-\lambda_2^2)(\lambda^2-\lambda_3^2)}  \qquad \text{and} \qquad \sqrt{(\lambda^2-\lambda_1^2)(\lambda^2-\lambda_1^{*2})},
	\end{equation}
respectively, where $\lambda_i$,$i=1,2,3$ are given in \eqref{eq:z-l-lambda}. 
	
	We first consider the case of $\cn$-type solutions. The square of $\Omega$ could be written as
	\begin{equation}\label{eq:Omega-lambda}
		\Omega^2=(8\ii \lambda y)^2=-64\lambda^2(\lambda^2-\lambda_1^2)(\lambda^2-\lambda_1^{*2}).
	\end{equation}
	Let $F(\lambda)=-64\lambda^2(\lambda^2-\lambda_1^2)(\lambda^2-\lambda_1^{*2})-\Omega^2$. By the resultant of the function $F(\lambda)$ and its derivative $F'(\lambda)$, $\mathcal{R}(F(\lambda),F'(\lambda))=0$, we obtain five different zeros of $\Omega$: $0,\pm\Omega_0,\pm\Omega_0^*$.  And then, we prove the claims by the following three cases.
	\begin{enumerate}
		\item When $\Omega$ satisfies $\mathcal{R}(F(\lambda),F'(\lambda))\ne 0$, we get six different solutions $\pm \hat{\lambda}_i,i=1,2,3$ of \eqref{eq:Omega-lambda}. Since $\Omega(-\hat{\lambda}_i) =-8\ii \hat{\lambda}_i y(\hat{\lambda}_i)=-\Omega(\hat{\lambda}_i),i=1,2,3$, without loss of generality, we assume $\Omega(\hat{\lambda}_1)=\Omega(\hat{\lambda}_2)=\Omega(\hat{\lambda}_3)$ with $\hat{\lambda}_i\neq\hat{\lambda}_j$, $i\neq j, i,j=1,2,3$. Combining the equation $\psi_1=r_1\phi_1$ in \eqref{eq:r_1-phi-psi} together with the function $\phi_1$ in \eqref{eq:Phi}, we get that the function $W(\xi; \Omega)$ could be written as 
		\begin{equation}
			W(\xi;\Omega)=\phi_{1}^2(\xi,t;\lambda)(1-r_1^2(\xi,t;\lambda))\ee^{-\Omega t}
			=-\frac{\mu(\xi)(2u^2(\xi)-\beta_1-\beta_2)+\lambda(\beta_1-\beta_2)}{\lambda-\mu(\xi)}\ee^{2\theta_1-\Omega t}.
		\end{equation}		
		Thus, for different values of $\lambda$, the function $\lambda-\mu(\xi)$ has different zero points in the complex $\xi$ plane, which implies that functions $W_i(\xi;\Omega(\hat{\lambda}_i))$, $(i=1,2,3)$ have different singularity points in the complex $\xi$ plane. So, we obtain that functions $W_i(\xi;\Omega(\hat{\lambda}_i))$, $(i=1,2,3)$ are linearly independent.
		\item When $\Omega=0$, the solutions of \eqref{eq:Omega-lambda} are $0,\pm \lambda_1,\pm \lambda_1^*$. Plugging them into $W(\xi;\Omega)$ in \eqref{eq:w-solution}, we get five solutions with different values of $\lambda$. If $\lambda=0(z=0,z\in S)$, we get
		\begin{equation}W_1(\xi;0)
			=\frac{\alpha^2}{\vartheta_3^2\vartheta_4^2\left(\frac{\alpha\xi}{2K}\right)}\left(\vartheta_2^2\vartheta_1^2\left(\frac{\alpha\xi}{2K}\right)-\vartheta_4^2\vartheta_3^2\left(\frac{\alpha\xi}{2K}\right)\right).\end{equation}
		When $\lambda=\pm \lambda_1,\pm \lambda_1^*$, we set the corresponding function as $W_i(\xi;0), i=2,3,4,5$ respectively. And we can prove that $W_2(\xi;0)$ and $W_4(\xi;0)$ are linearly independent, in Lemma \ref{lemma:W2=W3}. Moreover, we know that $\frac{2K}{\alpha}$ is not the period of functions $W_2(\xi;0)$ and $W_4(\xi;0)$, but it is the period of function $W_1(\xi;0)$, which infers that functions $W_1(\xi;0)$, $W_2(\xi;0)$, and $W_4(\xi;0)$ are linearly independent.
		\item When $\Omega=\pm\Omega_0,\pm \Omega_0^*$, we only consider $\Omega=\Omega_0$, since the other cases can be analyzed similarly. We could set the roots of $\Omega_0=8\ii \lambda y$ are $\hat{\lambda}_1=\hat{\lambda}_2\ne\hat{\lambda}_3$. The spectral problem \eqref{eq:spectral} could be rewritten as
		\begin{equation}\label{eq:spectral-linear-ODE}
			\begin{bmatrix}
				W \\W_{\xi} \\W_{\xi \xi} 
			\end{bmatrix}_{\xi} =
			\begin{bmatrix}
				0 & 1 & 0 \\
				0 & 0 & 1 \\
				-12uu_{\xi}-\Omega & 2s_2-6u^2 & 0\\
			\end{bmatrix} 
			\begin{bmatrix}
				W \\W_{\xi} \\W_{\xi \xi} 
			\end{bmatrix},\end{equation}
		where $W=W(\xi;\Omega)$. Then, the fundamental solution matrix of the above differential equation is
		\begin{equation}
				\mathbf{W}^{[n]}(\xi;\Omega)=\mathbf{W}(\xi;\Omega) \mathbf{W}^{-1}(0;\Omega), \qquad
				\mathbf{W}(\xi;\Omega)=\begin{bmatrix}
					W_1(\xi;\Omega) & W_2(\xi;\Omega) & W_3(\xi;\Omega) \\
					W_{1,\xi}(\xi;\Omega) & W_{2,\xi}(\xi;\Omega) & W_{3,\xi}(\xi;\Omega) \\
					W_{1,\xi \xi} (\xi;\Omega) & W_{2,\xi \xi} (\xi;\Omega) & W_{3,\xi \xi} (\xi;\Omega)
				\end{bmatrix} 
		\end{equation} 
		with $\mathbf{W}^{[n]}(0;\Omega)=\mathbb{I}$. By the Abel theorem, we get that $\det(\mathbf{W}^{[n]}(\xi;\Omega))=1$. Thus, three linearly independent solutions of \eqref{eq:spectral} can be obtained by taking the limits $\Omega\to\Omega_0$: ${W}^{[n]}_{11}(\xi;\Omega_0)$, ${W}^{[n]}_{12}(\xi;\Omega_0)$, ${W}^{[n]}_{13}(\xi;\Omega_0)$, where ${W}^{[n]}_{ij}(\xi;\Omega_0)$ denotes the $(i,j)$ element of $\mathbf{W}^{[n]}(\xi;\Omega_0)$. 	\end{enumerate}
	
	Similar to the $\cn$-type solutions, we could also find three linearly independent solutions for the $\dn$-type solutions. Based on the above analysis, all linearly independent solutions to \eqref{eq:spectral} could be constructed by $W(\xi;\Omega)$ in \eqref{eq:w-solution}, so all eigenfunctions of spectral problem \eqref{eq:spectral} could be obtained.
\end{proof}

In \eqref{eq:y_mu_lambda}, \eqref{eq:lambda^2}, and \eqref{eq:Omega-lambda}, the function $\Omega(z)$ could be rewritten as the Jacobi elliptic function form:
\begin{equation}\label{eq:Omega(z)=}
	\Omega(z)=2\ii \alpha\frac{\dd\lambda^2(z)}{\dd z}
	=\alpha^3\left( k^2\scd(\ii (z-l))+\frac{k'^2\scd(\ii(z+l))}{\cn^4(\ii(z+l))}\right),
\end{equation}
where $\scd(z)$ is defined in \eqref{eq:define-scd}.

In the reference \cite{ChenP-18-mKdV,Deconinck-10}, the $\cn$-type solutions of the focusing mKdV equation are not spectrally stable with respect to arbitrary perturbations (amplitude). In the following, we aim to analyze the stability property of subharmonic perturbations, which is a particular perturbation with integer times of the period for solutions.

\subsection{Subharmonic stability analysis of the mKdV equation}\label{subsec:subharmonic stability analysis}

The goal of this subsection is to discuss the subharmonic stability analysis of the function $W(\xi;\Omega)$ associated with the values of functions $\Omega(z)$, $I(z)$ and $M(z)$ defined in \eqref{eq:Omega(z)=}, \eqref{eq:I}, and \eqref{eq:M}, respectively. In particular, we should pay attention to the boundedness of $W(\xi;\Omega)$.

\begin{definition}\label{defin:P-sub} 
		For the elliptic function solutions $u(\xi)$ with period $2T$, if the perturbation of this solution is $2PT$ periodic function, it is called a P-subharmonic perturbation of solution $u(\xi)$. If the period of perturbations is the same as the solution $u(\xi)$, we call it co-periodic perturbation.
	\end{definition}
	
	Combining Definition \ref{define:spect-stable} with Definition \ref{defin:P-sub}, we obtain the definition of subharmonic perturbations.
	
	\begin{definition}\label{define:spect-P}
		 If the perturbation $W(\xi;\Omega)$ is $2PT$ periodic function and $\Omega\in \ii \mathbb{R}$, i.e., the spectrum $\sigma(\mathcal{JL})$ satisfies 
		\begin{equation}
			\sigma_P(\mathcal{JL}):=\{\Omega\in \mathbb{C}| W(\xi;\Omega)\in C^0_b(\mathbb{R})\cap L^2([-PT,PT]) \} \subset \ii \mathbb{R},
		\end{equation} 
	then the solution $u(\xi)$ is P-subharmonic perturbation spectrally stable.
\end{definition}

Based on the Floquet theorem (Theorem in \cite{DeconinckK-06,Floquet-83}), we know that the solution $W(\xi;\Omega)$ in the linear homogeneous differential equation \eqref{eq:spectral-linear-ODE} are of the form
$W(\xi;\Omega)=\ee^{\ii \hat{\eta} \xi} \hat{W}(\xi;\Omega), \hat{W}(\xi+2T;\Omega)=\hat{W}(\xi;\Omega),\hat{\eta}\in \mathbb{C}$, where $2T$ is the period of the function $\hat{W}(\xi ;\Omega )$. Since the spectral problem \eqref{eq:spectral} is equivalent to \eqref{eq:spectral-linear-ODE}, every bounded solution of spectral problem \eqref{eq:spectral} is of the form
\begin{equation}\label{eq:W-W-eta}
	W(\xi ;\Omega )=\ee^{\ii \eta \xi  } \hat{W}(\xi ;\Omega ),\qquad \hat{W}(\xi  +2T;\Omega)=\hat{W}(\xi ;\Omega ),  \qquad \eta\in  \left[-\frac{\pi }{2T},\frac{\pi }{2T}\right). 
\end{equation}
Based on Definition \ref{defin:P-sub}, for the $2PT$-subharmonic perturbation problems,
	$\eta$ can be defined in any interval of length $\frac{2\pi}{2T}$, i.e.,
\begin{equation}\label{eq:eta}
	\eta=\frac{m\pi}{2PT}+\frac{(2n+1)}{2T}\pi ,\qquad m=-P,-P+1,\cdots,P-1, \quad \text{and} \quad n\in \mathbb{Z}.
\end{equation} 
By \eqref{eq:Phi-Theta}, \eqref{eq:w-solution}, and \eqref{eq:W-W-eta}, we get
\begin{equation} 
	\begin{split}
		\exp{(2\ii \eta T)}=&\frac{W(\xi+2T;\Omega)}{W(\xi;\Omega)}
		=\exp{\left(4\alpha Z(\ii(z-l))T+4\ii \lambda T \right)}.
	\end{split}
\end{equation}
And then, we define the function $M(z)$ as
\begin{equation}\label{eq:M}
	M(z):=2\eta T=-4\ii \alpha Z(\ii  (z-l))T+4\lambda(z) T.
\end{equation}
Together with \eqref{eq:eta}, the $2PT$-subharmonic perturbation problems must satisfy $M(z)=\frac{n\pi }{P},n\in \mathbb{Z}$. 

From Lemma \ref{lemma:W} and the spectral problem \eqref{eq:spectral}, we know that only when the real part of the exponent is zero, i.e.,
\begin{equation}\label{eq:Re}
	{\Re}(\alpha Z(\ii (z-l))+\ii \lambda)=0, \quad \text{and} \quad {\Re} (-\alpha Z(\ii  (z+l)+K+\ii K')+\ii \lambda )=0, 
\end{equation}
the solution $W(\xi;\Omega)$ is bounded. We find the relationship between eigenfunctions of the spectral problem and solutions of the Lax pair in Lemma \ref{lemma:W}. 
The linear combinations of equations in \eqref{eq:Re}:
\begin{equation}\label{eq:I-1-2}
	{\Re}(\alpha Z(\ii (z-l))-\alpha Z(\ii  (z+l)+K+\ii K')+2\ii \lambda)=0 \text{  and  }
	{\Re}\left(\alpha Z(\ii (z-l))+\alpha Z(\ii  (z+l)+K+\ii K')\right)=0,
\end{equation} 
are equivalent to \eqref{eq:Re}. By \eqref{eq:Phi}, we get that the determinant of matrix $\Phi(\xi,t;\lambda)$ is a constant. Together with $\eqref{eq:Phi-Theta}$, the first one of equation \eqref{eq:I-1-2} holds. Therefore, for $\alpha \in \mathbb{R}$, we just need to analyze $ {\Re}(I(z))=0$, where
\begin{equation} \label{eq:I}
	I(z):= Z(\ii (z-l))+ Z(\ii  (z+l)+K+\ii K').
\end{equation} 

Similar to the literature \cite{DeconinckS-17}, differentiating with respect to $z_{R},z_{I}$ on the curve ${\Re}\left(I(z) \right)=C$, we could get the tangent vector 
\begin{equation}\label{eq:tangent-vector}
	\left(-\frac{\dd {\Re} (I)}{\dd z_{I}}, \frac{\dd {\Re} (I)}{\dd z_{R}} \right)= \left({\Im}(I'(z)), {\Re}(I'(z)) \right), \qquad I'(z):=\frac{\dd I(z)}{\dd z}
\end{equation}   
where $C$ is a constant and $z_{R},z_{I}$ denote the real and imaginary part of $z$ respectively. Once we find a point $z$ satisfying ${\Re}(I(z))=0$, we could get a curve, in which all the points $z$ satisfy ${\Re}(I(z))=0$ by the tangent vector \eqref{eq:tangent-vector}. The derivative of $I(z)$ is
\begin{equation}\label{eq:I'}
	I'(z)
	=\ii \left( \dn^2(\ii  (z-l))+ \dn^2(\ii  (z+l)+K+\ii K')-\frac{2E(k)}{K(k)} \right).
\end{equation}
By \eqref{eq:lambda} and the definition of $M(z)$ in \eqref{eq:M}, we obtain $M'(z):=\frac{\dd M(z)}{\dd z}
	=-2\ii\alpha T I'(z)$,
which implies $M(z)=-2\ii \alpha T I(z)+C, C\in \mathbb{R}$. Substituting $z=0$ into the above equations, we get
\begin{equation}\label{eq:M(z)-value}
	M(z)=-2\ii \alpha T I(z)+\pi, \quad l=0,\quad \text{and} \quad
	M(z)=-2\ii \alpha T I(z)+2\pi, \quad l=\frac{K'}{2}.
\end{equation}
And we consider the value $z$ in the rectangular area $S$, where the set $S$ is defined in \eqref{eq:set S}. Using the formulas of the Zeta function \cite[p.33]{ByrdF-54}, we obtain that when $l=0$, the periods of function ${\Re}(I(z))$ are $2K'$ and $K'+\ii K$; when $l=\frac{K'}{2}$, the periods of function ${\Re}(I(z))$ are $2K'$ and $\ii K$. Thus, for any $\hat{z}\in \mathbb{C}$, we can find a point $z\in S$, such that ${\Re}(I(\hat{z}))={\Re}(I(z))$.
For the boundedness of the function $W(\xi;\Omega)$, we merely need to consider the set $Q$ defined in \eqref{eq:set Q}. By the expression of $I(z)$ in \eqref{eq:I}, we get the feature about it:

\begin{prop}\label{prop:Q_r-z=l}
	For the set $Q$, we get the following propositions:
	\begin{itemize}
		\item[{\rm (1)}] If $Q_r=\left\{z\left|z\in \mathbb{R}, z \in S\right. \right\}$, we get $Q_r\subseteq Q$ and $\Omega(z)\in \ii \mathbb{R},z\in Q_r$.
		\item[{\rm (2)}] The set $Q$ is symmetric about the line $z=l$ and the line ${\Im}(z)=0$.
	\end{itemize}
\end{prop}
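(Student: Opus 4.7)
The plan is to reduce everything to three standard facts about the Jacobi zeta function $Z(w,k)$ with real modulus $k$: (i) oddness, $Z(-w)=-Z(w)$; (ii) conjugation, $\overline{Z(w)}=Z(\overline{w})$; and (iii) quasi-periodicity, $Z(w+2K)=Z(w)$ and $Z(w+2\ii K')=Z(w)-\ii\pi/K$ (see \cite[p.33]{ByrdF-54}). Combined with the imaginary-argument formulas for the Jacobi elliptic functions, these three items handle both assertions with no heavy computation.

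For part (1), I would take $z\in\mathbb{R}$ and compute $\overline{I(z)}$ directly from \eqref{eq:I}. Since $\overline{\ii(z-l)}=-\ii(z-l)$, property (ii) gives $\overline{Z(\ii(z-l))}=Z(-\ii(z-l))=-Z(\ii(z-l))$ by (i). For the second term, $\overline{\ii(z+l)+K+\ii K'}=-\ii(z+l)+K-\ii K'$; applying (ii), (i) and the period shift $w\mapsto w-2K$ from (iii), this term evaluates to $-Z(\ii(z+l)+K+\ii K')$ once $2\ii K'$ is absorbed using the quasi-period. Adding the two pieces yields $\overline{I(z)}=-I(z)$, hence $\Re(I(z))=0$, so $Q_r\subseteq Q$. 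For the claim $\Omega(z)\in\ii\mathbb{R}$, I would plug the formulas $\sn(\ii y,k)=\ii\sn(y,k')/\cn(y,k')$, $\cn(\ii y,k)=1/\cn(y,k')$, $\dn(\ii y,k)=\dn(y,k')/\cn(y,k')$ into \eqref{eq:Omega(z)=}; both $\scd(\ii(z-l))$ and $\scd(\ii(z+l))/\cn^4(\ii(z+l))$ become purely imaginary for real $z$, which is what we want.

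For part (2), I would handle the two symmetries separately but with the same recipe. For reflection across $\Im(z)=0$, the computation from part (1) (done now without assuming $z$ real) gives $\overline{I(z)}+I(\overline{z})=Z(-\ii(\overline z+l)+K-\ii K')+Z(\ii(\overline z+l)+K+\ii K')$; shifting one argument by $2K$ and then by $-2\ii K'$ via (iii), the two $Z$-values become negatives of each other up to a purely imaginary constant, so $\Re\bigl(\overline{I(z)}+I(\overline{z})\bigr)=0$, i.e.\ $\Re(I(\overline z))=-\Re(I(z))$. For reflection across $z=l$, I would compute $I(z)+I(2l-z)$: the $Z(\ii(z-l))$ and $Z(\ii(l-z))$ pieces cancel by (i), and the remaining sum $Z(\ii(z+l)+K+\ii K')+Z(\ii(3l-z)+K+\ii K')$ has its two arguments adding to $2K+2\ii K'(1+2l/K')$. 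Using $Z(-w)=-Z(w)$ together with $Z(w+2\ii K')=Z(w)-\ii\pi/K$, this sum collapses to a purely imaginary constant ($-\ii\pi/K$ when $l=0$, $-2\ii\pi/K$ when $l=K'/2$). In either case $\Re(I(2l-z))=-\Re(I(z))$, so both reflections preserve the zero set of $\Re(I)$, as claimed.

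The only mild obstacle is bookkeeping: the quasi-period in the imaginary direction produces the $\ii\pi/K$ terms, and one must shift arguments cleanly back into a single fundamental domain so that property (i) can be applied. The sign of $l$ on each branch ($l=0$ versus $l=K'/2$) changes a few intermediate constants but never affects the real part, which is all that \eqref{eq:set Q} cares about. Once the algebra is organized around the three zeta identities above, both parts of the proposition follow with only a few lines of manipulation.
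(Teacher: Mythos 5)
Your proposal is correct and follows essentially the same route as the paper: both arguments rest on the oddness, conjugation symmetry, and quasi-periodicity of the Jacobi zeta function to show that $I$ (respectively $\Re(I)$) changes sign under $z\mapsto \bar z$ and $z\mapsto 2l-z$, and on the imaginary-argument formulas to see that $\Omega(z)$ is purely imaginary for real $z$. The only differences are cosmetic bookkeeping of the $\ii\pi/K$ constants, which do not affect the real parts.
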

\begin{proof}
	(1): By the function $I(z)$ in \eqref{eq:I} and formulas of the Zeta function \cite[p.34]{ByrdF-54}, for any $z\in Q_r$, we get
	\begin{equation}
		I^*(z)
		=Z(-\ii(z-l))+Z(-\ii(z+l)+K-\ii K')
		=-Z(\ii(z-l))-Z(\ii(z+l)+K+\ii K')
		=-I(z),
	\end{equation}
	which implies $I(z) \in \ii \mathbb{R}$, so we get $Q_r \subseteq Q$. By \eqref{eq:Omega(z)=}, we obtain $\Omega^*(z)=-\Omega(z)$, i.e., $\Omega(z)\in \ii \mathbb{R}$.
	
	(2): We set two points $\tilde{z}_{1,2}=\pm \tilde{z}+l$ that are symmetric about the line $z=l$. The values of $I(\tilde{z}_{1,2})$ are
	\begin{equation}\begin{split}
			I(\tilde{z}_1)&=I(\tilde{z}+l)
			=Z(\ii \tilde{z})+Z(\ii \tilde{z}+K+\ii K'+2\ii l)
			=Z(\ii \tilde{z})+Z(\ii \tilde{z}+K-\ii K'+2\ii l)+ \frac{\ii \pi}{K},\\
			I(\tilde{z}_2)&=I(-\tilde{z}+l)
			=Z(-\ii \tilde{z})+Z(-\ii \tilde{z}+K+\ii K'+2\ii l)
			=-Z(\ii \tilde{z})-Z(\ii \tilde{z}+K-\ii K'-2\ii l).
	\end{split}\end{equation}
	Letting $I(\tilde{z}_1)\in \ii\mathbb{R}$, we know $I(\tilde{z}_2)=-I(\tilde{z}_1)+\frac{\ii \pi}{K}\in \ii\mathbb{R},l=0$ and  $I(\tilde{z}_2)=-I(\tilde{z}_1)\in \ii\mathbb{R},l=\frac{K'}{2}$. So, we get that $Q$ is symmetric about the line $z=l$.
	By the equation
	\begin{equation}\begin{split}
			I(z^*)&=Z(\ii(z^*-l))+Z(\ii(z^*+l)+K+\ii K')=-Z^*(\ii(z-l))-Z^*(\ii(z+l)-K+\ii K')=-I^*(z),
	\end{split}\end{equation}
	we obtain that the set $Q$ is symmetric about the line ${\Im}(z)=0$.
\end{proof}

\begin{lemma}\label{lemma:M increase}
	Along the curve ${\Re}(I(z))=0$, the value of $M(z)$  increases (decreases)	in the upper half-plane, and it decreases (increases) in the lower half-plane.
\end{lemma}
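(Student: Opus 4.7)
The plan is to exploit the affine relation $M(z) = -2\ii \alpha T I(z) + C$ from \eqref{eq:M(z)-value} (with $C\in\mathbb{R}$) together with the tangent vector description \eqref{eq:tangent-vector} of the level set $\Re(I(z)) = 0$, and then to extract the opposite behavior in the upper and lower half-planes from the conjugation symmetry of Proposition~\ref{prop:Q_r-z=l}. Along the curve, $I(z)$ is purely imaginary, so $M(z) = 2\alpha T\,\Im(I(z)) + C$ is real, and monotonicity of $M$ reduces to monotonicity of $\Im(I(z))$.

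First I would parameterize the curve by the tangent of \eqref{eq:tangent-vector}, which in complex form reads $\frac{\dd z}{\dd s} = \Im(I'(z)) + \ii \Re(I'(z)) = \ii\,\overline{I'(z)}$, and compute
\begin{equation*}
  \frac{\dd M}{\dd s} \;=\; M'(z)\,\frac{\dd z}{\dd s} \;=\; \bigl(-2\ii \alpha T\,I'(z)\bigr)\bigl(\ii\,\overline{I'(z)}\bigr) \;=\; 2\alpha T\,|I'(z)|^2 \;\ge\; 0,
\end{equation*}
with equality only at critical points $I'(z) = 0$. This already establishes strict monotonicity of $M$ along each connected arc of the curve, with the direction of increase given by the natural flow $\ii\,\overline{I'(z)}$.

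Next I would read off the sign reversal between the upper and lower half-planes from the symmetry $I(z^*) = -I^*(z)$ proved in Proposition~\ref{prop:Q_r-z=l}. Differentiating this identity along the real direction (or applying Wirtinger calculus) gives
\begin{equation*}
  I'(z^*) \;=\; -\overline{I'(z)}, \qquad \text{hence} \qquad \Re(I'(z^*)) = -\Re(I'(z)), \;\; \Im(I'(z^*)) = \Im(I'(z)).
\end{equation*}
On any arc on which $\Re(I'(z))\neq 0$, the curve is locally a graph over $z_I = \Im(z)$, and the chain rule yields $\dd M/\dd z_I = 2\alpha T\,|I'(z)|^2/\Re(I'(z))$, whose sign is exactly the sign of $\Re(I'(z))$. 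The symmetry above forces this sign to flip between $z$ and its conjugate $z^*$, so if $M$ is increasing with respect to $z_I$ on an arc in the upper half-plane, then on the mirror arc in the lower half-plane $M$ is decreasing with respect to $z_I$, which is the content of the lemma.

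The main obstacle is not the calculus but the bookkeeping near the real axis and the isolated critical points $I'(z)=0$, where the $z_I$-parameterization degenerates. On the real axis Proposition~\ref{prop:Q_r-z=l}(1), together with the fact that $M$ is real there, forces $I'(z)$ to be purely imaginary, so $\Re(I'(z)) = 0$ and the tangent $\ii\,\overline{I'(z)}$ is horizontal; the two symmetric arcs meet tangentially, and one must argue that these exceptional points are isolated so the monotonicity conclusion on each half-plane is unaffected. This follows because $I'(z)$, written in terms of the Jacobi $\dn$ function via \eqref{eq:I'}, is meromorphic in $z$, so the zero set of $\Re(I'(z))$ on the curve is discrete, and the monotone continuation across such points is uniquely determined by the sign computation above.
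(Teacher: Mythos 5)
Your proof is correct and follows essentially the same route as the paper: the key computation in both is that the directional derivative of $M$ along the tangent $\left(\Im(I'(z)),\Re(I'(z))\right)$ of the level curve equals $2\alpha T\,|I'(z)|^2>0$, with the upper/lower reversal then read off from the conjugation symmetry of Proposition~\ref{prop:Q_r-z=l}. Your handling of the symmetry step (deriving $I'(z^*)=-\overline{I'(z)}$ and the sign of $\dd M/\dd z_I$) is actually more explicit than the paper's one-line appeal to symmetry; the only slight imprecision is that $\Re(I'(z))$ vanishes identically on the real-axis component of $Q$ rather than at isolated points, but that component is not the subject of the lemma.
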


\begin{proof}
	By \eqref{eq:M(z)-value}, the directional derivative of $M(z)$ along the curve ${\Re}(I(z))=0$ is given by:
	\begin{equation}\label{eq:muz}
		\begin{split}
			\left( \frac{\dd  M(z)}{\dd  z_R},\frac{\dd  M(z)}{\dd  z_I} \right) \cdot \left( {\Im} (I'(z)),{\Re}(I'(z)) \right) 
			=&2\alpha T \left( \frac{\dd  {\Im}(I)}{\dd  z_R},\frac{\dd  {\Im}(I)}{\dd  z_I} \right) \cdot \left( {\Im} (I'(z)),{\Re}(I'(z)) \right)
			\\ =&2\alpha T  \left( \left({\Im}(I'(z))\right)^2+\left({\Re}(I'(z)) \right)^2 \right),
		\end{split}
	\end{equation}
	where $z=z_R+\ii z_I, z_I,z_R\in \mathbb{R}$ and $z\in Q$ in \eqref{eq:set Q}. Since the directional derivative of $M(z)$ with respect to $z$ is nonzero along the curve ${\Re}(I(z))=0$, the value of $M(z)$ is increasing or decreasing. By the symmetry of the curve ${\Re}(I(z))=0$ in Proposition \ref{prop:Q_r-z=l}, we get that if the value of $M(z)$ increases (decreases) in the upper half-plane, it decreases (increases) in the lower half-plane.
\end{proof}

For the different solutions of the mKdV equation, we divide their spectral stability analysis into two subsections.

\subsection{Spectral stability of $\dn$-type solutions}\label{subsec:dn}

In this subsection, we analyze the condition for the spectral stability of the $\dn$-type solutions, i.e., $l=\frac{K'}{2}$.
\begin{lemma}\label{lemma:dn-I(z)}
	If $z=(2m-1)\frac{K'}{2}+\ii z_{I}\in S,z_{I}\in \mathbb{R}, m=0,1,2$, with $z_{I}\neq\frac{nK}{2},n=0,\pm 1$, then $I(z)\notin \ii \mathbb{R}$.
\end{lemma}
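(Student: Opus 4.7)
The plan is to substitute $z = (2m-1)K'/2 + \ii z_I$ (with $l = K'/2$) directly into $I(z) = Z(\ii(z-l)) + Z(\ii(z+l) + K + \ii K')$ and show that $\Re I(z)$, viewed as a function of $z_I$, vanishes only at $z_I \in \{0, \pm K/2\}$. Substitution produces Zeta arguments $-z_I + \ii(m-1)K'$ and $K - z_I + \ii(m+1)K'$; the quasi-periodicity $Z(u + 2\ii K') = Z(u) - \ii\pi/K$ reduces the imaginary parts modulo $2\ii K'$, and for $m = 0, 2$ the residual $\ii K'$ shift is handled by the half-period identity $Z(u + \ii K') = Z(u) + \cn(u)\dn(u)/\sn(u) - \ii\pi/(2K)$, whose complex constant affects only $\Im I(z)$.

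For the cleanest case $m = 1$, no half-period shift is needed and the reduction gives $I(z) = -Z(z_I) + Z(K - z_I) - \ii\pi/K$. Applying the $K$-shift identity $Z(K - z_I) = -Z(z_I) + k^2 \sn(z_I)\cn(z_I)/\dn(z_I)$ yields the real-valued function
\begin{equation}\nonumber
g(z_I) := \Re I(z) = -2 Z(z_I) + k^2 \sn(z_I)\cn(z_I)/\dn(z_I).
\end{equation}
The vanishing at the three exceptional values is forced: $g(0) = 0$ from $Z(0) = \sn(0) = 0$; $g(\pm K/2) = 0$ follows from the known value $Z(K/2) = (1 - k')/2$ together with $k^2 \sn(K/2)\cn(K/2)/\dn(K/2) = k^2/(1+k') = 1 - k'$, combined with the oddness of $g$. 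The cases $m = 0, 2$ yield structurally analogous explicit real functions with the same three forced zeros.

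The main obstacle will be ruling out additional zeros of $g$ on $(-K/2, K/2)$. My plan is to use $Z'(u) = \dn^2(u) - E/K$ and $\dn(K-u) = k'/\dn(u)$ to express
\begin{equation}\nonumber
g'(z_I) = \frac{2E}{K} - \dn^2(z_I) - \frac{k'^2}{\dn^2(z_I)},
\end{equation}
so that critical points of $g$ on $(0, K/2)$ correspond to roots of the quadratic $y^2 - (2E/K)\,y + k'^2 = 0$ in $y = \dn^2(z_I)$. Invoking the elementary inequalities $k' < E/K < 1 - k^2/2$ for $k \in (0,1)$, exactly one root lies in the range $\dn^2((0, K/2)) = (k', 1)$, so $g'$ has a unique zero in $(0, K/2)$; combined with $g'(0) = 2E/K - 1 - k'^2 < 0$ and $g(0) = g(K/2) = 0$, this critical point is a strict minimum and hence $g < 0$ on $(0, K/2)$. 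By oddness $g > 0$ on $(-K/2, 0)$, completing the proof for $m = 1$; the cases $m = 0, 2$ reduce to the same quadratic-in-$\dn^2$ analysis applied to their respective explicit real functions.
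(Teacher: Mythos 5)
Your plan is correct in its core case and takes a genuinely different route from the paper's. The paper forms $I(z)+I^*(z)$, collapses it via the quasi-periodicity of the Jacobi Zeta function to $2\bigl[Z(-z_I-\ii(m+1)K')+Z(K-z_I+\ii(m+1)K')\bigr]$, and then invokes the oddness and $2K$-periodicity of $Z$ to identify the exceptional $z_I$; as written that argument really only exhibits a sufficient condition for vanishing and is vague about why no other $z_I$ can occur. You instead reduce $\Re(I(z))$ to the explicit real function $g(z_I)=-2Z(z_I)+k^2\sn(z_I)\cn(z_I)/\dn(z_I)$ and pin down its zero set by calculus. I checked the key steps: the identity $k^2(1-2\sn^2+k^2\sn^4)=\dn^4-k'^2$ does give $g'(z_I)=2E/K-\dn^2(z_I)-k'^2/\dn^2(z_I)$; the inequalities \eqref{eq:value-E-K-lin-c} and \eqref{eq:value-E-K-lin-d} show the quadratic $y^2-(2E/K)y+k'^2$ is negative at $y=k'$ and positive at $y=1$, hence has exactly one root in $\dn^2((0,K/2))=(k',1)$; together with $g'(0)=2E/K-1-k'^2<0$, $g'(K/2)=2(E/K-k')>0$ and $g(0)=g(K/2)=0$ this forces $g<0$ on $(0,K/2)$, and oddness handles $(-K/2,0)$. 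This costs more computation than the paper's proof but genuinely closes the ``no other zeros'' step that the paper glosses over, using only inequalities already proved in the paper's Appendix A.

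The one soft spot is the deferred cases $m=0,2$. There the reduction yields $\Re(I(z))=g(z_I)+k'^2\sn(z_I)/(\cn(z_I)\dn(z_I))-\cn(z_I)\dn(z_I)/\sn(z_I)=:h(z_I)$, which has a simple pole, not a zero, at $z_I=0$ (consistent with $I$ itself having a pole there, since $\ii(z-l)\equiv \ii K'$), so ``the same three forced zeros'' is not literally right; and $h'$ is not a function of $\dn^2$ alone, so ``the same quadratic-in-$\dn^2$ analysis'' does not apply verbatim. The fix is easy and in fact easier than $m=1$: a direct computation gives $h'(z_I)=2E/K+k'^2\sn^2/\cn^2+k^2\cn^2+\cn^2\dn^2/\sn^2>0$, so $h$ increases from $-\infty$ at $0^+$ to $h(K/2)=0$ and is strictly negative on $(0,K/2)$, with oddness covering the rest. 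You should carry these two cases out explicitly rather than asserting the analogy.
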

\begin{proof}	
	Plugging $z=\frac{(2m-1)K'}{2}+\ii z_{I}$ into \eqref{eq:I} and utilizing formulas of the Zeta function \cite[p.33]{ByrdF-54}, we get 
		\begin{equation}
			\begin{split}
			I(z)+I^*(z)
				=&Z(-z_{I}+\ii(m-1) K')+Z(-z_{I}+\ii(m+1) K'+K)\\
				&+Z(-z_{I}-\ii (m-1)K')+Z(-z_{I}-\ii(m+1) K'+K)\\
				=&Z(-z_{I}-\ii (m+1)K')-\frac{\ii \pi m}{K} +Z(-z_{I}+\ii(m+1)K'+K)\\
				&+Z(-z_{I}-\ii(m+1) K')-\frac{\ii \pi}{K}+Z(-z_{I}+\ii(m+1) K'+K)+\frac{\ii\pi(m+1) }{K}\\
				=&2Z(-z_{I}-\ii(m+1) K')+2Z(-z_{I}+\ii(m+1) K'+K).
			\end{split}
		\end{equation}
		Since $Z(u)$ is an odd function, we get that if $-z_{I}-\ii(m+1) K'=-(-z_{I}+\ii(m+1) K'+K)+2nK,n\in \mathbb{Z}$, the equation $I^*(z)+I(z)=0$ holds. By $z=\frac{(2m-1)K'}{2}+\ii z_I\in S$, we get $z_{I}=0,\pm \frac{K}{2}$.
\end{proof}

\begin{lemma}\label{lemma: dn z}
	If $l=K'/2$, the set $Q$ \eqref{eq:set Q} could be rewritten as \begin{equation}
		Q=Q_0:=\left\{z\left|z=z_{R}+\frac{\ii }{2}nK\in S,\ n\in \mathbb{Z},\,\, z_{R}\in\mathbb{R} \right. \right\}.
	\end{equation}
	Moreover, for any $z\in Q$, $\Omega(z)\in \ii \mathbb{R}$.
\end{lemma}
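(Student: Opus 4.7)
The plan is to establish the two inclusions $Q_0 \subseteq Q$ and $Q \subseteq Q_0$, and then confirm $\Omega(z) \in \ii \mathbb{R}$ along $Q$. Inside the rectangle $S$ (of height $K$), the condition $\Im(z) = nK/2$ with $n \in \mathbb{Z}$ forces $n \in \{-1, 0, 1\}$, so $Q_0$ consists of the three horizontal segments $\Im(z) \in \{0,\pm K/2\}$ (restricted to $S$).

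For $Q_0 \subseteq Q$: the real line case is Proposition \ref{prop:Q_r-z=l}(1). For $z = z_R \pm \ii K/2$, substitute into \eqref{eq:I} with $l = K'/2$. The two Zeta arguments become $\ii z_R \mp K/2 - \ii K'/2$ and $\ii z_R \pm K/2 + 3\ii K'/2 + K$; applying the quasi-periodicities $Z(u + 2K) = Z(u)$ and $Z(u + 2\ii K') = Z(u) - \ii \pi/K$, the latter reduces to $Z(\ii z_R \pm K/2 - \ii K'/2) - \ii\pi/K$. Setting $u := \ii z_R - \ii K'/2$ (which is purely imaginary), one sees that $I(z_R \pm \ii K/2) = Z(u - K/2) + Z(u + K/2) - \ii\pi/K$. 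The addition formula combined with the shift $Z(u + K) = Z(u) - k^2 \sn(u)\cd(u)$ and the parity $Z(-u) = -Z(u)$ collapses $Z(u - K/2) + Z(u + K/2)$ to a sum whose real part vanishes on the imaginary axis (because $Z$ maps the imaginary axis into the imaginary axis, and $\sn\cd$ evaluated at an imaginary argument is imaginary). Hence $\Re(I(z_R \pm \ii K/2)) = 0$.

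For $Q \subseteq Q_0$: first observe that the poles of the Zeta function $Z(w)$ lie at $w = 2mK + (2n+1)\ii K'$; tracing this through the two Zeta terms in \eqref{eq:I} and intersecting with $S$ shows that all poles of $I(z)$ occur only at the four boundary corners $(z_R,z_I) \in \{-K'/2, 3K'/2\}\times\{0\}$. Consequently $\Re(I(z))$ is harmonic throughout the open rectangle $S^{\circ}$, and its zero set is a union of real-analytic curves. By Lemma \ref{lemma:dn-I(z)}, on each of the vertical lines $z_R \in \{-K'/2, K'/2, 3K'/2\}$ the set $Q$ consists only of the three points $z_I \in \{0,\pm K/2\}$, which already lie on $Q_0$. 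Since by \eqref{eq:I'} the derivative $I'(z)$ does not vanish at these three points, the three horizontal segments of $Q_0$ are the unique smooth components of $\{\Re I = 0\}$ through them, so no extra curve can meet the vertical sides of $S$. A hypothetical additional closed loop contained in $S^{\circ}$ would bound a region where the harmonic $\Re(I)$ has identically zero boundary values, so by the maximum principle $\Re(I)\equiv 0$ there, contradicting the nonconstancy of the analytic function $I(z)$. Therefore $Q = Q_0$.

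Finally, for $\Omega(z) \in \ii\mathbb{R}$ on $Q$: the real-axis case is again Proposition \ref{prop:Q_r-z=l}(1). For $z = z_R \pm \ii K/2$, apply the standard shift identities for $\sn, \cn, \dn$ under $u \mapsto u \pm \ii K/2$ in the explicit formula \eqref{eq:Omega(z)=}; each occurrence of $\scd = \sn\cn\dn$ with a purely imaginary argument shifted by a real half-period rearranges into an imaginary number, while the denominator $\cn^4$ is real, giving $\Omega(z)\in\ii\mathbb{R}$. The main obstacle in this proof is the bookkeeping in step two — one must verify carefully that all poles of $I$ are pushed onto the vertical boundary of $S$, that $I'$ is nonzero at the six anchor points on the vertical sides, and that the maximum-principle argument covers every possible stray component; if any of these fail, one would need to supplement with a direct homotopy/degree count for the level curves of $\Re(I)$.
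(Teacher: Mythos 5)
Your proposal is correct and follows essentially the same route as the paper: direct verification of $Q_0\subseteq Q$ via the quasi-periodicity and parity of the Jacobi Zeta function, the reverse inclusion via Lemma \ref{lemma:dn-I(z)} on the vertical sides plus the non-vanishing of $I'$ on $Q_0$ and a maximum-principle exclusion of closed level curves, and the shift formulas to get $\Omega(z)\in\ii\mathbb{R}$ on the lines $\Im(z)=\pm K/2$. Only minor bookkeeping slips appear (the two real-axis poles of $I$ are midpoints of the vertical sides rather than ``four corners,'' and the final step is cleaner as the paper's single identity $\Omega^*(z)=-\Omega(z)$ than as a term-by-term real/imaginary split), but none affect the validity of the argument.
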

\begin{proof}
	The condition of $z\in \mathbb{R}$ has been proved in Proposition \ref{prop:Q_r-z=l}. We consider $z=z_{R}\pm\frac{\ii K}{2}\in Q$. Plugging $z=z_{R}\pm \frac{\ii K}{2}$ into \eqref{eq:I} and utilizing formulas of the Zeta function \cite[p.33]{ByrdF-54}, we get
	\begin{equation}\label{eq:Q-I}
		\begin{split}
			I(z)+I^*(z)
			=&Z\!\left(\ii\left(z_{R}+\frac{3K'}{2}\right)\mp\frac{K}{2}\right)+Z\!\left(\ii\left(z_{R}-\frac{K'}{2}\right)\mp\frac{K}{2}+K\right)\\
			&-Z\!\left(\ii\left(z_{R}-\frac{K'}{2}\right)\pm \frac{K}{2}\right)-Z\!\left(\ii\left(z_{R}+\frac{3K'}{2}\right)\pm\frac{K}{2}-K\right)
			=0.
	\end{split}\end{equation}
	Therefore, we obtain the set $Q_0\subseteq Q$.
	
	Assuming $z_0 \in Q$ but $z_0 \notin Q_0$, we get a curve $l_1$ which goes through $z_0$ and satisfies ${\Re}(I(z))=0$ by the tangent vector. From the definition of the Zeta function and $I(z)$, we know that $I(z)$ only has first-order poles, which implies that only one curve satisfying ${\Re}(I(z))=0$ goes through the poles. Because the pole point is in the set $Q_0$ and for any $z\in Q_0$ the inequality $I'(z)\neq 0$ holds, and the curve $l_1$ does not intersect with the set $Q_0$. By Lemma \ref{lemma:dn-I(z)}, we find that on the boundary of the set $S$, if $\Re(I(z))=0$, the point $z$ must satisfy $z\in Q_0$. So, the curve does not intersect with the boundary.
	Thus the curve $l_1$ is a closed one. In the interior of a closed curve, by the maximum principle of harmonic function, we know that all the points $z$ satisfy ${\rm Re}(I(z))=0$, so $I'(z)=0$ in this closed region. However, there are only two points such that $I'(z)=0, z\in Q$, so we get the contradiction. Therefore, $Q \subseteq Q_0$.
	
	Finally, plugging $z=z_{R}\pm \frac{\ii K}{2}$ into \eqref{eq:Omega(z)=} and using the shift formulas of the Jacobi elliptic functions \cite[p.20]{ByrdF-54}, we get
	\begin{equation}\begin{split}
			\Omega^*(z)
			=&\alpha^3\left( k^2\scd\left(\ii z_{R}- \frac{\ii K'}{2}\pm\frac{K}{2} \right)+\frac{k'^2\scd\left(\ii z_{R}+ \frac{\ii K'}{2}\pm\frac{K}{2} \right)}{\cn^4\left(\ii z_{R}+ \frac{\ii K'}{2}\pm\frac{K}{2} \right)} \right)\\
			=&\alpha^3\left(k^2\scd\left(\ii z_{R}- \frac{\ii K'}{2}\mp\frac{K}{2} \right) +\frac{k'^2\scd\left(\ii z_{R}+ \frac{\ii K'}{2}\mp\frac{K}{2} \right)}{\cn^4\left(\ii z_{R}+ \frac{\ii K'}{2}\mp\frac{K}{2} \right)}\right)\\
			=&-\Omega(z),
	\end{split}\end{equation}
where the function $\scd(z)$ is defined in \eqref{eq:define-scd}. Together with Proposition \ref{prop:Q_r-z=l}, we verify $\Omega(z)\in \ii \mathbb{R}$, $z\in Q$. 
\end{proof}

\newenvironment{proof-spec-dn}{\emph{Proof of Theorem \ref{theorem:spec-dn}.}}{\hfill$\Box$\medskip}
\begin{proof-spec-dn}
	Lemma \ref{lemma: dn z} claims that the set corresponding to all bounded spectral functions of the mKdV equation with the $\dn$-type solutions is $Q_0$, and all elements of $Q_0$ satisfy $\Omega(z) \in \ii \mathbb{R}$. 
	By Definition \ref{define:spect-stable}, the $\dn$-type solutions of the mKdV equation \eqref{eq:mKdV1} are spectrally stable. 
\end{proof-spec-dn}

By choosing parameters $k=0.9975$, $\alpha=\frac{1}{16}$, we exhibit the set $Q$, functions $\lambda(z)$ and $\Omega(z),z\in Q$, in Figure \ref{Fig dn}.

\begin{figure}[ht]
	\centering
	\includegraphics[width=1\linewidth]{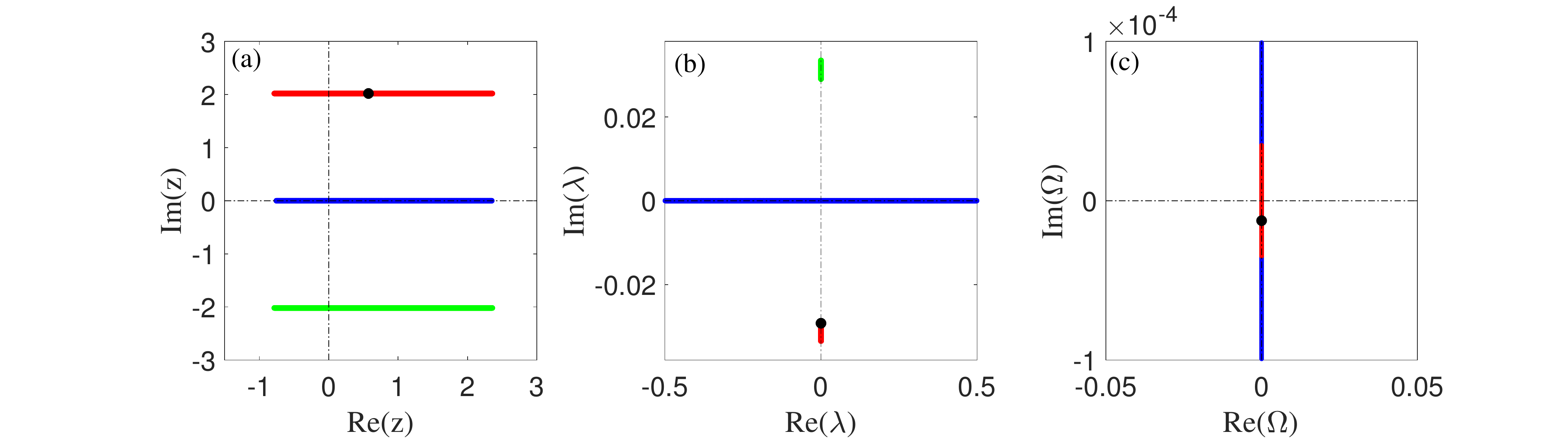}
	\caption{(a) \(\{z| z\in Q \}\), (b)  \(\{\lambda(z) | z\in Q \}\), (c)  \(\{\Omega(z) | z\in Q \}\). In subfigure (c), the green line coincides with the red line.  The black points represent the spectral points of the corresponding breather solutions by Darboux-B\"acklund transformation.}
	\label{Fig dn}
\end{figure}

\subsection{Spectral stability of $\cn$-type solutions} \label{subsec:cn}
In this subsection, we mainly study the spectral stability of the $\cn$-type solutions, i.e., $l=0$, with respect to the subharmonic perturbations.

\begin{prop}\label{prop:Q-M-cn}
	Setting $z_{1,2}=\frac{K'}{2}\pm \ii \frac{K}{2}, z_{3,4}=-\frac{K'}{2}\pm \ii \frac{K}{2}$, we obtain $\rm (a) :$ $z_i\in Q$; $\rm (b):$ $\Omega(z_i)=0$; and $\rm (c):$ $M(z_i)=\pi $ mod $2\pi$, $i=1,2,3,4$.
\end{prop}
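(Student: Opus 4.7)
\emph{Proof proposal.} The plan is to reduce to the single case $z_1 = K'/2 + \ii K/2$ and then recover the other three by the invariances $z\mapsto -z$ and $z\mapsto z^*$ of $\Omega$, $I$, and $M$. These are immediate from \eqref{eq:Omega(z)=} (since $\scd$ is odd and the coefficients $k^2, k'^2$ are real), the definition of $I$ in \eqref{eq:I}, and the formula $M(z) = -2\ii\alpha T I(z) + \pi$; under them $z_1$ is mapped to $z_2, z_3, z_4$ respectively, so the three conclusions are stable under this group.

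For (a), I would compute $I(z_1)$ in closed form. With $l=0$ and $\ii z_1 = -K/2 + \ii K'/2$, the definition \eqref{eq:I} gives
\begin{equation*}
I(z_1) = Z(-K/2 + \ii K'/2) + Z(K/2 + 3\ii K'/2).
\end{equation*}
Reducing the second argument modulo $2\ii K'$ via the quasi-period $Z(u + 2\ii K') = Z(u) - \ii\pi/K$ converts it to $Z(K/2 - \ii K'/2) - \ii\pi/K$, and the oddness of the Zeta function then yields $Z(K/2 - \ii K'/2) = -Z(-K/2 + \ii K'/2)$. The two Zeta contributions cancel, leaving $I(z_1) = -\ii\pi/K \in \ii\mathbb{R}$. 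Hence $\Re(I(z_1))=0$ and $z_1\in Q$.

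For (c), I would combine (a) with the identity $M(z) = -2\ii\alpha T I(z) + \pi$ from \eqref{eq:M(z)-value} ($l=0$ case). Since the fundamental period of the cnoidal wave $u(\xi) = k\alpha\cn(\alpha\xi)$ is $2T = 4K/\alpha$, i.e. $\alpha T = 2K$, I obtain $M(z_1) = -4\ii K(-\ii\pi/K) + \pi = -3\pi \equiv \pi \pmod{2\pi}$. (A direct check at $z_3, z_4$ gives $I = 0$ and $M = \pi$; in both cases $M \equiv \pi \pmod{2\pi}$.)

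For (b), I would use the uniform expression \eqref{eq:lambda^2} to show $\lambda^2(z_1) = \lambda_1^2$, which forces $y(z_1)=0$ and hence $\Omega(z_1) = 8\ii\lambda(z_1) y(z_1) = 0$. With $l=0$, the identity $\dn(u + 2\ii K') = -\dn(u)$ collapses the two $\dn^2$-contributions in \eqref{eq:lambda^2} into the single value $\dn^2(-K/2 + \ii K'/2)$. Applying the Jacobi addition formula at this argument (or, equivalently, the imaginary transformation $\dn(\ii u, k) = \dc(u, k')$ together with $\dn(u + K) = k'/\dn(u)$) and using the algebraic simplification $(1 + k + k')^2 = 2(1+k)(1+k')$, one finds $\dn^2(-K/2 + \ii K'/2) = k'(k' + \ii k)$, so $\lambda^2(z_1) = (\alpha^2/4)(k'^2 - k^2 + 2\ii k k') = \lambda_1^2$, as required. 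The main technical obstacle is precisely this elliptic-function evaluation; parts (a) and (c) are brief manipulations using only the Zeta function's quasi-periodicity and oddness.
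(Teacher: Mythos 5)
Your proposal is correct and follows essentially the same route as the paper: evaluate $I$ at the corner points using the oddness and the quasi-period $Z(u+2\ii K')=Z(u)-\ii\pi/K$ of the Zeta function (the paper computes $I(z_3)=I(z_4)=0$ directly and transports to $z_1,z_2$ via $I(-z)=-I(z)-\ii\pi/K$, which is the same calculation you do at $z_1$), then read off $M$ from \eqref{eq:M(z)-value}, and show the spectral curve degenerates at $z_i$. For (b) your packaging differs cosmetically: you show $\lambda^2(z_1)=\lambda_1^2$ (consistent with the paper's later evaluation $\Lambda(z_1)=1-2k^2+2\ii k k'$) so that $y=0$ and $\Omega=8\ii\lambda y=0$, whereas the paper factors $\Omega(z_i)$ and uses the half-period value of $\cn$ to get $k^2\cn^4(\ii z_i)+k'^2=0$; these are the same fact about the branch points. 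One caution on (c): your normalization $\alpha T=2K$ is inconsistent with the additive constant $\pi$ in \eqref{eq:M(z)-value}, which is derived by evaluating at $z=0$ where $I(0)=-\ii\pi/(2K)$ and $M(0)=0$, forcing $\alpha T=K$ (i.e.\ $2T=2K/\alpha$, the period of the coefficients $u^2$, $uu_\xi$ of the linearized operator, not of $u$ itself); had you recomputed the constant with your $T$ you would have gotten $2\pi$ and then $M(z_1)\equiv 0$, breaking the claim. With the paper's normalization one gets $M(z_1)=M(z_2)=-\pi$ and $M(z_3)=M(z_4)=\pi$, so your conclusion mod $2\pi$ survives, but the exact values matter later (in the proof of Theorem \ref{theorem:subharmonic-1-}), so the period convention should be pinned down.
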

\begin{proof}
	(a): From the definition of $I(z)$ in \eqref{eq:I}, we get
	\begin{equation}\label{eq:Z-value 1}
			I(z_3)
			=Z\!\left(-\frac{K}{2}-\frac{\ii K'}{2}\right)+Z\!\left(\frac{K}{2}+ \frac{\ii K'}{2}\right)=0,\qquad
			I(z_4)
			=Z\!\left(\frac{K}{2}-\frac{\ii K'}{2}\right)+Z\!\left(\frac{3K}{2}+ \frac{\ii K'}{2}\right)
			=0.
	\end{equation}
	By \eqref{eq:I} and formulas of the Zeta function \cite[p.34]{ByrdF-54}, we obtain 
	\begin{equation}\label{eq:I(-z),I(z)}
			I(-z)
			=-Z(\ii z)- Z(\ii z+K+\ii K'-2K-2\ii K')
			=-Z(\ii z)- Z(\ii z+K+\ii K')-\frac{\ii \pi}{K}
			=-I(z)-\frac{\ii \pi}{K}.
	\end{equation}
	Combining \eqref{eq:Z-value 1} with \eqref{eq:I(-z),I(z)}, we get 
	\begin{equation}\label{eq:I-z_34}
			I(z_1)=I(z_4)-\frac{\ii \pi}{K}
		=-\frac{\ii \pi}{K}, \qquad I(z_2)=-I(z_3)-\frac{\ii \pi}{K}
		=-\frac{\ii \pi}{K}.
	\end{equation}
	Thus, $I(z_i)\in \ii \mathbb{R},i=1,2,3,4$, which implies  $z_i\in Q,i=1,2,3,4$.

	(b): Since $\cn\left(z_{1,2}\right)=\cn\left(z_{3,4}\right)=(1\mp \ii )\sqrt{\frac{k'}{2k}}$ in \cite[p.21]{ByrdF-54}, we get $k^2\cn^4(\ii z_i)+k'^2=0,i=1,2,3,4$. By \eqref{eq:Omega(z)=}, we obtain	
	\begin{equation}\label{eq:Omega(z)}
		\begin{split}
			-\Omega(z_i)
			=&\alpha^3\left( k^2\scd(\ii z_i)+\frac{k'^2\scd(\ii z_i)}{\cn^4(\ii z_i)} \right)
			=\alpha^3(k^2\cn^4(\ii z_i)+k'^2)\frac{\scd(\ii z_i)}{\cn^4(\ii z_i)}=0.
		\end{split}		
	\end{equation}
	
	(c): By \eqref{eq:M(z)-value}, \eqref{eq:Z-value 1}, and \eqref{eq:I-z_34}, it is easy to obtain $M\left(z_1\right)=M\left(z_2\right)=-\pi$ and $M\left(z_4\right)=M\left(z_3\right)=\pi$.
\end{proof}

\begin{remark}\label{remark:Omega-sym}
	The curve ${\Re}(\Omega(z))=0$ is also symmetric about the origin point, lines ${\Im}(z)=0$ and ${\Re}(z)=0$. Since $\sn(z)$ is an odd function and $\cn(z)$ and $\dn(z)$ are even functions, together with \eqref{eq:Omega(z)=}, we obtain 
	\begin{equation}\nonumber
			\begin{split}
				\Omega(-z)
				=-\alpha^3\left( k^2\scd(\ii z)+\frac{k'^2\scd(\ii z)}{\cn^4(\ii z)} \right)=\Omega(z),\quad
				\Omega(z^*)
				=\alpha^3\left( -k^2\scd(\ii z)-\frac{k'^2\scd(\ii z)}{\cn^4(\ii z)} \right)^*=\Omega^*(z),
			\end{split}
	\end{equation}
where $\scd(z)$ is defined in \eqref{eq:define-scd}. Thus, if $z_0$ satisfies ${\Re}(\Omega(z_0))=0$, points $z_0^*,-z_0,-z_0^*$ also satisfy ${\Re}(\Omega(-z_0))={\Re}(\Omega(-z_0^*))={\Re}(\Omega(z_0^*))=0$. 
\end{remark}

\begin{lemma} \label{lemma:I-z}
On the boundary of the set $S$, the values of the function $I(z)$ have the following properties:
	\begin{itemize}
		\item[{\rm (a)}] On the lines $z=z_R\pm \ii \frac{K}{2},z\in S$, only four points $z_1,z_2,z_3,z_4$ satisfy ${\Re}(I(z))=0$, i.e, $\{z|z=z_R \pm \ii \frac{K}{2}\}\cap Q=\{z_1,z_2,z_3,z_4\}$.
		\item[{\rm (b)}] If $z=\pm K'\pm \ii z_I, z_I\neq 0, z\in S$, we have ${\Re}(I(z))\neq 0$.
	\end{itemize}
\end{lemma}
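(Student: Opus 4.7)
The plan is to reduce both parts to explicit evaluations of $2{\Re}(I(z))=I(z)+I^*(z)$ by direct substitution into $I(z)=Z(\ii z)+Z(\ii z+K+\ii K')$, together with systematic use of the quasi-periodicity relations $Z(u+2K)=Z(u)$ and $Z(u+2\ii K')=Z(u)-\ii\pi/K$, the oddness $Z(-u)=-Z(u)$, and the reality relation $Z(u^*)=Z^*(u)$. This is the same toolbox already exercised in the proof of Lemma~\ref{lemma:dn-I(z)}, now specialized to $l=0$.

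For part (a), I would set $z=z_R\pm\ii K/2$ with $z_R\in\mathbb{R}$, expand $I(z)$ and $I^*(z)$, and pair their terms. The $\pm\ii\pi/K$ quasi-period contributions cancel, and after $2K$-periodicity and oddness are applied, the sum collapses to an expression of the form $Z(a(z_R))+Z(b(z_R))$ with $a,b$ affine in $z_R$. The equation $Z(a)+Z(b)=0$ then forces, via oddness, $a(z_R)+b(z_R)\in 2K\mathbb{Z}$ modulo the quasi-period lattice; within the constraint $z\in S$ this selects only the discrete solutions $z_R\in\{-K'/2,\,K'/2\}$, recovering precisely the four points $z_1,z_2,z_3,z_4$ identified in Proposition~\ref{prop:Q-M-cn}.

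For part (b), the substitution $z=\pm K'+\ii z_I$ followed by the same bookkeeping yields an identity of the shape
\[ 2{\Re}(I(z))=2{\Re}\bigl(Z(-z_I+\ii K')\bigr)+2Z(K-z_I), \]
up to the sign choice in $\pm K'$. Rather than chase roots of this mixed elliptic expression in closed form, my preferred route is the topological/maximum-principle argument already used in Lemma~\ref{lemma: dn z}: if ${\Re}(I)$ had a zero at some $z_I\ne 0$ on the vertical boundary, the level curve emanating from that point could neither exit through the horizontal boundary (its crossings there are restricted to $\{z_1,\dots,z_4\}$ by part (a)) nor pass through the simple poles of $I$; by the $Q$-symmetries of Proposition~\ref{prop:Q_r-z=l} it would therefore close inside $S$, and the maximum principle applied to the harmonic function ${\Re}(I)$ inside the resulting closed region would contradict the fact that $I'(z)$, given explicitly in \eqref{eq:I'}, has only isolated zeros in $S$.

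The main obstacle is part (b): its statement is negative, so unlike part (a) it cannot be settled by factoring an identity and reading off its roots. A direct sign analysis of the displayed combination is delicate because each of the two summands can take either sign; I expect the cleanest resolution to be the closed-level-curve argument outlined above. A fallback, if that argument proves awkward, is to use \eqref{eq:I'} to establish strict monotonicity of ${\Re}(I)$ in $z_I$ along each vertical segment, together with the already-known boundary values at $z_I=\pm K/2$, which are controlled by part (a) and Proposition~\ref{prop:Q-M-cn}.
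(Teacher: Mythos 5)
Your proposal takes a genuinely different route from the paper on both parts, and both routes have gaps. For part (a) the paper never manipulates $I(z)+I^*(z)$ into a sum of Zeta values; it instead uses the closed form \eqref{eq:I'-simplify} of $I'(z)$, restricts it to the line $z=z_R+\ii\frac{K}{2}$, and computes ${\Re}\bigl(I'\bigl(z_R+\ii\frac{K}{2}\bigr)\bigr)=-2k'\sn(2z_R,k')\dn(2z_R,k')<0$ in the interior of each quadrant, so that ${\Re}(I)$ is strictly monotone in $z_R$ along the horizontal edge and can vanish only at the single point per quadrant (namely $z_R=\pm\frac{K'}{2}$) where it is already known to vanish by Proposition \ref{prop:Q-M-cn}. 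Your reduction founders at the step ``$Z(a)+Z(b)=0$ forces $a+b\in 2K\mathbb{Z}$ modulo the quasi-period lattice'': oddness of $Z$ makes that relation a \emph{sufficient} condition for the sum to vanish, not a necessary one. Since $Z$ is nowhere injective ($Z'=\dn^2-E/K$ changes sign), the equation $Z(a(z_R))+Z(b(z_R))=0$ could a priori hold along a continuum of $z_R$, so the argument does not isolate the four points. (The analogous step in the paper's own Lemma \ref{lemma:dn-I(z)} is likewise only stated in the sufficient direction.)

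For part (b) your primary, topological argument does not close. The hypothetical level curve through a zero on $z_R=\pm K'$ is not barred from exiting through the horizontal boundary: part (a) only restricts \emph{where} it may exit, namely at $z_1,\dots,z_4$, and the curve may also meet the real-axis component of $Q$ at a critical point of $I$; moreover the two vertical edges are identified under the $2K'$-periodicity of ${\Re}(I)$, so ``it must close up inside $S$'' is not forced, and the maximum-principle contradiction never materializes. There is also a circularity risk: the paper's later accounting of the components of $Q$ in Proposition \ref{prop: zr0 zi0} takes Lemma \ref{lemma:I-z}(b) as an input, so the global structure of $Q$ cannot be invoked here. Your fallback is essentially the paper's actual proof: substituting $z=K'+\ii z_I$ into \eqref{eq:I'-simplify} gives ${\Im}\bigl(I'(K'+\ii z_I)\bigr)=\frac{-2\cn(2z_I)}{1-\cn(2z_I)}-\frac{2E}{K}<0$, hence ${\Re}(I)$ is strictly monotone in $z_I$ along the vertical edge. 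But the anchor must be taken at $z_I=0$, where $z=\pm K'\in\mathbb{R}\subset Q$ forces ${\Re}(I)=0$; anchoring at $z_I=\pm\frac{K}{2}$, where ${\Re}(I)\neq 0$ by part (a), does not by itself exclude a sign change at some interior $z_I\neq 0$ (one would additionally need the oddness of ${\Re}(I)$ in $z_I$ coming from $I(z^*)=-I^*(z)$).
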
	
\begin{proof}	
	We first consider the first quadrant of the set $S$, called $S_1$. By the symmetry of the set $Q$ proved in Proposition \ref{prop:Q-M-cn}, the computations of values $z\in S$ in the second, third, and fourth quadrants are the same as $z\in S_1$. 
	
	(a): Utilizing the derivative formulas \cite[p.25]{ByrdF-54} and the half arguments formulas \cite[p.24]{ByrdF-54} of Jacobi elliptic functions in turn, we could rewrite the function \eqref{eq:I'} as 
	\begin{equation}\label{eq:I'-simplify}
		I'(z)=\ii \left( \dn^2(\ii z)+\dn^2(\ii z+K+\ii K')-\frac{2E}{K} \right)
		=\ii \left( \frac{2\cn(2\ii z)}{1+\cn(2\ii z)}-\frac{2E}{K} \right).
	\end{equation}
	Plugging $z=z_R+\ii \frac{K}{2}$ into \eqref{eq:I'-simplify}, using shift formulas \cite[p.20]{ByrdF-54} and imaginary arguments formulas \cite[p.24]{ByrdF-54}, in turn, we get
	\begin{equation}
			I'\left( z_R+\ii \frac{K}{2} \right)
			=\frac{2\ii k'\sn(2\ii z_R)}{\dn(2\ii z_R)+k'\sn(2\ii z_R)}-\frac{2\ii E}{K}
			=\ii \left( k'^2\sn^2(2 z_R,k')-\frac{2E}{K} \right)-2k'\sn(2z_R,k')\dn(2  z_R,k').
	\end{equation}
	Thus, for all $z_R\in [0,K']$, ${\Re}\left(I'\left( z_R+\ii \frac{K}{2} \right)\right)<0$, which implies that on the line $z=z_R+\ii \frac{K}{2}$, the value of ${\Re}(I(z))$ is decreasing. Since ${\Re}\left(I\left(\frac{K'}{2}+ \frac{\ii K}{2}\right)\right)=0$, we get ${\Re}(I(z))\neq 0$, when $z=z_R+\ii \frac{K}{2}\in S_1, z_R\neq \frac{K'}{2}$.
	
	(b): Substituting $z=K'+\ii z_I,z_I\neq 0$ into \eqref{eq:I'-simplify}, the derivative of $I(z)$ with respect to $z$ is 
	\begin{equation}
		\begin{split}
			I'\left( K'+\ii z_I \right)
			=&\ii \left( \frac{2\cn(-2 z_I+2\ii K')}{1+\cn(-2 z_I+2\ii K')}-\frac{2E}{K} \right)
			=\ii \left( \frac{-2\cn(2 z_I)}{1-\cn(2 z_I)}-\frac{2E}{K} \right).				
		\end{split}
	\end{equation}
	Since $\cn(2z_I)\in[0,1), z\in(0,\frac{K}{2}]$, we get that for all $z_I\in (0,\frac{K}{2}]$, ${\Im}\left(I'\left(\frac{K'}{2}+\ii z_I\right)\right)<0$, which implies that on the line $z_R=K',z_I\neq 0$ the value of ${\Re}(I(z))$ is monotonous. By \eqref{eq:I}, we get $I(0)=Z(0)+Z(K+\ii K')=-\frac{\ii \pi}{2K}\in \ii \mathbb{R}$.
	Therefore, on the line $z_R=K', z_I>0,z=z_R+\ii z_I\in S$, we have ${\Re}(I(z))\neq 0$. 
\end{proof}

\begin{prop}\label{prop: zr0 zi0} 
	By \eqref{eq:I} and \eqref{eq:I'}, the following properties hold:
	\begin{itemize}
		\item[{\rm (a)}]  $I'(z)|_{z=z_i}\not\in \ii \mathbb{R},i=1,2,3,4$, where $z_{1,2}=\frac{K'}{2}\pm \ii \frac{K}{2}, z_{3,4}=-\frac{K'}{2}\pm \ii \frac{K}{2}$.
		\item[{\rm (b)}] If $\frac{2E(k)}{K(k)}>1$, then the set $Q$ intersects with the real axis at point $z_c \in \mathbb{R}$; if $\frac{2E(k)}{K(k)}\leq 1$, then the set $Q$ intersects with the imaginary axis at $z_{c} \in \ii \mathbb{R}$.  The set $Q$ consists of the real line and two curves.
	\end{itemize} 
\end{prop}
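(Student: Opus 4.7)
The plan is to base both parts on the simplified expression
\begin{equation}
I'(z)=\ii\!\left(\frac{2\cn(2\ii z)}{1+\cn(2\ii z)}-\frac{2E}{K}\right),
\end{equation}
which was already derived inside the proof of Lemma \ref{lemma:I-z}.

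For part (a), the first step is to evaluate $\cn(2\ii z_1)$ at $z_1=\tfrac{K'}{2}+\ii\tfrac{K}{2}$. Since $2\ii z_1=-K+\ii K'$, the evenness of $\cn$, the shift identity $\cn(K+\ii K')=-\ii k'/k$, and the conjugation rule $\overline{\cn(w)}=\cn(\bar w)$ (which holds because $k\in(0,1)$ is real) combine to give $\cn(2\ii z_1)=\ii k'/k$. Using $k^{2}+k'^{2}=1$ one then computes
\begin{equation}
\frac{2\cn(2\ii z_1)}{1+\cn(2\ii z_1)}=2k'^{2}+2\ii kk',\qquad I'(z_1)=-2kk'+\ii\!\left(2k'^{2}-\tfrac{2E}{K}\right),
\end{equation}
so that ${\Re}(I'(z_1))=-2kk'\neq 0$. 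The other three corners are handled with no extra work by invoking the elementary symmetries $I'(-z)=I'(z)$ and $I'(z^{*})=-\overline{I'(z)}$, both of which follow directly from the displayed formula together with the evenness and conjugation properties of $\cn$. These identities give ${\Re}(I'(z_i))=\pm 2kk'\neq 0$ for $i=1,2,3,4$, which establishes (a).

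For part (b), the equation $I'(z)=0$ rewrites as $\cn(2\ii z)=E/(K-E)$. On the real axis, $\cn(2\ii z_R)=1/\cn(2z_R,k')$ takes values in $[1,\infty)\cup(-\infty,-1]$, so a real critical point exists precisely when $E/(K-E)\ge 1$, i.e.\ $2E/K\ge 1$; on the imaginary axis, $\cn(2\ii\cdot\ii z_I)=\cn(2z_I,k)\in[-1,1]$, so a purely imaginary critical point exists precisely when $0<E/(K-E)\le 1$, i.e.\ $2E/K\le 1$. The borderline case $2E/K=1$ gives $z_c=0$, which belongs to both axes. This produces the stated dichotomy and pins down the location of $z_c$ in each regime.

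To describe the global structure of $Q$, the argument assembles several ingredients: the real line inside $S$ lies in $Q$ by Proposition \ref{prop:Q_r-z=l}(1); away from zeros of $I'$ the level set $\{{\Re}(I)=0\}$ is a smooth one-manifold by the implicit function theorem; at a simple zero $z_c$ the holomorphic function $I-I(z_c)$ vanishes quadratically, so exactly one additional branch of $\{{\Re}(I)=0\}$ crosses the axis transversally at $z_c$; Lemma \ref{lemma:I-z} forbids any branch from escaping $S$ through the lateral sides ${\Re}(z)=\pm K'$ or through the horizontal edges except at the corners $z_1,\dots,z_4$; and the reflection symmetries of $Q$ from Proposition \ref{prop:Q_r-z=l}(2), together with $I'(-z)=I'(z)$, force $-z_c$ to be a mirror critical point. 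Tracing branches from $z_c$ and $-z_c$ under these constraints, and using the monotonicity of $M$ along level curves supplied by Lemma \ref{lemma:M increase} to prevent curves from doubling back, yields the two closed curves, one passing through $\{z_c,z_1,z_2\}$ and its mirror through $\{-z_c,z_3,z_4\}$. The main obstacle is precisely in this last step: rigorously excluding hidden off-axis zeros of $I'$ inside $S$, so that the level set really collapses into the real line plus two curves; a clean route is to count zeros of $I'$ via its elliptic functional form, exploiting that $I'$ is a meromorphic function on the torus whose divisor is determined up to the symmetries already used.
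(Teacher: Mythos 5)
Your part (a) is correct and follows the same route as the paper (direct evaluation at the four corners); your value $I'(z_1)=-2kk'+\ii\left(2k'^2-\tfrac{2E}{K}\right)$, combined with the symmetries $I'(-z)=I'(z)$ and $I'(z^*)=-\overline{I'(z)}$, gives $\Re(I'(z_i))=\pm 2kk'\neq 0$, which is all that is needed (the explicit constants in the paper's \eqref{eq:I(z)d} differ from yours, but the conclusion is identical). Your reduction of $I'(z)=0$ to $\cn(2\ii z)=E/(K-E)$ is also a clean way to locate the critical point on the real versus the imaginary axis, and it agrees with Remark \ref{remark:z_c}.

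The genuine gap is in part (b) when $\tfrac{2E}{K}<1$: there you identify $z_c$ with the purely imaginary critical point of $I$, i.e.\ the solution of $\cn(2z_I,k)=E/(K-E)$, but that point is \emph{not} in $Q$. Along the imaginary axis $I'$ is purely imaginary, and $\frac{\dd}{\dd z_I}\Re\left(I(\ii z_I)\right)=-\Im\left(I'(\ii z_I)\right)$ equals $\tfrac{2E}{K}-1<0$ at $z_I=0$, so $\Re(I)$ strictly decreases from $\Re(I(0))=0$ down to the critical point, where therefore $\Re(I)<0$. The point at which $Q$ actually meets the imaginary axis lies strictly above the critical point and is produced by the intermediate value theorem applied to $\Re\left(I(\ii z_I)\right)$ between the critical point and $\ii K/2$, using $\Re\left(I\left(\tfrac{\ii K}{2}\right)\right)=k'>0$ together with the monotonicity of $\Im\left(I'(\ii z_I)\right)$; this is precisely the step the paper's proof supplies and yours omits, so as written you have not shown that $Q$ meets the imaginary axis at all. (In the case $\tfrac{2E}{K}>1$ your identification is harmless, since the real segment lies entirely in $Q$ and the level set $\Re(I)=0$ has a genuine saddle crossing at the real critical point.) Two further defects in the same part: the nontrivial components of $Q$ are arcs terminating at the corners $z_i$, not ``closed curves'' — the paper excludes closed level curves by the maximum principle — and in the imaginary-crossing case the connectivity is $z_1$--$z_c$--$z_3$ and $z_2$--$z_c^{*}$--$z_4$, not the $\{z_c,z_1,z_2\}$ pattern you describe, which occurs only when $z_c\in\mathbb{R}$. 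Finally, you concede that the exclusion of additional components of $Q$ is not carried out; the paper closes this by combining the boundary analysis of Lemma \ref{lemma:I-z} with the maximum-principle exclusion of closed curves, so the zero count of $I'$ you sketch would still need to be executed (or replaced by that argument) to complete the proof.
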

\begin{proof}	
	(a): Plugging $z=z_i$ $(i=1,2,3,4)$ into \eqref{eq:I'}, we obtain 
	\begin{equation}\label{eq:I(z)d}
		I'(z)|_{z=z_{1,3}} =\ii \left( 2k'^2-2\ii kk'+1-\frac{2E(k)}{K(k)}\right)\not\in \ii \mathbb{R},
		\quad I'(z)|_{z=z_{2,4}}=\ii \left( 2k'^2+2\ii kk'+1-\frac{2E(k)}{K(k)}\right)\not\in \ii \mathbb{R}.
	\end{equation} 
	
	(b): By imaginary argument formulas of the Jacobi elliptic functions \cite[p.24]{ByrdF-54}, we rewrite \eqref{eq:I'-simplify} as 
		\begin{equation}\label{eq:I'zl}
			\begin{split}
				I'(z)
				&=\ii \left( \frac{\cn(2\ii z,k)-1}{\cn(2\ii z,k)+1}+1-\frac{2E}{K} \right)
				=\ii \left( \frac{1-\cn(2z,k')}{1+\cn(2 z,k')}+1-\frac{2E}{K} \right).
			\end{split}
	\end{equation} 
	On the real axis, the function $1-\cn(2z,k')\in [0,2]$ is monotonically increasing for $z\in[0,K']$ and the function $1+\cn(2z,k')\in [0,2]$ is monotonically decreasing for $z\in[0,K']$. Thus, the function $\Im\left(I'(z)\right)$ is monotonically increasing for $z\in[0,K']$. By \eqref{eq:I'zl} and the second equation of \eqref{eq:value-E-K-lin}, we know $\Im\left(I'(0)\right)=1-\frac{2E(k)}{K(k)}$ and $\Im\left(I'\left(\frac{K'}{2}\right)\right)=2-\frac{2E(k)}{K(k)}>0$. Combined with the monotonicity of function $\Im\left(I'(z)\right)$, if $1-\frac{2E(k)}{K(k)}\leq 0$, there exists a unique point $0\leq \Re(z_c)<\frac{K'}{2}$, $z_c\in \mathbb{R}$ such that $I'(z_c)=0$ by the zero point theorem.
	If $1-\frac{2E(k)}{K(k)}>0$, the function $I'(z)$ has no zero in the real axis. By $\frac{\dd (\Re(I(z)))}{\dd z_I}|_{z=0}=-\Im (I'(0))=\frac{2E}{K}-1<0$ and  $\frac{\dd\Re(I(z))}{\dd z_I}|_{z=\frac{K}{2}}=\frac{2E}{K}>0$, there exists a unique point $z=z_{0}$ such that $\frac{\dd\Re(I(z))}{\dd z_I}|_{z=z_{0}}=0$ due to the monotonicity of $\Im(I'(z))$ with respect to the imaginary axis. Thus we know that for $\Im(z)\in(0,z_{0})$, the function $\Re(I(z))$ is a decreasing function with respect to $\Im(z)$ along the imaginary axis which implies that $\Re(I(z_{0}))<0$ since $\Re(I(0))=0$. Since $\Re \left(I\left(\frac{\ii K}{2}\right)\right)=k'>0$, we get that there exists a unique point $z_c\in \ii \mathbb{R}, z_{0}<\Im(z_c)<\frac{K}{2}$ such that $\Re(I(z_c))=0$ by the zero point theorem and the monotonicity of the function $\Re(I(z))$ with respect to the imaginary axis.
	
	We proceed to examine all possibilities for the components of the set $Q$. The curve $l_1\in Q$ ends at $z$ satisfying $I'(z)=\infty$ or the boundary of the set $S$
	and crosses to another component at $z$ with $I'(z)=0$. If the spectrum contains a closed curve, the cross point satisfies ${\Re}(I(z))=0$. In the interior of a closed curve, by the maximum value principle of the harmonic function, we have ${\Re}(I(z))=0$. Then $I(z)$ is a constant in this closed region.  However, this is impossible. Thus there is no closed curve with ${\Re}(I(z))=0$. Furthermore, by \eqref{eq:Z-value 1} and \eqref{eq:I-z_34}, we know ${\Re}(I(z_i))=0, i=1,2,3,4$. By the implicit function theorem, we know that there exist four curves with ${\Re}(I(z))=0$ to the harmonic function ${\Re}(I(z))$ departing from the points $z_i, i=1,2,3,4$ due to \eqref{eq:I(z)d}.
	
	We consider the case: $z_c\in\mathbb{R}$. Since $z\in\mathbb{R}$, we have ${\Re}(I(z))=0$. Especially, we have ${\Re}(I(\pm z_c))=0$. Furthermore, by $I'(z)|_{z=\pm z_c}=0$ and $I''(z)|_{z=\pm z_c}\neq 0$, then in the neighborhood of $z=\pm z_c$, we have Taylor expansions $I(z)=I(\pm z_c)+I''(\pm z_c)(z\pm z_c)^2+\mathcal{O}((z-z_c)^3)$. By the localized analysis and implicit function theorem, we find two curves ${\Re}(I(z))=0$ departing from the point $z=\pm z_c$. In the boundary of $S$, we get six points $z=\pm K'$ and $z=z_i$, $i=1,2,3,4$, which can emit the curves with ${\Re}(I(z))=0$. Similarly, by the localized analysis, on the point $z=\pm K'$, we find that only one curve emitting from it exists. And we know that the real axis goes through them. Thus the curve is the real axis.  Therefore, we conclude that the curve departing from the point $z=z_1$ goes across $z=z_c$ and ends with $z_2$, and another curve departing from the point $z=z_3$ goes across $z=-z_c$ and ends with $z_4$.
	
	Then, we consider the case: $z_c\in \ii\mathbb{R}$. Similar to the above analysis, we conclude that there are two curves emitting from $z=z_{1,3}$ that go across the imaginary axis at $\pm z_c\in \ii \mathbb{R}$ and end with $z=z_{2,4}$, respectively. Together with the first property of Proposition \ref{prop:Q_r-z=l}, we obtain that the set $Q$ consists of a real line and two curves.
\end{proof}

\begin{remark}\label{remark:z_c}
	When $1-\frac{2E(k)}{K(k)}\le0$, by $I'(z_{c})=0$, we get
	\begin{equation}
		0=\dn^2(\ii z_{c})+\dn^2(\ii z_{c}+K+\ii K')-\frac{2E(k)}{K(k)}
		=\frac{2\cn(2\ii z_{c})}{1+\cn(2\ii z_{c})}-\frac{2E(k)}{K(k)},
	\end{equation}
	which implies $\cn(2\ii z_{c})=\frac{E}{K-E}$. Then $z_{c}=-\frac{\ii}{2}F\left(\sin^{-1}\left( \sqrt{\frac{K(K-2E)}{(K-E)^2}}\right),k \right)$. When $\frac{2E(k)}{K(k)}\ge 1$, we get $\sqrt{\frac{K(K-2E)}{(K-E)^2}} \in \ii \mathbb{R}$, which means that $F\left(\sin^{-1}\left( \sqrt{\frac{K(K-2E)}{(K-E)^2}}\right),k \right)\in \ii \mathbb{R}$, so we have $z_c \in \mathbb{R}$. 	
\end{remark}

Define the function $\Lambda(z)$ as
\begin{equation}\label{eq:Lambda}
	\Lambda(z):=\frac{4}{\alpha^2}\lambda^2(z)=\dn^2(\ii z)+ \dn^2(\ii z+K+\ii K')-1,
\end{equation}
where $\lambda^2(z)$ is defined in \eqref{eq:lambda^2}.
By Proposition \ref{prop:lambda-comforming}, Lemma \ref{lemma:map}, functions \eqref{eq:lambda} and \eqref{eq:lambda^2}, the function $\Lambda(z)$ \eqref{eq:Lambda} maps the rectangular region $S_1$ onto a whole upper half plane, i.e., $\Lambda=\Lambda_R+\ii \Lambda_I,\Lambda_I>0$, where $S_1$ is defined as the first quadrant of $S$. Plugging $z=z_1:=\frac{K'}{2}+\frac{\ii K}{2}$ and $z=z_c$ into \eqref{eq:Lambda}, we get $\Lambda(z_1)=a+\ii b$, $a=1-2k^2$, $b=2k\sqrt{1-k^2}$, and $\Lambda(z_c)=\frac{2E}{K}-1$.
\begin{figure}[h]
	\centering
	\includegraphics[width=0.9\linewidth]{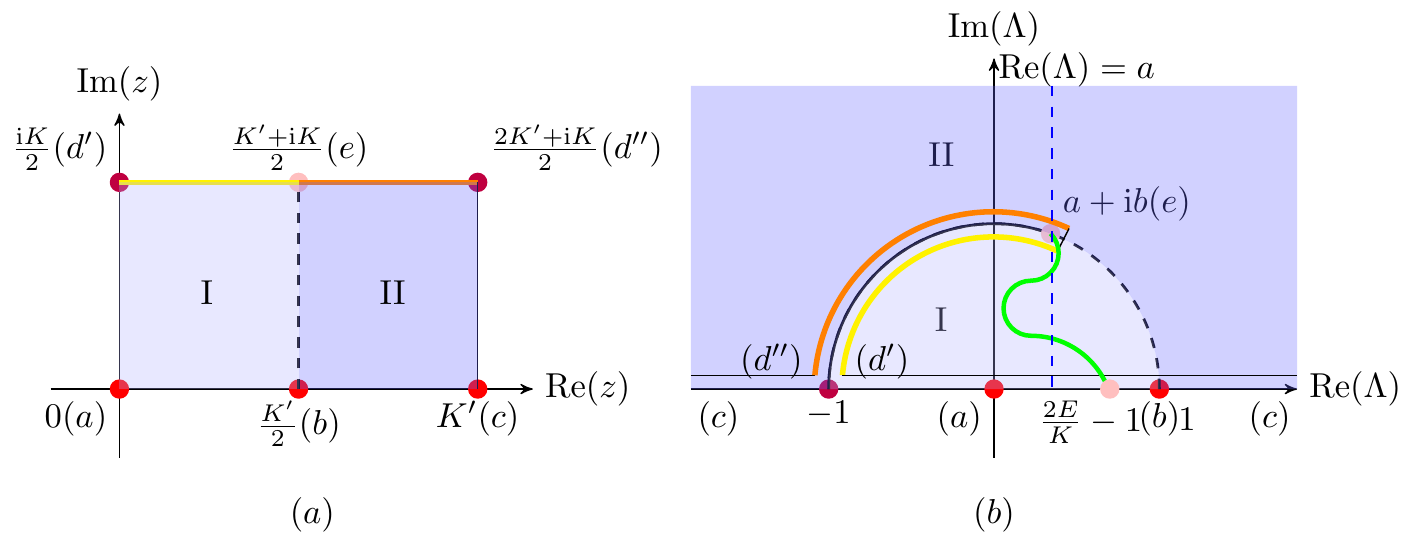}
	\caption{(a):$\{z|z\in S_1\}$; (b):$\{\Lambda(z)|z\in S_1\}$. $S_1$ is a subset of set $S$ in the first quadrant.}
	\label{fig:fig-lambda-z}
\end{figure}

\begin{remark}\label{remark:I-Lambda}
	By Lemma \ref{lemma:I-z}, we know that if $z=z_R+\ii \frac{K}{2}, z_R\neq \frac{K'}{2}$ or $z=K'+\ii z_I, z_I\neq 0,\frac{K}{2}$, ${\Re}(I(z))\neq 0$. Based on the function $\Lambda(z)$ and the inverse function $z(\Lambda)$, we know that the inequality ${\Re}(I(z(\Lambda)))\neq 0$ holds for all $\Lambda_R, \Lambda_I$ satisfying $\Lambda_R^2+\Lambda_I^2=1\,\, (\Lambda_R\neq a,\Lambda_I\neq b)$. 
\end{remark}

\begin{lemma}\label{lemma:z R O}
	For any $z \in Q \backslash (\mathbb{R}\cup \{ z_i| \ i=1,2,3,4\})$, we have $\Omega(z) \notin \ii  \mathbb{R}$.
\end{lemma}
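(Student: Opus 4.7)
The plan is to reduce the statement to an algebraic condition on $\Lambda(z)$ and then combine a local tangent-transversality calculation at the four endpoints $z_i$ with a global ``bounded arc vs.\ unbounded arc'' observation in the $\Lambda$-plane. The starting point is the factorization obtained by inserting $y^2 = \lambda^4 + s_2\lambda^2 + s_4$ with $s_2 = -\alpha^2 a/2$, $s_4 = \alpha^4/16$ and $\lambda^2 = \alpha^2\Lambda/4$ into $\Omega = 8\ii\lambda y$, namely
\begin{equation*}
\Omega(z)^2 \;=\; -\alpha^6\,\Lambda(z)\bigl[(\Lambda(z)-a)^2 + b^2\bigr], \qquad a := 1-2k^2,\quad b := 2kk',\quad a^2+b^2 = 1,
\end{equation*}
so that $\Omega(z)\in\ii\mathbb{R}$ is equivalent to $F(\Lambda(z))\in\mathbb{R}_{\ge 0}$ for the cubic $F(\Lambda) := \Lambda[(\Lambda-a)^2+b^2]$, whose zeros in $\mathbb{C}$ are exactly $\Lambda=0$ and $\Lambda=a\pm\ii b$. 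By Proposition~\ref{prop:Q-M-cn} the endpoints satisfy $\Lambda(z_i)\in\{a+\ii b,\,a-\ii b\}$, and for $z\in\mathbb{R}\cap S$ a direct computation via the imaginary Jacobi transformation gives $\Lambda(z)\ge 0$ real, recovering the known stability on the real component of $Q$.

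For the non-real curves of $Q$ I proceed in two stages. First I handle the crossing point $z_c$ of Proposition~\ref{prop: zr0 zi0}: when $z_c\in\ii\mathbb{R}$ the quantities $\scd(\ii z_c)$ and $\cn(\ii z_c)$ are real, so the formula $\Omega(z) = \alpha^3\scd(\ii z)[k^2+k'^2/\cn^4(\ii z)]$ shows $\Omega(z_c)\in\mathbb{R}$; since $z_c$ is not among the zeros of $\Omega$ in $S$, this yields $\Omega(z_c)\notin\ii\mathbb{R}$. When $z_c\in\mathbb{R}$ the points $\pm z_c$ lie in the excluded set. Second, I analyze the transversality at each $z_i$. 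Since $\Lambda'(z_i) = -2\ii\Omega(z_i)/\alpha^3 = 0$, the zero of $\Omega$ at $z_i$ is simple in $z$, and differentiating the factorization twice yields $\Omega'(z_1) = -2\alpha^3 b(a+\ii b)$ (the other $z_i$ follow by symmetry). The tangent to $Q$ at $z_1$ points along $\ii\overline{I'(z_1)}$ with $I'(z_1) = \ii(a+\ii b-c_0)$, $c_0 := 2E(k)/K(k)-1$, while the two branches of $\{{\Re}\,\Omega=0\}$ meeting at $z_1$ point along $\pm\ii\overline{\Omega'(z_1)}$. A short computation shows these directions coincide iff $bc_0=0$, which fails because $b = 2kk'\neq 0$ for $k\in(0,1)$; hence $Q$ is transverse to $\{{\Re}\,\Omega=0\}$ at every $z_i$, so $\Omega\not\in\ii\mathbb{R}$ in a punctured neighborhood of each endpoint along $Q$.

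The main obstacle is to upgrade this local transversality to a global exclusion of isolated interior intersections. To this end I view the image $\Lambda(Q)$ of the non-real $Q$-curve: it is a bounded analytic arc joining the two zeros $a+\ii b$ and $a-\ii b$ of $F$. The preimage $F^{-1}(\mathbb{R}_{\ge 0})$, on the other hand, consists of the positive real axis of $\Lambda$ together with exactly two other analytic branches emerging from $a\pm\ii b$ and each escaping to $\infty$; this structure follows from the factorization of $F$ and a check of its critical points $\Lambda=(2a\pm\sqrt{4a^2-3})/3$. By the transversality just established, the bounded arc $\Lambda(Q)$ and the two unbounded preimage branches separate immediately at $a\pm\ii b$, and a further interior intersection would force $\Lambda(Q)$ to accumulate on one of the unbounded branches; this is ruled out by monitoring the single-valued argument $\arg F(\Lambda(z(s)))$ along the parametrization $\dd z/\dd s = \ii/\overline{I'(z)}$ of $Q$, using the non-vanishing $I'(z) = \ii(\Lambda-c_0)\ne 0$ away from the isolated level $\Lambda = c_0$. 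This global index/monotonicity step is the technically heaviest part and completes the proof.
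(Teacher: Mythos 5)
Your reduction is the same as the paper's: passing to the $\Lambda$-plane, writing $\Omega^2=-\alpha^6F(\Lambda)$ with $F(\Lambda)=\Lambda\left[(\Lambda-a)^2+b^2\right]$ (this is exactly \eqref{eq:Omega-Im_Re}), and observing that off the real axis the locus $\Omega\in\ii\mathbb{R}$ is the hyperbola branch $\Lambda_I^2=3\Lambda_R^2-4a\Lambda_R+1$, $\Lambda_R\le a$, terminating at $a\pm\ii b$. The gap is in the global step, which is where the entire difficulty of the lemma lives. The claim that ``a further interior intersection would force $\Lambda(Q)$ to accumulate on one of the unbounded branches'' has no justification and is false as a topological principle: a bounded analytic arc can cross an unbounded analytic curve transversally at an isolated interior point without accumulating on it. The proposed repair --- ``monitoring $\arg F(\Lambda(z(s)))$'' along $Q$ --- is only named, not executed; you give no computation of how that argument varies, and note that in the case $z_c\in\mathbb{R}$ the arc genuinely does meet $F^{-1}(\mathbb{R}_{\ge 0})$ at the interior point $\Lambda(z_c)=\frac{2E}{K}-1\ge 0$ (on the positive real-axis component), so $\arg F$ does return to $0$ in the interior and any winding argument would have to separate components carefully. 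What is actually needed, and what the paper proves, is the quantitative separation statement that the curve ${\Re}(I(z(\Lambda)))=0$ stays in the half-plane $\Lambda_R\ge a$: it leaves $(a,b)$ into $\{\Lambda_R>a\}$, it cannot re-enter through the unit circle (Remark \ref{remark:I-Lambda}), and a dip into $\{\Lambda_R<a\}$ would produce at least three intersections with the segment $\{\Lambda_R=a,\,0<\Lambda_I\le b\}$, hence two zeros of ${\Im}(\dd I/\dd\Lambda)$ there, contradicting \eqref{eq:d-I-d-Lambda}, which allows at most one. Your proposal contains no substitute for this.

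Two smaller problems. The transversality criterion at $z_1$ is miscomputed: with tangent directions $\ii\overline{I'(z_1)}$ and $\ii\overline{\Omega'(z_1)}$, parallelism is equivalent to $\Omega'(z_1)/I'(z_1)\in\mathbb{R}$, and substituting $\Omega'(z_1)=-2\alpha^3b(a+\ii b)$ and $I'(z_1)=\ii(a+\ii b-c_0)$ reduces this to ${\Re}\left[(a+\ii b)(a-c_0-\ii b)\right]=1-ac_0=0$, which never holds since $|a|<1$ and $|c_0|<1$; your stated criterion $bc_0=0$ is, by contrast, \emph{satisfied} at the threshold modulus where $c_0=0$, so as written your local argument breaks exactly at $k=\hat k$. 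In any case, local transversality at the four endpoints, however established, only controls a punctured neighborhood of each $z_i$ and cannot by itself deliver the global conclusion.
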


\begin{proof}
	Without loss of generality, we consider $z\in S_1$ (the first quadrant of $S$). By the symmetry of the curve ${\Re}(\Omega(z))=0$ and the set $Q$, shown in Remark \ref{remark:Omega-sym} and Proposition \ref{prop:Q-M-cn}, respectively, the computations of values $z\in S$ in the second, third and fourth quadrants are the same as $z\in S_1$.
	
	We mainly prove that curves ${\Re}(I(z(\Lambda)))=0$ and ${\Re}(\Omega(z(\Lambda)))=0$ in the $\Lambda$-plane do not intersect. By \eqref{eq:Omega-lambda} and \eqref{eq:Lambda}, the real and imaginary parts of the function $\Omega^2$ are
	\begin{equation}\label{eq:Omega-Im_Re}
		\left\{\begin{aligned}
			{\Im}(\Omega^2)&=-\alpha^6\Lambda_I\left(3\Lambda_R^2-4a\Lambda_R-\Lambda_I^2+1\right),\\
			{\Re}(\Omega^2)&=-\alpha^6\left(\Lambda_R(\Lambda_R^2-2a\Lambda_R-3\Lambda_I^2+1)+2a\Lambda_I^2\right).
		\end{aligned} \right.
	\end{equation}
	The necessary and sufficient conditions of ${\Re}(\Omega)=0$ on $\Lambda$-plane are ${\Im}(\Omega^2)=0$ and ${\Re}(\Omega^2)\le 0$. Combined with \eqref{eq:Omega-Im_Re}, in the $\Lambda$-plane, the curve ${\Re}(\Omega)=0$ satisfying $\lambda(z)\not\in \mathbb{R}$ is equivalent to $\Lambda_I^2=3\Lambda_R^2-4a\Lambda_R+1, \Lambda_R\le a$. By \eqref{eq:Omega(z)=} and \eqref{eq:Lambda}, the function $\Omega(z)$ could be written by the derivative of $\Lambda(z)$ as $\Omega(z)= \frac{\ii\alpha^3}{2}\Lambda'(z)=-\frac{\alpha^3}{2}{\Im}\left(\Lambda'(z)\right)+\frac{\ii\alpha^3}{2}{\Re}\left(\Lambda'(z)\right)$, $\Lambda'(z):=\frac{\dd \Lambda(z)}{\dd z}$,
	which leads to
	\begin{equation}\label{eq:Omega^2-Lambda-z}
		\Omega^2(z)=\frac{\alpha^6}{4}\left[{\Im}^2\left(\Lambda'(z)\right)-{\Re}^2\left(\Lambda'(z)\right)- 2\ii \, {\Re}\left(\Lambda'(z)\right){\Im}\left(\Lambda'(z)\right)\right].
	\end{equation}
	
	Considering the curve ${\Re}(I(z(\Lambda)))=0$ in the $\Lambda$-plane, we aim to prove that the curve ${\Re}(I(z(\Lambda)))=0$ is in the region $\Lambda_R\ge a$, i.e., the curve ${\Re}(I(z(\Lambda)))=0$ is on the right side of the blue dashed line in Figure \ref{fig:fig-lambda-z}. By Proposition \ref{prop: zr0 zi0}, we know that the curve ${\Re}(I(z))=0$ in $z$-plane has a continuous curve on the region $z\in S_1$ with two end points $z=z_c$ and $z=\frac{K'}{2}+\ii \frac{K}{2}$. By the conformal mapping between $\Lambda$ and $z$, there is a curve in the $\Lambda$-plane with two end points $\Lambda(z_c)=\frac{2E}{K}-1$ and $\Lambda(z_1)=a+\ii b$. Furthermore, by \eqref{eq:value-E-K-lin-a}, we know that the point $\left(\frac{2E}{K}-1,0\right)$ is on the right side of the line $\Lambda_R=a$.

	In other words, we aim to prove that for any point $(\Lambda_R,\Lambda_I), \Lambda_R<a, \Lambda_I>0$, the inequality ${\Re}(I(z(\Lambda)))\neq0$ holds. Firstly, we introduce some formulas that are useful in the following analysis. Secondly, we study the derivative of the point $(\Lambda_R,\Lambda_I)=(a,b)$ to obtain the variation of the curve ${\Re}(I(z(\Lambda)))=0$. At last, we prove the statement by contradiction.
	By \eqref{eq:I'} and \eqref{eq:Lambda}, along the curve ${\Re}(I(z(\Lambda)))=0$, the tangent vector could be written as
	\begin{equation}\label{eq:tan-vec-Lambda-I}
		\left(-\frac{\dd {\Re}(I)}{\dd \Lambda_I},\frac{\dd {\Re}(I)}{\dd \Lambda_R} \right)=\left({\Im}\left(\frac{\dd I}{\dd \Lambda}\right),{\Re}\left(\frac{\dd I}{\dd \Lambda} \right)\right),
	\end{equation}
	where
	\begin{equation}\label{eq:d-I-Lambda}
		\begin{split}
			\frac{\dd I}{\dd \Lambda}=\frac{\dd I}{\dd z}\cdot \frac{\dd z}{\dd \Lambda}
			=\ii\left(\Lambda(z)+1-\frac{2E}{K}\right)\frac{\dd z}{\dd \Lambda}.
		\end{split}
	\end{equation}
	Since $z(\Lambda)$ is the inverse function of $\Lambda(z)$, the derivative of $z(\Lambda)$ could be obtained by function $\Lambda(z)$ as
	\begin{equation}\label{eq:inverse-z-Lambda}
		\frac{\dd z}{\dd \Lambda}=\frac{1}{\Lambda'(z)}=\frac{{\Re}\left(\Lambda'(z)\right)-\ii {\Im}\left(\Lambda'(z)\right)}{|\Lambda'(z)|^2}.
	\end{equation}

	Then, we study the derivative of $I(z(\Lambda))$ with respect to $\Lambda$ on the line $\Lambda_R=a$. Plugging $\Lambda_R=a$ into \eqref{eq:Omega-Im_Re} and \eqref{eq:Omega^2-Lambda-z}, we can get
	\begin{equation}\label{eq:Re-Im-Omega}
		-\frac{\left[{\Im}\left(\Lambda'(z)\right)\right]^2-\left[{\Re}\left(\Lambda'(z)\right)\right]^2}{2  {\Re}\left(\Lambda'(z)\right){\Im}\left(\Lambda'(z)\right)}
		=\frac{{\Re}(\Omega^2)}{{\Im}(\Omega^2)}
		=\frac{-\alpha^6a(b^2-\Lambda_I^2)}{-\alpha^6\Lambda_I(b^2-\Lambda_I^2)}=\frac{a}{\Lambda_I}.
	\end{equation}
	When $\Lambda_I\in(0,b)$, ${\Im}(\Omega^2)=-\alpha^6\Lambda_I(b^2-\Lambda_I^2)<0$. By \eqref{eq:Omega^2-Lambda-z}, we can get ${\Im}\left(\Lambda'(z)\right){\Re}\left(\Lambda'(z)\right)>0$.
	Furthermore, solving the quadratic equation formulated by the first and the last equality in \eqref{eq:Re-Im-Omega} with respect to $\frac{{\Im}\left(\Lambda'(z)\right)}{{\Re}\left(\Lambda'(z)\right)}$ and combining with \eqref{eq:inverse-z-Lambda}, we get 
	\begin{equation}\label{eq:Im-Re-z-Lambda}
		-\frac{{\Im}\left(\frac{\dd z}{\dd \Lambda}\right)}{{\Re}\left(\frac{\dd z}{\dd \Lambda}\right)}
		\xlongequal{\eqref{eq:inverse-z-Lambda}}
		\frac{{\Im}\left(\Lambda'(z)\right)}{{\Re}\left(\Lambda'(z)\right)}
		\xlongequal[\eqref{eq:Omega^2-Lambda-z}]{\eqref{eq:Re-Im-Omega}}
		-\frac{a}{\Lambda_I}+\sqrt{\left(\frac{a}{\Lambda_I}\right)^2+1}.
	\end{equation}
	Plugging \eqref{eq:Im-Re-z-Lambda} into \eqref{eq:d-I-Lambda}, elements of tangent vector \eqref{eq:tan-vec-Lambda-I} are 
	\begin{equation}
		{\Re}\left(\frac{\dd I}{\dd \Lambda}\right)
		=-\frac{1}{\Lambda_I}{\Re}\left(\frac{\dd z}{\dd \Lambda}\right)\cdot\left(\Lambda_I^2+\left(\frac{2E}{K}-1-a\right)\cdot\left( -a+\sqrt{a^2+\Lambda_I^2}\right)\right),
	\end{equation}
	and 
	\begin{equation}\label{eq:d-I-d-Lambda}
		{\Im}\left(\frac{\dd I}{\dd \Lambda}\right)
		={\Re}\left(\frac{\dd z}{\dd \Lambda}\right)\cdot\left(\sqrt{a^2+\Lambda_I^2}+1-\frac{2E}{K}\right).
	\end{equation}
	By \eqref{eq:d-I-Lambda}, \eqref{eq:d-I-d-Lambda} and \eqref{eq:value-E-K-lin}, we get 
	\begin{equation}
		\lim_{\Lambda_I\rightarrow b}\frac{{\Re}\left(\frac{\dd I}{\dd \Lambda}\right)}{	{\Im}\left(\frac{\dd I}{\dd \Lambda}\right)}=\frac{-\frac{2E}{K}\cdot\left( 1-a\right)}{2b\left(1-\frac{E}{K}\right)}<0.
	\end{equation}
	Combined with the variation of the curve ${\Re}(I(z))=0$ in the $z$-plane in Proposition \ref{prop: zr0 zi0}, the variation of curve ${\Re}(I(z(\Lambda)))=0$ at the point $(\Lambda_R,\Lambda_I)=(a,b)$ is that $\Lambda_R$ increases and $\Lambda_I$ decreases, which satisfies $\Lambda_R^2+\Lambda_I^2<1$. 
	
	By Remark \ref{remark:I-Lambda}, we know that the curve ${\Re}(I(z(\Lambda)))=0$ does not cross the circle $\Lambda_I^2+\Lambda_R^2=1$, excepting point $(\Lambda_R,\Lambda_I)=(a,b)$. Thus, if there exists a point $\Lambda$ satisfying $\Lambda_R<a$ on the curve ${\Re}(I(z(\Lambda_R,\Lambda_I)))=0$, as the green curve is shown in Figure \ref{fig:fig-lambda-z}, there are at least three points in line $\Lambda_R=a, \Lambda_I\in(0,b]$ such that ${\Re}(I(z(\Lambda)))=0$, i.e., the equation ${\Re}(I(z(\Lambda(a,\Lambda_I))))=0, \Lambda_I\in (0,b]$ has at least three different solutions. Thus, by Lagrange's mean value theorem, the function ${\Re}(I(z(\Lambda(a,\Lambda_I))))$ has at least two extreme points on the line $\Lambda_R=a$ and $\Lambda_I\in (0,b)$, i.e., $\frac{\dd  {\Re}(I(z(\Lambda(a,\Lambda_I))))}{\dd \Lambda_I}=-{\Im}\left(\frac{\dd I}{\dd \Lambda}\right)|_{\Lambda=a+\ii \Lambda_I}$ has at least two zeros. However, by \eqref{eq:Omega-Im_Re} and $a^2+b^2=1$, we get ${\Im}(\Omega^2)=-2  {\Re}\left(\Lambda'(z)\right){\Im}\left(\Lambda'(z)\right)=-\alpha^6\Lambda_I(b^2-\Lambda_I^2)\neq 0 $ for $\Lambda_R=a$ and $\Lambda_I\in (0,b)$. So we know ${\Re}\left(\Lambda'(z)\right)\neq 0$ for $\Lambda_R=a$ and $\Lambda_I\in(0,b)$, which further implies  ${\Re}\left(\frac{\dd z}{\dd \Lambda}\right)\neq 0$ by \eqref{eq:inverse-z-Lambda}. By \eqref{eq:d-I-d-Lambda}, we find that the function $\sqrt{a^2+\Lambda_I^2}+1-\frac{2E}{K}$ at most has one zero as $\Lambda_I\in(0,b)$. Thus, $\frac{\dd  {\Re}(I(z(\Lambda(a,\Lambda_I))))}{\dd \Lambda_I}=-{\Im}\left(\frac{\dd I}{\dd \Lambda}\right)|_{\Lambda=a+\ii \Lambda_I}$ has at most one zero for $\Lambda_R=a, \Lambda_I\in (0,b)$. So, we get the contradiction. Therefore, we prove that the curve ${\Re}(I(z(\Lambda)))=0$ on the $\Lambda$-plane satisfies the condition $\Lambda_R\ge a$. 
	
	Since the value of $\Lambda$ must satisfy $\Lambda_R\le a$ for ${\Re}(\Omega(z(\Lambda)))=0$,  and on the line $\Lambda_R=a$, we can verify that there only exists one point $(a,b)$ such that ${\Re}(\Omega(z(\Lambda)))=0$. Thus two curves ${\Re}(\Omega(z(\Lambda)))=0$ and ${\Re}(I(z(\Lambda)))=0$ only have one intersecting point $(a,b)$ on the $\Lambda$-plane with $\Lambda_I>0$. Therefore, in the $z$-plane, excepting $z=\frac{K'}{2}+\ii \frac{K}{2}\in S_1$, there does not exist any other intersecting points satisfies ${\Re}(\Omega(z(\Lambda)))=0$ and ${\Re}(I(z(\Lambda)))=0$ by the conformal transformation. 
	
	Similar conclusions can be obtained in the second, third and fourth quadrants. Thus, for any $z \in Q \backslash (\mathbb{R}\cup \{ z_i| \ i=1,2,3,4\})$, we have $\Omega(z) \notin \ii  \mathbb{R}$.
\end{proof}

The spectral stability with respect to the subharmonic perturbations of period $2PT$ is that all eigenvalues $\Omega$ of $2PT$ periodic function $W(\xi;\Omega)$ satisfying \eqref{eq:spectral} are imaginary, i.e., $\Omega(z)\in \ii \mathbb{R}$. Combining \eqref{eq:M(z)-value} with \eqref{eq:eta}, we set 
\begin{equation}\label{eq:Q_P}
	Q_P:=\left\{ z\in Q|M(z)=\frac{\pi}{P}m+(2n+1)\pi, \quad m=-P,\cdots,P-1, \quad n \in \mathbb{Z}\right\},
\end{equation} 
which contains the conditions of $z$ deriving all $2PT$ periodic functions. When for any $z\in Q_P$, the value $\Omega(z)\in \ii \mathbb{R}$, the corresponding solution is spectrally stable with respect to perturbations of period $2PT$. The set $Q_P$ could also be divided into two subsets $Q_P=Q_{P,R}\cup Q_{P,C}$, where 
\begin{equation}\label{eq:Q_Psub}
	Q_{P,R}:=\{z|z\in \mathbb{R},z\in Q_P\}, \qquad Q_{P,C}=\{z|z\notin \mathbb{R},z\in Q_P\}.
\end{equation}

\newenvironment{proof-spec-cn}{\emph{Proof of Theorem \ref{theorem:subharmonic-1-}.}}{\hfill$\Box$\medskip}
\begin{proof-spec-cn}
	By Definition \ref{define:spect-P}, to prove the spectral stability of the $\cn$-type solutions with the $P$-subharmonic perturbation, we should get the value of $P$ for all $z\in Q_P$, $\Omega(z)\in \ii \mathbb{R}$.
	By Proposition \ref{prop:Q_r-z=l}, we get $\Omega(z)\in \ii \mathbb{R}$ for any $z\in Q_{P,R}$. From Lemma \ref{lemma:z R O}, we know that for $z\in Q \backslash \mathbb{R}$,  $\Omega(z)\in \ii \mathbb{R}$ only if $z=z_i,$ $i=1,2,3,4$.  Thus, the spectral stability is converted into prove $Q_{P,C}=\{z_1,z_2,z_3,z_4\}$. We divided the proof into the following two categories for different conditions of the set $Q$ in Proposition \ref{prop: zr0 zi0}.
	
	When $\frac{2E}{K}\ge 1$ (denotes this case as type-I), by the symmetry of the set $Q$ and the function $\Omega(z)$, we need to study the case of $z\in S_1$. Since along the curve ${\Re}(I(z))=0$ from $z=z_c$ to $z=z_1$, the value of $M(z)$ is decreasing by Lemma \ref{lemma:M increase}. From Proposition \ref{prop:Q-M-cn}, we get that $M(z_1)=-\pi$. We must ensure that no other point in $Q_P$ intersects with the curve $\Re(I(z))=0$ between $z=z_c$ and $z=z_1$. In other words, only if $M(z_c)\le -\frac{P-1}{P}\pi$, $Q_{P,C}\cap S_1=\{z_1\}$. Therefore, when $P\le \frac{\pi}{\pi+ M(z_c)}$, for any $z\in Q_P$, we get $\Omega(z)\in \ii \mathbb{R}$. The $\cn$-type solutions are spectrally stable with respect to perturbations of period $2PT, \ P\in \mathbb{N}$.
	
	When $\frac{2E}{K}<1$ (denotes this case as type-II), we could analyze the upper half-plane since the lower half-plane can be obtained similarly. From Proposition \ref{prop: zr0 zi0}, we know that there exists a curve connecting $z_1$ to $z_3$, satisfying ${\Re}(I)=0$. Since $M(z_3)=\pi$, $M(z_1)=-\pi$ (see Proposition \ref{prop:Q-M-cn}) and $M(z)$ is continuous and monotonous, only when $P=1$, the set $Q_{P,C}=\{z_1,z_2,z_3,z_4\}$ holds. So if $\frac{2E(k)}{K(k)}<1$, the $\cn$-type solutions are spectrally stable with respect to co-periodic perturbations but no other subharmonic perturbation. 	
\end{proof-spec-cn}

The above theorem shows that two types of the $\cn$-type solutions have different stability properties. Now, we illustrate this fact by plotting the corresponding figures of the spectrum. For the type-I, choosing $k=\frac{1}{4},\alpha=1$, it is shown that $u(\xi)=\frac{1}{4}\cn(\xi)$ is spectrally stable with respect to $3$-subharmonic perturbations (Figure \ref{fig: sub cn}). For the type-II, choosing $k=\frac{19}{20},\alpha=\sqrt{2}$, we can plot the corresponding spectrum of the linearized spectral problem, in which there is no multi-subharmonic perturbation (See Figure \ref{fig:cn sub point}).

\begin{figure}[ht]
	\centering
	\includegraphics[width=0.95\linewidth]{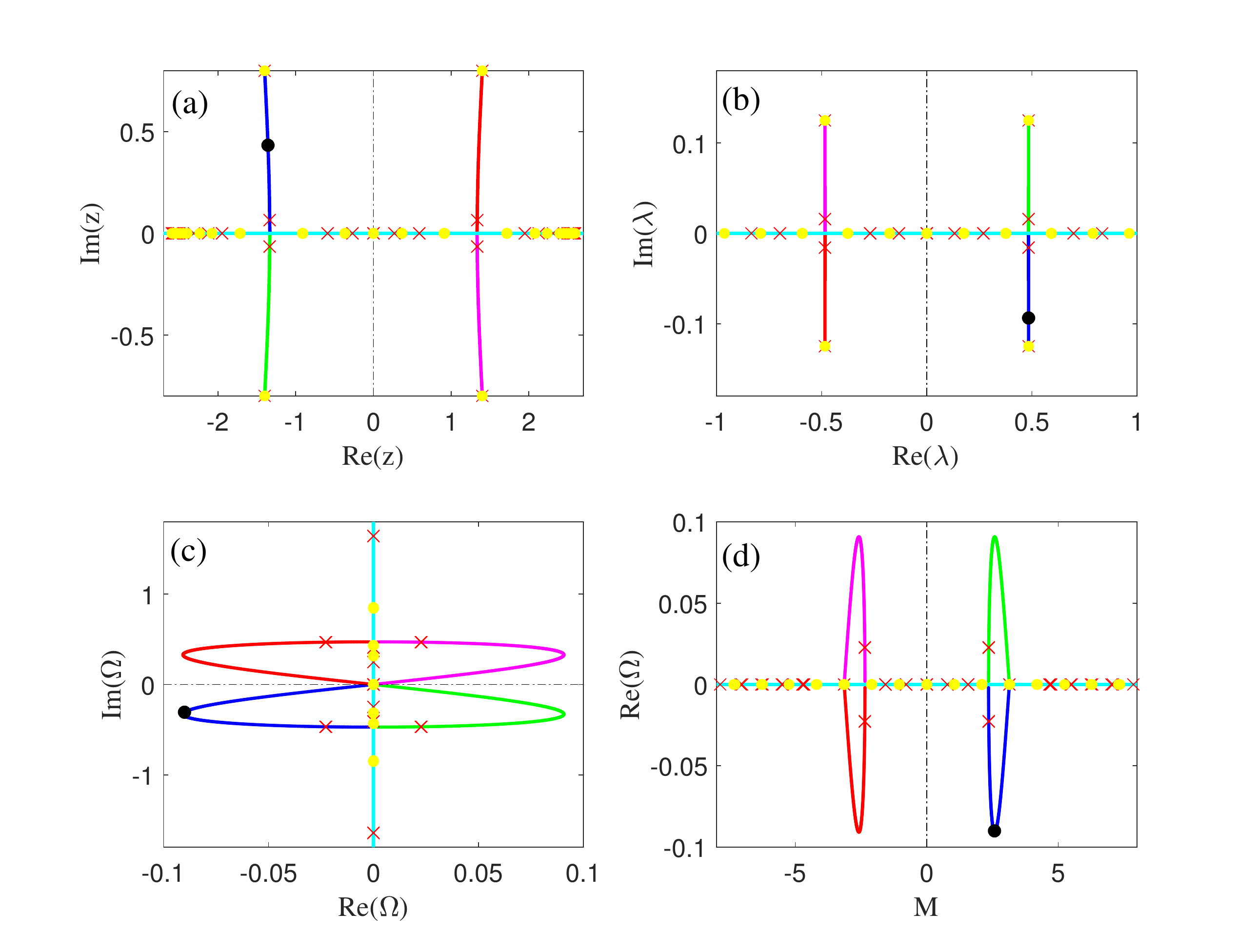}
	\caption{(a): $\{z| z\in Q \}$,  (b): $\{\lambda(z)| z\in Q \}$, (c): $\{\Omega(z)| z\in Q \}$,  (d): $\{(M(z), \Re(\Omega(z))) | z\in Q \}$. The red crosses in the figures denote the corresponding points of $W(\xi;\Omega)$ under the periodic perturbation of $4T$. The yellow points denote the corresponding points of $W(\xi;\Omega)$ under the periodic perturbation of $3T$. The black points denote the corresponding spectral parameters of the breather solutions constructed by the Darboux-B\"acklund transformation.}
	\label{fig: sub cn}
\end{figure}

\begin{figure}[ht]
	\centering
	\includegraphics[width=0.90\linewidth]{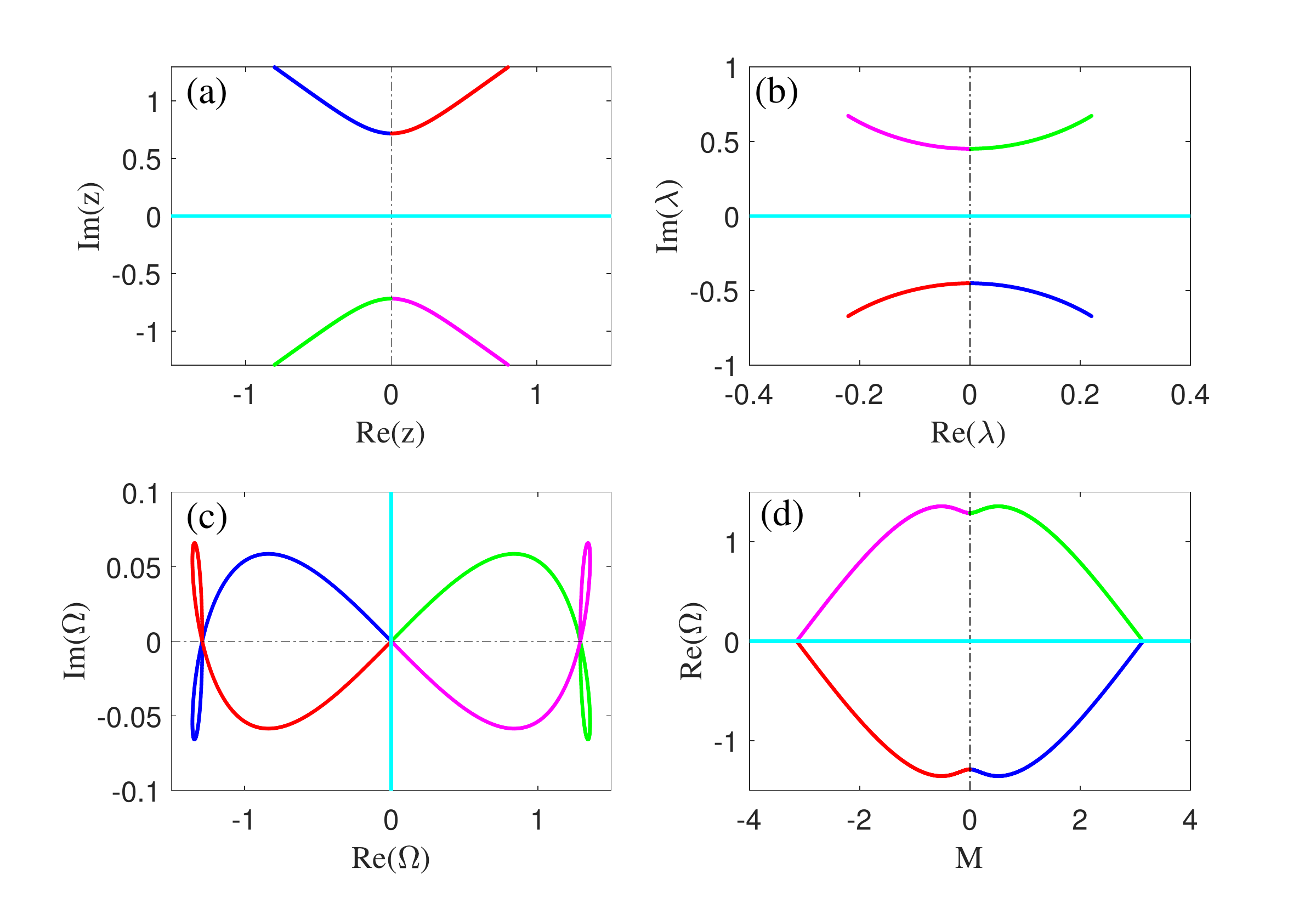}
	\caption{(a): $\{z| z\in Q \}$,  (b): $\{\lambda(z)| z\in Q \}$, (c): $\{\Omega(z)| z\in Q \}$,  (d): $\{(M(z), \Re(\Omega(z))) | z\in Q \}$.}
	\label{fig:cn sub point}
\end{figure}

Combining \eqref{eq:I} with \eqref{eq:M(z)-value}, we get that the function $M(z)=\pi-2\ii K\left(Z(\ii z)+Z(\ii z+K+\ii K')\right)$ is only related to the modulus $k$ and variable $z$. Since the value $z_{c}=-\frac{\ii}{2}F\left(\sin^{-1}\left( \sqrt{\frac{K(K-2E)}{(K-E)^2}}\right),k \right)$ is only dependent on the modulus $k$ from Remark \ref{remark:z_c}, $M(z_c)$ is only dependent on the modulus $k$. Thus, the region of value $P\le \frac{\pi}{\pi +M(z_c)}$ is only dependent on $k$. The value $\max(P)$ with respect to $k$ is plotted in Figure \ref{fig:subfig all condition}. The black point in Figure \ref{fig:subfig all condition} shows that the $\cn$-type solutions are $3$-subharmonic perturbations, not $4$-subharmonic perturbations, which is consistent with the results in Figure \ref{fig: sub cn}.

\begin{figure}[ht]
	\centering
	\includegraphics[width=0.850\linewidth]{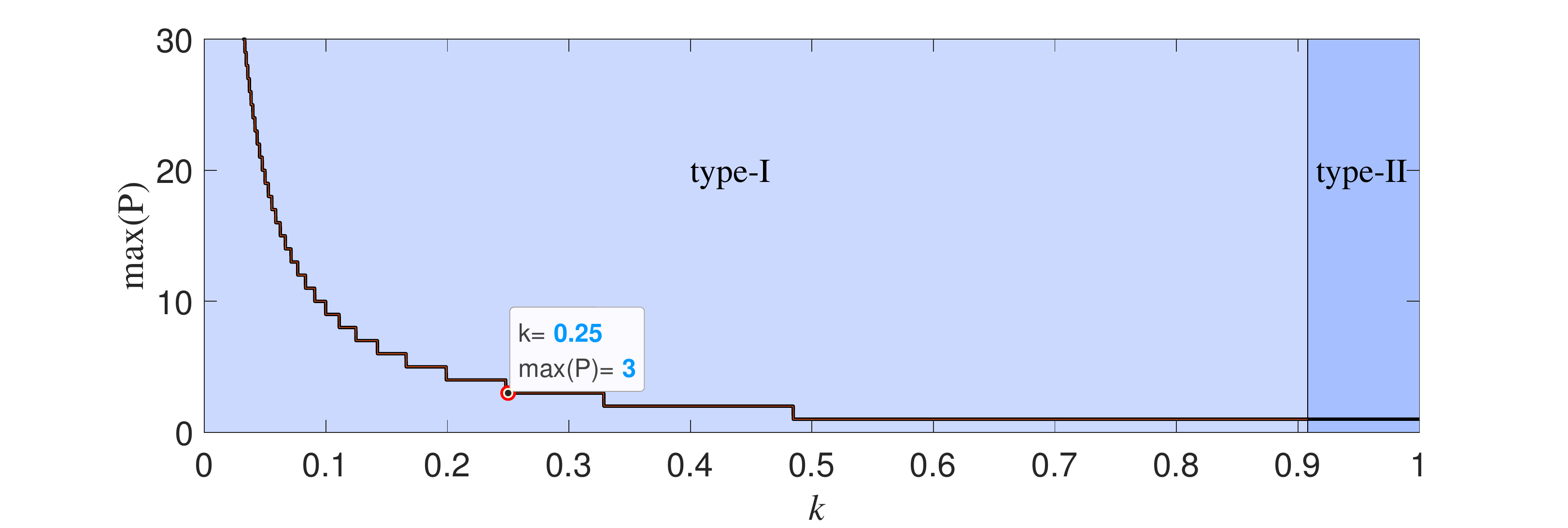}
	\caption{The maximum of value $P\in \mathbb{Z}$, where $P\le \frac{\pi}{\pi +M(z_c)}$. The red point denotes that when $k=0.25$, the maximum of value $P$ is $3$.}
	\label{fig:subfig all condition}
\end{figure}

\section{Orbital stability analysis}\label{section:orbial stability}

The previous section provides the conditions for the spectral stability of elliptic function solutions with respect to P-subharmonic perturbations. Based on them, we study the orbital stability of the cn-type and dn-type solutions in this section.

The orbital stability is characterized in terms of the spectrum of the second variation. Since the Krein signature can evaluate the second variation, we convert it to consider the Krein signature, which was used to establish the orbital stability of the periodic solutions in the defocusing mKdV equation \cite{Deconinck-10} and the cnoidal waves of the KdV equation \cite{DeconinckK-10}. To study the orbital stability, we elaborate some helpful information, including the higher-order conservation laws in Appendix \ref{appendix:Darboux transformation}, the infinite number of the Hamiltonian functional in \eqref{eq:H0}, the framework \cite{BonaSS-87,GrillakisSS-87,KapitulaP-13}, and so on.

The mKdV equation possesses an infinite number of conserved quantities (in Appendix \ref{appendix:Darboux transformation})
\begin{equation}\label{eq:H0}
	\mathcal{H}_0=\frac{1}{2}\int_{-PT}^{PT} u^2 \dd  x  ,\quad
	\mathcal{H}_1=\frac{1}{2}\int_{-PT}^{PT}\left(u_x^2-u^4\right) \dd  x , \quad
	\mathcal{H}_2= \frac{1}{2}\int_{-PT}^{PT} \left( u_{xx}^2-10u^2u_{x}^2+2u^6 \right)\dd x, \cdots
\end{equation} 
where the period of function $u$ is $2PT$. The conserved quantities $\mathcal{H}_0$ and $\mathcal{H}_1$ are known as moment and energy conservation, respectively. The Hamiltonian flows in the mKdV hierarchy are given by $u_{t_n}=\partial_x\mathcal{H}'_n(u)$, where the prime denotes the gradient of the Hamiltonian $\mathcal{H}_n$ with respect to $u$. The equation $u_{t_n}=\partial_x\mathcal{H}'_n(u),n=0,1,2$, is shown in \eqref{eq:ODE-t}. A linear combination of the above Hamiltonian to define the $n$-th mKdV equation with time variables $t_n$ under the moving coordinate form $(\xi,t_n)$ as
\begin{equation}\label{eq:H-mKdV}
	u_{t_n}=\mathcal{J}\hat{\mathcal{H}}'_n(u), \qquad
	 \hat{\mathcal{H}}_n:=\mathcal{H}_n+\sum_{i=0}^{n-1}c_{n,i}\mathcal{H}_i, \qquad \hat{\mathcal{H}}_0:=\mathcal{H}_0,
\end{equation}
where $c_{n,i}\in \mathbb{R},i=0,1,...,n-1$. The stationary solution of the $n$-th mKdV equation satisfies the ordinary differential equation $\mathcal{J}\hat{\mathcal{H}}'_n(u)=0$ in \eqref{eq:H-mKdV}. 

\begin{remark}	
		If $u$ is the stationary solution of equation $\mathcal{J}\hat{\mathcal{H}}'_1(u)=0$, then $u$ satisfies the equation $u_{\xi\xi\xi}+6u^2u_{\xi}-2s_2u_{\xi}=0.$ 	Differentiating both sides of the above equation, we get  
	$u_{\xi\xi\xi\xi\xi}+12u_{\xi}^3+36uu_{\xi}u_{\xi\xi}+6u^2u_{\xi\xi\xi}-2s_2u_{\xi\xi\xi}=0$. And integrating both sides of this equation, we obtain $u_{\xi\xi}+2u^3-2s_2u=\hat{c}_2$ and $ \frac{1}{2}u_{\xi}^2+\frac{1}{2}u^4-s_2u^2=\hat{c}_2 u+\hat{c}_1$, where $\hat{c}_1=(4s_4-s_2^2)/2$, $\hat{c}_2=0$. By the above equations, we find that the function $u$ with
	\begin{equation}\label{eq:c1c2}
		c_{2,0}=-4s_2^2+2c_{2,1}s_2+4\hat{c}_1, \qquad c_{2,1}\in \mathbb{R},
	\end{equation}
	satisfies the stationary equation $\mathcal{J}\hat{\mathcal{H}}'_2(u)=0$. Similarly, the function $u$ also satisfies the higher-order stationary equations $\mathcal{J}\hat{\mathcal{H}}'_n(u)=0, n=2,3,\cdots$.
\end{remark}

Based on the stationary solution $u$, we linearize the equations $u_{t_i}=\mathcal{J}\hat{\mathcal{H}}'_i(u),i=1,2,\cdots,n$ about $u$ with 
\begin{equation}\nonumber
	v(\xi,\mathbf{t})=u(\xi,\mathbf{t})+\epsilon w(\xi,\mathbf{t})+\mathcal{O}(\epsilon^2), \qquad \mathbf{t}=\left(t_1,t_2,\cdots,t_n\right),
\end{equation}
 and result in the linear system: $w_{t_i}=\mathcal{JL}_{i}w, i=1,2,\cdots,n$, where $\mathcal{L}_i$ is the variational derivative $\hat{\mathcal{H}}''_i$ evaluated at the stationary solution. Then, we obtain
\begin{equation} \label{eq:spectral Omega}
	\Omega_nW=\mathcal{JL}_nW,\qquad \Omega_n^*W^*=\mathcal{JL}_nW^*, 
\end{equation} 
where $W=W(\xi;\Omega_n)$.
\begin{definition}\label{defin:Krein}
	Krein signature is the sign of 
	\begin{equation}
		\mathcal{K}_n(z):=\left\langle W_n,\mathcal{L}_n W_n \right\rangle, \qquad \left\langle W_n,\mathcal{L}_n W_n \right\rangle=\int_{-PT}^{PT}W_n^*\mathcal{L}_n W_n \dd \xi,
	\end{equation}
	where $W_n=W(\xi;\Omega_n)$ is an eigenfunction of $n$-th mKdV equation \eqref{eq:spectral Omega}. The inner product is defined in the $L^2([-PT,PT])$ inner product space.
\end{definition}

When $W_1(\xi;\Omega_1)$ satisfies $\Omega_1W_1(\xi;\Omega_1)=\mathcal{JL}_1W_1(\xi;\Omega_1)$ with $\Omega_1\in \ii\mathbb{R}$, we consider the Krein signature $\mathcal{K}_1(z)$. We first study a special case that $\Omega_1=0$, i.e., $\lambda=0,\pm\lambda_1,\pm\lambda_1^*$. It is easy to know that when $\lambda=0$, the eigenfunction could be written as $W(\xi;0)=\partial_{\xi}u$, and the Krein signature is $\mathcal{K}_1=\left\langle\partial_{\xi}u,\mathcal{L}_1 \partial_{\xi}u \right\rangle=0$. When $\lambda=\pm\lambda_1,\pm\lambda_1^*$, by analyzing the exponent part of functions $\phi_i,\psi_i$, $i=1,2$, we know that the period of function $W(\xi;\Omega_1)$ is infinity. Now, we consider the value of $\mathcal{K}_1(z)$ when $\lambda\in \mathbb{R}$ and $\Omega_1\in \ii \mathbb{R}$. By Lemma \ref{lemma:W}, we know that $W(\xi;\Omega_1)=2\lambda(\phi_1^2-\psi_1^2)\exp(-\Omega_1 t),\lambda\in \mathbb{R}\backslash\{0\}$ is the eigenfunction of the linearized spectral problem \eqref{eq:spectral Omega} with the eigenvalue $\Omega_1$. By the matrix $\Phi(\xi,t;\lambda)=[\Phi_1\,\,\, \Phi_2]$ in \eqref{eq:Lax pair-1} and its Lax pair \eqref{eq:Lax-pair-1}, we get $\Phi_1^{\top}\Phi_{1,\xi}=\Phi_1^{\top}\mathbf{U}\Phi_1$, which implies $2\lambda(\phi_1^2-\psi_1^2)=\ii\partial_{\xi} (\phi_1^2+\psi_1^2)$. Thus, by $\Omega_1=8\ii \lambda y\in \ii \mathbb{R}$, we get
\begin{equation}\label{eq:WLW}
	\begin{split}
		W^*(\xi;\Omega_1)\mathcal{L}_1 W(\xi;\Omega_1)
		=\Omega_1 W^*(\xi;\Omega_1)\mathcal{J}^{-1} W(\xi;\Omega_1)
		=2\ii \lambda\Omega_1 (\phi_1^2+\psi_1^2)(\phi_1^{*2}-\psi_1^{*2}).
	\end{split}
\end{equation}	
By \eqref{eq:phi1,2-dn}, \eqref{eq:Re}, and \eqref{eq:I}, if and only if $\Re(I(z))=0$ and $\Re(\Omega(z))=0$, the exponential part of functions $\phi_1(\xi,t),\psi_1(\xi,t)$ is pure imaginary for all real variables $\xi,t\in \mathbb{R}$. Since $\lambda\in \mathbb{R}$ and $\Omega_1=8\ii \lambda y\in \ii \mathbb{R}$, we get $y\in \mathbb{R}$ and $\beta_{1,2}\in \mathbb{R}$. By \eqref{eq:beta1-beta2-theta-1-theta-2}, we obtain $\exp(\theta_1+\theta_1^*)=1$. Combining \eqref{eq:f-g-h-xi}, \eqref{eq:Phi} with $y^2=f^2(\xi,t ;\lambda)+g(\xi,t;\lambda)h(\xi,t;\lambda)$, we get
\begin{equation}\label{eq:phi-psi+}
	\begin{split} (\phi_1^2+\psi_1^2)(\phi_1^2-\psi_1^2)^* 
		=&4y\left(2u^2(\xi)-\beta_1-\beta_2\right)+\frac{4\lambda \mu \left(u^2(\xi)-\beta_1\right)\left(u^2(\xi)-\beta_2\right)}{(\lambda+\mu)(\lambda-\mu)}\\	
		=&4y\left(2u^2(\xi)-\beta_1-\beta_2\right)+8\ii \lambda uu_{\xi}.
	\end{split}
\end{equation} 
By \eqref{eq:u^2-beta1}, \eqref{eq:u^2-beta2}, \eqref{eq:WLW}, \eqref{eq:phi-psi+} and integrals formulas of Jacobi elliptic functions \cite[p.191]{ByrdF-54}, we obtain
\begin{equation} \label{eq:K_1(z)}
	\begin{split}
		\mathcal{K}_1(z)
		=&\int_{-PT}^{PT} W^*(\xi;\Omega_1)\mathcal{L}_1W(\xi;\Omega_1) \dd \xi\\
		=&2\ii \lambda\Omega_1\int_{-PT}^{PT} 4y\left(2u^2(\xi)-\beta_1-\beta_2\right) \dd \xi\\
		=&-\Omega_1^2\int_{-PT}^{PT}\left[\alpha^2\left( \dn^2(\ii (z-l))+\dn^2(\ii(z+l)+K+\ii K')-2\right)+2k^2\alpha^2\sn^2(\alpha \xi)\right]\dd  \xi \\
		=&-4\Omega_1^2\alpha P K(k)\left(\dn^2(\ii(z-l))+\dn^2(\ii (z+l)+K+\ii K')-\frac{2E(k)}{K(k)}\right)\\
		=&-4\Omega_1^2\alpha P K(k)\Im\left(I'(z)\right).
	\end{split}
\end{equation}
From Proposition \ref{prop: zr0 zi0} and Lemma \ref{lemma:z R O}, when $l=0$, the value of $\mathcal{K}_1(z)$ could be classified into the following two cases:
\begin{itemize}
	\item 	When $\frac{2E(k)}{K(k)}<1$, i.e., $k> \hat{k}\approx 0.9089$ and $\Im\left(I'(z)\right)> 0$ in \eqref{eq:I'zl}, we get $\mathcal{K}_1(z)>0$, when $\Omega_1\neq 0$, $\Omega_1\in \ii \mathbb{R}$.
	
	\item When $\frac{2E(k)}{K(k)}\ge 1$, i.e., $k\le \hat{k}\approx 0.9089$, from \eqref{eq:I'zl}, there exists a point $z_c$, such that $I'(z_c)=0$, which implies $\mathcal{K}_1(z_c)=0$. By \eqref{eq:I'zl}, the function $\Im\left(I'(z)\right)$ is monotonic increasing as $z\in [0,K']$ and monotonic decreasing as $z\in [-K',0]$. Because $\Im\left(I'(\pm K')\right)=+\infty$ and $\Im\left(I'(0)\right)=1-\frac{2E(k)}{K(k)}<0$, we get $\Im\left(I'(z)\right)>0$, as $z\in \left(-K',-z_{c}\right)\cup\left(z_{c},K'\right)$ and $\Im\left(I'(z)\right)<0$, as $z\in \left(-z_{c},z_{c}\right)$. Combining \eqref{eq:K_1(z)} with $\Omega_1\in \ii \mathbb{R}$, we obtain $\mathcal{K}_1(z)>0$ as $z\in \left(-K',-z_{c}\right)\cup\left(z_{c},K'\right)$ and $\mathcal{K}_1(z)<0$ as $z\in \left(-z_{c},0\right)\cup \left(0,z_{c}\right)$. Since $\Omega(0)=0$ and $I'(z_c)=0$, we get $\mathcal{K}_1(0)=\mathcal{K}_1(z_c)=0$.
\end{itemize} 
Thus, when $k> \hat{k}$, then $\mathcal{K}_1(z)>0 $ for $z\in (-K',0)\cup(0,K')$ and $\mathcal{K}_1(0)=0$ with $\Omega_1(0)=0$. While $k\le \hat{k}$, the function $\mathcal{K}_1(z)$ does not have a consistent sign for $z\in [-K',K']$.

We invoke to calculate the value of $\mathcal{K}_2(z)$. By \eqref{eq:spectral Omega}, we get
\begin{equation} \mathcal{K}_n(z)
	=\left\langle W,\hat{\mathcal{H}}''_n(u)W \right\rangle
	=\left\langle W,\Omega_n \mathcal{J}^{-1} W \right\rangle
	=\frac{\Omega_n}{\Omega_1}\left\langle W,\mathcal{L} W \right\rangle
	=\frac{\Omega_n}{\Omega_1} \mathcal{K}_1(z).\end{equation}
The relationship between $c_{2,1}$ and $c_{2,0}$ is obtained in \eqref{eq:c1c2}. By the $n$-th mKdV hierarchy in \eqref{eq:H-mKdV}, we  get the Lax
pair $\Phi_{t_i}=\mathbf{\hat{V}}_{i}\Phi,i=0,1,\cdots$ where $t_0=\xi, t_1=t$ and $\mathbf{\hat{V}}_0=\mathbf{U},\mathbf{\hat{V}}_1=\mathbf{V}$, $\hat{\mathbf{V}}_n=\mathbf{V}_n+\sum_{i=0}^{n-1} c_{n,i} \mathbf{V}_i$, are given in \eqref{eq:Lax pair} and \eqref{eq:Lax}. From Lemma \ref{lemma:W} and the function \eqref{eq:w-solution}, we know that the eigenvalue $\Omega$ is determined by the solution $\Phi(\xi,t;\lambda)$ of Lax pair \eqref{eq:Lax-pair-1}. Furthermore, we could obtain $\det(\hat{\mathbf{V}}_1-\frac{\Omega_1}{2}\mathbb{I})=0$. Thus, we consider the Lax pair $\Phi_{\xi}=\mathbf{U}\Phi,\Phi_{t_2}=\hat{\mathbf{V}}_2\Phi$ and obtain $\det(\hat{\mathbf{V}}_2-\frac{\Omega_2}{2}\mathbb{I})=0$.
By the linear algebra, the eigenvalue $\Omega_2$ of the second-order mKdV equation demonstrates
\begin{equation}\nonumber
	\Omega_2
	=\left(2s_2+4\lambda^2-c_{2,1}\right)\Omega_1
	=\alpha^2\left(\dn^2(\ii(z-l))+\dn^2(\ii (z+l)+K+\ii K')-2\dn^2(K+2\ii l)-\frac{c_{2,1}}{\alpha^2}\right)\Omega_1.
\end{equation}
Therefore, the Krein signature $\mathcal{K}_2(z)$ is linearly related to the function $\mathcal{K}_1(z)$ via the equation 
\begin{equation}\label{eq:K_2-K_1}
	\begin{split}
		\mathcal{K}_2(z)
		=&\alpha^2\left(\dn^2(\ii (z-l))+\dn^2(\ii (z+l)+K+\ii K')-2\dn^2(K+2\ii l)-\frac{c_{2,1}}{\alpha^2}\right)\mathcal{K}_1(z).
	\end{split}
\end{equation}
Just letting $c_{2,1}=-2\alpha^2\left(\dn^2(K+2\ii l)-\frac{E(k)}{K(k)} \right)$, the value of $\mathcal{K}_2(z)$ could be rewritten as 	 
\begin{equation}\label{eq:K_2(z)}
	\mathcal{K}_2(z)
	=-4\Omega_1^2\alpha^3 K(k)\left(\dn^2(\ii (z-l))+\dn^2(\ii (z+l)+K+\ii K') -\frac{2E(k)}{K(k)}\right)^2 .
\end{equation}
When $l=0$, we have $\mathcal{K}_2(z)\ge 0$. For all $z\in Q\cap \mathbb{R}$, the equality is valid only if $z=0$ or $z=\pm z_c$.

\begin{lemma}\label{lemma:alpha-0}
	If the $\cn$-type solutions of the mKdV equation are spectrally stable with respect to perturbations of the period $2PT, P\in \mathbb{N}$, we could get the following cases:
	\begin{enumerate}
		\item[\rm (a)] If $\frac{2E}{K}\ge 1$ and $M(z_{c})<-\frac{\pi(P-1)}{P}$, all $2PT$ periodic eigenfunctions except $\partial_{\xi}u(\xi)$ satisfy 
		\begin{equation}\label{eq:bounded away}
			\left< \mathcal{L}_2 W, W \right> \ge \alpha_0 \|W\|_{H^2([-PT,PT])}^2, \qquad \alpha_0>0;
		\end{equation} 
		\item[\rm(b)] If $\frac{2E}{K}\ge 1$ with $M(z_{c})=-\frac{\pi(P-1)}{P}$, all $2PT$ periodic eigenfunctions except $\partial_{\xi}u$ and $W(\xi; \Omega(\pm z_c))$ satisfy \eqref{eq:bounded away}.	
		\item[\rm(c)] If $\frac{2E}{K}<1$, all $2T$ periodic eigenfunctions except $\partial_{\xi}u(\xi)$ satisfy
		\begin{equation}\label{eq:bounded away-1}
			\left< \mathcal{L}_1 W, W \right> \ge \alpha_0 \|W\|_{H^1([-T,T])}^2.
		\end{equation} 
	\end{enumerate}
\end{lemma}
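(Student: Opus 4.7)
The plan is to exploit the Krein signature formulas \eqref{eq:K_1(z)} and \eqref{eq:K_2(z)}, which express the quadratic form $\langle W(\cdot;z),\mathcal{L}_n W(\cdot;z)\rangle$ evaluated on a squared eigenfunction of $\mathcal{J}\mathcal{L}_n$, together with a spectral decomposition of the $2PT$-periodic $L^2$ space into the basis of squared eigenfunctions constructed in Lemma \ref{lemma:W}. Following the strategy used by Deconinck and Nivala for the defocusing mKdV case \cite{Deconinck-10} and by Deconinck and Upsal for the focusing NLS \cite{DeconinckyU-20}, I would (i) show that the family $\{W(\xi;z):z\in Q_P\}$, together with the generalized eigenfunctions attached at zero eigenvalues, is complete in $L^2_{\mathrm{per}}([-PT,PT])$; (ii) expand a $2PT$-periodic test function $W$ in this basis; (iii) compute $\langle W,\mathcal{L}_n W\rangle$ as a weighted sum of Krein signatures; and (iv) promote the resulting $L^2$ positivity to Sobolev coercivity via a G\aa rding inequality for the strongly elliptic principal part of $\mathcal{L}_n$.

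For case (c), where $\frac{2E}{K}<1$ and necessarily $P=1$, the discussion following \eqref{eq:K_1(z)} already yields $\mathcal{K}_1(z)>0$ for all $z\in Q_1\setminus\{0\}$, with the only zero at $z=0$ corresponding to the translational mode $\partial_\xi u\in\ker\mathcal{L}_1$. Expanding a $2T$-periodic $W$ orthogonal to $\partial_\xi u$ in the squared-eigenfunction basis gives $\langle W,\mathcal{L}_1W\rangle=\sum|c_j|^2\mathcal{K}_1(z_j)\ge\delta\|W\|_{L^2}^2$ for some $\delta>0$, because the finitely many small-$|z|$ modes in $Q_1$ are bounded away from zero in $\mathcal{K}_1$. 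Combining this with the lower bound $\langle W,\mathcal{L}_1W\rangle\ge\|W_\xi\|_{L^2}^2-C\|W\|_{L^2}^2$ from the principal part $-\partial_\xi^2$ and interpolating produces \eqref{eq:bounded away-1}.

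For cases (a) and (b), $\mathcal{L}_2$ must be used because $\mathcal{K}_1$ changes sign when $\frac{2E}{K}\ge 1$; the tuning $c_{2,1}=-2\alpha^2(\dn^2(K+2\ii l)-E/K)$ performed just before \eqref{eq:K_2(z)} forces $\mathcal{K}_2(z)\ge 0$ everywhere, with equality only at $z=0$ (yielding $\partial_\xi u$) and at $z=\pm z_c$ (where $I'(z_c)=0$). In case (a), the condition $M(z_c)<-\pi(P-1)/P$ guarantees $\pm z_c\notin Q_{P,C}$, so the only $2PT$-periodic mode with vanishing Krein signature is $\partial_\xi u$; in case (b), $\pm z_c$ enter $Q_{P,C}$ and the extra modes $W(\xi;\Omega(\pm z_c))$ must also be removed. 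Outside these exceptional modes, the same basis expansion and G\aa rding argument, now applied to the fourth-order operator $\mathcal{L}_2$ whose leading symbol is $\partial_\xi^4$, produces the $H^2$-coercivity \eqref{eq:bounded away}.

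The main obstacle is rigorously establishing the completeness and Riesz-basis property of the squared-eigenfunction family on $L^2_{\mathrm{per}}$: these functions are not orthogonal in the standard inner product, $\mathcal{J}\mathcal{L}_n$ is not self-adjoint, and at $z=0$ one must identify the generalized eigenfunctions (those generated by the translational, Galilean, and scaling symmetries, such as $\partial_{s_2}u$ and $\partial_\alpha u$) that supplement $\partial_\xi u$ inside the Jordan block of $\mathcal{J}\mathcal{L}_n$ at the origin. A secondary technical point is to show that the positive Krein signatures are uniformly bounded below on $Q_{P,C}$ away from the exceptional modes, which is what produces the uniform constant $\alpha_0>0$ rather than a mere strict positivity statement; this uses the explicit closed-form formulas \eqref{eq:K_1(z)}, \eqref{eq:K_2(z)} and the fact that $Q_{P,C}$ is a discrete set whose accumulation behavior is controlled by the asymptotics of the Zeta function.
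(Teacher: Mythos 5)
Your plan solves a harder problem than the lemma actually asserts, and in doing so it hinges on a step you yourself flag but do not supply. The statement of Lemma \ref{lemma:alpha-0} is a bound on $\left\langle \mathcal{L}_nW,W\right\rangle$ for $2PT$-periodic \emph{eigenfunctions} $W$ of the spectral problem, not for arbitrary elements of $H^2_{per}([-PT,PT])$. For an eigenfunction, $\left\langle \mathcal{L}_nW,W\right\rangle$ \emph{is} the Krein signature $\mathcal{K}_n(z)$, so no expansion of a general test function in the squared-eigenfunction family is needed, and consequently neither is the completeness or Riesz-basis property that you identify as your ``main obstacle.'' The paper's proof is correspondingly short: (i) from \eqref{eq:K_2(z)} (resp.\ \eqref{eq:K_1(z)}), $\mathcal{K}_2(z)>0$ on $Q\cap\mathbb{R}\setminus\{0,\pm z_c\}$ (resp.\ $\mathcal{K}_1(z)>0$ for $z\neq 0$ when $\tfrac{2E}{K}<1$); (ii) a uniform lower bound $|\mathcal{K}_2(z)|>M_1>0$ over the admissible $z$ is obtained by contradiction — a sequence of admissible $z_k$ with $\mathcal{K}_2(z_k)\to 0$ would have to accumulate at $z=0$ (or $\pm z_c$), but continuity of $M(z)$ excludes admissible points ($M(z)=\tfrac{n\pi}{P}$) from a punctured neighborhood of such accumulation points; (iii) the $L^2$-normalized Krein signature is converted to the $H^2$ (resp.\ $H^1$) form using a uniform bound $\|W\|_{H^2}\le M$ for the eigenfunctions, giving $\alpha_0\ge M_1/M$. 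Your identification of the exceptional modes ($z=0\mapsto\partial_\xi u$, $\pm z_c$ entering $Q_{P,C}$ exactly when $M(z_c)=-\pi(P-1)/P$) and your discreteness remark about $Q_{P,C}$ match the paper, and your G\aa rding-plus-interpolation step for promoting $L^2$ positivity to Sobolev coercivity is sound in itself.

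The genuine gap is therefore twofold. First, relative to your own plan: steps (i)--(iii) of your outline rest on the completeness and basis property of the non-orthogonal family $\{W(\xi;z)\}$ together with the generalized eigenfunctions in the Jordan block at $\Omega=0$, and you give no argument for this; since $\mathcal{J}\mathcal{L}_n$ is not self-adjoint, this is not a routine spectral-theorem application, and without it the expansion $\left\langle W,\mathcal{L}_nW\right\rangle=\sum|c_j|^2\mathcal{K}_n(z_j)$ (which also tacitly assumes the cross terms vanish, i.e.\ an $\mathcal{L}_n$-orthogonality of the family that must itself be proved) is unjustified. Second, relative to the lemma: the stronger coercivity on the full admissible subspace is indeed what is invoked later in the proof of Theorem \ref{theorem:orbital-cn-}, but it is not what Lemma \ref{lemma:alpha-0} claims, so a proof of the lemma should not be routed through it. If you restrict attention to eigenfunctions, your Krein-signature computations already contain everything needed, and the only remaining work is the uniform-lower-bound argument of step (ii) above, which you gesture at but should make explicit.
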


\begin{proof}
	(a): By \eqref{eq:K_2(z)}, we know $\mathcal{K}_2(z)>0,z\in Q\cap \mathbb{R}\setminus \{0,\pm z_c\}$. When $M(z_{c})<-\frac{\pi(P-1)}{P}$, the period of $W(\xi;\Omega(\pm z_c))$ is not $2PT$, which is not in the scope of our consideration. Now, we want to prove that there exists $M_1>0$ such that $|\mathcal{K}_2(z)|>M_1$. If not, we could find a sequence $\{z_k\}$ satisfying $\lim_{k\to \infty}\mathcal{K}_2(z_k)=0$ and $W(\xi;\Omega_2(z_k))\in H_{per}^2([-PT,PT])$. Since $\mathcal{K}_2(z)$ is continuous with respect to $z$ and $\mathcal{K}_2(0)=0$, we get that there exists a sub-sequence $\{z_{k_{i}}\}$, such that $\lim_{k_{i} \to \infty}z_{k_{i}}=0$. Without loss of generality, we assume that there exists a sequence such that the sequence $\lim_{n_{i} \to \infty}z_{n_{i}}=0$ satisfies $\lim_{n_{i}\to \infty}\mathcal{K}_2(z_{n_{i}})=0$. For $W(\xi;\Omega_2(z))\in H_{per}^2([-PT,PT])$ with $z\in Q\cap\mathbb{R}$, we can see $z\in Q_P$, i.e., $M(z)=\frac{\pi}{P}n,n\in \mathbb{Z}$. By the continuity of the function $M(z)$, there exists $\delta>0$ such that for any $z\in Q,|z-0|<\delta$, $M(z)\neq \frac{\pi}{P}n,n\in \mathbb{Z}$. Since $\lim_{n_{i}\to \infty} z_{n_i}=0$, there must exist $N$ such that $n_i>N$, $|z_{n_{i}}-0|<\delta$. Choosing $\hat{n}=N+2$, we get $z_{\hat{n}}\in \left\{z_{n_{i}}\right\}$ and $M(z_{\hat{n}})\neq \frac{\pi}{P}n,n\in \mathbb{Z}$, i.e., $W(\xi;\Omega(z_{\hat{n}}))\notin H_{per}^2([-PT,PT])$, which contradicts with $W(\xi;\Omega_2(z))\in H_{per}^2([-PT,PT])$, $z\in \left\{z_n\right\}$. Moreover, for $W(\xi;\Omega_2)\in H_{per}^2([-PT,PT])$, there exists a positive constant $M$ such that $\|W(\xi;\Omega_2)\|_{H^2([-PT,PT])}\leq M$. Thus, we obtain 
	\begin{equation} 
		\alpha_0=\inf_{W \in \mathcal{A}_0}\frac{\left\langle \mathcal{L}_2W,W \right\rangle }{\left\langle W,W \right\rangle }\ge \frac{M_1}{M}>0, 
	\end{equation}
	where $\mathcal{A}_0=\{W_2|\mathcal{L}_2W_2=\Omega_2 W_2\}\setminus \{W_2| \left\langle \mathcal{L}_2W_2,W_2 \right\rangle=0\}$ and $W_2=W(\xi;\Omega_2)$.
	
	(b): If $M(z_{c})=-\frac{\pi(P-1)}{P}$, then $W(\xi;\Omega(-z_c))$ and $W(\xi;\Omega(z_c))$ are $2PT$ periodic functions, where $z_c$ satisfies $\mathcal{K}_2(\pm z_c)=0$. Thus, when $M(z_{c})=-\frac{\pi(P-1)}{P}$, all $2PT$ periodic eigenfunctions except $\partial_{\xi}u$ and $W(\xi; \Omega(\pm z_c))$ satisfy \eqref{eq:bounded away}.
	
	(c): When $\frac{2E(k)}{K(k)}<1$, i.e. $k> \hat{k}\approx 0.9089$, since for all $z\in \mathbb{R}, z\in Q$, the function $\Im(I'(z))> 0$ in \eqref{eq:I'zl}, we get $\mathcal{K}_1(z)>0$, when $\Omega_1\neq 0$. Thus, by a similar procedure as above, we obtain all $2T$ periodic eigenfunctions except $\partial_{\xi}u$ satisfying \eqref{eq:bounded away-1}. 
\end{proof}

\begin{lemma}\label{lemma:H<u}
	$\hat{\mathcal{H}}_2$ is continuous in $H^2_{per}([-PT,PT])$ on the bounded sets; in other words, for any $\epsilon>0$, there exist constants $M_1,\delta>0$, if $\| u-v\|_{H^2}\le\delta$, $\| u\|_{H^2},\| v\|_{H^2}\le M_1$, we have
	\begin{equation} | \hat{\mathcal{H}}_2(u)-\hat{\mathcal{H}}_2(v)|<\epsilon.\end{equation}	
\end{lemma}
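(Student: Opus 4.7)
The plan is to show that on any ball of radius $M_1$ in $H^2_{per}([-PT,PT])$ the functional $\hat{\mathcal{H}}_2$ is Lipschitz, i.e.\ $|\hat{\mathcal{H}}_2(u)-\hat{\mathcal{H}}_2(v)|\le C(M_1)\|u-v\|_{H^2}$, from which the stated $\varepsilon$--$\delta$ continuity is immediate by choosing $\delta=\varepsilon/C(M_1)$. The single indispensable tool is the one-dimensional periodic Sobolev embedding $H^1_{per}([-PT,PT])\hookrightarrow L^{\infty}([-PT,PT])$: whenever $\|u\|_{H^2}\le M_1$, this embedding yields $\|u\|_{L^{\infty}},\|u_\xi\|_{L^{\infty}}\le CM_1$ on the periodic interval, and the same holds for $v$.

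I would begin by writing $\hat{\mathcal{H}}_2(u)-\hat{\mathcal{H}}_2(v)=[\mathcal{H}_2(u)-\mathcal{H}_2(v)]+c_{2,1}[\mathcal{H}_1(u)-\mathcal{H}_1(v)]+c_{2,0}[\mathcal{H}_0(u)-\mathcal{H}_0(v)]$ and handle each piece by the algebraic identity $a^n-b^n=(a-b)\sum_{k=0}^{n-1}a^k b^{n-1-k}$ followed by H\"older. The terms of the form $\int(u_{xx}^2-v_{xx}^2)$, $\int(u_\xi^2-v_\xi^2)$, $\int(u^2-v^2)$ factor as, e.g., $\int(u_{xx}+v_{xx})(u_{xx}-v_{xx})$, which Cauchy--Schwarz bounds by $2M_1\|u-v\|_{H^2}$. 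The polynomial terms $\int(u^4-v^4)$ and $\int(u^6-v^6)$ are linearized to $(u-v)P_{n-1}(u,v)$ with $P_{n-1}$ a polynomial of degree $n-1$; Cauchy--Schwarz plus the $L^\infty$ bound on $u,v$ gives an estimate by $C(M_1)\|u-v\|_{L^2}\le C(M_1)\|u-v\|_{H^2}$.

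The genuinely mixed term is $\int(u^2u_\xi^2-v^2v_\xi^2)$, which I would split as
\begin{equation*}
\int u^2(u_\xi+v_\xi)(u_\xi-v_\xi)\,\dd\xi+\int v_\xi^2(u+v)(u-v)\,\dd\xi.
\end{equation*}
The first integral is bounded by $\|u\|_{L^\infty}^2\|u_\xi+v_\xi\|_{L^2}\|u_\xi-v_\xi\|_{L^2}\le C(M_1)\|u-v\|_{H^2}$, and the second by $\|v_\xi\|_{L^\infty}^2\|u+v\|_{L^\infty}\,|[-PT,PT]|^{1/2}\|u-v\|_{L^2}\le C(M_1)\|u-v\|_{H^2}$, where crucially $\|v_\xi\|_{L^\infty}$ is controlled via $H^2\hookrightarrow C^1$. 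Summing the bounds from all the above pieces yields the desired Lipschitz estimate. The only conceptual obstacle is ensuring that every occurrence of an $L^\infty$ norm is applied to a function lying in $H^1$ (so that the embedding applies); the remaining work is routine H\"older bookkeeping, and the argument transparently extends to bound $\hat{\mathcal{H}}_n$ for any finite $n$ in $H^n_{per}$.
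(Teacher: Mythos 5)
Your proposal is correct and follows essentially the same route as the paper's proof: the periodic Sobolev embedding to get $L^{\infty}$ control of $u$, $u_\xi$, $v$, $v_\xi$ from the $H^2$ bound $M_1$, the decomposition of $\hat{\mathcal{H}}_2$ into $\mathcal{H}_2+c_{2,1}\mathcal{H}_1+c_{2,0}\mathcal{H}_0$, factoring the differences of powers, and H\"older/Cauchy--Schwarz on each piece (including the same splitting of the mixed term $u^2u_\xi^2-v^2v_\xi^2$), yielding a Lipschitz bound $C(M_1)\|u-v\|_{H^2}$ from which continuity follows with $\delta=\epsilon/C$. No gaps.
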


\begin{proof}
	By the definition of norm, we know $\|u\|_{H^1}\le\|u\|_{H^2} \le M_1$. Considering the embedding theorem, we obtain $\|u\|_{\infty}\le C\|u\|_{H^1}\le CM_1$ and $\|u_{\xi}\|_{\infty}\le C\|u_{\xi}\|_{H^1}\le C \|u\|_{H^2}\le CM_1$, where $C\in \mathbb{R}$ is a constant. Set $M=\max\{ CM_1,M_1 \}$. From the definition of $\hat{\mathcal{H}}_2$, we know that $\hat{\mathcal{H}}_2(u)=\mathcal{H}_2(u)+c_{2,1}\mathcal{H}_1(u)+c_{2,0}\mathcal{H}_0(u)$. By the H\"{o}lder's inequality, we could get 
	\begin{equation}
			\left|\mathcal{H}_0(u)-\mathcal{H}_0(v)\right|
			\le 2 M\| u-v \|_{H^2} . 	
		\end{equation}
		Similarly, we obtain 
		\begin{equation}
				\left|\mathcal{H}_1(u)-\mathcal{H}_1(v)\right|
				\le (\| u\|_{\infty}^2+\| v\|_{\infty}^2) \left|\int_{-PT}^{PT}(u^2-v^2)\dd \xi\right|+2M\| u-v \|_{H^2} 
				\le ( 4  M^3 + 2 M )\| u-v \|_{H^2}.
	\end{equation}
		Furthermore, we get
		\begin{equation}\begin{split}
				\left|\mathcal{H}_2(u) -\mathcal{H}_2(v)\right|
				\le & M \| u-v\|_{H^2}+6M^5\| u-v\|_{H^2}\\
				& +5 M^2 \left|\int_{-PT}^{PT}(v_{\xi}-u_{\xi})(v+u)+(v_{\xi}+u_{\xi})(v-u) \dd \xi\right|\\
				\le & M \| u-v\|_{H^2} +20M^3\| u-v\|_{H^2} +6M^5\| u-v\|_{H^2}
		\end{split} \end{equation} 
		By the above equations, we have	
		\begin{equation}\begin{split}
				\left|\hat{\mathcal{H}}_2(u) - \hat{\mathcal{H}}_2(v)\right|
				=&\left|\mathcal{H}_2(u)-\mathcal{H}_2(v)+c_{2,1}\left( \mathcal{H}_1(u)-\mathcal{H}_1(v) \right) +c_{2,0}\left( \mathcal{H}_0(u)-\mathcal{H}_0(v) \right)\right| \\
				\le& \left(M+20M^3+6M^5\right) \| u-v\|_{H^2} \\
				&+|c_{2,1} | \left( 4 M^3  +2M\right)\| u-v \|_{H^2}+2|c_{2,0}| M\| u-v \|_{H^2}\\
				\le & C \| u-v \|_{H^2},
	\end{split} \end{equation} 
	where $C=M+20M^3+6M^5+|c_{2,1} | \left( 4 M^3  +2M\right)+2|c_{2,0}|$, which follows that for any $\epsilon>0$, let $\delta=\frac{\epsilon}{C+1}>0$, we have 
	\begin{equation} \left|\hat{\mathcal{H}}_2(u) - \hat{\mathcal{H}}_2(v)\right| \le  C \| u-v \|_{H^2} =C\delta <\epsilon.\end{equation} 
\end{proof}

\newenvironment{proof-orbital-cn}{\emph{Proof of Theorem \ref{theorem:orbital-cn-}.}}{\hfill$\Box$\medskip}
\begin{proof-orbital-cn}
	Colliander et al. \cite{CollianderKSTT-03} studied that the Cauchy problem for the mKdV equation with the periodic boundary condition is globally well-posedness for the initial data $u(\xi,0)\in H^{s}(T),s>\frac{1}{2}$, so it is also global well-posedness for the initial data $u(\xi,0)\in H^2([-PT,PT])$.
	
	At this point, we consider the disturbance
	\begin{equation}\label{eq:h-dis}
		h(\xi,t):=v(\xi,t)-T(\gamma(t))u, \qquad h(\xi,t)\in H^2([-PT,PT]),
	\end{equation}
	in Definition \ref{define:orbital stable}. Set $f(\gamma):=\left\langle v(\xi,t)-T(\gamma(t))u,v(\xi,t)-T(\gamma(t))u\right \rangle$. Consider $\inf_{\gamma\in \mathbb{R}}\|v(\xi,t)-T(\gamma(t))u(\xi)\|$, at the minimum point $$f'(\gamma)=-2\left\langle v(\xi,t)-T(\gamma)u,T'(\gamma)u\right \rangle=-2\left\langle h(\xi,t),T(\gamma)\partial_{\xi}u\right \rangle=0.$$ 
	Without loss of generality, we suppose $\gamma(t)=0$, then we get $T(\gamma(t))u=u$ by \eqref{eq:defin-operator}. And, the perturbation $h(\xi,t)$ belongs to the nonlinear set $\mathcal{A}:=\{h\in H^2([-PT,PT])|\mathcal{H}_0(h(\xi,t)+u)=\mathcal{H}_0(u),\left\langle h(\xi,t),\partial_{\xi}u\right \rangle=0\}$.
	
	The functional of $\mathcal{\hat{H}}_2(u+h)-\mathcal{\hat{H}}_2(u)$ in powers of $h$ yields the expansion
	\begin{equation}\label{eq:H_2}
		\begin{split}
			\mathcal{\hat{H}}_2(u+h)-\mathcal{\hat{H}}_2(u)
			=&\left\langle\frac{\delta \mathcal{\hat{H}}_2}{\delta u}(u),h \right\rangle +\frac{1}{2} \left\langle\frac{\delta^2 \mathcal{\hat{H}}_2}{\delta u^2}(u)h,h \right\rangle +\mathcal{O}(\|h\|_{H^2}^3)\\
			=&\frac{1}{2} \left\langle \mathcal{L}_2(u)h,h \right\rangle +\mathcal{O}(\|h\|_{H^2}^3),
		\end{split}		
	\end{equation}
	where $h(\xi,t)$ is in the nonlinear set $\mathcal{A}$. Then, we consider a tangent plane at $h(\xi,t)=0$ to get a linear space.
	
	Now, we want to prove that there exists a constant $\hat{\alpha}_0$, such that 
	\begin{equation}\label{eq:Lhh,A}
		\left\langle \mathcal{L}_2(u)h,h \right\rangle\ge \hat{\alpha}_0\|h\|_{H^2}^2, \quad h\in \mathcal{A}.
	\end{equation}
	For the small $h(\xi,t)$, it is sufficient to convert \eqref{eq:Lhh,A} for $h(\xi,t)$ in the tangent plane to the admissible space at $h(\xi,t)=0$. Taylor expanding $\mathcal{H}_0$ yields
	\begin{equation}\mathcal{H}_0(u+h)-\mathcal{H}_0(u)=\left\langle u,h \right\rangle +\frac{1}{2} \|h\|_{2}^2.\end{equation}
	So, the linearized version of the nonlinear constraint in $\mathcal{A}$ is the condition $\left\langle u,h \right\rangle=0$. And then, we define the linear admissible space $$\mathcal{A}_1:=\{h_1\in H^2([-PT,PT])| \left\langle h_1(\xi,t),\partial_{\xi}u\right \rangle=\left\langle u,h_1 (\xi,t)\right\rangle=0\}.$$  
	We claim that for any $h(\xi,t)\in \mathcal{A}$ with $\|h\|_{H^2}$ sufficiently small, $h(\xi,t)$ could be decomposed into $h(\xi,t)=h_1(\xi,t)+\hat{c}u(\xi),$ where $\hat{c}=\hat{c}(h)$ and $h_1\in \mathcal{A}_1$. Setting $g(h(\xi,t),\hat{c}):=\left\langle h(\xi,t)-\hat{c}u(\xi),u(\xi)\right \rangle$, we get $g(0,0)=0$ and $g_{\hat{c}}(0,0)=-\left\langle u,u\right \rangle\neq 0$. Consequently, by the implicit function theorem, there exists a neighborhood of $(0,0)$ and a unique functional $\hat{c}(h)$ such that $g(h(\xi,t),\hat{c}(h))=\left\langle h(\xi,t)-\hat{c}(h)u(\xi),u(\xi)\right \rangle\equiv 0$. Letting $h_1(\xi,t)=h(\xi,t)-\hat{c}(h)u(\xi)$, we obtain $\left\langle h_1(\xi,t),u(\xi)\right \rangle=\left\langle h(\xi,t)-\hat{c}(h)u(\xi),u(\xi)\right \rangle=0$, which means $h_1(\xi,t)\in \mathcal{A}_1$. Therefore, we gain the above decomposition. Since $\mathcal{H}_0(u)=\left\langle u,u\right \rangle$ and $\mathcal{H}_0(u+h)=\mathcal{H}_0(u),h\in \mathcal{A}$, we get
	\begin{equation}
		\left\langle u+h,u+h\right \rangle-\left\langle u,u\right \rangle
		=2\left\langle u,h_1+\hat{c}u\right \rangle+\|h\|_{2}^2=2\hat{c}\|u\|_2^2+\|h\|_2^2=0,
	\end{equation}
	which implies $\hat{c}=-\frac{\|h\|_2^2}{2\|u\|_2^2}$. Combined with Lemma \ref{lemma:alpha-0}, $\left\langle \mathcal{L}_2h_1,h_1 \right\rangle\ge \alpha_0 \|h_1\|_{H^2}^2$, if $k\le \hat{k}\approx 0.9089$ and $P<\frac{\pi}{\pi+M(z_c)}$. Thus,
	\begin{equation}\begin{split}
			\left\langle \mathcal{L}_2(u)h,h \right\rangle
			=&\left\langle \mathcal{L}_2(u)(h_1+\hat{c}u),(h_1+\hat{c}u) \right\rangle\\
			\ge &\left\langle \mathcal{L}_2(u)h_1,h_1\right\rangle+2\hat{c}\left\langle \mathcal{L}_2(u)u,h_1\right\rangle+\hat{c}^2\left\langle \mathcal{L}_2(u)u,u\right\rangle \\
			\ge &\alpha_0 \|h_1\|_{H^2}^2,
	\end{split} \end{equation}
	where $\mathcal{L}_2(u)u=0$. Using Minkowski inequality, we could get $\|h_1\|_{H^2}^2\ge \|h\|_{H^2}^2-\hat{c}^2\|u\|_{H^2}^2\ge \|h\|_{H^2}^2-c\|h\|_{H^2}^4$, $c=\|u\|_{H^2}^2/(4\|u\|_{2}^4)$. For $\|h\|_{H^2}^2<\frac{1}{2c}$ sufficient small, we could get $\|h_1\|_{H^2}^2\ge \frac{1}{2}\|h\|_{H^2}^2$.
	Thus, 
	\begin{equation}\label{eq:Lhh-A}
		\left\langle \mathcal{L}_2(u)h,h \right\rangle\ge \frac{\alpha_0}{2}\|h\|_{H^2}^2.
	\end{equation}
	By \eqref{eq:H_2}, we get
	\begin{equation}\label{eq:H_2-3}
		|\mathcal{\hat{H}}_2(u+h)-\mathcal{\hat{H}}_2(u)|\ge \frac{\alpha_0}{4}\|h\|_{H^2}^2-\beta\|h\|_{H^2}^3,
	\end{equation}
	with $\beta>0$. 
	
	Then we want to prove that for any $\frac{1}{2c}>\epsilon>0$, there exists $\hat{\delta}(\epsilon)>0$, when $\Delta:=|\mathcal{\hat{H}}_2(u+h)-\mathcal{\hat{H}}_2(u)|<\hat{\delta}(\epsilon)$, such that $\|h\|_{H^2}<\epsilon$. To analyze the property of \eqref{eq:H_2-3} conveniently, we introduce the cubic function $q(\nu):=\beta \nu^3-\frac{\alpha_0}{4}\nu^2+\Delta$, where $\nu=\|h\|_{H^2}$. It is easy to see that the equation $q(\nu)=0$ has three real roots $\nu_1(\Delta)<0<\nu_2(\Delta)<\nu_3(\Delta)$ for $\Delta>0$.  The set of $\{\nu | q(\nu)\ge 0\}$ is equivalent to $\nu\in [\nu_1(\Delta),\nu_2(\Delta)]\cup [\nu_3(\Delta),+\infty)$. Then, we want to show that if $\|h(\xi,0)\|_{H^2}\le \nu_2(\Delta)$, then $\|h(\xi,t)\|_{H^2}\le \nu_2(\Delta)$ also holds. Actually, if the claim is not valid, there must exist a point $t_0\in \mathbb{R}_+$ such that $\|h(\xi,t_0)\|_{H^2}> \nu_2(\Delta)$, we find $\|h(\xi,t_0)\|_{H^2}\ge \nu_3(\Delta)$ since $\|h(\xi,t_0)\|_{H^2}$ satisfies inequality \eqref{eq:H_2-3}. By the continuity of functions $\|h(\xi,t)\|_{H^2}$ and $\|h(\xi,0)\|_{H^2}\le \nu_2(\Delta), \|h(\xi,t_0)\|_{H^2}\ge \nu_3(\Delta)$, there must exist $t_1\in(0,t_0)$ such that $\nu_2(\Delta)< \|h(\xi,t_1)\|_{H^2}<\nu_3(\Delta)$, which does not satisfy \eqref{eq:H_2-3}. Therefore, we get the contradiction. Then, we obtain that if $\|h(\xi,0)\|_{H^2}\le \nu_2(\Delta)$, $\|h(\xi,t)\|_{H_2}\le \nu_2(\Delta)$. Thus, for any $\epsilon>0$, by choosing $\hat{\delta}(\epsilon)=-\beta\epsilon^3+\frac{\alpha_0^2}{4}\epsilon^2$ we get $\|h(\xi,0)\|_{H_2}<\epsilon$, which further implies $\|h(\xi,t)\|\le\nu_2(\Delta)<\epsilon$.
	Moreover, from Lemma \ref{lemma:H<u}, we know that for the above fixed  $\hat{\delta}(\epsilon)>0$, there exists $\delta(\hat{\delta})$ ($\min\{\epsilon,\frac{1}{2c}\}>\delta(\hat{\delta})>0$), $\|(u(\xi,0)+h(\xi,0))-u(\xi,0)\|_{H^2}\le \delta(\hat{\delta})$, such that
	\begin{equation} |\mathcal{\hat{H}}_2(u(\xi,0)+h(\xi,0))-\mathcal{\hat{H}}_2(u(\xi,0))| < \hat{\delta}(\epsilon),
	\end{equation}
	which further implies
	\begin{equation}
		|\mathcal{\hat{H}}_2(u+h)-\mathcal{\hat{H}}_2(u)| 
		=|\mathcal{\hat{H}}_2(u(\xi,0)+h(\xi,0))-\mathcal{\hat{H}}_2(u(\xi,0))| \le  \hat{\delta}(\epsilon).
	\end{equation}
	
	Therefore, by \eqref{eq:h-dis}, we could obtain that for any $\epsilon>0$, there exists $\delta(\epsilon)>0$, if $\|v(\xi,0)-T(\gamma)u(\xi,0)\|_{H^2}\le \delta(\epsilon)$ and $t\in\mathbb{R}$, the inequality $\inf_{\gamma\in \mathbb{R}} \|v(\xi,t)-T(\gamma)u(\xi,t)\|_{H^2}< \epsilon$ holds, which implies
	\begin{equation}\sup_{t\in\mathbb{R}} \inf_{\gamma\in \mathbb{R}} \|v(\xi,t)-T(\gamma)u(\xi,t)\|_{H^2} < \epsilon. \end{equation}
	From Definition \ref{define:orbital stable}, we get that solution $u(\xi)=\alpha k\cn(\alpha\xi,k), k\le \hat{k}\approx 0.9089$ is orbitally stable in the space $H^2([-PT,PT])$, where $P<\frac{\pi}{\pi+M(z_c)}$.
\end{proof-orbital-cn}

The above proof is in the case of $k\le \hat{k}\approx 0.9089$. When $k> \hat{k}$, by \eqref{eq:I'zl}, we know that for any $z\in \mathbb{R}\cap Q$, the inequality $I'(z)\neq0$ holds. By \eqref{eq:K_1(z)}, we know $\mathcal{K}_1(z)\ge 0$, and only when $\Omega_1=0$, the value $\mathcal{K}_1(z)=0$. Based on Lemma \ref{lemma:alpha-0}, we use a similar proof as the condition $k\le \hat{k}\approx 0.9089$ and obtain that the $\cn$-type solutions are orbitally stable in the space $H^1([-T,T])$ when $k> \hat{k}$.

In the same procedure as above, the $\dn$-type solutions on the periodic space $H^2([-PT,PT])$ are also orbitally stable.

\begin{remark}\label{remark:K_1-dn}
	When $l=\frac{K'}{2}$, for all $z\in Q$ satisfying $\Omega_1(z)\in \ii \mathbb{R}$, the inequality $\mathcal{K}_1(z)\ge 0$ does not hold uniformly. Combined with the half arguments formulas \cite[p.24]{ByrdF-54} of Jacobi elliptic functions, the function $\mathcal{K}_1(z)$ in \eqref{eq:K_1(z)} could be written as 
		\begin{equation}
			\begin{split}
				\mathcal{K}_1(z)
				=&-4\Omega_1^2(z)\alpha PK(k)\left(\dn^2(\ii(z-l))+\dn^2(\ii (z-l)+K)-\frac{2E(k)}{K(k)}\right)\\
				=&-4\Omega_1^2(z)\alpha PK(k)\left(2-k^2\sn^2(\ii(z-l))-k^2\sn^2(\ii (z-l)+K)-\frac{2E(k)}{K(k)}\right)\\
				=&-4\Omega_1^2(z)\alpha PK(k)\left(\frac{-2k^2}{1+\dn(2\ii(z-l))}+2-\frac{2E(k)}{K(k)}\right).
			\end{split}
		\end{equation}
		If $z\in Q\cap \mathbb{R}=\left[-\frac{K'}{2},\frac{3K'}{2}\right]$, then $\ii(z-l)\in \left[-\ii K',\ii K'\right]$. As $2\ii(z-l)\in [0,\ii K')$, the function satisfies $\dn(2\ii(z-l)) \in [1,+\infty)$, which implies $\frac{-2k^2}{1+\dn(2\ii(z-l))}\in [-k^2,0)$. As $2\ii(z-l)\in (\ii K',2\ii K']$, the function satisfies $\dn(2\ii(z-l)) \in (-\infty,-1)$, which means $\frac{-2k^2}{1+\dn(2\ii(z-l))}\in (0,+\infty)$. By the even function $\dn(z)$ and the inequality \eqref{eq:value-E-K-lin-d}, we get that for all $2\ii(z-l)\in [-2\ii K',2\ii K']$,
		\begin{equation}\label{eq:K_1(z)-dn1}
				\mathcal{K}_1(z)
				\ge-4\Omega_1^2(z)\alpha PK(k)\left(-k^2+2-\frac{2E(k)}{K(k)}\right)
				=-4\Omega_1^2(z)\alpha PK(k)\left(k^2+2k'^2-\frac{2E(k)}{K(k)}\right)\ge 0,
		\end{equation} 
		 only when $\Omega_1(z)=0$, $\mathcal{K}_1(z)=0$.  
		
		When $z\in Q\setminus \mathbb{R}$, considering $z=z_R+\ii \frac{K}{2}$ and using shift formulas of the Jacobi elliptic functions \cite[p.20]{ByrdF-54}, we know $\dn(2\ii(z-l))=\dn(2\ii(z_R-l)-K)=k'\nd(2\ii(z_R-l))$. As $\ii(z_R-l)\in [-\ii K',\ii K']$, $\dn(2\ii(z_R-l))\in (-\infty,-1]\cup [1,\infty)$, which means $\dn(2\ii(z-l))=k'\nd(2\ii(z_R-l))\in [-k',k']$. Thus, by inequality \eqref{eq:value-E-K-lin-c}, we obtain
		\begin{equation}\label{eq:K_1(z)-dn2}
				\mathcal{K}_1(z)
				\le -4\Omega_1^2(z)\alpha PK(k)\left(\frac{-2k^2}{1+k'}+2-\frac{2E(k)}{K(k)}\right)
				=-4\Omega_1^2(z)\alpha PK(k)\left(2k'-\frac{2E(k)}{K(k)}\right)\le 0,
		\end{equation}
	and only if $\Omega_1(z)=0$, equation $\mathcal{K}_1(z)=0$ holds.
\end{remark}

\newenvironment{proof-orbital-dn}{\emph{Proof of Theorem \ref{theorem:orbital-dn-}.}}{\hfill$\Box$\medskip}
\begin{proof-orbital-dn}
	As shown in Remark \ref{remark:K_1-dn}, we find that for all $z\in Q$ satisfying $\Omega_1(z)\in \ii \mathbb{R}$, the statement $\mathcal{K}_1(z)\ge 0$ does not always hold. Similarly, we consider the value $\mathcal{K}_2(z)$ in \eqref{eq:K_2(z)}, 
	\begin{equation}
		\mathcal{K}_2(z)	
		=-4\Omega_1^2(z)\alpha PK(k)\left(\dn^2(\ii(z-l))+\dn^2(\ii (z-l)+K)-\frac{2E(k)}{K(k)}\right)^2.
	\end{equation}
	Combining \eqref{eq:K_1(z)-dn1} with \eqref{eq:K_1(z)-dn2}, we obtain $\mathcal{K}_2(z)>0, z\in Q$ if $\Omega_1(z)\neq 0$. In the same way as Lemma \ref{lemma:alpha-0}, excepting function $\partial_{\xi}u $, there exists $\alpha_1$ such that $\left< \mathcal{L}_2 W, W \right> \ge \alpha_1 \|W\|_{H^2([-PT,PT])}^2, P\in \mathbb{Z}_+$. Similar to the proof of Theorem \ref{theorem:orbital-cn-}, we obtain that the solution $u=\alpha \dn(\alpha \xi,k)$ is orbitally stable in the space $H^2([-PT,PT])$, $P\in \mathbb{Z}_+$.
\end{proof-orbital-dn}

\section{Breather solutions on the elliptic function background}\label{section:breather solutions}
In the above sections, we study the linear and orbital stability of elliptic function solutions for the mKdV equation. In this section, we would like to utilize the Darboux-B\"{a}cklund transformation to construct breather solutions $u^{[1]}(\xi,t)$ and $u^{[2]}(\xi,t)$, which can be used to describe the stable or unstable dynamics of elliptic function solutions. Very recently, rogue waves on the elliptic function background are constructed by the Darboux transformation and the nonlinearization in \cite{ChenP-18-NLS,ChenP-18-mKdV,ChenP-19,ChenP-21,ChenPW-19,ChenPW-20}.

\begin{theorem}\label{theorem:construct}
	Suppose $u(\xi)$ is an elliptic function solution of the mKdV equation \eqref{eq:mKdV1}  and $\Phi(\xi,t;\lambda_1)$ is the corresponding fundamental solution of Lax pair \eqref{eq:Lax-pair-1} with the parameter $\lambda_1$, we could construct a new solution $u^{[1]}(\xi,t)$ of the mKdV equation \eqref{eq:mKdV1} with the parameter $\lambda_1$ as follows:
	\begin{equation}\label{eq:bt-1}
		u^{[1]}(\xi,t)=u(\xi)+\frac{2\ii(\lambda_1^*-\lambda_1)\phi_1\psi^*_1}{|\phi_1|^2+|\psi_1|^2},\qquad \lambda_1\in \ii \mathbb{R},\qquad 
		\phi_1\psi_1^*-\phi_1^*\psi_1=0,\\
	\end{equation}
	or
	\begin{equation}\label{eq:bt-2}
		u^{[2]}(\xi,t)
		=u(\xi)-2\ii \begin{bmatrix}
			\phi_1 & \phi_1^*
		\end{bmatrix} \mathbf{M}^{-1} \begin{bmatrix}
			\psi_1^* \\  \psi_1
		\end{bmatrix},\qquad
		\mathbf{M}=\begin{bmatrix}
			\frac{\Phi_1^{\dagger} \Phi_1}{\lambda_1-\lambda_1^*} &
			\frac{\Phi_1^{\dagger} \Phi_1^*}{-\lambda_1^*-\lambda_1^*} \\[5pt]
			\frac{\Phi_1^{\top}\Phi_1}{\lambda_1+\lambda_1} &
			\frac{\Phi_1^{\top}\Phi_1^*}{-\lambda_1^*+\lambda_1}
		\end{bmatrix}, \qquad \lambda_1\in \mathbb{C}\backslash(\ii \mathbb{R}\cup\mathbb{R}),
	\end{equation}
	where $\Phi_1=\Phi(\xi,t;\lambda_1)\mathbf{c}\equiv[\phi_1,\psi_1]^{\top}$ and $\mathbf{c}=\begin{bmatrix}
		1 & c_1
	\end{bmatrix}^{\top}$, $c_1\in \mathbb{C}$.
\end{theorem}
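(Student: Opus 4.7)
The plan is to use the Darboux-B\"acklund dressing: construct a meromorphic-in-$\lambda$ gauge transformation $\mathbf{T}(\xi,t;\lambda)$ such that $\widetilde{\Phi}(\lambda) := \mathbf{T}(\lambda)\Phi(\lambda)$ again satisfies the Lax pair \eqref{eq:Lax-pair-1} with a new potential $u^{[j]}$, while preserving both symmetries in \eqref{eq:sym-1}. The two cases differ in how the spectral singularities of $\mathbf{T}$ can be arranged so as to be compatible with both reductions simultaneously.

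For \eqref{eq:bt-1} with $\lambda_1 \in \ii\mathbb{R}$, the relation $-\lambda_1 = \lambda_1^*$ collapses the two symmetries to a single orbit, so a one-pole projector dressing suffices. Take
\[
\mathbf{T}_1(\lambda) = \mathbb{I} + \frac{\lambda_1^* - \lambda_1}{\lambda - \lambda_1^*}\,\mathbf{P}_1,\qquad \mathbf{P}_1 := \frac{\Phi_1 \Phi_1^{\dagger}}{\Phi_1^{\dagger}\Phi_1}.
\]
I would verify three properties. First, the Hermitian reduction $\mathbf{T}_1^{\dagger}(\lambda^*)\mathbf{T}_1(\lambda) = \mathbb{I}$ follows from $\mathbf{P}_1$ being an orthogonal projector together with the pole being essentially self-conjugate. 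Second, the symmetric reduction forces the seed eigenvector to satisfy $\sigma_2\Phi_1^* \parallel \Phi_1$; writing this out componentwise gives exactly the scalar constraint $\phi_1\psi_1^* - \phi_1^*\psi_1 = 0$ stated in the theorem. Third, the dressing identity $\mathbf{T}_{1,\xi} + \mathbf{T}_1\,\mathbf{U}(\lambda;u) = \mathbf{U}(\lambda;u^{[1]})\,\mathbf{T}_1$ at order $\mathcal{O}(\lambda^{-1})$ gives $u^{[1]}-u = -2\ii\,[\sigma_3,\mathbf{T}_1^{(-1)}]_{12}$, where $\mathbf{T}_1^{(-1)}$ is the residue-at-infinity coefficient. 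Reading off $[\mathbf{P}_1]_{12} = \phi_1\psi_1^*/(|\phi_1|^2+|\psi_1|^2)$ then produces \eqref{eq:bt-1}.

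For \eqref{eq:bt-2} with $\lambda_1 \in \mathbb{C}\setminus(\ii\mathbb{R}\cup\mathbb{R})$, the four points $\pm\lambda_1,\pm\lambda_1^*$ are distinct and no one-pole dressing can respect both reductions. I would use a two-fold dressing
\[
\mathbf{T}_2(\lambda) = \mathbb{I} + \frac{\mathbf{A}_1}{\lambda-\lambda_1^*} + \frac{\mathbf{A}_2}{\lambda+\lambda_1},
\]
with rank-one residues $\mathbf{A}_j$ whose images are spanned by $\Phi_1$ (a solution at $\lambda_1$) and its $\sigma_2$-reflected partner $\sigma_2\Phi_1^*$ (a solution at $\lambda_1^*$). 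Imposing $\mathbf{T}_2(\lambda_1)\Phi_1 = 0$ together with the paired condition at the reflected pole yields a $2\times 2$ Cauchy-like linear system whose coefficient matrix is exactly $\mathbf{M}$ in the statement: each entry $\mathbf{M}_{ij}$ is an inner product of two of the four symmetry-related vectors divided by the corresponding pole difference. Solving this system for the residue-at-infinity expresses it as a bilinear form sandwiching $\mathbf{M}^{-1}$, and applying the same identity $u^{[2]}-u = -2\ii\,[\sigma_3,\mathbf{T}_2^{(-1)}]_{12}$ produces \eqref{eq:bt-2}.

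The main obstacle is in the second case: one must verify that the two reductions are jointly consistent and that $\mathbf{M}$ produces a real-valued $u^{[2]}$ for arbitrary complex constant $c_1$. This reduces to Hermitian/transposed symmetry relations among the inner products $\Phi_1^{\dagger}\Phi_1$, $\Phi_1^{\top}\Phi_1$, and $\Phi_1^{\dagger}\Phi_1^*$ at the shifted spectral parameters, which are consequences of \eqref{eq:sym-1} together with the invariance of $\det\mathbf{L}(\xi,t;\lambda)$ from \eqref{eq:det L}. Compatibility with the $t$-part of the Lax pair is automatic by the zero-curvature structure, since the same $\mathbf{T}$ intertwines both $\mathbf{U}$ and $\hat{\mathbf{V}}$.
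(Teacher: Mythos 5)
Your proposal follows essentially the same route as the paper's Appendix C: a one-pole projector Darboux matrix $\mathbf{T}_1=\mathbb{I}-\frac{\lambda_1-\lambda_1^*}{\lambda-\lambda_1^*}\mathbf{P}_1$ for $\lambda_1\in\ii\mathbb{R}$, a two-pole dressing built on the four symmetry-related spectral points for generic $\lambda_1$, and extraction of the new potential from the $\mathcal{O}(\lambda^{-1})$ coefficient of the intertwining relation (the paper does this via the expansion $\mathbf{m}=\mathbb{I}+\mathbf{m}_1\lambda^{-1}+\cdots$ and the identity $\mathbf{Q}^{[1]}=\mathbf{Q}-\ii(\lambda_1-\lambda_1^*)[\sigma_3,\mathbf{P}_1]$, which is your residue-at-infinity computation in different notation). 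One concrete misstep: the vector form you give for the transpose-reduction constraint, $\sigma_2\Phi_1^*\parallel\Phi_1$, cannot hold — writing it out componentwise gives $-\ii(|\phi_1|^2+|\psi_1|^2)=0$, not the stated scalar constraint. The correct condition (as in the paper) is $\mathbf{P}_1^{\top}=\mathbf{P}_1$, equivalently $\Phi_1\parallel\Phi_1^*$, which is exactly $\phi_1\psi_1^*-\phi_1^*\psi_1=0$; similarly, in the two-fold case the second seed is $\Phi_1^*$ itself, which by \eqref{eq:sym-1} solves the Lax pair at $-\lambda_1^*$ (with poles of $\mathbf{T}_2$ at $\lambda_1^*$ and $-\lambda_1$), rather than $\sigma_2\Phi_1^*$ at $\lambda_1^*$. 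Since you state the correct scalar constraint and the correct matrix $\mathbf{M}$ anyway, these are local slips rather than structural gaps, but they would need to be repaired for the symmetry verification to go through as written.
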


The proof of Theorem \ref{theorem:construct} is given in Appendix \ref{appendix:Darboux transformation}.
\begin{remark}\label{remark:u_2}
	Based on the linear algebra, we rewrite formula \eqref{eq:bt-2} in a compact form: 
	\begin{equation}\label{eq:tilde-u2}
		u^{[2]}(\xi,t)
		=\frac{\det (u(\xi)\mathbf{M}-2\ii \mathbf{N})}{u(\xi)\det(\mathbf{M})},\qquad
		\mathbf{N}=\begin{bmatrix}
			\phi_1\psi^*_1  &	 \phi_1^*\psi^*_1 	\\
			\phi_1\psi_1 &   	 \phi_1^*\psi_1
		\end{bmatrix}.
	\end{equation}
\end{remark}
Beforehand, we introduce the following notations: (1) $\lambda_1=\lambda(z_1)$; (2) equations $E_1(z),E_2(z)$ represent $E_1(\xi,t;z), E_2(\xi,t;z)$ in \eqref{eq:E1-E2} respectively.
\subsection{Explicit stable solutions on the $\dn$-type solution background}
Before constructing an exact solution to describe the stability property of the $\dn$-type solutions, we analyze the constraint in formula \eqref{eq:bt-1} in detail. In this subsection, we choose the case $l=\frac{K'}{2}$, corresponding to the $\dn$-type solution background.

\begin{lemma}\label{lemma:T_1}
	For the formula \eqref{eq:bt-1}, a sufficient condition to the constraint $\phi_1\psi_1^*-\phi_1^*\psi_1=0$ is $u(\xi)=\alpha\dn(\alpha\xi)$ and $|c_1|=1$.
\end{lemma}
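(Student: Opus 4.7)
The plan is to exploit the fact that the Lax coefficient matrices in \eqref{eq:Lax-pair-1} are real whenever $\lambda_1 \in \ii\mathbb{R}$ and $u$ is real-valued. First I would observe that $\mathbf{U}(\lambda_1; u) = -\ii\lambda_1\sigma_3 + \mathbf{Q}$ has real entries (since $-\ii\lambda_1 \in \mathbb{R}$ and $\mathbf{Q}$ is real), and similarly $\hat{\mathbf{V}}(\lambda_1; u)$ is real. Consequently $\Phi^*(\xi,t;\lambda_1)$ satisfies the same real linear system as $\Phi(\xi,t;\lambda_1)$, so there exists a constant matrix $A \in \mathbb{C}^{2\times 2}$ with $\Phi^*(\xi,t;\lambda_1) = \Phi(\xi,t;\lambda_1)\,A$.

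Next I would identify $A$ by evaluating at $(\xi,t)=(0,0)$ using the explicit form from Theorem \ref{theorem:solution}:
\[
\Phi(0,0;\lambda_1) = \begin{bmatrix} a & b \\ -b & -a \end{bmatrix}, \qquad a = \sqrt{u^2(0) - \beta_1}, \qquad b = \sqrt{u^2(0)-\beta_2},
\]
with $\beta_{1,2} = 2\lambda_1^2 + s_2 \mp 2y$ from \eqref{eq:beta1-beta2-theta-1-theta-2}. For the $\dn$-type background, the relevant $\lambda_1$ lies between the two branch points on $\ii\mathbb{R}$, which forces $y \in \ii\mathbb{R}$ and hence $\beta_2 = \beta_1^*$; a consistent principal-branch choice then gives $b = a^*$, and a short direct calculation produces
\[
A \,=\, \Phi^{-1}(0,0;\lambda_1)\,\Phi^*(0,0;\lambda_1) \,=\, \begin{bmatrix} 0 & 1 \\ 1 & 0 \end{bmatrix}.
\]

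Finally, with this identification I would assemble $\Phi_1 = \Phi\,\mathbf{c}$ and $\Phi_1^* = \Phi A \mathbf{c}^* = \Phi\,(c_1^*, 1)^{\top}$ into the columns of a $2\times 2$ matrix, yielding
\[
\phi_1\psi_1^* - \phi_1^*\psi_1 \,=\, \det\begin{bmatrix}\phi_1 & \phi_1^*\\ \psi_1 & \psi_1^*\end{bmatrix} \,=\, \det\Phi \cdot \det\begin{bmatrix} 1 & c_1^* \\ c_1 & 1\end{bmatrix} \,=\, (\beta_1-\beta_2)\bigl(1 - |c_1|^2\bigr),
\]
which vanishes whenever $|c_1|=1$, proving sufficiency.

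The main obstacle I foresee is justifying the branch choice $b = a^*$ rather than $b = -a^*$: the square roots defining $a,b$ are only determined up to sign, so one must verify that the branches implicit in the theta-function representation \eqref{eq:Phi-Theta} are consistent with this identification. I would resolve this either by pulling the relation $\Phi^* = \Phi A$ back through \eqref{eq:Phi-Theta} and invoking parity properties of $\vartheta_i$ under $z \mapsto z^*$ (using that $z_1 \mapsto z_1^*$ leaves the $\dn$-type spectral set $Q_0$ of Lemma \ref{lemma: dn z} invariant), or by a continuity argument from a reference point in $\ii\mathbb{R}$ where both branches can be fixed explicitly.
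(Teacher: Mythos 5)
Your proof is correct but takes a genuinely different route from the paper's. The paper works directly with the theta-function representation \eqref{eq:Phi-Theta}: it expands $\phi_1\psi_1^*$ as a quadratic form in $\mathbf{E}(z_1)$, uses shift formulas to trade $\vartheta_1$ at arguments involving $z_1$ for $\vartheta_2,\vartheta_3$ at arguments involving $z_1^*$, and finds that the antisymmetric part collapses to a nonvanishing theta quotient times $|E_1(z_1)|^2-|c_1|^2|E_2(z_1)|^2$, which vanishes because $|E_1|=|E_2|=1$ on the $\dn$-type spectral set. You replace all of this with the structural observation that $\mathbf{U}(\lambda_1;u)$ and $\hat{\mathbf{V}}(\lambda_1;u)$ are real for $\lambda_1\in\ii\mathbb{R}$, so $\Phi^*=\Phi A$ with $A$ constant, and the bilinear combination is a determinant that factors as $\det\Phi\cdot(1-|c_1|^2)$ up to sign. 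This is shorter and makes both the sufficiency and the necessity of $|c_1|=1$ transparent, since $\det\Phi=\beta_1-\beta_2=-4y\neq0$ away from branch points; what it does not deliver is the explicit expression \eqref{eq:phipsi*}, which the paper immediately reuses to assemble $K_1$ and $H_1$ in the breather formula \eqref{eq:breather-dn-function}, so the paper's heavier computation is not redundant in context. Two remarks. First, the branch ambiguity you flag is actually harmless: $b=\pm a^*$ gives $A=\pm\sigma_1$, and either sign yields $\phi_1\psi_1^*-\phi_1^*\psi_1=\pm\det\Phi\,(1-|c_1|^2)$, so the conclusion is branch-independent and neither of your proposed repairs is needed. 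Second, your identification $\beta_2=\beta_1^*$ does require $y\in\ii\mathbb{R}$, i.e.\ $\lambda_1$ on the branch cut between $\lambda_2$ and $\lambda_3$ (equivalently $z_1=z_R\pm\ii\frac{K}{2}$); for $\lambda_1\in\ii\mathbb{R}$ off the cut one gets $A=\pm\mathbb{I}$ and the vanishing condition becomes $c_1\in\mathbb{R}$ rather than $|c_1|=1$. The paper's proof makes the same implicit restriction when it asserts $z_1=z_R\pm\ii\frac{K}{2}$, so this is a shared caveat of the lemma's statement rather than a defect of your argument.
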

\begin{proof}
	Since $\lambda_1\in \ii \mathbb{R}$ in \eqref{eq:lambda}, we could obtain $z_1=z_R\pm\ii \frac{K}{2}, z_1\in S$, based on Proposition \ref{prop:lambda-comforming} and Lemma \ref{lemma:map}. We first consider $z_1=z_R+\ii \frac{K}{2}$. By the shift formula of Jacobi theta functions \cite[p.86]{ArmitageE-06}, it is easy to get 
	\begin{equation}\label{eq:theta-1-3}
		-\frac{\vartheta_1\left(\frac{\ii(z_1-l)\pm \alpha\xi }{2K}\right)}{\vartheta_4\left(\frac{\ii(z_1-l)}{2K}\right)}=\frac{\vartheta_3\left(\frac{\ii(z_1^*+l)\pm \alpha\xi }{2K}\right)}{\vartheta_2\left(\frac{\ii(z_1^*+l)}{2K}\right)}\exp\left(\pm\ii\frac{\alpha\xi}{2K}\pi\right)=\frac{\vartheta_2\left(\frac{\ii(z_1^*-l)\pm \alpha\xi }{2K}\right)}{\vartheta_3\left(\frac{\ii(z_1^*-l)}{2K}\right)}.
	\end{equation}
	Combining the solution $\Phi(\xi,t;\lambda)$ in \eqref{eq:Phi-Theta} with \eqref{eq:theta-1-3}, we obtain  
	\begin{equation}\label{eq:phipsi*}
		\phi_1\psi_1^*
		=D^2 
		\mathbf{E}(z_1)
		\begin{bmatrix}
			\frac{\vartheta_3\left(\frac{\ii(z_1^*+l)-\alpha\xi }{2K}\right)\vartheta_3\left(\frac{\ii(z_1^*+l)+\alpha\xi }{2K}\right)}{\vartheta_2\left(\frac{\ii(z_1^*+l)}{2K}\right)\vartheta_2\left(\frac{\ii(z_1^*+l)}{2K}\right)}& 
			-\frac{\vartheta_2\left(\frac{\ii(z_1^*-l)-\alpha\xi }{2K}\right)\vartheta_3\left(\frac{\ii(z_1+l)-\alpha\xi }{2K}\right)}{\vartheta_3\left(\frac{\ii(z_1^*-l)}{2K}\right)\vartheta_2\left(\frac{\ii(z_1+l)}{2K}\right)}\\
			-\frac{\vartheta_2\left(\frac{\ii(z_1-l)+\alpha\xi }{2K}\right)\vartheta_3\left(\frac{\ii(z_1^*+l)+\alpha\xi }{2K}\right)}{\vartheta_3\left(\frac{\ii(z_1-l)}{2K}\right)\vartheta_2\left(\frac{\ii(z_1^*+l)}{2K}\right)} & 
			\frac{\vartheta_3\left(\frac{\ii(z_1+l)+\alpha\xi }{2K}\right)\vartheta_3\left(\frac{\ii(z_1+l)-\alpha\xi }{2K}\right)}{\vartheta_2\left(\frac{\ii(z_1+l)}{2K}\right)\vartheta_2\left(\frac{\ii(z_1+l)}{2K}\right)}
		\end{bmatrix} 
		\mathbf{E}^{\dagger}(z_1),
	\end{equation}
	where $Z(2\ii l+K)=Z(\ii K'+K)=-\frac{\ii \pi }{2K}$, $D=\frac{\alpha\vartheta_2\vartheta_4}{\vartheta_3\vartheta_4(\frac{\alpha\xi  }{2K})}$ and $	\mathbf{E}(z)=\begin{bmatrix}
		E_1(z) & c_1E_2(z)
	\end{bmatrix}$. Taking the conjugate transpose to the right side of \eqref{eq:phipsi*}, we get
	\begin{equation}
		\phi_1\psi_1^*-\phi_1^*\psi_1
		= D^2  \frac{\vartheta_1\left(\frac{\ii(z_1^*+z_1+2l) }{2K}\right)\vartheta_1\left(\frac{\ii(z_1^*-z_1)}{2K}\right)\vartheta_4^2\left(\frac{\alpha\xi }{2K}\right)}{\vartheta_2^2\left(\frac{\ii(z_1^*+l)}{2K}\right)\vartheta_2^2\left(\frac{\ii(z_1+l)}{2K}\right)}\left(|E_1(z_1)|^2-|c_1|^2|E_2(z_1)|^2\right).
	\end{equation}
	Since $|E_1(z_1)|=|E_2(z_1)|=1$ and $\frac{\vartheta_1\left(\frac{\ii(z_1^*+z_1+2l) }{2K}\right)\vartheta_1\left(\frac{\ii(z_1^*-z_1)}{2K}\right)\vartheta_4^2\left(\frac{\alpha\xi }{2K}\right)}{\vartheta_2^2\left(\frac{\ii(z_1^*+l)}{2K}\right)\vartheta_2^2\left(\frac{\ii(z_1+l)}{2K}\right)}\ne 0,\xi \in \mathbb{R}$, we know that when $\phi_1\psi_1^*-\psi_1\phi_1^*=0$ the value of $c_1$ must satisfy $|c_1|=1$. Similarly, we could get a similar result for $z_1=z_R-\ii\frac{K}{2},z_1\in S$ satisfying $\lambda(z_1)\in \ii \mathbb{R}$.
\end{proof}

When $|c_1|=1$ and $\lambda_1\in \ii \mathbb{R}$, we consider how to reduce formula \eqref{eq:bt-1} with $u(\xi)=\alpha\dn(\alpha\xi)$ and \eqref{eq:Phi-Theta}. Using addition formulas of theta functions in \cite[p.25]{KharchevZ-15}, we get
\begin{equation}\label{eq:M1}
	\frac{|\phi_1|^2+|\psi_1|^2}{\lambda_1-\lambda_1^*}
	=-2\ii D
	\mathbf{E}(z_1)
	\begin{bmatrix}
		\frac{\vartheta_4\left(\frac{\ii(z_1-z_1^*)-\alpha \xi }{2K}\right)}{\vartheta_2\left(\frac{\ii(z_1-z_1^*) }{2K}\right)} & \frac{\vartheta_2\left(\frac{\ii(z_1+z_1^*)-\alpha \xi }{2K}\right)}{\vartheta_3\left(\frac{\ii(z_1+z_1^*) }{2K}\right)} \\
		\frac{\vartheta_2\left(\frac{\ii(z_1+z_1^*)+\alpha \xi }{2K}\right)}{\vartheta_3\left(\frac{\ii(z_1+z_1^*) }{2K}\right)} &\frac{\vartheta_4\left(\frac{\ii(z_1-z_1^*)+\alpha \xi }{2K}\right)}{\vartheta_2\left(\frac{\ii(z_1-z_1^*) }{2K}\right)}
	\end{bmatrix}
	\mathbf{E}^{\dagger}(z_1),
\end{equation}
where $|c_1|=1$ in $\mathbf{E}(z_1)$, $\lambda_1\in \ii \mathbb{R},z_1 \in S$ and	\begin{equation}\label{eq:lambda-lambda*-1}
	\lambda_1-\lambda_1^*
	=\frac{\ii\alpha\vartheta_2^2\vartheta_4\vartheta_1\left( \frac{\ii(z_1-z_1^*)}{2K} \right) \vartheta_3\left( \frac{\ii(z_1+z_1^*)}{2K} \right) } {2\vartheta_3\vartheta_2\left( \frac{\ii(z_1+l)}{2K} \right) \vartheta_4\left( \frac{\ii(z_1-l)}{2K} \right)\vartheta_2\left( \frac{\ii(z_1^*+l)}{2K} \right) \vartheta_4\left( \frac{\ii(z_1^*-l)}{2K} \right)}.
\end{equation}
Combining \eqref{eq:Phi-Theta} with functions \eqref{eq:M1} and \eqref{eq:lambda-lambda*-1}, we reduce the function $u^{[1]}(\xi,t)$ in \eqref{eq:bt-1} to a common denominator and obtain a new periodic solution:
\begin{equation}\label{eq:breather-dn-function}
	\begin{split}
		u^{[1]}(\xi,t)&=\frac{K_{1}(\xi,t)}{H_{1}(\xi,t)},
	\end{split} 
\end{equation} 
in which functions $K_1(\xi,t)$ and $H_1(\xi,t)$ are given by
\begin{equation}
	K_1(\xi,t)
	=
	\mathbf{F}_1(z_1)
	\begin{bmatrix}
		\frac{\vartheta_3\left(\frac{\ii(z_1^*-z_1)+\alpha \xi }{2K}\right)}{\vartheta_1\left(\frac{\ii(z_1-z_1^*) }{2K}\right)} & \frac{\vartheta_1\left(\frac{\ii(z_1+z_1^*)-\alpha \xi }{2K}\right)}{\vartheta_3\left(\frac{\ii(z_1+z_1^*) }{2K}\right)} \\
		-\frac{\vartheta_1\left(\frac{\ii(z_1+z_1^*)+\alpha \xi }{2K}\right)}{\vartheta_3\left(\frac{\ii(z_1+z_1^*) }{2K}\right)} 
		& \frac{\vartheta_3\left(\frac{\ii(z_1-z_1^*)+\alpha \xi }{2K}\right)}{\vartheta_1\left(\frac{\ii(z_1-z_1^*) }{2K}\right)}
	\end{bmatrix}
	\mathbf{F}_2^{\dagger}(z_1),
\end{equation}	
and 
\begin{equation}
	H_1(\xi,t)
	=\frac{\vartheta_3}{\alpha\vartheta_4}
	\mathbf{E}(z_1)
	\begin{bmatrix}
		\frac{\vartheta_4\left(\frac{\ii(z_1-z_1^*)-\alpha \xi }{2K}\right)}{\vartheta_2\left(\frac{\ii(z_1-z_1^*) }{2K}\right)} & \frac{\vartheta_2\left(\frac{\ii(z_1+z_1^*)-\alpha \xi }{2K}\right)}{\vartheta_3\left(\frac{\ii(z_1+z_1^*) }{2K}\right)} \\
		\frac{\vartheta_2\left(\frac{\ii(z_1+z_1^*)+\alpha \xi }{2K}\right)}{\vartheta_3\left(\frac{\ii(z_1+z_1^*) }{2K}\right)} &\frac{\vartheta_4\left(\frac{\ii(z_1-z_1^*)+\alpha \xi }{2K}\right)}{\vartheta_2\left(\frac{\ii(z_1-z_1^*) }{2K}\right)}
	\end{bmatrix}
	\mathbf{E}^{\dagger}(z_1), \end{equation}
respectively, and $|c_1|=1$,
\begin{equation}
		\mathbf{F_1}(z_1)=\begin{bmatrix}
			\frac{\vartheta_2\left(\frac{\ii(z_1+l)}{2K}\right)}{\vartheta_1\left(\frac{\ii(z_1+l)}{2K}\right)}E_1(z_1) & \frac{\vartheta_4\left(\frac{\ii(z_1-l)}{2K}\right)}{\vartheta_3\left(\frac{\ii(z_1-l)}{2K}\right)}c_1E_2(z_1)
		\end{bmatrix},
		\mathbf{F_2}(z_1)=\begin{bmatrix}
			\frac{\vartheta_1\left(\frac{\ii(z_1+l)}{2K}\right)}{\vartheta_2\left(\frac{\ii(z_1+l)}{2K}\right)}E_1(z_1) & \frac{\vartheta_3\left(\frac{\ii(z_1-l)}{2K}\right)}{\vartheta_4\left(\frac{\ii(z_1-l)}{2K}\right)}c_1E_2(z_1)
		\end{bmatrix}.
\end{equation}

To construct the breather solution to describe the stable dynamics of the $\dn$-type solutions of the mKdV equation, we must choose a small enough parameter $\lambda_1$. Based on the elliptic function solution $u=\alpha \dn(\alpha \xi,k)$ with $k=0.9975,\alpha=\frac{1}{16}$, the solution $u^{[1]}(\xi,t)$ of \eqref{eq:mKdV1}, constructed by equation \eqref{eq:breather-dn-function}, is shown in Figure \ref{fig:breaher-dn} by choosing parameters $l=\frac{K'}{2}$, $z_1\approx 0.5726+2.0199\ii$, $c_1=1$, $\lambda_1\approx -0.0292\ii$, $\Omega_1\approx -1.2513\times 10^{-5}\ii $. And the period of the function $u(\xi)$ is $T= \frac{2K}{\alpha}=32K$. By $E_1(z_1)\approx\exp\left(0.0242\ii \xi+6.2567\times10^{-6}\ii t \right)$ and $E_2(z_1)\approx\exp\left(0.0219\ii\xi-6.2567\times10^{-6}\ii t \right)$, the periods on the $\xi$-axis and $t$-axis are $T_{\xi}=10T=320K$ and $T_{t}\approx 5.0211\times10^5$, respectively. 

Furthermore, the solution $u^{[1]}(\xi,t)$ with $z_1\approx 0.5726+2.0199\ii$ shows the stable dynamics for the $\dn$-type solution under perturbations. To compare the dynamics between the $\dn$-type solutions and the corresponding solution $u^{[1]}(\xi,t)$, we shift the traveling wave solution $u(\xi)$ to be $T(\gamma)u:=u(\xi-\alpha^{-1}(z_1-z_1^*)\ii)=\alpha\dn(\alpha\xi-\ii (z_1-z_1^*),k)$, and plot the corresponding curves $T(\gamma)u$ in red in Figure \ref{fig:dn-u-u1}. Choosing the time points $t_0=4\times10^{5}, 0, -2\times10^{5}$, we obtain the figure of $u^{[1]}(\xi,t_0)$ by blue curves. Compared functions $T(\gamma)u$ and $u^{[1]}(\xi,t_0)$ in graphs (i), (ii), and (iii) of Figure \ref{fig:dn-u-u1}, $u^{[1]}(\xi,t_0)$ could be considered as the $\dn$-type solutions adding a small perturbation on $T(\gamma)u$. Figure \ref{fig:dn-3d} shows the $3$-d figure of the function $u^{[1]}(\xi,t)$. By numerical calculations, we obtain the norm  $\|u^{[1]}(\xi,t_0)-T(\gamma)u\|_{H^2([-T_{\xi}/2,T_{\xi}/2])}
$ with $t_0=4\times10^{5}, 0, -2\times10^{5}$ are $0.7065, 0.7043$, and $0.7112$, respectively. When $\|u^{[1]}(\xi,0)-T(\gamma)u\|_{H^2([-T_{\xi}/2,T_{\xi}/2])}=0.7065<\delta=0.8$, we get $\inf_{\gamma\in \mathbb{R}}\|u^{[1]}(\xi,t)-T(\gamma)u\|_{H^2([-T_{\xi}/2,T_{\xi}/2])}\le \|u^{[1]}(\xi,t)-T(\gamma)u\|_{H^2([-T_{\xi}/2,T_{\xi}/2])}<\epsilon=1$, which verifies the stable property. It should be pointed out that the above explicit solution only shows a stable dynamic behavior of the elliptic function solution $u(\xi)$ under a perturbation, which can be regarded as a piece of evidence that the $\dn$-type solutions are stable.

\begin{figure}[ht]
	\centering
	\subfigure[The density plot of the function $u^{[1]}(\xi,t)$. The 3-d figure of (I) is shown in Figure \ref{fig:dn-3d} and the sectional view (i), (ii) and (iii) are shown in Figure \ref{fig:dn-u-u1}.]{
		\label{fig:dn-u1}
		\includegraphics[width=0.87\linewidth]{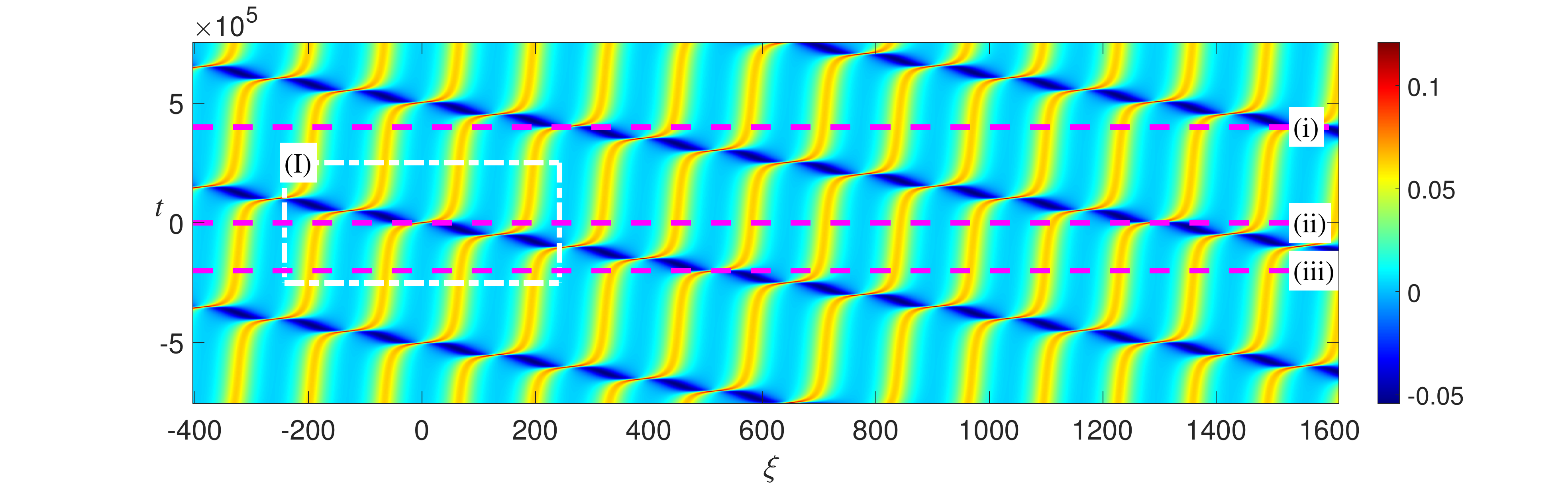}}
	\subfigure[The blue curves in figures (i), (ii) and (iii) are the solution $u^{[1]}(\xi,t_0)$ with $t_0=4\times 10^{5}, 0,-2\times 10^{5}$, respectively. The red curves show the function $u(\xi+\ii(z_1-z_1^*)\alpha^{-1})=\alpha \dn(\alpha \xi+\ii(z_1-z_1^*),k)$.]{
		\label{fig:dn-u-u1}
		\includegraphics[width=0.87\linewidth]{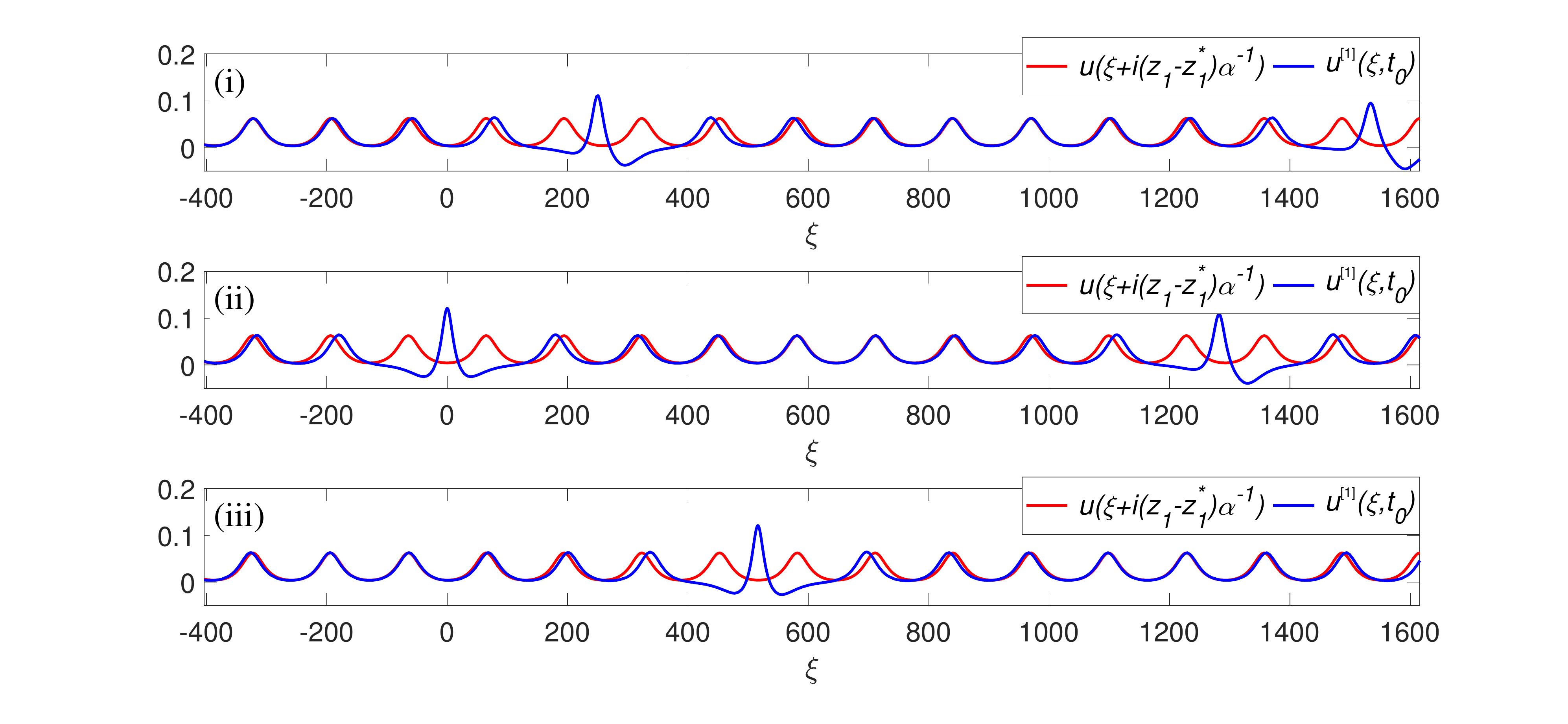}}
	\subfigure[The $3$-d figure of the function $u^{[1]}(\xi,t)$, which reflects the small disturbance of above functions.]{
		\label{fig:dn-3d}
		\includegraphics[width=0.45\linewidth]{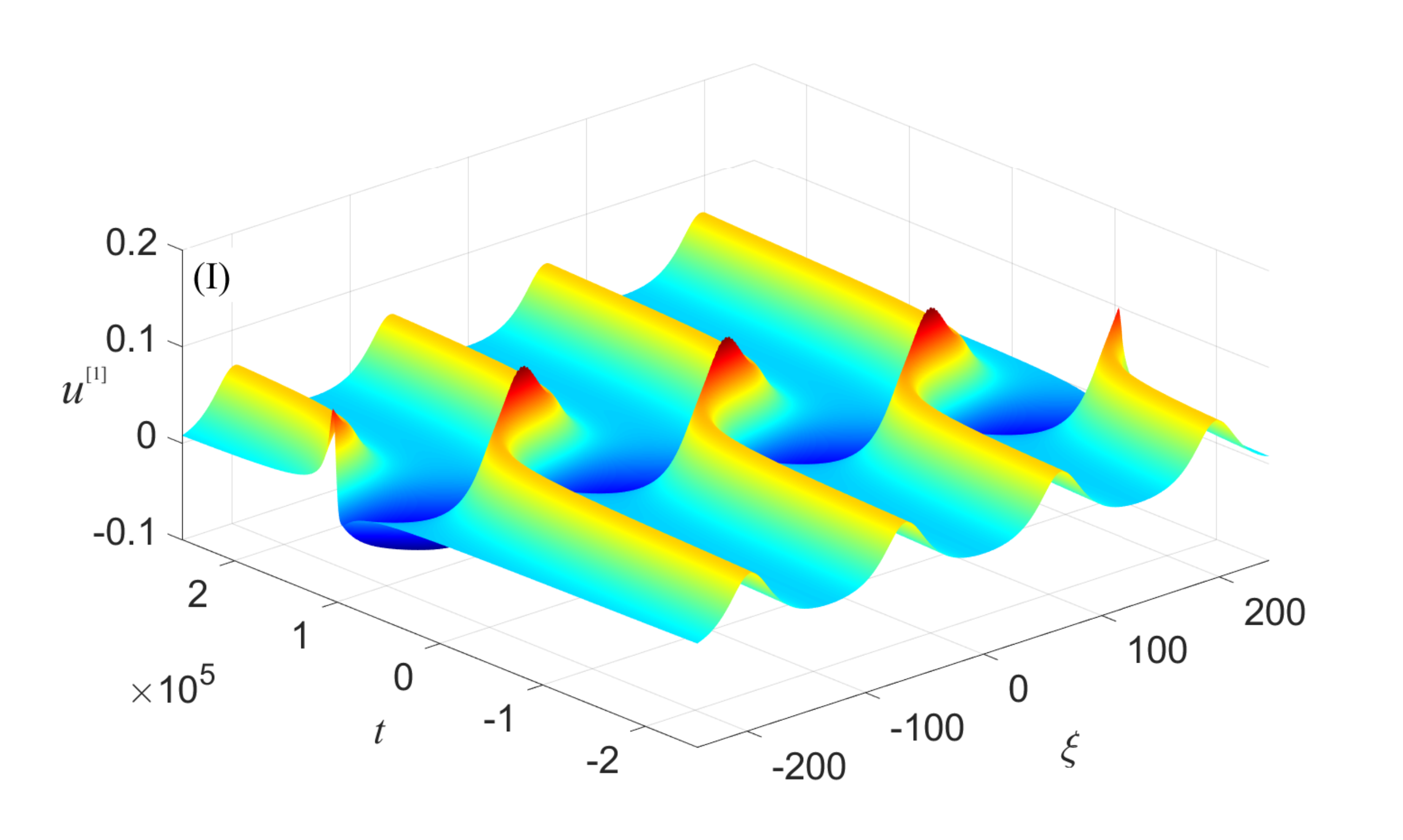}}
	\caption{The solution $u^{[1]}$ is given by \eqref{eq:breather-dn-function} with the parameters setting $k=0.9975,\alpha=\frac{1}{16},l=\frac{K'}{2},z_1\approx 0.5726+2.0199\ii ,c_1=1$. 
	}
	\label{fig:breaher-dn}
\end{figure}

\subsection{Explicit unstable solutions on the $\cn$-type solution background}\label{sub:cn-unstable-explicit}
In what follows, we consider the breather solutions constructed by the $\cn$-type solutions, in which the procedure is similar to the $\dn$-type solutions. 
Based on the expressions of functions in \eqref{eq:Phi-Theta}, \eqref{eq:lambda} and addition formulas of theta functions in \cite[p.25]{KharchevZ-15}, the matrix $\mathbf{M}$ defined in \eqref{eq:bt-2} could be written as 
\begin{equation}\label{eq:M-matrix}
	\mathbf{M}=\begin{bmatrix}
		\frac{\Phi_1^{\dagger} \Phi_1}{\lambda_1-\lambda_1^*} &
		\frac{\Phi_1^{\dagger} \Phi_1^*}{-\lambda_1^*-\lambda_1^*} \\[5pt]
		\frac{\Phi_1^{\top}\Phi_1}{\lambda_1+\lambda_1} &
		\frac{\Phi_1^{\top}\Phi_1^*}{-\lambda_1^*+\lambda_1}
	\end{bmatrix}=\begin{bmatrix}
		M(-z_1^*,z_1) &	M(-z_1^*,-z_1^*) \\
		M(z_1,z_1) & M(z_1,-z_1^*)
	\end{bmatrix}
\end{equation}
where
\begin{equation}
	M(a,b):
	=-2\ii D
	\mathbf{E}(a)
	\begin{bmatrix}
		\frac{\vartheta_4\left( \frac{\ii(a+b)-\alpha\xi}{2K} \right)}{\vartheta_1\left( \frac{\ii(a+b)}{2K} \right)} & \frac{\vartheta_2\left( \frac{\ii(a-b)+\alpha\xi}{2K} \right)}{\vartheta_3\left( \frac{\ii(a-b)}{2K} \right)} \\
		\frac{\vartheta_2\left( \frac{\ii(a-b)-\alpha\xi}{2K} \right)}{\vartheta_3\left( \frac{\ii(a-b)}{2K} \right)} & \frac{\vartheta_4\left( \frac{\ii(a+b)+\alpha\xi}{2K} \right)}{\vartheta_1\left( \frac{\ii(a+b)}{2K} \right)}
	\end{bmatrix}
	\mathbf{E}^{\top}(b).
\end{equation}
Utilizing addition formulas of theta functions in \cite[p.25]{KharchevZ-15} and conversion formulas between Jacobi elliptic functions and theta functions \cite[p.83]{ArmitageE-06}, we get the expression of the matrix $\mathbf{K}$:\\
\begin{equation}\label{eq:K-matrix}
	\mathbf{K}=\begin{bmatrix}
		\frac{\Phi_1^{\dagger}\Phi_1}{\lambda_1-\lambda_1^*}-\frac{2{N}_{1,1}}{u} & \frac{\Phi_1^{\dagger}\Phi_1^*}{-\lambda_1^*-\lambda_1^*}-\frac{2N_{1,2}}{u}\\[5pt]
		\frac{\Phi_1^{\top}\Phi_1}{\lambda_1+\lambda_1}-\frac{2N_{2,1}}{u} & \frac{\Phi_1^{\top}\Phi_1^*}{\lambda_1-\lambda_1^*}-\frac{2N_{2,2}}{u}
	\end{bmatrix}=\begin{bmatrix}
		K(-z_1^*,z_1) & K(-z_1^*,-z_1^*)\\
		K(z_1,z_1) & K(z_1,-z_1^*) 
	\end{bmatrix},
\end{equation}
where $N_{i,j}$ represents the $(i,j)$-elements of the matrix $\mathbf{N}$ in \eqref{eq:tilde-u2},
\begin{equation}
	K(a,b)= -2\ii D\frac{ \alpha\vartheta_4}{ \vartheta_3} 
	\mathbf{F}_3(a)
	\begin{bmatrix}
		\frac{ \vartheta_2\left( \frac{\ii (a+b)-\alpha\xi}{2K}\right)}{ \vartheta_1\left( \frac{\ii (a+b)}{2K}\right)} & \frac{ \vartheta_4\left( \frac{\ii (a-b)+\alpha\xi}{2K}\right)}{ \vartheta_3\left( \frac{\ii(a-b)}{2K}\right)}\\
		-\frac{ \vartheta_4\left( \frac{\ii (a-b)-\alpha\xi}{2K}\right)}{ \vartheta_3\left( \frac{\ii (a-b)}{2K}\right)} & \frac{ \vartheta_2\left( \frac{\ii (a+b)+\alpha\xi}{2K}\right)}{ \vartheta_1\left( \frac{\ii(a+b)}{2K}\right)}
	\end{bmatrix}
	\mathbf{F}_4^{\top}(b), 
\end{equation}
and
\begin{equation}
		\mathbf{F}_3(a)=\begin{bmatrix}
			\frac{ \vartheta_2\left( \frac{\ii a}{2K}\right)}{ \vartheta_4\left( \frac{\ii a}{2K}\right)}E_1(a) &  \frac{ \vartheta_4\left( \frac{\ii a}{2K}\right)}{ \vartheta_2\left( \frac{\ii a}{2K}\right)}c_1 E_2(a)
		\end{bmatrix},\qquad
		\mathbf{F}_4(b)=\begin{bmatrix}
			\frac{ \vartheta_4\left( \frac{\ii b}{2K}\right)}{ \vartheta_2\left( \frac{\ii b}{2K}\right)}E_1(b) &
			\frac{ \vartheta_2\left( \frac{\ii b}{2K}\right)}{ \vartheta_4\left( \frac{\ii b}{2K}\right)}c_1 E_2(b)
		\end{bmatrix}.
\end{equation}
Based on Remark \ref{remark:u_2} and the formula $u^{[2]}(\xi,t)$ in \eqref{eq:tilde-u2}, we get the breather solution on the $\cn$-type solution background:	\begin{equation}\label{eq:breather-cn-function}
	u^{[2]}(\xi,t)
	=\frac{\det (\mathbf{K})}{\det(\mathbf{M})},
\end{equation}
where matrices $\mathbf{K}$ and $\mathbf{M}$ are defined in \eqref{eq:M-matrix} and \eqref{eq:K-matrix}, respectively. The parameter $z_1\in S$ in the above solutions \eqref{eq:breather-cn-function} satisfies $\lambda(z_1)\notin \ii \mathbb{R}$.

We study the asymptotic analysis of formula \eqref{eq:breather-cn-function} for all $z\in Q$.  For convenience, we introduce notations $E_{I}(z_1):={\Im}(\alpha Z(\ii z_1)+\ii \lambda_1)$ and $E_{R}(z_1):={\Re}(\alpha Z(\ii z_1)+\ii \lambda_1)=0$, $z_1\in Q$. By \eqref{eq:w-solution}, we rewrite the function $E_1(\xi,t;z_1)$ defined in \eqref{eq:E1-E2} as $E_{1}(\xi,t;z_1)=\exp\left( \ii E_{I}(z_1)\xi+\frac{\Omega(z_1)}{2}t \right)$. By the relationship between function $E_1(\xi,t;z_1)$ and $E_2(\xi,t;z_1)$, we get $E_{2}(\xi,t;z_1)=\exp\left(- \ii E_{I}(z_1)\xi-\frac{\Omega(z_1)}{2}t \right)$. Without loss of generality, we set $\Re(\Omega(z_1))>0$. As $t\rightarrow \pm \infty$, the breather solution $u^{[2]}(\xi,t)$ \eqref{eq:breather-cn-function} will tend to the stationary solution $u(\xi)$ with a shift:
\begin{equation}\label{eq:cn-infty}
	\begin{split}
		u^{[2]}_{\pm \infty}(\xi)=\lim_{t\to \pm \infty}u^{[2]}(\xi,t)=\alpha \frac{\vartheta_2\vartheta_4 \vartheta_2\left(\frac{\alpha\xi \pm 2\ii (z^*_1-z_1) }{2K}\right)}{\vartheta_3^2\vartheta_4\left(\frac{\alpha\xi \pm 2\ii (z^*_1-z_1) }{2K}\right)}
		=\alpha k\cn(\alpha\xi \pm 2\ii (z_1^*-z_1)).
	\end{split}
\end{equation}
Based on the addition formulas of theta functions in \cite[p.25]{KharchevZ-15} and exact expressions of the solution $u^{[2]}(\xi,t)$ in \eqref{eq:breather-cn-function}, as $t\rightarrow \pm \infty$, the asymptotic expansion of solution \eqref{eq:breather-cn-function} is given by
\begin{equation}\label{eq:breather-cn-asym}
	u^{[2]}(\xi,t)
	=u^{[2]}_{\pm\infty}(\xi)+\left(\ee^{\mp \ii \Im(\Omega)t}A_{\pm}(\xi;z_1)+\ee^{\pm \ii \Im(\Omega)t}A^*_{\pm}(\xi;z_1)\right)
	\ee^{\mp \Re(\Omega)t}+\mathcal{O}\left(\ee^{\mp 2\Re(\Omega)t} \right),
\end{equation} 
where \begin{equation}\label{eq:A-pm}
	\begin{split}
		A_{\pm}(\xi;z_1)=&
		B_{\pm}\frac{\alpha^2\vartheta_2^2\vartheta_4^2 \ee^{-2\ii E_{I}(\alpha\xi+2\ii(z_1^*-z_1))}}
		{\vartheta_3^2\vartheta_4^2\left(\frac{\alpha \xi\pm2\ii (z_1^*-z_1)}{2K} \right)} 
		\left( \frac{\vartheta_1^2\left(\frac{\alpha\xi \pm\ii(2z_1^*-z_1)}{2K}\right)}{\vartheta_4^2\left(\frac{\ii z_1}{2K}\right)} 
		- \frac{\vartheta_3^2\left(\frac{\alpha\xi \pm\ii(2z_1^*-z_1)}{2K}\right)}{\vartheta_2^2\left(\frac{\ii z_1}{2K}\right)} \right),\\
		B_{\pm}=&c_1^{\pm 1}\frac{\vartheta_1\left(\frac{2\ii  z_1}{2K}\right)\vartheta_3\left(\frac{2\ii  z_1}{2K}\right)\vartheta_1\left(\frac{\ii  (z_1-z_1^*)}{2K}\right)\vartheta_3\left(\frac{\ii  (z_1-z_1^*)}{2K}\right)}
		{\alpha\vartheta_2\vartheta_4\vartheta_1\left(\frac{\ii  (z_1+z_1^*)}{2K}\right)\vartheta_3\left(\frac{\ii  (z_1+z_1^*)}{2K}\right)}. 
	\end{split}
\end{equation}
Then, we consider the coefficients of $ \ee^{\mp \Re(\Omega)t}$ as $t\rightarrow \pm \infty$. Comparing the expressions between function $A_{\pm}(\xi;z_1)$ in \eqref{eq:A-pm} and $W(\xi;\Omega)$ in \eqref{eq:Phi-Theta} and \eqref{eq:w-solution}, we get 
\begin{equation}\label{eq:A+-}
		A_{+}(\xi;z_1)=B_{+}W_2(\xi+ 2\alpha^{-1}\ii(z_1^*-z_1)), \qquad
		A_{-}(\xi;z_1)=B_{-}W_1(\xi- 2\alpha^{-1}\ii(z_1^*-z_1)),
\end{equation} $W_1(\xi)=(\phi_1^2(\xi,t)-\psi_1^2(\xi,t))\exp(-\Omega t)$ and $W_2(\xi)=(\phi_2^2(\xi,t)-\psi_2^2(\xi,t))\exp(\Omega t)$. By \eqref{eq:breather-cn-asym}, we could define
\begin{equation}\label{eq:defin-w}
			w_{\pm}(\xi,t)=w(\xi\pm 2\alpha^{-1}\ii (z_1^*-z_1),t)
			=A_{\pm}(\xi;z_1)\ee^{\mp \Omega t}+A^*_{\pm}(\xi;z_1)\ee^{\mp \Omega^* t}.
\end{equation}
Therefore, it is easy to verify that \eqref{eq:u2-cn} holds.

By the proper translation, we can see that the perturbation $w(\xi,t)$ \eqref{eq:w} of the solution \eqref{eq:perturbation} in the linear stability analysis corresponds precisely to the asymptotic form $w_{\pm}(\xi,t)$ of the solution \eqref{eq:breather-cn-asym}. In other words, as $t\rightarrow \pm \infty$, the asymptotic analysis \eqref{eq:breather-cn-asym} is consistent with solutions $w(\xi\pm 2\alpha^{-1}\ii (z_1^*-z_1),t)$. Furthermore, the perturbation condition \eqref{eq:perturbation} is completely consistent with the asymptotic behavior \eqref{eq:breather-cn-asym}. When $t\rightarrow \pm \infty$, $\exp(\mp2E_Rt)\rightarrow 0$. And then, the function $w(\xi\pm 2\alpha^{-1}\ii(z_1^*-z_1),t)$ could be seen as a small perturbation on function $u^{[2]}(\xi,t)$. As time $t$ changes, functions $w(\xi\pm 2\alpha^{-1}\ii(z_1^*-z_1),t)$ are not always small enough. Therefore, the above phenomena explain that the solution is linearly unstable if ${\Re}(\Omega)=2E_R\ne 0$.

We exhibit a breather solution that can be utilized to describe the unstable dynamics for the $\cn$-type solutions of the mKdV equation. We consider the function $u^{[2]}(\xi,t)$ with parameters $\alpha=1,k=\frac{1}{4},l=0, z_1\approx -1.358+0.433\ii $, or $ \lambda_1 \approx 0.484-0.094\ii, \Omega(\lambda_1) \approx -0.090-0.307\ii$. In Figure \ref{fig:breaher-cn},  the plotting of function $u^{[2]}(\xi,t)$ is shown by the density plot and $3$-d figure. Since $\Omega(\lambda_1)\notin\ii \mathbb{R}$, $u^{[2]}(\xi,t)$ is a localized function in the $t$-axis and as $t\rightarrow \pm \infty$ $u^{[2]}(\xi,t)$ tends to a $4K$-periodic function that could be seen as a translation of the function $u=\alpha k\cn(\alpha \xi)$, in \eqref{eq:t-infty}. On the $\xi$-axis, considering the exponent part of $u^{[2]}(\xi,t)$, we get $E_1\approx \exp(0.410\ii\xi-(0.045+0.154\ii )t)$ and $ E_2\approx\exp(-0.410\ii\xi+(0.045+0.154\ii )t)$. It is easy to obtain that the period of $u^{[2]}(\xi,t)$ is $T=12K(k)\approx 19.155$ in the $\xi$-direction. It is seen that the dynamics of $u^{[2]}(\xi,0)$ are entirely different from the one of the $\cn$-type solutions, which verifies that the small perturbation for the $\cn$-type solutions will yield enormous variation with the evolution of time.

\begin{figure}[ht]
	\centering
	\includegraphics[width=0.98\linewidth]{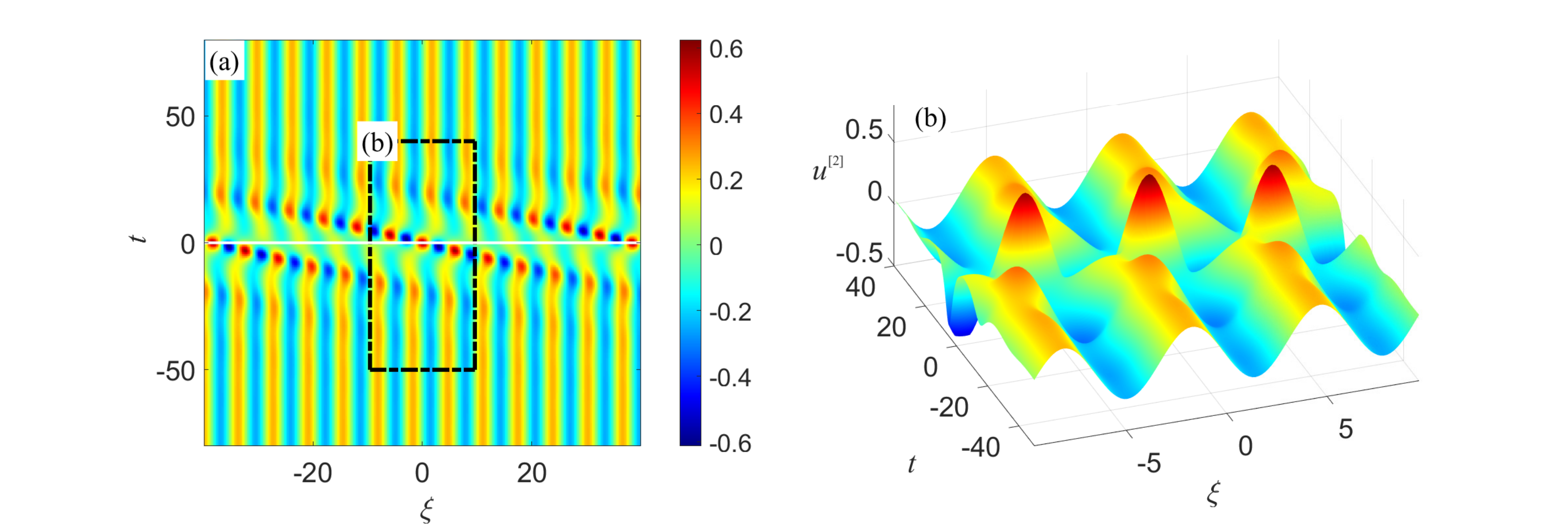}
	\caption{(a): The density plot of $u^{[2]}(\xi,t)$; (b): The 3-d figure of the black rectangle in figure (a).
			The parameters: $l=0, \alpha=1,k=\frac{1}{4},z_1\approx -1.358+0.433\ii ,c_1=1$.	
	}
	\label{fig:breaher-cn}
\end{figure}

\section*{Acknowledgement}

This work is supported by the National Natural Science Foundation of China (Grant No. 12122105) and the Guangzhou Science and Technology Program of China (Grant No. 201904010362). We would also like to thank the referee for their valuable comments, questions, syntax checking and suggestions.

\appendix

\titleformat{\section}[display]
{\centering\LARGE\bfseries}{ }{11pt}{\LARGE}

\renewcommand{\appendixname}{Appendix \ \Alph{section} }

\section{\appendixname. The definitions and properties of elliptic functions}\label{appendix:Elliptic functions}

\setcounter{equation}{0}
\setcounter{definition}{0}
\setcounter{prop}{0}
\setcounter{lemma}{0}
\renewcommand\theequation{\Alph{section}.\arabic{equation}}

\renewcommand\thedefinition{\Alph{section}.\arabic{definition}}
\renewcommand\theprop{\Alph{section}.\arabic{prop}}

\renewcommand\thelemma{\Alph{section}.\arabic{lemma}}

In this Appendix, we enumerate the definition of special functions obtained in \cite{ArmitageE-06,ByrdF-54,KharchevZ-15} and provide relevant results, which will be utilized in this paper.

\subsection*{Complete elliptic integrals}
Functions $K$ and $E$ are called the first and second complete elliptic integrals defined as
\begin{equation}\label{eq:J-K-E-int}
	K\equiv K(k)=\int_{0}^{\frac{\pi}{2}}\frac{\dd \theta}{\sqrt{1-k^2\sin^2\theta}},\quad \text{and} \quad
	E\equiv E(k) =\int_{0}^{\frac{\pi}{2}}\sqrt{1-k^2\sin^2\theta}\dd \theta.
\end{equation}
In addition to the above two integrals, we usually use an associated complete elliptic integral $K'=K(k'), k'=\sqrt{1-k^2}$. Meanwhile, we provide some inequalities showing the relationship between the complete elliptic integrals and the modulus $k$.
\begin{prop}
	For any $k\in(0,1)$, the following four inequalities hold:
	\begin{subequations}\label{eq:value-E-K-lin}
		\begin{align}
			E(k)-k'^2K(k)&>\lim_{k\rightarrow 0}E(k)-(k'^2)K(k)=0, \label{eq:value-E-K-lin-a} \\
			K(k)-E(k)&>\lim_{k\rightarrow 0}K(k)-E(k)=0, \label{eq:value-E-K-lin-b}\\
			E(k)-k'K(k)&> \lim_{k\rightarrow 0}E(k)-k'K(k)=0. \label{eq:value-E-K-lin-c} \\
			(1+k'^2)K(k)-2E(k)&> \lim_{k\rightarrow 0}(1+k'^2)K(k)-2E(k)=0. \label{eq:value-E-K-lin-d}
		\end{align}
	\end{subequations}
\end{prop}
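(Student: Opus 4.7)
\medskip

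\noindent\textbf{Proof Proposal.} The plan is to handle (a) and (b) directly from the integral representations, and then deduce (c) and (d) by a monotonicity argument anchored at $k=0$, using the already-established inequalities (a) and (b). The key identities needed are the classical derivative formulas
\begin{equation}\nonumber
	\frac{\dd E}{\dd k}=\frac{E-K}{k},\qquad \frac{\dd K}{\dd k}=\frac{E-k'^2K}{kk'^2},\qquad \frac{\dd k'}{\dd k}=-\frac{k}{k'},
\end{equation}
together with the boundary values $E(0)=K(0)=\pi/2$, which make the four quantities in \eqref{eq:value-E-K-lin} vanish in the limit $k\to 0^+$.

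For (a) I would combine the two integrals over a common measure: rewriting
\begin{equation}\nonumber
	E(k)-k'^2K(k)=\int_{0}^{\pi/2}\frac{(1-k^2\sin^2\theta)-k'^2}{\sqrt{1-k^2\sin^2\theta}}\,\dd\theta
	=\int_{0}^{\pi/2}\frac{k^2\cos^2\theta}{\sqrt{1-k^2\sin^2\theta}}\,\dd\theta,
\end{equation}
which is manifestly positive for $k\in(0,1)$. The inequality (b) is obtained by the same trick:
\begin{equation}\nonumber
	K(k)-E(k)=\int_{0}^{\pi/2}\frac{k^2\sin^2\theta}{\sqrt{1-k^2\sin^2\theta}}\,\dd\theta>0.
\end{equation}

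For (c) I would set $f(k):=E(k)-k'K(k)$, note $f(0)=0$, and compute $f'(k)$ using the derivative formulas above. After clearing denominators by multiplying through by $kk'$, the numerator collapses to
\begin{equation}\nonumber
	kk'f'(k)=(1-k')(K(k)-E(k)),
\end{equation}
which is strictly positive on $(0,1)$ by (b). Hence $f$ is strictly increasing from $0$, so $f(k)>0$. For (d) I would set $g(k):=(1+k'^2)K(k)-2E(k)$ with $g(0)=0$, and compute $g'(k)$ analogously; after multiplication by $kk'^2$ the numerator simplifies to $k^{2}(E(k)-k'^2K(k))$, which is positive by (a), so $g$ is strictly increasing from $0$ and (d) follows.

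The one slightly delicate point is the algebraic simplification of $f'(k)$ and $g'(k)$: the raw expressions have three competing terms, and several cancellations using $k^{2}+k'^{2}=1$ are required to reach the clean form $(1-k')(K-E)$ and $k^2(E-k'^2K)$, respectively. This is the step to execute carefully, but it is purely mechanical; no genuine obstacle arises, and the logical dependency structure (a)$\Rightarrow$(d) and (b)$\Rightarrow$(c), together with the direct integral proofs of (a) and (b), closes the argument.
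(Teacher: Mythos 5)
Your proposal is correct, and it diverges from the paper's proof only in how it handles (a) and (b). The paper proves all four inequalities by the same mechanism: compute the $k$-derivative of each quantity, show it is positive, and combine this with the vanishing limit at $k\to 0$. You instead prove (a) and (b) directly by merging the two elliptic integrals over a common measure, obtaining $E-k'^2K=\int_0^{\pi/2}\frac{k^2\cos^2\theta}{\sqrt{1-k^2\sin^2\theta}}\,\dd\theta$ and $K-E=\int_0^{\pi/2}\frac{k^2\sin^2\theta}{\sqrt{1-k^2\sin^2\theta}}\,\dd\theta$, which are manifestly positive; this is more elementary (no derivative formulas needed for these two) and gives strict positivity on $(0,1)$ without any limiting argument. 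For (c) and (d) your route coincides with the paper's: the derivatives simplify to $\frac{(1-k')(K-E)}{kk'}$ and to a positive multiple of $E-k'^2K$, so positivity follows from (b) and (a) respectively, and integration from $k=0$ closes the argument. One remark in your favor: your simplification $kk'^2 g'(k)=k^2(E-k'^2K)$, i.e. $g'(k)=\frac{k(E-k'^2K)}{k'^2}$, is the correct one; the paper's displayed derivative $\frac{(E-k'^2K)k}{k'}$ appears to carry a typo ($k'$ in place of $k'^2$ in the denominator), though this does not affect the sign and hence not the conclusion.
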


\begin{proof}
	According to the derivatives of the elliptic integrals with respect to the modulus \cite[p.282]{ByrdF-54}, we obtain
	\begin{equation}\label{eq:der-E-K-lin}
		\begin{split}
			&\frac{\dd(E(k)-k'^2K(k))}{\dd k}=kK(k)>0, \quad  \text{and} \quad
			\frac{\dd (K(k)-E(k))}{\dd k}=\frac{kE(k)}{k'^2}>0,\\
		\end{split}
	\end{equation}
	where $K(k)$, $E(k)>0$ and $k\in (0,1)$. By the definition of $K(k)$ and $E(k)$, it is easy to get that $\lim_{k\rightarrow 0}E(k)-K(k)=0$. Then, the inequalities \eqref{eq:value-E-K-lin-a} and \eqref{eq:value-E-K-lin-b} holds. Furthermore, combining the derivatives of the elliptic integrals \cite[p.282]{ByrdF-54} with inequalities \eqref{eq:value-E-K-lin-a} and \eqref{eq:der-E-K-lin}, we get 
	\begin{equation}
		\begin{split}
			\frac{\dd}{\dd k}(E(k)-k'K(k))=&\frac{(K(k)-E(k))(1-k')}{kk'}>0,\\
			\frac{\dd}{\dd k}((1+k'^2)K(k)-2E(k))=&\frac{(E(k)-k'^2K(k))k}{k'}>0.
		\end{split}
	\end{equation} 
	Therefore, we obtain the inequalities \eqref{eq:value-E-K-lin-c} and \eqref{eq:value-E-K-lin-d}.
\end{proof}

\subsection*{Jacobi Theta function}
\begin{definition}
	The theta functions are defined as the summation:
	\begin{equation}\label{eq:theta1234}
		\begin{split}
			\vartheta_1(z)&=\ii\sum_{n=-\infty}^{+\infty}(-1)^{n}q^{\left(n-\frac{1}{2}\right)^2}\ee^{(2n-1)\ii z}, \qquad 
			\vartheta_3(z)=\sum_{n=-\infty}^{+\infty}q^{n^2}\ee^{2n\ii z},\\
			\vartheta_2(z)&=\sum_{n=-\infty}^{+\infty}q^{\left(n-\frac{1}{2}\right)^2}\ee^{(2n-1)\ii z}, \qquad
			\vartheta_4(z)=\sum_{n=-\infty}^{+\infty}(-1)^{n}q^{n^2}\ee^{2n\ii z},
		\end{split}
	\end{equation}
	where $q=\ee^{\ii \pi \frac{K'}{K}}$.
\end{definition}

\subsection*{Weierstrassian Zeta function}
\begin{definition}\label{define:zeta}
	The Weierstrass Zeta function $\zeta(z)$ is defined by
	\begin{equation} \zeta(z)=\frac{1}{z}+\sum_{\omega\ne 0}\left( \frac{1}{z-\omega}+\frac{1}{\omega}+\frac{z}{\omega^2}\right), \end{equation}
	where $\omega=2m\omega_1+2n\omega_3$, $n,m\in \mathbb{Z}$ and $2\omega_1,2\omega_3$ are two periods of the derivative function of $\zeta(z)$.
\end{definition}

The shift formulas are given by
\begin{equation}\label{eq:zeta+omega}
	\zeta(z+2\omega_1)=\zeta(z)+2\eta_1, \quad \zeta(z+2\omega_3)=\zeta(z)+2\eta_3,
\end{equation}
where $\eta_1=\zeta(\omega_1)$ and  $\eta_3=\zeta(\omega_3)$.
Furthermore, the $\zeta(z)$ function could be written as 
\begin{equation}\label{eq:zeta sigma vartheta}
	\zeta(z)=\frac{\sigma'(z)}{\sigma(z)},\qquad \sigma(z)=\frac{2\omega}{\vartheta_1'}\exp\left(\frac{\eta z^2}{2 \omega}\right)\vartheta_1\left(\frac{z}{2\omega} \right),\qquad  \eta=\eta_1, \qquad \vartheta_1'=\vartheta'_1(0).
\end{equation}

\subsection*{Jacobi Zeta function}
\begin{definition}
	The Jacobi Zeta function is defined by 
	\begin{equation}
		Z(z)\equiv \int_0^z \left( \dn^2(u)-\frac{E}{K} \right) \dd u,
	\end{equation}
	where $E\equiv E(k), K\equiv K(k)$ are the complete elliptic integrals defined in \eqref{eq:J-K-E-int}.
\end{definition}

With the help of \cite{ArmitageE-06}, we get some formulas on elliptic functions.
\begin{prop}\label{prop:int-theta}
	If $f(z)$ is an elliptic function with simple poles $\beta_r,r=1,2\cdots m$, in a periodic region $(2\omega_1,2\omega_3)$, we get the integration 
	\begin{equation}\label{eq:int-f} \int_{0}^{z}f(s)\dd s=Cz+\sum_{r=1}^{m}B_r\ln \frac{\vartheta_1\left( \frac{\beta_r-z}{2\omega_1}\right)}{\vartheta_1\left( \frac{\beta_r}{2\omega_1}\right)}, \end{equation}
	where $B_r$ is the residue of the pole $\beta_r$ and $C$ is a constant which will be determined during the calculation.
\end{prop}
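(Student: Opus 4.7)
The plan is to use the standard Weierstrass partial-fraction decomposition for elliptic functions, integrate term-by-term via the relation $\zeta=\sigma'/\sigma$, and then convert to theta functions using formula \eqref{eq:zeta sigma vartheta}.

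First, since $f(z)$ is an elliptic function on the period lattice $(2\omega_1,2\omega_3)$ with only simple poles $\beta_r$ and residues $B_r$, the residue theorem applied to a fundamental parallelogram gives $\sum_{r=1}^m B_r=0$. I would then consider the candidate expression $g(z):=C+\sum_{r=1}^m B_r\,\zeta(z-\beta_r)$; by the shift formulas \eqref{eq:zeta+omega} and the vanishing of $\sum B_r$, the quasi-period defects cancel so that $g(z)$ is doubly periodic. Since $g$ has the same simple poles $\beta_r$ and residues $B_r$ as $f$, the difference $f-g$ is an entire elliptic function, hence constant, and that constant is absorbed into $C$. This yields the decomposition $f(z)=C+\sum_r B_r\zeta(z-\beta_r)$.

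Next, I integrate term-by-term. Using $\zeta=\sigma'/\sigma$, one obtains
\begin{equation}\nonumber
\int_0^z f(s)\,\dd s=Cz+\sum_{r=1}^m B_r\bigl[\ln\sigma(z-\beta_r)-\ln\sigma(-\beta_r)\bigr].
\end{equation}
Substituting the theta representation $\sigma(w)=\tfrac{2\omega_1}{\vartheta_1'}\exp\!\bigl(\tfrac{\eta w^2}{2\omega_1}\bigr)\vartheta_1\!\bigl(\tfrac{w}{2\omega_1}\bigr)$ from \eqref{eq:zeta sigma vartheta} and using the oddness of $\vartheta_1$ to flip the sign of the argument, the logarithmic differences become
\begin{equation}\nonumber
\ln\sigma(z-\beta_r)-\ln\sigma(-\beta_r)=\frac{\eta}{2\omega_1}\bigl(z^2-2z\beta_r\bigr)+\ln\frac{\vartheta_1\!\bigl(\tfrac{\beta_r-z}{2\omega_1}\bigr)}{\vartheta_1\!\bigl(\tfrac{\beta_r}{2\omega_1}\bigr)}.
\end{equation}

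Finally, I sum over $r$ with weights $B_r$. The quadratic contribution $\tfrac{\eta z^2}{2\omega_1}\sum_r B_r$ vanishes identically thanks to $\sum_r B_r=0$, while the linear-in-$z$ piece $-\tfrac{\eta z}{\omega_1}\sum_r B_r\beta_r$ is a constant multiple of $z$ that can be absorbed into the undetermined coefficient $C$. What remains is exactly the claimed expression \eqref{eq:int-f}. The only delicate point I expect is bookkeeping: one must verify that $\sum B_r=0$ kills the $z^2$ term (otherwise the right-hand side would not be logarithmic) and that the linear term genuinely commutes with the choice of $C$; aside from this, the derivation is essentially mechanical.
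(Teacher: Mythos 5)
Your proposal is correct and follows essentially the same route as the paper: partial-fraction decomposition $f(z)=C+\sum_r B_r\zeta(z-\beta_r)$ justified by $\sum_r B_r=0$, the shift formulas, and Liouville's theorem, then term-by-term integration via $\zeta=\sigma'/\sigma$ and the theta representation of $\sigma$. Your bookkeeping of the quadratic and linear exponential terms (the former killed by $\sum_r B_r=0$, the latter absorbed into $C$) is in fact slightly more explicit than the paper's, which silently folds the $-\eta\beta_r/\omega_1$ contribution into the undetermined constant.
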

\begin{proof}
	Set $\varphi(z)=\sum_{r=1}^{m}B_r \zeta (z-\beta_r)$.
	Based on the residue theorem, equation $\sum_{r=1}^{m}B_r=0$ holds, since $B_r,r=1,2...$ are the residues of all poles $\beta_r$. By \eqref{eq:zeta+omega}, we could verify that $\omega_1$ and $\omega_3$ are the periods of the function $\varphi(z)$ by equations
	\begin{equation}
		\begin{split}
			\varphi(z+2\omega_1)=&\sum_{r=1}^{m}B_r\left(\zeta (z-\beta_r)+2\eta_1\right)=\varphi(z)+ 2\eta_1\sum_{r=1}^{m}B_r=\varphi(z),\\
			\varphi(z+2\omega_3)=&\sum_{r=1}^{m}B_r\left(\zeta (z-\beta_r)+2\eta_3\right)=\varphi(z)+ 2\eta_3\sum_{r=1}^{m}B_r=\varphi(z).
		\end{split}
	\end{equation}
	Thus, functions $f(z)$ and $\varphi(z)$ share the same poles and periods. By the Liouville theorem, we get $f(z)=\varphi(z)+C$, where $C=f(0)-\varphi(0)$ is a constant. By \eqref{eq:zeta sigma vartheta}, we get 
	\begin{equation}\begin{split}
			\int_{0}^{z}f(s)\dd s
			&=\int_{0}^{z} (C+\varphi(s))\dd s \\
			&=Cz+\int_{0}^{z} \sum_{r=1}^{m}B_r\left( \ln \sigma (s-\beta_r) \right)_s\dd s\\
			&=Cz+\sum_{r=1}^{m}\int_{0}^{z}\left[B_r\frac{\eta s}{\omega_1}+ B_r\left( \ln \vartheta_1\left( \frac{s-\beta_r}{2\omega_1}\right) \right)_s\right]\dd s\\
			&=Cz+\sum_{r=1}^{m}B_r\frac{\eta z}{\omega_1}+ \sum_{r=1}^{m}B_r\int_{0}^{z}\left( \ln \vartheta_1\left( \frac{\beta_r-s}{2\omega_1}\right) \right)_s\dd s\\
			&=Cz+ \sum_{r=1}^{m}B_r  \ln \frac{\vartheta_1\left( \frac{\beta_r-z}{2\omega_1}\right)}{\vartheta_1\left( \frac{\beta_r}{2\omega_1}\right)}.
	\end{split} \end{equation}
	Thus, \eqref{eq:int-f} holds.
\end{proof}

Based on Proposition \ref{prop:int-theta}, we gain the following results on the elliptic integration:	
\begin{lemma}\label{lemma:int}
	\begin{equation}\label{eq:int beta}
		\begin{split}
			\int_{0}^{\xi  }\frac{2\ii \lambda \beta_1}{u^2(s)-\beta_1}\dd  s
			&=\frac{1}{2}\ln\frac{\vartheta_1(\frac{\ii (z-l)-\alpha\xi  }{2K})}{\vartheta_1(\frac{\ii (z-l)+\alpha \xi  }{2K})}+\alpha Z(\ii (z-l))\xi,\\ 
			\int_{0}^{\xi  }\frac{2\ii \lambda \beta_2}{u^2(s)-\beta_2}\dd  s
			&=-\frac{1}{2}\ln\frac{\vartheta_1(\frac{\ii (z+l)+K+\ii K'-\alpha \xi  }{2K})}{\vartheta_1(\frac{\ii (z-l)+K+\ii K'+\alpha\xi }{2K})}-\alpha Z(\ii (z+l)+K+\ii K')\xi ,
		\end{split}
	\end{equation}
	where the expressions of functions $2\ii \lambda\beta_1$, $2\ii \lambda \beta_2$, $u^2(\xi)-\beta_1$ and $u^2(\xi)-\beta_2$ are shown in Lemma \ref{lemma:lambda-2 y beta1-2 }.
\end{lemma}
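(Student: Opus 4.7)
\textbf{Proof plan for Lemma \ref{lemma:int}.}

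The plan is to apply Proposition \ref{prop:int-theta} to each integrand $f_i(s):=\dfrac{2\ii\lambda\beta_i}{u^2(s)-\beta_i}$, $i=1,2$, viewed as an elliptic function of $s$ with fundamental periods $2\omega_1=2K/\alpha$ and $2\omega_3=2\ii K'/\alpha$ inherited from $\sn^2(\alpha s)$. Indeed, by Lemma \ref{lemma:lambda-2 y beta1-2 } we have $u^2-\beta_1=\alpha^2k^2(\sn^2(\ii(z-l))-\sn^2(\alpha s))$, so in one fundamental parallelogram $f_1$ has precisely the two simple poles $\alpha s=\pm\ii(z-l)$; an analogous statement holds for $f_2$ with poles $\alpha s=\pm(\ii(z+l)+K+\ii K')$.

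First, I compute the residues. Using $\tfrac{d}{ds}(u^2-\beta_1)=-2\alpha^3k^2\,\scd(\alpha s)$, evaluated at $\alpha s=\ii(z-l)$, together with the identity $2\lambda\beta_1=\ii\alpha^3 k^2\,\scd(\ii(z-l))$ derived inside the proof of Lemma \ref{lemma:lambda-2 y beta1-2 }, I find that $f_1$ has residue $+\tfrac{1}{2}$ at $s=\ii(z-l)/\alpha$ and, by oddness of $\scd$, residue $-\tfrac{1}{2}$ at $s=-\ii(z-l)/\alpha$. Parallel bookkeeping, using $2\lambda\beta_2=-\ii\alpha^3 k^2\,\scd(\ii(z+l)+K+\ii K')$, gives residues $-\tfrac{1}{2}$ and $+\tfrac{1}{2}$ for $f_2$ at $\pm(\ii(z+l)+K+\ii K')/\alpha$ respectively. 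Inserting these residues into Proposition \ref{prop:int-theta} with $\omega_1=K/\alpha$ and simplifying the two logarithms via the oddness of $\vartheta_1$ produces the combined ratios
\begin{equation*}
\tfrac{1}{2}\ln\tfrac{\vartheta_1((\ii(z-l)-\alpha\xi)/2K)}{\vartheta_1((\ii(z-l)+\alpha\xi)/2K)}\quad\text{and}\quad -\tfrac{1}{2}\ln\tfrac{\vartheta_1((\ii(z+l)+K+\ii K'-\alpha\xi)/2K)}{\vartheta_1((\ii(z+l)+K+\ii K'+\alpha\xi)/2K)}
\end{equation*}
that appear on the right-hand sides of \eqref{eq:int beta}, modulo a linear term $C_i\xi$ in each case.

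The remaining and principal task is to identify the constants $C_1=\alpha Z(\ii(z-l))$ and $C_2=-\alpha Z(\ii(z+l)+K+\ii K')$. From the construction in Proposition \ref{prop:int-theta}, $C_i=f_i(0)-\varphi_i(0)$ with $\varphi_i$ the corresponding sum of Weierstrass $\zeta$'s. For $i=1$, direct substitution using $u^2(0)=k^2\alpha^2\sn^2(K+2\ii l)$, the expression $\beta_1=\alpha^2k^2(\sn^2(K+2\ii l)-\sn^2(\ii(z-l)))$, and the identity for $2\ii\lambda\beta_1$ gives $f_1(0)=-\alpha\,\dn(\ii(z-l))\,\cs(\ii(z-l))$; oddness of $\zeta_W$ reduces $\varphi_1(0)$ to $-\zeta_W(\ii(z-l)/\alpha)$. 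The claim $C_1=\alpha Z(\ii(z-l))$ then amounts to the classical conversion identity
\begin{equation*}
\zeta_W(u)\;=\;\alpha Z(\alpha u)\;+\;\alpha\,\dn(\alpha u)\,\cs(\alpha u),
\end{equation*}
between the Weierstrass $\zeta$ attached to the lattice $(2K/\alpha,2\ii K'/\alpha)$ and the Jacobi $Z$ of modulus $k$; an entirely analogous computation, now with the poles shifted by the half-period $K+\ii K'$, handles $C_2$ and yields the opposite sign.

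The main obstacle is Step 3, the identification of $C_i$ with the Jacobi Zeta constants. The Weierstrass-to-Jacobi conversion formula must be normalized carefully to the scaled lattice $(2K/\alpha,2\ii K'/\alpha)$, and the half-period shift in the $\beta_2$-case introduces the sign flip that is otherwise easy to misplace. A clean alternative, which avoids the Weierstrass machinery, is to \emph{verify} both formulas in \eqref{eq:int beta} by differentiating the right-hand sides with respect to $\xi$: the derivative of $\alpha Z(\ii(z-l))\xi$ contributes the constant $\alpha(\dn^2(\ii(z-l))-E/K)$, the logarithmic derivatives of $\vartheta_1$ contribute the Jacobi zeta-like terms whose difference recombines via $\sn^2(\ii(z-l))-\sn^2(\alpha\xi)$, and matching with the explicit form of $u^2-\beta_1$ from Lemma \ref{lemma:lambda-2 y beta1-2 } produces exactly $f_1(\xi)$. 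Both sides vanish at $\xi=0$, which removes the integration constant and closes the argument; the $\beta_2$-formula follows by the same computation after the half-period shift.
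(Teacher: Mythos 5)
Your overall route is the one the paper intends: Lemma \ref{lemma:int} is presented as a direct consequence of Proposition \ref{prop:int-theta}, and your identification of the two simple poles $\alpha s=\pm\ii(z-l)$ (resp.\ $\pm(\ii(z+l)+K+\ii K')$) and of the residues $+\tfrac12,-\tfrac12$ (resp.\ $-\tfrac12,+\tfrac12$) is correct; feeding these into \eqref{eq:int-f} and using the oddness of $\vartheta_1$ does produce the two theta-quotients in \eqref{eq:int beta}. (Your computation also shows that the denominator of the second formula should read $\vartheta_1(\tfrac{\ii(z+l)+K+\ii K'+\alpha\xi}{2K})$; the printed $z-l$ there is a typo.)

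The gap is in your Step 3. The conversion identity you invoke,
\[
\zeta(u)=\alpha Z(\alpha u)+\alpha\,\dn(\alpha u)\cs(\alpha u),
\]
is not correct: for the lattice $(2K/\alpha,2\ii K'/\alpha)$ one has $\zeta(u)=\alpha\bigl[\tfrac{\eta_1}{K}\,\alpha u+Z(\alpha u)+\dn(\alpha u)\cs(\alpha u)\bigr]$ with $\eta_1=\zeta_{(2K,2\ii K')}(K)=E-\tfrac{2-k^2}{3}K\neq 0$, so a linear term is missing. Relatedly, the constant $C_i=f_i(0)-\varphi_i(0)$ coming from the decomposition $f_i=C_i+\varphi_i$ is \emph{not} the coefficient of $\xi$ in \eqref{eq:int beta}: converting $\ln\sigma$ into $\ln\vartheta_1$ contributes an extra linear term $-\tfrac{\eta_1'}{\omega_1'}\bigl(\sum_r B_r\beta_r\bigr)\xi$, and here $\sum_r B_r\beta_r=\ii(z-l)/\alpha\neq 0$ even though $\sum_r B_r=0$ (the proof of Proposition \ref{prop:int-theta} silently drops this same term by writing $\tfrac{\eta s}{\omega_1}$ where $\tfrac{\eta(s-\beta_r)}{\omega_1}$ is meant; it is harmless there only because $C$ is left undetermined). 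In your argument the two omissions cancel exactly — one finds $C_1=\alpha Z(\ii(z-l))+\tfrac{\alpha\eta_1}{K}\ii(z-l)$ while the $\sigma\to\vartheta_1$ conversion subtracts precisely $\tfrac{\alpha\eta_1}{K}\ii(z-l)\,\xi$ — so you land on the correct formula, but not for the reason stated. Your proposed fallback, differentiating the right-hand sides using $\tfrac{\dd}{\dd w}\ln\vartheta_1(\tfrac{w}{2K})=Z(w)+\cs(w)\dn(w)$ together with the addition theorem for $Z$, and noting both sides vanish at $\xi=0$, is sound and sidesteps this bookkeeping entirely; present that version, or restore the $\eta_1$ terms on both sides and show their cancellation explicitly.
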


\section{\appendixname. The conformal mapping between $\lambda(z)$ and $z(\lambda)$}\label{appendix:commonality map}

We first prove the conformal mapping ${\tau}_1,{\tau}_2$ in Lemma \ref{lemma:map}, which have the same formulas as \eqref{eq:lambda} and \eqref{eq:z-l-lambda}.

\begin{lemma}\label{lemma:map}
	The functions 
	\begin{equation}\label{eq:tau12}
		{\tau}_1(z)=\sn(z)\cd(z),
		\qquad \text{and}\qquad {\tau}_2(z)=\dn(z)\tn(z),
	\end{equation}
	map the  rectangle $[-\frac{K}{2},\frac{K}{2}] \times [-\ii K',\ii K']$ onto the complex plane with two different cuts.
\end{lemma}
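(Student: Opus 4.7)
The plan is to establish the claim by combining an order count for $\tau_1,\tau_2$ with an analysis of the boundary image and the argument principle. First, I would verify that $\tau_1$ and $\tau_2$ are meromorphic on $\mathbb{C}$ with periods $2K$ and $2\ii K'$: the standard shift relations of the Jacobi elliptic functions imply $\tau_i(z+2K)=\tau_i(z)=\tau_i(z+2\ii K')$, and a direct count in a fundamental parallelogram shows exactly two simple zeros and two simple poles, so each $\tau_i$ is an elliptic function of order two. Since the rectangle $R=[-K/2,K/2]\times[-\ii K',\ii K']$ has area equal to half of a fundamental parallelogram, one expects $\tau_i$ (modulo the periodicity identification on $\partial R$) to attain each value exactly once, which will be confirmed by the argument principle below.

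Next I would exploit the natural symmetries to parametrize the image of $\partial R$. Reality of the Jacobi elliptic functions on $\mathbb{R}$ and the imaginary-argument formulas show that $\tau_i$ sends the horizontal symmetry segment $\{z=x:|x|<K/2\}$ into $\mathbb{R}$ and the vertical symmetry segment $\{z=\ii y:|y|<K'\}$ into $\ii\mathbb{R}$. The quarter-period identity $\tau_2(K-z)\tau_2(z)=1$, derivable from $\sn(K-z)=\cd(z)$, $\cn(K-z)=k'\sd(z)$, $\dn(K-z)=k'\nd(z)$, together with Schwarz reflection $\tau_2(\bar z)=\overline{\tau_2(z)}$, yields $|\tau_2(K/2+\ii y)|\equiv 1$, so the vertical sides of $R$ map onto (a covering of) the unit circle; an analogous identity furnishes the corresponding constraint for $\tau_1$. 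The horizontal sides $z=x\pm\ii K'$ are handled via the $\ii K'$-shift formulas $\sn(z+\ii K')=1/(k\sn z)$, $\cn(z+\ii K')=-\ii\dn(z)/(k\sn z)$, $\dn(z+\ii K')=-\ii\cn(z)/\sn(z)$, producing real-valued images that form two rays extending to $\infty$; the infinities correspond to the boundary poles of $\tau_i$ at $z=\pm\ii K'$.

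The two cuts are then identified explicitly. Computing $\tau_i'(z)=0$ via $\sn'=\cn\dn$, $\cn'=-\sn\dn$, $\dn'=-k^2\sn\cn$ reduces to a polynomial equation in $\sn(z)$ whose solutions in $\bar R$ give the critical points; their $\tau_i$-values, together with the boundary pole values, supply the two cut endpoints and match with the branch points $\pm\lambda_1,\pm\lambda_1^{*}$ (respectively $\pm\lambda_2,\pm\lambda_3$) of the elliptic integral \eqref{eq:z-l-lambda}. Finally, I would apply the argument principle to $\tau_i(z)-w$: for each $w$ in $\mathbb{C}$ outside the two cuts, a contour-integration calculation $(2\pi\ii)^{-1}\oint_{\partial R}\tau_i'(z)/(\tau_i(z)-w)\,\dd z=1$, carried out with small indentations around the boundary poles $z=\pm\ii K'$ and with the identification $x+\ii K'\sim x-\ii K'$ coming from the $2\ii K'$-periodicity, shows that $w$ has a unique preimage in $R$. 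This yields the conformal bijection from $R$ onto $\mathbb{C}$ minus the two cuts.

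The main obstacle will be the precise identification of the two cuts together with a clean topological picture of $\tau_i(\partial R)$: although each identity used is routine, reconciling the boundary images (unit-circle type curves from the vertical sides and real rays from the horizontal sides) and pinning down the endpoint matching against the branch-point structure in \eqref{eq:z-l-lambda} requires careful bookkeeping, particularly near the corners $\pm K/2\pm\ii K'$ and the interior boundary poles at $\pm\ii K'$.
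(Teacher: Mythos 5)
Your route is genuinely different from the paper's. The paper never analyzes $\tau_1,\tau_2$ directly as forward maps: it inverts them, writes $z(\tau_1)$, $z(\tau_2)$ as the elliptic integrals \eqref{eq:z-tau-int}, invokes the Schwarz--Christoffel formula to map the upper half $\tau_1$-plane onto half the rectangle (then reflects), and handles $\tau_2$ not by a separate computation but by factoring both maps through the common Zhukovskii variable $s=\frac{k}{2}\left(k\tau_1+\frac{1}{k\tau_1}\right)=\frac{1}{2}\left(\tau_2+\frac{1}{\tau_2}\right)$ and composing. You instead treat $\tau_1,\tau_2$ as order-two elliptic functions and prove bijectivity by the argument principle after describing $\tau_i(\partial R)$. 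Both are legitimate; the paper gets conformality and the cut locations essentially for free from Schwarz--Christoffel, while your version is more self-contained and makes the degree count explicit. For the ``each value exactly once'' step you could in fact bypass the contour integral: the involutions $z\mapsto K-z$ (which fixes $\tau_1$, since $\tau_1(K-z)=\tau_1(z)$) and $z\mapsto K+\ii K'-z$ (which fixes $\tau_2$, by combining your identity $\tau_2(K-z)\tau_2(z)=1$ with $\tau_2(z+\ii K')\tau_2(z)=1$) exchange the two preimages of a generic $w$ and swap $R$ with its translate $R+K$ modulo the period lattice, so exactly one preimage lies in $R$.

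As written, however, the decisive step is still a plan, and there are two concrete points you cannot gloss over. First, the two functions behave asymmetrically on the vertical sides: the quarter-period identity for $\tau_1$ is $\tau_1(K-z)=\tau_1(z)$, a symmetry rather than a reciprocal relation, so on $\Re z=\pm K/2$ the function $\tau_1$ is \emph{real}, with critical values $\pm 1/(1\pm k')$; only $\tau_2$ is unimodular there, with critical values $\pm(k\pm\ii k')$. Consequently the cuts of $\tau_1$ lie on the real axis while those of $\tau_2$ are arcs of the unit circle, and your parenthetical pairing is backwards: $\tau_1=\sn\cd$ corresponds to the $l=K'/2$ integral with branch points $\pm\lambda_2,\pm\lambda_3$, and $\tau_2=\dn\tn$ to $l=0$ with $\pm\lambda_1,\pm\lambda_1^*$. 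Second, in the argument-principle integral the top and bottom edges cancel by $2\ii K'$-periodicity, but the two vertical edges do not (they are related by $\tau_1(z+K)=-\tau_1(z)$ and $\tau_2(z+K)=-1/\tau_2(z)$), and the boundary poles at $\pm\ii K'$ sit precisely on the edges being cancelled; you must fix a convention (half-open rectangle, or consistently oriented indentations) and then actually compute that the winding number of the resulting boundary image about every $w$ off the proposed cuts equals $1$. Until that bookkeeping is done, the cuts are only conjecturally identified, so the argument is a viable alternative outline rather than a complete proof.
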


\begin{proof}
	By functions ${\tau}_1(z)$ and ${\tau}_2(z)$, we get 
	\begin{equation}\label{eq:z-tau-int}
		z({\tau}_1)=\int_{0}^{{\tau}_1}\frac{\dd s}{\sqrt{(s^2-\hat{\tau}_1^2)(s^2-\hat{\tau}^{2}_2)}}, \qquad 
		z({\tau}_2)=\int_{0}^{{\tau}_2}\frac{\dd s}{\sqrt{(s^2-\hat{\tau}_3^2)(s^2-\hat{\tau}^{*2}_3)}},
	\end{equation}
	where $\hat{\tau}_1=\frac{1}{1+k'},\hat{\tau}_2=\frac{1}{1-k'}, \hat{\tau}_3=k+\ii k'$. By the Christoffel-Schwarz integral formula, we know that $z(\tau_1)$ is a conformal mapping, which maps the upper half plane onto a rectangle $[-\frac{K}{2},\frac{K}{2}]\times [0,\ii K']$ (see Figure \ref{fig:map1-z-tau1}). Furthermore, we can extend the map $z({\tau}_1)$ from the whole complex plane with cuts on the real line onto the rectangle $[-\frac{K}{2},\frac{K}{2}]\times[-\ii K',\ii K']$.
	\begin{figure}[ht]
		\centering
		\includegraphics[width=0.9\linewidth]{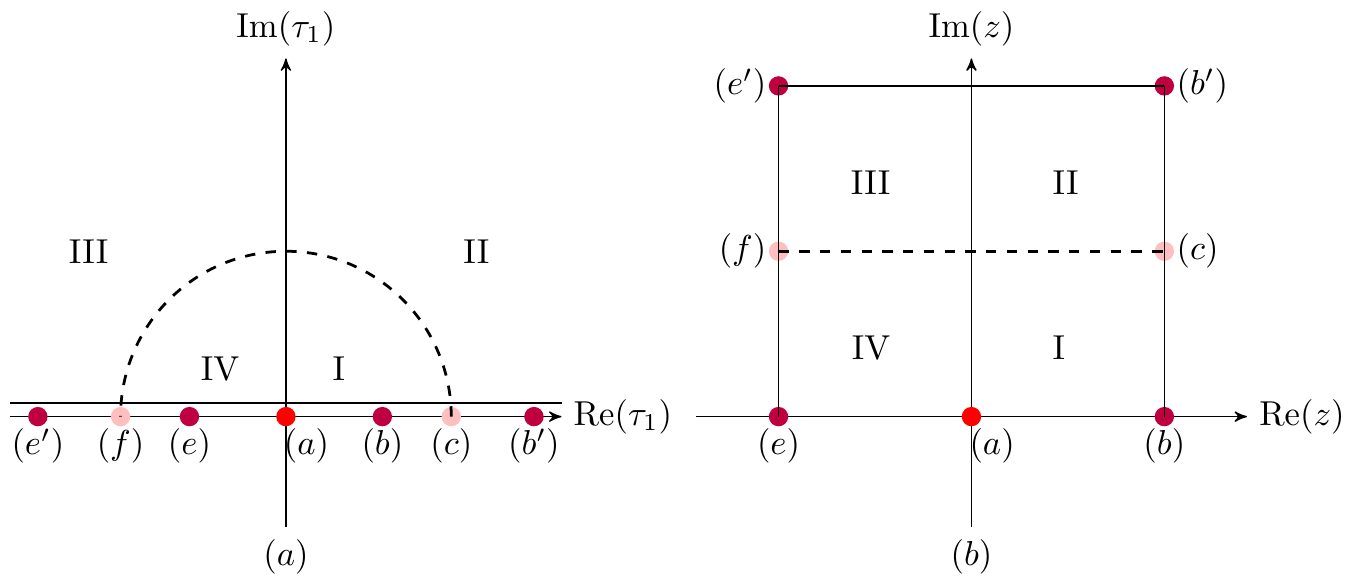}
		\caption{(a):$\{ {\tau}_1(z)|z\in [-\frac{K}{2},\frac{K}{2}]\times [0,\ii K']\}$; (b):$\{z|z\in [-\frac{K}{2},\frac{K}{2}]\times [0,\ii K']\}$. The same symbols represent the corresponding points in different planes, such as the point $(a)$ in ${\tau}_1$-plane is mapping into the point $(a)$ in $z$-plane by function${\tau}_1(0)=0$.}
		\label{fig:map1-z-tau1}
	\end{figure}
	
	Then we analyze the conformal map $z({\tau}_2)$. By equations ${\tau}_1(z)$ and ${\tau}_2(z)$ in \eqref{eq:tau12}, we get  
	\begin{equation}\begin{split}
			k\left(k{\tau}_1+\frac{1}{k{\tau}_1}\right)
			&=\frac{\cn^2(z)+\dn^2(z)-\cn^2(z)\dn^2(z)}{\dn(z)\cn(z)\sn(z)},\\
			{\tau}_2+\frac{1}{{\tau}_2}
			&=\frac{\cn^2(z)+\dn^2(z)-\cn^2(z)\dn^2(z)}{\dn(z)\cn(z)\sn(z)}.
	\end{split}\end{equation}
	Comparing the right side of the above equation, we set 
	\begin{equation}\label{eq:Zhu}
		s=\frac{k}{2}\left(k{\tau}_1+ \frac{1}{k{\tau}_1}\right),\qquad 
		s=\frac{1}{2}\left({\tau}_2+ \frac{1}{{\tau}_2}\right).
	\end{equation}
	Based on the Zhukovskii function \cite[p.77]{Remmert-1991-theory}, we consider the first equation of \eqref{eq:Zhu}. For the convenience of analyzing, we could set the upper half plane of ${\tau}_1$-plane as ${\tau}_1=r_1(\cos(\theta_1)+\ii \sin(\theta_1))$ with $r_1\in[0,+\infty)$ and $\theta_1\in[0,\pi]$ in Figure \ref{fig:map2} (a). Thus, we get $\Re(s)=\frac{k}{2}\left( kr_1+\frac{1}{kr_1} \right)\cos(\theta_1)$ and $\Im(s)=\frac{k}{2}\left(k r_1-\frac{1}{kr_1} \right)\sin(\theta_1)$. When $r_1\in (0,\frac{1}{k})$, the first equation of \eqref{eq:Zhu} maps the semicircle in the upper half ${\tau}_1$-plane with radius $r_1$ into a half ellipse in the lower half $s$-plane with the major axis $\frac{k}{2}\left( kr_1+\frac{1}{kr_1} \right)$ and minor axis $\frac{k}{2}\left( \frac{1}{kr_1}-kr_1\right)$ (See the orange curve in Figure \ref{fig:map2} (a) and Figure \ref{fig:map2} (b)). As the green curve is shown in Figure \ref{fig:map2} (a) and Figure \ref{fig:map2} (b), when $r_1\in (\frac{1}{k},+\infty)$, it maps the semicircle in the upper half ${\tau}_1$-plane with radius $r_1$ into a half ellipse in the upper half $s$-plane with the major axis $\frac{k}{2}\left( kr_1+\frac{1}{kr_1} \right)$ and minor axis $\frac{k}{2}\left( kr_1-\frac{1}{kr_1}\right)$. In particular, the semicircle with a radius $\frac{1}{k}$ is mapped into the line $[-k,k]$. Furthermore, the first equation of \eqref{eq:Zhu} maps the interval $[0,\frac{1}{k}]$ in ${\tau}_1$-plane into the ray $[k,+\infty)$ in $s$-plane and maps the ray $[\frac{1}{k},+\infty)$ into the ray $[k,+\infty)$. So, we get a conformal map between the upper half plane of the ${\tau}_1$-plane and the $s$-plane with cuts $(-\infty,-k)\cup(k,+\infty)$ (See Figure \ref{fig:map2} (a) and Figure \ref{fig:map2} (b)).
	
	Similarly, we consider the second equation in \eqref{eq:Zhu}. We obtain that the upper half plane of the $s$-plane is mapped onto the exterior of the unit circle in ${\tau}_2$-plane, and the lower half plane is mapped onto the interior of the unit circle (See Figure \ref{fig:map2} (b) and Figure \ref{fig:map2} (c)). And the cuts in the real axis of the $s$-plane can map onto the whole real axis, $(e)-(f)$ and $(c)-(b)$ in the ${\tau}_2$-plane. Thus, we establish the conformal map between the $s$-plane and the upper half the ${\tau}_2$-plane. Then, the ${\tau}_2$-plane can be related to the ${\tau}_1$-plane. By the above two maps, we know that there exists a conformal map from the ${\tau}_2$-plane onto the ${\tau}_1$-plane, with the cut from the real axis and two curves $(f)-(e)$ and $(c)-(b)$ onto the whole real axis, successfully. 
	
	In summary, we find the functions  ${\tau}_1(z)$ and ${\tau}_2(z)$ map $[-\frac{K}{2},\frac{K}{2}]\times[-\ii K',\ii K']$ onto the whole complex plane.
\end{proof}

\begin{figure}[ht]
	\centering
	\includegraphics[width=1\linewidth]{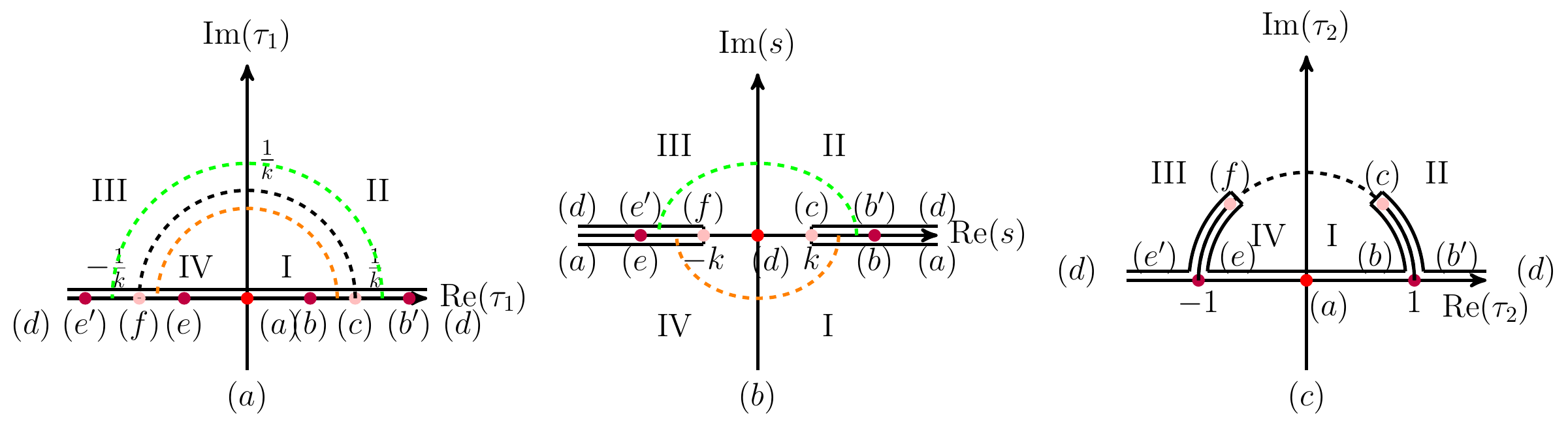}
	\caption{(a): $\{{\tau}_1|\Im({\tau}_1)>0\}$;  (b): $\{s|s=\frac{k}{2}\left(k{\tau}_1+\frac{1}{k{\tau}_1}\right),\Im({\tau}_1)>0\}$;  (c): $\{{\tau}_2|\Im({\tau}_2)>0\}$. The Figure (b) could also be seen as $\{s|s=\frac{1}{2}\left({\tau}_2+\frac{1}{{\tau}_2}\right),\Im({\tau}_2)>0\}$.
	}
	\label{fig:map2}
\end{figure}

	\begin{remark}\label{remark:comformal-z-lambda}
		By Lemma \ref{lemma:map}, we obtain that the function ${\tau}_1(z)$ maps the region $\left[-\frac{K}{2},\frac{K}{2}\right]\times\left[-\ii K',\ii K'\right]$ in the $z$-plane onto the whole ${\tau}_1$-plane. Combining ${\tau}_1(z)$ in \eqref{eq:tau12} with $\lambda(z)$ in \eqref{eq:lambda-b}, we get that the function $\lambda(z)$ maps the region $(z-l)\in \left[-K'+l,K'+l\right]\times\left[-\frac{\ii K}{2},\frac{\ii K}{2}\right]$ onto the whole $\lambda$-plane with the cuts $(b)-(b')$ and $(e)-(e')$ in Figure \ref{fig:map1} (a) and Figure \ref{fig:map1} (c). Similarly, by the conformal map ${\tau}_2(z)$ studied in Lemma \ref{lemma:map} and $\lambda(z)$ in \eqref{eq:lambda-a}, we obtain that $\lambda(z)$ is also a conformal map, which maps $\left[-K'+l,K'+l\right]\times\left[-\frac{\ii K}{2},\frac{\ii K}{2}\right]$ onto the whole $\lambda-$plane with cuts $(f)-(c)$ and $(h)-(g)$, shown in Figure \ref{fig:map1} (a) and Figure \ref{fig:map1} (b).
	\end{remark}

\begin{figure}[ht]
	\centering
	\includegraphics[width=1\linewidth]{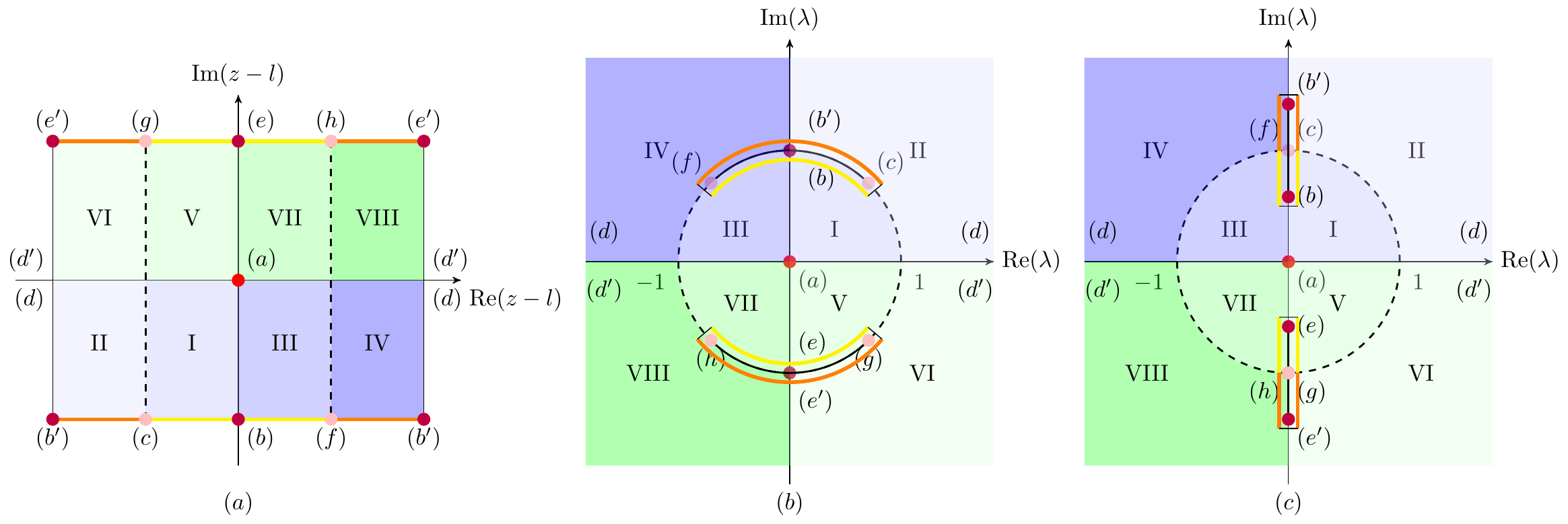}
	\caption{ (a): $\{z-l|z\in S,l=0 \text{ or }\frac{K'}{2}\}$; (b): $\{\lambda(z)|z\in S,l=0\}$, where the function $\lambda(z)$ is defined in \eqref{eq:lambda-a}; (c): $\{\lambda(z)|z\in S,l=\frac{K'}{2}\}$, where the function $\lambda(z)$ is defined in \eqref{eq:lambda-b}. The symbols (such as points (a), (b), (c), and so on) in different planes represent the corresponding points by the conformal map.
	}
	\label{fig:map1}
\end{figure}

\subsection*{A lemma of squared eigenfunctions}
\begin{lemma}\label{lemma:W2=W3}
	There are two linearly independent squared eigenfunctions with parameters $\lambda=\pm \lambda_1,\pm \lambda_1^*$ with the period: $\frac{4K}{\alpha}$.
\end{lemma}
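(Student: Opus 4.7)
The plan is to evaluate the squared-eigenfunction formula $W(\xi;0)=\phi_1^2-\psi_1^2$ from Lemma \ref{lemma:W} at each of the four branch points $\lambda\in\{\pm\lambda_1,\pm\lambda_1^*\}$ using the explicit theta-function representation of Theorem \ref{theorem:solution-Theta form}, show that each of these produces a bounded eigenfunction of period $\frac{4K}{\alpha}$, and prove that the four produce only two linearly independent functions. The reason the period doubles from that of $u^2$ is that the Floquet factor $E_1^2$ contributes exactly a sign change under the shift $\xi\mapsto\xi+\frac{2K}{\alpha}$, and the bulk of the argument consists in verifying this sign.

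First, at $\lambda=\lambda_1$ one has $y=0$, hence $\Omega=8\ii\lambda y=0$ and $\beta_1=\beta_2=2\lambda_1^2+s_2$; in particular the $t$-dependence of $E_1(\xi,t;z_a)=\ee^{\eta_a\xi+4\ii y\lambda_1 t}$ drops out, leaving a Floquet factor $\ee^{\eta_a\xi}$ with $\eta_a=\alpha Z(\ii(z_a-l))+\ii\lambda_1$, where $z_a\in S$ is the preimage of $\lambda_1$ under the conformal map \eqref{eq:lambda-a}. Substituting Theorem \ref{theorem:solution-Theta form} (with $l=0$) gives
\[
W_2(\xi)=A(\xi,z_a)^2\ee^{2\eta_a\xi}\!\left[\frac{\vartheta_1^2\!\left(\tfrac{\ii z_a-\alpha\xi}{2K}\right)}{\vartheta_4^2\!\left(\tfrac{\ii z_a}{2K}\right)}-\frac{\vartheta_3^2\!\left(\tfrac{\ii z_a-\alpha\xi}{2K}\right)}{\vartheta_2^2\!\left(\tfrac{\ii z_a}{2K}\right)}\right],
\qquad A(\xi,z_a)=\frac{\alpha\vartheta_2\vartheta_4}{\vartheta_3\vartheta_4(\alpha\xi/(2K))}.
\]
The bracketed theta expression and the factor $A(\xi,z_a)^2$ are jointly $\frac{2K}{\alpha}$-periodic in $\xi$, so the period of $W_2$ is governed by the Floquet factor $\ee^{2\eta_a\xi}$. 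Using the half-period identity $Z(\ii z_a)+\ii\lambda_1/\alpha=\frac{\ii\pi}{4K}\cdot(\text{sign})$ that follows from $y(z_a)=0$ and the branch-point formula \eqref{eq:z-l-lambda}, I would verify $2\eta_a\cdot\tfrac{2K}{\alpha}\equiv \ii\pi\pmod{2\pi\ii}$, so $W_2(\xi+\tfrac{2K}{\alpha})=-W_2(\xi)$ and therefore $W_2$ is genuinely $\frac{4K}{\alpha}$-periodic (and not $\frac{2K}{\alpha}$-periodic).

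Repeating the same computation at $\lambda=\lambda_1^*$, i.e.\ at $z=z_c:=z_a^*$ (or the mirror image in $S$), produces a squared eigenfunction $W_4(\xi)$ with Floquet exponent $\eta_c=\overline{\eta_a}$, which also satisfies $W_4(\xi+\tfrac{2K}{\alpha})=-W_4(\xi)$. Since $\lambda_1\not\in\mathbb{R}$, we have $\eta_a\ne\eta_c$ and moreover $\eta_a\ne\eta_c+\frac{\ii m\pi}{2K/\alpha}$ for any integer $m$; hence $W_2$ and $W_4$ lie in distinct Floquet classes of the third-order ODE \eqref{eq:spectral-linear-ODE} and are linearly independent over $\mathbb{C}$.

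Finally, the remaining two spectral points $\lambda=-\lambda_1,-\lambda_1^*$ do not produce further independent solutions. Under the symmetry $\mathbf{U}^{\top}(-\lambda;u)=-\mathbf{U}(\lambda;u)$ of \eqref{eq:sym-1}, the fundamental solution at $-\lambda$ is obtained from that at $\lambda$ by conjugation with $\sigma_2$, which swaps the roles of $\phi_1,\psi_1$ up to signs, and in particular leaves $\phi_1^2-\psi_1^2$ invariant up to a scalar. Thus $W_3$ is a multiple of $W_2$ and $W_5$ is a multiple of $W_4$, so exactly two linearly independent squared eigenfunctions arise at the four branch points, each of period $\frac{4K}{\alpha}$. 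The main obstacle will be the careful bookkeeping of theta-function quasi-periods, the shift formulas of the Jacobi $Z$-function, and the precise value of $\lambda_1$ needed to pin down the anti-periodicity $W_2(\xi+\tfrac{2K}{\alpha})=-W_2(\xi)$: a single sign error in this chain would spuriously collapse the period to $\frac{2K}{\alpha}$, so that step must be executed with some care.
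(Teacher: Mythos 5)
Your overall route — evaluate $W=\phi_1^2-\psi_1^2$ at the four branch points, use $y=0$ to kill the $t$-dependence and force $\beta_1=\beta_2$, read off the explicit theta form, and check antiperiodicity under $\xi\mapsto\xi+\frac{2K}{\alpha}$ — is essentially the paper's proof, and your identification of $W_3\propto W_2$, $W_5\propto W_4$ at $\mp\lambda_1,\mp\lambda_1^*$ is correct (the paper gets $W_3=-W_2$, $W_5=-W_4$ directly from $\theta_1(-\lambda_1)=-\theta_1(\lambda_1)$, $\beta_1(-\lambda_1)=\beta_1(\lambda_1)$).

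However, your argument for the linear independence of $W_2$ and $W_4$ has a genuine gap: they do \emph{not} lie in distinct Floquet classes. You yourself establish $2\eta_a\cdot\frac{2K}{\alpha}\equiv \ii\pi \pmod{2\pi\ii}$ and the analogous relation for $\eta_c$ (both functions are antiperiodic over $\frac{2K}{\alpha}$), which forces $\eta_a\equiv\eta_c \pmod{\frac{\ii\pi}{2K/\alpha}}$ — exactly the congruence you then assert cannot hold. Concretely, $\eta_a$ is purely imaginary (otherwise $W_2$ would be unbounded), and the exponent at $\lambda_1^*$ is $-\overline{\eta_a}=\eta_a$, so both squared eigenfunctions carry the same Floquet multiplier $-1$ over one period of $u^2$; the Floquet exponent cannot separate them. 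The paper's independence argument is different and is what you need here: since $\lambda_1^2\notin\mathbb{R}$, the constants $\beta_1=2\lambda_1^2+s_2$ and $\beta_1^*$ are distinct, so $W_2=(u^2-\beta_1)(\ee^{2\theta_1}-\ee^{-2\theta_1})$ and $W_4=W_2^*=(u^2-\beta_1^*)(\ee^{2\theta_1^*}-\ee^{-2\theta_1^*})$ have different zero/pole configurations when continued to complex $\xi$ (equivalently, the theta factors $\vartheta_1^2\bigl(\tfrac{\ii z_a-\alpha\xi}{2K}\bigr)$ and $\vartheta_1^2\bigl(\tfrac{\ii z_a^*-\alpha\xi}{2K}\bigr)$ vanish at different points), hence cannot be proportional. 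Replacing your Floquet-class step with this singularity comparison repairs the proof; the rest of your bookkeeping (the antiperiodicity of $E_1^2$, the joint $\frac{2K}{\alpha}$-periodicity of the theta bracket, the $\sigma_2$-symmetry at $-\lambda$) matches the paper's computation.
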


\begin{proof}
	Combining \eqref{eq:Phi} with \eqref{eq:w-solution}, we set four functions $W_{2}(\xi)$, $W_{3}(\xi)$, $W_{4}(\xi)$, $W_{5}(\xi)$ with four different values $\lambda_1,-\lambda_1, \lambda_1^*,-\lambda_1^*$, respectively. When $\lambda$ is equal to the above four values, we get $\beta_1=\beta_2$, $\theta_1=\theta_2$ in \eqref{eq:beta1-beta2-theta-1-theta-2} and $\Omega_1=0$ in \eqref{eq:w-solution}, since $y=0$ in \eqref{eq:y lambda }. Therefore, 
	\begin{equation}
		\begin{split}
			W_2(\xi)=&\phi_1^2(\xi,t;\lambda_1)-\psi_1^2(\xi,t;\lambda_1)
			=(u^2(\xi)-\beta_1)(\exp(2\theta_1)-\exp(-2\theta_1)),\\
			W_3(\xi)=&\phi_1^2(\xi,t;-\lambda_1)-\psi_1^2(\xi,t;-\lambda_1)
			=(u^2(\xi)-\beta_1)(\exp(-2\theta_1)-\exp(2\theta_1)),\\
			W_4(\xi)=&\phi_1^2(\xi,t;\lambda_1^*)-\psi_1^2(\xi,t;\lambda_1^*)
			=(u^2(\xi)-\beta_1^*)(\exp(2\theta_1^*)-\exp(-2\theta_1^*)),\\
			W_5(\xi)=&\phi_1^2(\xi,t;-\lambda_1^*)-\psi_1^2(\xi,t;-\lambda_1^*)=(u^2(\xi)-\beta_1^*)(\exp(-2\theta_1^*)-\exp(2\theta_1^*)).
		\end{split}
	\end{equation}
	By the above four functions, we get  $W_2(\xi)=-W_3(\xi)=W_4^*(\xi)=-W_5^*(\xi)\in \mathbb{C}$. Thus, functions $W_2(\xi)$ and $W_3(\xi)$ are linearly dependent, and functions $W_4(\xi)$ and $W_5(\xi)$ are also linearly dependent. Since functions $u^2(\xi)-\beta_1$ and $u^2(\xi)-\beta_1^*$ have different poles in the $\xi$-complex plane, we get that functions $W_2(\xi)$ and $W_4(\xi)$ are linearly independent with different poles. Furthermore, by the exact expression of the function 
	\begin{equation}
		W_2(\xi) = \frac{\alpha^2 \vartheta_2^2\vartheta_4^2}{\vartheta_3^2\vartheta_4^2\left(\frac{\alpha \xi}{2K}\right)}\left( \frac{\vartheta_1^2\left(\frac{K+\ii K'+2\alpha \xi}{4K}\right)}{\vartheta_4^2\left(\frac{K+\ii K'}{4K}\right)}-\frac{\vartheta_3^2\left(\frac{K+\ii K'+2\alpha \xi}{4K}\right)}{\vartheta_2^2\left(\frac{K+\ii K'}{4K}\right)} \right)\exp\left( \ii \frac{\alpha \xi}{2K} \pi \right),
	\end{equation}
	it is easy to verify that  $W_2(\xi+\frac{2K}{\alpha})=-W_2(\xi)$ and $W_2(\xi+\frac{4K}{\alpha})=W_2(\xi)$, i.e., $\frac{4K}{\alpha}$ is the period of function $W_2(\xi)$.
\end{proof}

\section{\appendixname. The integrability structure of the mKdV equation}\label{appendix:Darboux transformation}

\setcounter{equation}{0}
\renewcommand\theequation{\Alph{section}.\arabic{equation}}
\setcounter{theorem}{0}
\renewcommand\thetheorem{\Alph{section}.\arabic{theorem}}
\renewcommand\thelemma{\Alph{section}.\arabic{lemma}}

In this section, we mainly introduce the integrability structure of the mKdV equation: the mKdV hierarchy, the Hamiltonian conserved quantity, the Darboux matrix and the Lax pair of the higher-order mKdV hierarchy.

The mKdV hierarchy can be derived by the AKNS scheme\cite{AblowitzKNS-74}. For the $x$-part of Lax pair \eqref{eq:Lax pair}, we could set the $t$-part as
	\begin{equation}\label{eq:anks-scheme}
		\Phi_t(x,t;\lambda)=\begin{bmatrix}
			A & B \\ C & -A
		\end{bmatrix}\Phi(x,t;\lambda),
	\end{equation}
where $A\equiv A(x,t;\lambda),B\equiv B(x,t;\lambda),C\equiv C(x,t;\lambda)$. By the zero curvature equation or the compatibility condition $\Phi_{xt}=\Phi_{tx}$, we obtain equations $-A_x+u(C+B)=0$, $u_t-B_x-2\ii \lambda B-2u A=0$, and $-u_t-C_x+2\ii \lambda C-2u A=0$, which implies 
\begin{equation}
	A=\partial_x^{-1}\begin{bmatrix}
		-u & u
	\end{bmatrix}\begin{bmatrix}
		-B \\ C
	\end{bmatrix}+A_0, \qquad A_0\equiv A_0(\lambda).
\end{equation}
To keep the compatibility of the mKdV hierarchy, we suppose $A_0=-\frac{\ii}{2}(2  \lambda)^{2n+1}$, $B=\sum_{i=1}^{2n+1}b_j(x,t)\lambda^{2n+1-j}$ and $C=\sum_{i=1}^{2n+1}c_j(x,t)\lambda^{2n+1-j}$. Comparing the coefficients of the parameter $\lambda$, we obtain the following equations:
	\begin{equation}\nonumber
		\begin{bmatrix}
			u\\-u
		\end{bmatrix}_t=L\begin{bmatrix}
			-b_{2n+1}\\c_{2n+1}
		\end{bmatrix}, \quad
		2\ii\begin{bmatrix}
			-b_{j+1}\\c_{j+1}
		\end{bmatrix}=L\begin{bmatrix}
			-b_{j}\\c_{j}
		\end{bmatrix}, \quad
		\begin{bmatrix}
			b_{1}\\c_{1}
		\end{bmatrix}=2^{2n}\begin{bmatrix}
			u\\-u
		\end{bmatrix}, \quad
		L:=-\sigma_3\partial_x+2\begin{bmatrix}
			u\\ u
		\end{bmatrix}\partial_x^{-1}\begin{bmatrix}
			-u & u
		\end{bmatrix}.
	\end{equation}
Thus the mKdV hierarchy can be defined as
	\begin{equation}\label{eq:L-functional}
		\begin{bmatrix}
			u\\-u
		\end{bmatrix}_{t_n}=(-1)^{n+1}L^{2n+1}\begin{bmatrix}
			u\\u
		\end{bmatrix},
	\end{equation} 
which could be expressed as follows:
	\begin{equation}
		u_{t_n}=\partial_x \mathcal{F}^{n}u=\partial_x \mathcal{H}_n'(u),\qquad n=0,1,2,\cdots,
	\end{equation} 
with the recursion formula \cite{Olver-1993-applications} $\mathcal{H}_{n}'=\mathcal{F}\mathcal{H}_{n-1}'$.The prime $'$ of $\mathcal{H}_{n}'$ is defined as the gradient of functional $\mathcal{H}_{n}$ for the scalar product. Based on the functional matrix $L$ in \eqref{eq:L-functional}, the recursion operator $\mathcal{F}$
is defined as
	\begin{equation}
		\mathcal{F}:=-(\partial_x^2+4u^2-4u\partial_x^{-1}u_x), \qquad \partial_x^{-1}u=\frac{1}{2}\left(\int_{-PT}^{x}u(y)\dd y-\int_{x}^{PT}u(y)\dd y\right),
	\end{equation}
and the Hamiltonian functional could be expressed as 
	\begin{equation}
	\mathcal{H}_n=\int_{0}^{1}\left(\mathcal{F}^n(\rho u),u\right)\dd \rho= \int_{-PT}^{PT}\int_{0}^{1}\mathcal{F}^n(\rho u)\dd \rho \dd x,\qquad n=0,1,2,\cdots.
	\end{equation}
	Letting $n=0,1,2$, we obtain the first three Hamiltonian functionals in \eqref{eq:H0}. The corresponding equations are expressed as follows:
	\begin{equation}\label{eq:ODE-t}
		\begin{split}
			u_{t_0} &=\partial_x \mathcal{H}_0'
			=\partial_{x}u,
			\qquad \qquad
			u_{t_1} =\partial_x \mathcal{H}_1'=-\partial_{x}^3u-6u^2\partial_{x}u,\\
			u_{t_2} &=\partial_x \mathcal{H}_2'=\partial_{x}^5u+10u_{x}^3+40uu_{x}\partial_{x}^2u+10 u^2\partial_{x}^3u+30 u^4 \partial_{x} u.
		\end{split} 
	\end{equation}
When $n=1$, the \eqref{eq:mKdV} equation is a Hamiltonian system of the form $u_t=\partial_x\mathcal{H}_1'(u)$, which could be expressed in the recursion formula readily as $u_t=\partial_x\mathcal{F}\mathcal{H}_0'(u)$.  

If the derivative of the Hamiltonian functional $\mathcal{H}_i$, $i=0,1,2,\cdots$ with respect to time $t$ is zero, i.e., $\frac{\dd \mathcal{H}_i}{\dd t}=0$, the Hamiltonian functional $\mathcal{H}_i$ is the Hamiltonian conserved quantity. The Definition of the Poisson bracket \cite{Lax-1975-periodic} for the class of $\mathbb{C}^{\infty}([-PT,PT])$ functionals $\mathcal{H}_i,\mathcal{H}_j$ of the smooth periodic functions $u$ with $2PT$ period is $\left[\mathcal{H}_i,\mathcal{H}_j\right]=\left(\mathcal{H}'_i,\partial_x\mathcal{H}'_j\right)$, where the $(\cdot,\cdot)$ denotes the $L^2\left([-PT,PT]\right)$ scalar product. Combining definitions of the gradient and the Poisson bracket with equation \eqref{eq:ODE-t}, we get
	\begin{equation}\label{eq:gradient}
		\frac{\dd \mathcal{H}_i}{\dd t}=\left( \mathcal{H}'_i, u_t\right)=\left( \mathcal{H}'_i, \partial_x \mathcal{H}'_1 \right)=\left[\mathcal{H}_i,\mathcal{H}_1\right].
	\end{equation}
	It follows that the Hamiltonian functionals $\mathcal{H}_i$ are conserved if and only if $\left[\mathcal{H}_i,\mathcal{H}_1\right]=0$, $i=0,1,2,\cdots$.

Then, we introduce the Darboux transformation of the mKdV equation \cite{Cieslinski-2009-algebraic}. Under the $(\xi,t)$ moving coordinate frame \eqref{eq:traveling-wave-transform}, the Darboux matrix $\mathbf{T}_i(\lambda;\xi,t)$, $i=1,2$ could convert the old Lax pair into a new Lax pair
\begin{equation}\nonumber
	\Phi_{ \xi}^{[i]}(\xi,t;\lambda)=\mathbf{U}^{[i]}(\lambda;u^{[i]})\Phi^{[i]}(\xi,t;\lambda), \qquad \Phi_{ t}^{[i]}(\xi,t;\lambda)=\hat{\mathbf{V}}^{[i]}(\lambda;u^{[i]})\Phi^{[i]}(\xi,t;\lambda) ,
\end{equation} 
where $\Phi^{[i]}(\xi,t;\lambda):=\mathbf{T}_i(\lambda;\xi,t)\Phi(\xi,t;\lambda), \, \mathbf{U}^{[i]}(\lambda;u^{[i]})\equiv\mathbf{U}(\lambda;u^{[i]}),\, \hat{\mathbf{V}}^{[i]}(\lambda;u^{[i]})\equiv\hat{\mathbf{V}}(\lambda;u^{[i]})$, $i=1,2$.
Based on the symmetric properties of matrices $\mathbf{U}(\lambda;u)$ and $\hat{\mathbf{V}}(\lambda;u)$ in equation \eqref{eq:sym-1}, we could obtain that the Darboux matrix $\mathbf{T}_i(\lambda;x,t)$ satisfies
	\begin{equation}\label{eq:T-sym}
		\mathbf{T}_i^{-1}(\lambda;\xi,t)=\mathbf{T}_i^{\dagger}(\lambda^*;\xi,t), \qquad \mathbf{T}_i^{-1}(\lambda;\xi,t)=\mathbf{T}_i^{\top}(-\lambda;\xi,t).
	\end{equation}
\begin{lemma}\label{lemma:T-T-1}
The Darboux matrix
	\begin{equation}\label{eq:dt-1}
		\mathbf{T}_1(\lambda;\xi,t)=\mathbb{I}-\frac{\lambda_1-\lambda_1^*}{\lambda -\lambda_1^*}\mathbf{P}_1(\xi,t),\qquad \mathbf{P}_1(\xi,t)=\frac{{\Phi}_1 {\Phi}^{\dagger}_1}{{\Phi}^{\dagger}_1{\Phi}_1 },\qquad \Phi_1=\Phi(\xi,t;\lambda_1)\mathbf{c}\equiv[\phi_1,\psi_1]^{\top},
	\end{equation}
	keeps the first symmetric relation of \eqref{eq:sym-1}, and the corresponding B\"acklund transformation between old and new potential functions is given in \eqref{eq:bt-1}.
\end{lemma}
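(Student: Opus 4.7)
The plan is to verify the two assertions of the lemma in turn: first, that the Darboux matrix obeys $\mathbf{T}_1^{-1}(\lambda;\xi,t)=\mathbf{T}_1^{\dagger}(\lambda^*;\xi,t)$, and second, that the induced potential is the one in \eqref{eq:bt-1}. The key structural input throughout is that $\mathbf{P}_1=\Phi_1\Phi_1^{\dagger}/(\Phi_1^{\dagger}\Phi_1)$ is a rank-one orthogonal projector, so $\mathbf{P}_1^{\dagger}=\mathbf{P}_1$ and $\mathbf{P}_1^2=\mathbf{P}_1$.

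For the symmetry, I would Hermitian-conjugate $\mathbf{T}_1$ at $\lambda^*$ to obtain
\[
\mathbf{T}_1^{\dagger}(\lambda^*;\xi,t)=\mathbb{I}+\frac{\lambda_1-\lambda_1^*}{\lambda-\lambda_1}\mathbf{P}_1,
\]
and then multiply out $\mathbf{T}_1(\lambda)\mathbf{T}_1^{\dagger}(\lambda^*)$. Using $\mathbf{P}_1^2=\mathbf{P}_1$, the coefficient of $\mathbf{P}_1$ in the product collapses to a single fraction whose numerator $(\lambda_1-\lambda_1^*)[(\lambda-\lambda_1^*)-(\lambda-\lambda_1)-(\lambda_1-\lambda_1^*)]$ vanishes identically. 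Hence $\mathbf{T}_1\mathbf{T}_1^{\dagger}(\lambda^*)=\mathbb{I}$, which is exactly the first line of \eqref{eq:T-sym}: the first symmetry in \eqref{eq:sym-1} is inherited by $\mathbf{T}_1$.

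For the B\"acklund formula, I would invoke the covariance identity $\mathbf{T}_{1,\xi}+\mathbf{T}_1\mathbf{U}(\lambda;u)=\mathbf{U}(\lambda;u^{[1]})\mathbf{T}_1$ coming from $\Phi^{[1]}=\mathbf{T}_1\Phi$. Since $\mathbf{U}$ and $\mathbf{U}^{[1]}$ share the leading $-\ii\lambda\sigma_3$, the $\mathcal{O}(\lambda)$ terms cancel automatically; letting $\lambda\to\infty$ in the $\mathcal{O}(1)$ piece yields the closed-form identity
\[
\mathbf{Q}^{[1]}-\mathbf{Q}=-\ii(\lambda_1-\lambda_1^*)\,[\sigma_3,\mathbf{P}_1].
\]
Writing $\mathbf{P}_1$ componentwise and evaluating the commutator, the $(1,2)$-entry reproduces exactly $u^{[1]}-u=2\ii(\lambda_1^*-\lambda_1)\phi_1\psi_1^*/(|\phi_1|^2+|\psi_1|^2)$ in \eqref{eq:bt-1}. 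The $(2,1)$-entry produces an expression proportional to $\phi_1^*\psi_1$ with the opposite sign; matching the antidiagonal AKNS form of $\mathbf{Q}^{[1]}$ and the reality of $u^{[1]}$ then forces the algebraic constraint $\phi_1\psi_1^*=\phi_1^*\psi_1$ stated in \eqref{eq:bt-1}.

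The one delicate point I anticipate is showing that this constraint is compatible with the Lax evolution, i.e.\ that the scalar $\phi_1\psi_1^*-\phi_1^*\psi_1$, once vanishing at a single point, vanishes identically in $(\xi,t)$. My plan is to combine $\lambda_1\in\ii\mathbb{R}$ (so that $\lambda_1^*=-\lambda_1$) with the first symmetry $\mathbf{U}^{\dagger}(\lambda^*;u)=-\mathbf{U}(\lambda;u)$ to derive a homogeneous first-order ODE in $\xi$ for this quantity, and an analogous one in $t$ using $\hat{\mathbf{V}}$; uniqueness then propagates the constraint. A constructive identification of the admissible initial data $\mathbf{c}$ is already provided by Lemma \ref{lemma:T_1}, which singles out $|c_1|=1$ as a sufficient normalization.
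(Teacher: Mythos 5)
Your proposal is correct, and it lands on the same pivotal identity $\mathbf{Q}^{[1]}-\mathbf{Q}=-\ii(\lambda_1-\lambda_1^*)[\sigma_3,\mathbf{P}_1]$ as the paper, but by a genuinely different route. The paper works in a dressing framework: it writes $\Phi=\mathbf{m}(\lambda;\xi,t)\ee^{-\ii\lambda[\xi+2(2\lambda^2+s_2)t]\sigma_3}\mathbf{m}^{-1}(\lambda;0,0)$, expands $\mathbf{m}=\mathbb{I}+\mathbf{m}_1\lambda^{-1}+\mathcal{O}(\lambda^{-2})$, recovers $\mathbf{Q}=\ii[\sigma_3,\mathbf{m}_1]$ from the generating matrix $\mathbf{A}(\lambda)=\ii\,\mathbf{m}\sigma_3\mathbf{m}^{-1}$, and reads off $\mathbf{Q}^{[1]}$ from $\mathbf{m}_1^{[1]}=\mathbf{m}_1-(\lambda_1-\lambda_1^*)\mathbf{P}_1$; the symmetry is inferred from $\Phi^{[1]}\Phi^{[1]\dagger}(\lambda^*)=\mathbb{I}$, with the unitarity $\mathbf{T}_1(\lambda)\mathbf{T}_1^{\dagger}(\lambda^*)=\mathbb{I}$ only asserted. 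You instead verify that unitarity directly from $\mathbf{P}_1^{\dagger}=\mathbf{P}_1=\mathbf{P}_1^{2}$ (your numerator cancellation is right), and you obtain $\mathbf{Q}^{[1]}$ from the $\mathcal{O}(1)$ term of the covariance relation $\mathbf{T}_{1,\xi}+\mathbf{T}_1\mathbf{U}=\mathbf{U}^{[1]}\mathbf{T}_1$ as $\lambda\to\infty$. Your route is more elementary and self-contained for this one lemma; the paper's buys machinery (the recursion for $\mathbf{A}_2,\mathbf{A}_3$) that it reuses for $\mathbf{V}_2$ and the higher flows. Two remarks. First, your covariance identity tacitly assumes that $(\mathbf{T}_{1,\xi}+\mathbf{T}_1\mathbf{U})\mathbf{T}_1^{-1}$ is polynomial in $\lambda$, i.e.\ that the residues at $\lambda=\lambda_1,\lambda_1^*$ vanish because $\mathbf{P}_1$ is built from a genuine solution at $\lambda_1$; this should be stated, although the paper is no more explicit at the analogous step. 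Second, your anticipated ``delicate point'' dissolves: for $\lambda_1\in\ii\mathbb{R}$ the scalar $\phi_1\psi_1^*-\phi_1^*\psi_1=-\Phi_1^{\dagger}\mathbf{J}\Phi_1$ with $\mathbf{J}=\ii\sigma_2$ is exactly conserved, since
\begin{equation}
\mathbf{U}^{\dagger}(\lambda_1;u)\,\mathbf{J}+\mathbf{J}\,\mathbf{U}(\lambda_1;u)=-\ii\lambda_1\left(\sigma_3\mathbf{J}+\mathbf{J}\sigma_3\right)+\left[\mathbf{J},\mathbf{Q}\right]=0
\end{equation}
because $\mathbf{Q}=u\mathbf{J}$ (and likewise for the $t$-flow), so vanishing at one point propagates automatically; the paper instead defers this reduction constraint to the proof of Theorem \ref{theorem:construct} (via $\mathbf{P}_1^{\top}=\mathbf{P}_1$, $\lambda_1+\lambda_1^*=0$) and to Lemma \ref{lemma:T_1} for the admissible data $|c_1|=1$, consistently with your closing reference.
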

\begin{proof}
	Suppose the Lax pair has the following analytic matrix solutions
	\begin{equation}\label{eq:Phi-m}
		\Phi(\xi,t;\lambda)=\mathbf{m}(\lambda;\xi,t)\ee^{-\ii\lambda[\xi+2(2\lambda^2+s_2)t]\sigma_3}\mathbf{m}^{-1}(\lambda;0,0)
	\end{equation}
	where the meromorphic function matrix $\mathbf{m}(\lambda;\xi,t)$ can be expanded at the neighborhood of $\infty$:
	\begin{equation}\label{eq:m}
		\mathbf{m}(\lambda;\xi,t)=\mathbb{I}+\mathbf{m}_1(\xi,t)\lambda^{-1}+\mathcal{O}(\lambda^{-2}).
	\end{equation}
	Define 
	\begin{equation}\label{eq:A}
		\mathbf{A}(\lambda;\xi,t)\equiv\ii\,\mathbf{m}(\lambda;\xi,t)\sigma_3\mathbf{m}^{-1}(\lambda;\xi,t)=\ii\,\sigma_3+\sum_{i=1}^{\infty}\mathbf{A}_i(\xi,t)\lambda^{-i}.
	\end{equation}
	We can verify
	\begin{equation}\label{eq:recursive}
		\frac{\partial}{\partial \xi}\mathbf{A}(\lambda;\xi,t)=[\mathbf{U}(\lambda;u),\mathbf{A}(\lambda;\xi,t)],\qquad \mathbf{A}^2(\lambda;\xi,t)=-\mathbb{I}.
	\end{equation}
	Then $\mathbf{A}_i(\xi,t)$ can be determined recursively by \eqref{eq:recursive}. The first three of them are 
	\begin{equation}\label{eq:A-1-2-3}
		\begin{split}
			\mathbf{A}_1(\xi,t)=&-\mathbf{Q}=-\ii[\sigma_3,\mathbf{m}_1(\xi,t)], \\
			\mathbf{A}_2(\xi,t)=&-\frac{\ii}{2}\sigma_3(\mathbf{Q}_\xi-\mathbf{Q}^2), \\
			\mathbf{A}_3(\xi,t)=&\frac{1}{4}(\mathbf{Q}_{\xi\xi}-2\mathbf{Q}^3-\mathbf{Q}_\xi\mathbf{Q}+\mathbf{Q}\mathbf{Q}_\xi).
		\end{split}
	\end{equation}
	It follows that
	\begin{equation}\nonumber
		\begin{split}
			\Phi_{\xi}(\xi,t;\lambda)\Phi^{-1}(\xi,t;\lambda)
			=&\mathbf{m}_{\xi}(\lambda;\xi,t)\mathbf{m}^{-1}(\lambda;\xi,t)-\ii\lambda \mathbf{m}(\lambda;\xi,t)\sigma_3\mathbf{m}^{-1}(\lambda;\xi,t)
			=\mathbf{U}(\lambda;u),\\
			\Phi_{t}(\xi,t;\lambda)\Phi^{-1}(\xi,t;\lambda)
			=&\mathbf{m}_{t}(\lambda;\xi,t)\mathbf{m}^{-1}(\lambda;\xi,t)-\ii(4\lambda^3+2s_2\lambda) \mathbf{m}(\lambda;\xi,t)\sigma_3\mathbf{m}^{-1}(\lambda;\xi,t)
			=\hat{\mathbf{V}}(\lambda;u).
		\end{split}
	\end{equation}
	Applying the Darboux transformation to the wave function $\Phi(\xi,t;\lambda)$,  we obtain a new wave function $\Phi^{[1]}(\xi,t;\lambda)=\mathbf{T}_1(\lambda;\xi,t)\Phi(\xi,t;\lambda)\mathbf{T}_1^{-1}(\lambda;0,0)$ that is analytic in the whole complex plane $\mathbb{C}$. For the new wave function $\Phi^{[1]}(\xi,t;\lambda)$, the function $\mathbf{m}(\lambda;\xi,t)$ will be replaced by $\mathbf{m}^{[1]}(\lambda;\xi,t)=\mathbf{T}_1(\lambda;\xi,t)\mathbf{m}(\lambda;\xi,t)$ that also can be expanded in the neighborhood of $\infty$:
	\begin{equation}
		\mathbf{m}^{[1]}(\lambda;\xi,t)=\mathbb{I}+\mathbf{m}_1^{[1]}(\xi,t)\lambda^{-1}+\mathcal{O}(\lambda^{-2}),
		\qquad
		\mathbf{m}_1^{[1]}(\xi,t)=\mathbf{m}_1(\xi,t)-(\lambda_1-\lambda_1^*)\mathbf{P}_1(\xi,t).
	\end{equation}
	Furthermore, we have
	\begin{equation}
			\mathbf{Q}^{[1]}=\mathbf{Q}-\ii(\lambda_1-\lambda_1^*)[\sigma_3,\mathbf{P}_1(\xi,t)], \quad
			\mathbf{U}^{[1]}(\lambda;u^{[1]})=\mathbf{U}(\mathbf{Q}\to\mathbf{Q}^{[1]}),\quad \hat{\mathbf{V}}^{[1]}(\lambda;u^{[1]})=\hat{\mathbf{V}}(\mathbf{Q}\to\mathbf{Q}^{[1]}).
	\end{equation}
	As for the symmetric property, through $\Phi(\xi,t;\lambda)\Phi^{\dag}(\xi,t;\lambda^*)=\mathbb{I}$ and  $\mathbf{T}_1(\lambda;\xi,t)\mathbf{T}_1^{\dag}(\lambda^*;\xi,t)=\mathbb{I}$, we obtain $\Phi^{[1]}(\xi,t;\lambda)\Phi^{[1]\dag}(\xi,t;\lambda^*)=\mathbb{I}$, which implies $\mathbf{U}^{[1]\dag}(\lambda^*;u^{[1]})=-\mathbf{U}^{[1]}(\lambda;u^{[1]})$ and $\hat{\mathbf{V}}^{[1]\dag}(\lambda^*;u^{[1]})=-\hat{\mathbf{V}}^{[1]}(\lambda;u^{[1]})$.
\end{proof}

\newenvironment{aproof}{\emph{Proof of Theorem \ref{theorem:construct}.}}{\hfill$\Box$\medskip}
\begin{aproof}
	If the Darboux transformation in Lemma \ref{lemma:T-T-1} also satisfies the second equation of \eqref{eq:T-sym},
	i.e., $\lambda_1^*+\lambda_1=0$ and $\mathbf{P}_1^{\top}(\xi,t)=\mathbf{P}_1(\xi,t)$, then the Darboux transformation will keep the second symmetric property \eqref{eq:sym-1} of matrix $\mathbf{U}(\lambda;u)$. The corresponding B\"acklund transformation could be expressed as \eqref{eq:bt-1}.
	
	As for the case $\lambda_1+\lambda_1^*\neq 0$,  we need to consider the two-fold Darboux transformation
	\begin{equation}
		\mathbf{T}_2(\lambda;\xi,t)
		=\mathbb{I}-
		\begin{bmatrix}
			\Phi_1 & \Phi_1^*
		\end{bmatrix}\mathbf{M}^{-1} \mathbf{D}^{-1}
		\begin{bmatrix}
			\Phi_1^{\dagger} \\[5pt] \Phi_1^{\top}
		\end{bmatrix}, 	\qquad \mathbf{D}={\rm diag}\left(\lambda-\lambda_1^*,\lambda+\lambda_1\right),
	\end{equation}
	which also satisfies symmetric properties: $\mathbf{T}_2^{-1}(\lambda;\xi,t)=\mathbf{T}_2^{\top}(-\lambda;\xi,t)$,
	and the corresponding B\"acklund transformation is given by \eqref{eq:bt-2}.
\end{aproof}

Using a similar method as Lemma \ref{lemma:T-T-1}, we obtain the polynomial form of the third members of the mKdV hierarchy
\begin{equation}\label{eq:Lax}
	\begin{split}
		\mathbf{V}_2(\lambda;u)=&4\lambda^2 \mathbf{V}(\lambda,u)-2\ii \lambda \sigma_3\left(\mathbf{Q}_{xxx}-6\mathbf{Q}^2\mathbf{Q}_{x}+\mathbf{Q}^4-2\mathbf{Q}\mathbf{Q}_{xx}+\mathbf{Q}_{x}^2\right)\\
		&-10\mathbf{Q}^2\mathbf{Q}_{xx}-10 \mathbf{Q}_{x}^2\mathbf{Q}+6\mathbf{Q}^5+\mathbf{Q}_{xxxx}, 
	\end{split}
\end{equation}
which admits the evolution part of Lax pair: $\Phi_{t_2}=\mathbf{V}_2(\lambda;u)\Phi$, which also can be derived directly by the above-mentioned AKNS scheme \eqref{eq:anks-scheme}.

\bibliographystyle{siam}
\bibliography{references}

\end{document}